 \setlist{leftmargin=*}
\theoremstyle{plain}
\newtheorem{fact}[lemma]{Fact}
\theoremstyle{remark}
\theoremstyle{definition}
\newenvironment{subproof}[1][\proofname]{%
  \begin{proof}[#1]%
}{%
  \end{proof}%
}
\newcommand{\msubsection}[1]{\medskip{\noindent\bf{}#1{}.}}
\newenvironment{customlem}[1]
  {\innercustomlem}
  {\endinnercustomlem}
\newenvironment{customthm}[1]
  {\innercustomthm}
  {\endinnercustomthm}
\newcommand{\instance}{\mathcal{I}}
\newcommand{\inst}{\instance}
\newcommand{\database}{\mathcal{D}}
\newcommand{\db}{\database}
\newcommand{\adom}[1]{\fcn{dom}(#1)}
\newcommand{\aadom}[1]{\fcn{dom}(#1)}
\newcommand{\unimod}{\mathcal{U}}
\newcommand{\cnst}{\mathrm{Cnst}}
\newcommand{\relstr}{\mathcal{A}}
\newcommand{\coloring}{\lambda}
\newcommand{\coloringe}{\lambda_\exists}
\newcommand{\cols}{\mathbb{L}}
\newcommand{\term}{T}
\newcommand{\unitn}{\mathbbm{1}_{n}}
\newcommand{\arity}[1]{\fcn{ar}(#1)}
\newcommand{\cw}[1]{\fcn{cw}(#1)}
\newcommand{\disun}{\oplus}
\newcommand{\add}[2]{\pred{Add}_{#1}(#2)}
\newcommand{\addd}[1]{\pred{Add}_{#1}}
\newcommand{\recol}[2]{\pred{Recolor}_{#1}(#2)}
\newcommand{\hism}{h}
\newcommand{\treewidth}[1]{\fcn{tw}(#1)}
\newcommand{\tw}[1]{\fcn{tw}(#1)}
\newcommand{\head}{\fcn{head}}
\newcommand{\body}{\fcn{body}}
\newcommand{\fcomp}{\circ}
\newcommand{\codom}[1]{\fcn{codom}(#1)}
\newcommand{\fcn}[1]{\mathsf{#1}}
\newcommand{\orig}{\fcn{orig}}
\newcommand{\ent}{\fcn{ent}}
\newcommand{\col}{\fcn{col}}
\newcommand{\ruleset}{\calr}
\newcommand{\rs}{\ruleset}
\newcommand{\bts}{\mathbf{bts}}
\newcommand{\bcs}{\mathbf{fcs}}
\newcommand{\fes}{\mathbf{fes}}
\newcommand{\fus}{\mathbf{fus}}
\newcommand{\bcws}{\bcs}
\newcommand{\fcs}{\mathbf{fcs}}
\newcommand{\fc}{\mathbf{fc}}
\newcommand{\chase}{\mathbf{Ch}}
\newcommand{\sa}{\mathbf{Ch}_{\infty}}
\newcommand{\se}{\mathbf{Ch}_{\exists}}
\newcommand{\sep}{\mathbf{Ch}_{\exists+}}
\renewcommand{\so}{\ksat{1}{\sep,\rulesetnormminus}}
\newcommand{\fchain}[1]{\chase(#1)}
\newcommand{\ksat}[2]{\chase_{#1}(#2)}
\newcommand{\sat}[1]{\ksat{\infty}{#1}}
\newcommand{\kbaseg}{(\dbg,\rsg)}
\newcommand{\rew}[1]{\fcn{rew}(#1)}
\newcommand{\rewrs}[2]{\fcn{rew}_{#1}(#2)}
\newcommand{\calr}{\mathcal{R}}
\newcommand{\struc}{\mathcal{A}}
\newcommand{\conset}{\mathrm{con}}
\newcommand{\nullset}{\mathrm{nul}}
\newcommand{\termset}{\fcn{terms}}
\newcommand{\sig}{\Sigma}
\newcommand{\pred}[1]{\mathtt{#1}}
\newcommand{\const}[1]{\mathtt{#1}}
\newcommand{\thephi}{\varphi}
\newcommand{\fr}{\mathit{fr}}
\newcommand{\tree}{T}
\newcommand{\set}[1]{\{#1\}}
\newcommand{\VX}{\vec{X}}
\newcommand{\vk}{\smash{\vec{k}}}
\newcommand{\vs}{\vec{s}}
\newcommand{\ve}{\vec{e}}
\newcommand{\vx}{\vec{x}}
\newcommand{\vX}{\vec{X}}
\newcommand{\vy}{\vec{y}}
\newcommand{\vvv}{\vec{v}}
\newcommand{\vz}{\vec{z}}
\newcommand{\vt}{\vec{t}}
\newcommand{\vl}{\vec{l}}
\newcommand{\vu}{\vec{u}}
\newcommand{\sigi}{\sig_\mathrm{grid}}
\newcommand{\rsg}{\ruleset_\mathrm{grid}}
\newcommand{\dbg}{\database_\mathrm{grid}}
\newcommand{\ginf}{\mathcal{G}_{\infty}}
\newcommand{\sg}{\sat{\db,\rsg}}
\newcommand{\mq}{(q, M)}
\newcommand{\hpred}{\pred{H}}
\newcommand{\vpred}{\pred{V}}
\newcommand{\rpred}{\pred{R}}
\newcommand{\ppred}{\pred{P}}
\newcommand{\epred}{\pred{E}}
\newcommand{\goal}{\mathtt{Goal}}
\newcommand{\rsgrid}{\ruleset_\mathrm{grid}}
\newcommand{\prmark}{M}
\newcommand{\mquery}{Q}
\newcommand{\query}{q}
\newcommand{\thm}{Thm.} 
\newcommand{\ifandonlyif}{\textit{iff} } 
\newcommand{\iffi}{\textit{iff} } 
\definecolor{piotr}{RGB}{250, 100, 100}
\definecolor{thomas}{RGB}{50, 190, 50}
\definecolor{tim}{RGB}{0, 0, 250}
\definecolor{sebastian}{RGB}{153, 50, 204}
\newcommand{\piotr}[1]{\textbf{\textcolor{black}{piotr}}: {\color{piotr}#1}} \newcommand{\thomas}[1]{\textbf{\textcolor{black}{thomas}}: {\color{thomas}#1}} \newcommand{\tim}[1]{\textbf{\textcolor{black}{tim}}: {\color{tim}#1}} \newcommand{\sebastian}[1]{\textbf{\textcolor{black}{sebastian}}: {\color{sebastian}#1}}
\newcommand{\DL}{\scaleobj{0.8}{\mathrm{DL}}}
\newcommand{\rulesetdl}{\ruleset_{\smash{\DL}}}
\newcommand{\rulesetnorm}{\ruleset_{\smash{\DL}}^{\mathrm{rew}}}
\newcommand{\rulesetnormcon}{\ruleset_{\smash{\DL}}^{\mathrm{conn}}}
\newcommand{\rulesetnormminus}{\ruleset_{\smash{\DL}}^{\mathrm{disc}}}
\newcommand{\cutop}{\operatorname{cut}}
\newcommand{\mergeop}{\operatorname{merge}}
\newcommand{\reduceop}{\operatorname{reduce}}
\newcommand{\stepop}{\rightrightarrows}
\newcommand{\neigh}[2]{\mathit{#2^#1}}
\newcommand{\ldist}[2]{\fcn{dist}_{#1}(#2)}
\newcommand{\smallldots}{\ldots}
\newcommand{\defend}{\hfill\ensuremath{\Diamond}}
\newcounter{SHOWNEW}
\newcommand{\ifcmd}[2]{
    \ifthenelse{\value{SHOWNEW}>0}
        {#2}
        {#1}
}
\newcommand{\prm}[1]{\textcolor{red}{#1}}}
\newcommand{\prm}[1]{}}
\newcommand{\pin}[1]{\textcolor{teal}{#1}}}
\newcommand{\pin}[1]{#1}}
\title{Finite-Cliquewidth Sets of Existential Rules}
\titlerunning{Finite-Cliquewidth Sets of Existential Rules}
\author{Thomas Feller}{Technische Universität Dresden, Germany}{thomas.feller@tu-dresden.de}{https://orcid.org/0000-0001-8420-6118}{}
\author{Tim S. Lyon}{Technische Universität Dresden, Germany}{timothy_stephen.lyon@tu-dresden.de}{https://orcid.org/0000-0003-3214-0828}{}
\author{Piotr Ostropolski-Nalewaja}{Technische Universität Dresden, Germany}{piotr.ostropolski-nalewaja@tu-dresden.de}{https://orcid.org/0000-0002-8021-1638}{}
\author{Sebastian Rudolph}{Technische Universität Dresden, Germany}{sebastian.rudolph@tu-dresden.de}{https://orcid.org/0000-0002-1609-2080}{}
\authorrunning{Thomas Feller, Tim S. Lyon, Piotr Ostropolski-Nalewaja, Sebastian Rudolph}
\keywords{existential rules, 
TGDs,
cliquewidth,
treewidth,
bounded-treewidth sets,
finite-unification sets,
first-order rewritability,
monadic second-order logic,
datalog}
\begin{document}

\maketitle

\begin{abstract}
In our pursuit of generic criteria for decidable ontology-based querying, we introduce \textit{finite-cliquewidth sets} ($\fcs$) of existential rules, a model-theoretically defined class of rule sets, inspired by the \emph{clique\-width} measure from graph theory. By a generic argument, we show that $\bcs$ ensures decidability of entailment for a sizable class of queries (dubbed \emph{DaMSOQs}) subsuming conjunctive queries (CQs). The $\fcs$ class properly generalizes the class of finite-expansion sets ($\fes$), and for signatures of arity ${\leq}2$, the class of bounded-treewidth sets ($\bts$). For higher arities, $\bts$ is only indirectly subsumed by $\fcs$ by means of reification. 
 Despite the generality of $\fcs$, we provide a rule set with decidable CQ entailment (by virtue of first-order-rewritability) that falls outside $\fcs$, thus demonstrating the incomparability of $\fcs$ and the class of finite-unification sets ($\fus$). 
 In spite of this, we show that if we restrict ourselves to single-headed rule sets over signatures of arity ${\leq}2$, then $\fcs$ subsumes $\fus$.
\end{abstract}

\section{Introduction}

The problem of querying under \emph{existential rules}\footnote{Existential rules are also referred to as \emph{tuple-generating dependencies} (TGDs)~\cite{AbiteboulHV95}, \emph{conceptual graph rules}~\cite{SalMug96}, Datalog$^\pm$~\cite{Gottlob09}, and \emph{$\forall \exists$-rules}~\cite{BagLecMugSal11} in the literature.} (henceforth often shortened to \emph{rules}) is a popular topic in the research fields of database theory and knowledge representation. For arbitrary rule sets, query entailment is undecidable~\cite{ChandraLM81}, motivating research into expressive fragments for which decidability can be regained.

A theoretical tool which has not only proven useful for describing querying methods, but also for identifying classes of rule sets with decidable query entailment, is the \emph{chase}~\cite{Chase84}. Given a database $\database$ and a rule set $\ruleset$, the (potentially non-terminating) chase procedure yields a so-called \emph{universal model}~\cite{DeuNasRem08}. Universal models satisfy exactly those queries entailed by $\database$ and $\ruleset$ and thus allow for the reduction of query entailment to query evaluation.

Rule sets admitting finite universal models (a property equivalent to being a \emph{finite-expansion set}~\cite{BagLecMugSal11} or \emph{core-chase terminating}~\cite{DeuNasRem08}) are particularly well-behaved. 
%
%
%
But even in cases where universal models are necessarily infinite, they may still be sufficiently ``tame'' to allow for decidable query entailment, as is the case with the class of \emph{bounded-treewidth sets} ($\bts$)~\cite{BagLecMugSal11}. A rule set $\ruleset$ qualifies as $\bts$ \iffi for every database~$\database$, there exists a universal model of $\database$ and $\ruleset$ whose \emph{treewidth} (a structural measure originating from graph theory) is bounded by some $n\in \mathbb{N}$. The $\bts$ class subsumes class of finite expansion sets ($\fes$), and gives rise to decidable query entailment for a sizable family of concrete, syntactically defined classes of rule sets deploying more or less refined versions of guardedness~\cite{BagLecMugSal11,CaliGK13,KrotzschR11}.     

While $\bts$ is a fairly general class, it still fails to contain rather simple rule sets; e.g., the rule set $\ruleset^\infty_\mathrm{tran}$, consisting of the two rules
$$
\epred(x,y) \to \exists z \epred(y,z) \quad\text{and}\quad \epred(x,y) \wedge \epred(y,z) \to \epred(x,z),
$$
falls outside the $\bts$ class. To give an idea as to why this holds, consider any database $\db$ containing an $\epred$-fact (e.g., $\epred(\const{a},\const{b})$): the chase will yield a universal model resembling the transitive closure of an infinite $\epred$-path, a clique-like structure of infinite treewidth. What is more, not only does $\ruleset^\infty_\mathrm{tran}$ fail to be $\bts$, it also fails to fall under the other generic decidability criteria: it is neither $\fus$ (described below) nor \emph{finitely controllable} (which would guarantee the existence of a finite ``countermodel'' for each non-entailed query). In spite of this, results for description logics confirm the decidability of (conjunctive) query entailment over $\ruleset^\infty_\mathrm{tran}$ (see~\cite{GlimmLHS08,OrtizRS11}), in\-cen\-ti\-vi\-zing a generalization of the decidability criteria mentioned above.


On a separate, but related note: the $\bts$ class is incomparable with the class of \emph{finite-unification sets} ($\fus$)~\cite{BagLecMugSal11}, another class giving rise to decidable query entailment thanks to \emph{first-order rewritability}. Such rule sets may feature non-guarded rules \cite{CaliGP12}; for instance, they may include \emph{concept products} \cite{BourhisMP17,RudolphKH08}, which create a biclique linking instances of two unary predicates, as in 
$$
\pred{Elephant}(x) \wedge \pred{Mouse}(y) \to \pred{BiggerThan}(x,y).
$$

The above examples demonstrate a crucial weakness of the $\bts$ class, namely, its inability to tolerate universal models exhibiting ``harmless'' unbounded clique-like structures. Opportunely, the graph-theoretic notion of \emph{cliquewidth}~\cite{CourEngBook} overcomes this problem, while retaining most of the desirable properties associated with the notion of treewidth. Inspired by this less mainstream, but more powerful concept, we set out to introduce \emph{finite-cliquewidth sets}~($\bcws$) of rules. As the original cliquewidth notion is tailored to finite undirected graphs, some (not entirely straightforward) generalizations are necessary to adapt it to countable instances, while at the same time, preserving its advantageous properties.

Our contributions can be summarized as follows:
\begin{itemize}
    \item We introduce an abstract framework showing how to utilize specific types of model-theoretic measures to establish the decidability of query entailment for a comprehensive class of queries (dubbed \emph{datalog / monadic second-order queries}, or \emph{DaMSOQs} for short) that significantly extends the class of conjunctive queries. 
    
    \item Generalizing the eponymous notion from finite graph theory, we introduce the model-theoretic measure of \emph{cliquewidth} for countable instances over arbitrary signatures. Based on our framework, we show that the derived notion of \emph{finite-cliquewidth sets} 
    guarantees decidability of DaMSOQ entailment. In particular, we demonstrate that $\ruleset^\infty_\mathrm{tran}$ is indeed $\fcs$, thus showing that $\fcs$ incorporates rule sets outside $\bts$ and $\fus$.
    
    \item We compare the $\fcs$ and $\bts$ classes, obtaining: (good news) for binary signatures, $\fcs$ subsumes $\bts$, which (bad news) does not hold for higher-arity signatures, but (relieving news) $\fcs$ still ``indirectly subsumes'' higher-arity $\bts$ through reification.
    \item We compare the $\fcs$ and $\fus$ classes, obtaining: (good news) for sets of single-headed rules over signatures of arity ${\leq}2$, $\fcs$ subsumes $\fus$, which (bad news) does not hold for multi-headed rules, but (relieving news) this could not be any different as there are $\fus$ rule sets for which DaMSOQ entailment is undecidable.
    
\end{itemize}


\noindent\textbf{For space reasons, we defer technical details and most proofs to the appendix.}




\section{Preliminaries}\label{sec:preliminaries}

\medskip \noindent {\bf Syntax and formulae.}
We let $\mathfrak{T}$ denote a set of {\em terms}, defined as the union of three countably infinite, 
mutually disjoint sets of \emph{constants}~$\mathfrak{C}$, \emph{nulls}~$\mathfrak{N}$, and \emph{variables}~$\mathfrak{V}$. We use $\const{a}$, $\const{b}$, $\const{c}$, $\ldots$ (occasionally annotated) to denote constants, and use $x$, $y$, $z$, $\ldots$ (occasionally annotated) to denote both nulls and variables.
A \emph{signature}~$\sig$ is a finite set of \emph{predicate symbols} (called \emph{predicates}), which are capitalized ($\epred$, $\rpred$, $\ldots$).
Throughout the paper, we assume a fixed signature $\sig$, unless stated otherwise. For each predicate~\mbox{$\rpred \in \sig$}, we denote its 
\emph{arity} with~$\arity{\rpred}$. \prm{We speak of \emph{binary signatures} whenever $\sig$}\pin{We say $\sig$ is \emph{binary} if it} contains only predicates of arity ${\leq}2$. We assume $\sig$ to contain a special, ``universal'' unary predicate $\top$, assumed to hold of every term.\footnote{Assuming the presence of such a built-in \emph{domain predicate} $\top$ does not affect our results, but it allows for a simpler and more concise presentation.} 
An \emph{atom} over~$\sig$ is an expression  of the form~$\rpred(\vt)$, where~$\vt$ is an~$\arity{\rpred}$-tuple of terms. \prm{A \emph{ground atom} is an atom~$\rpred(\vt)$, where~$\vt$ consists only of constants}\pin{If $\vt$ consists only of constants, then $\rpred(\vt)$ is a \emph{ground atom}}.
An \emph{instance}~$\instance$ over~$\sig$ is a (possibly countably infinite)
set of atoms over constants and nulls, whereas a~$\emph{database}$~$\database$ is
a finite set of ground atoms. The \emph{active domain}~$\adom{\mathcal{X}}$ 
of a 
 set of atoms $\mathcal{X}$ is the set of terms \prm{(i.e., constants and nulls) }appearing in the atoms of $\mathcal{X}$.\prm{Clearly, databases and instances can be seen as model-theoretic structures, so we use $\models$ to indicate satisfaction as usual.} Moreover, as instances over binary signatures can be viewed as directed edge-labelled graphs\prm{ whose edges are typed with predicates}, we often use graph-theoretic terminology when discussing such objects. 

\renewcommand{\smallldots}{...\,}

\medskip \noindent {\bf Homomorphisms\prm{ and substitutions}.}\prm{A \emph{substitution} is a partial function over~$\mathfrak{T}$.}
Given sets $\mathcal{X}$, $\mathcal{Y}$ of atoms, a~\emph{homomorphism} from $\mathcal{X}$ to $\mathcal{Y}$ is a \prm{substitution}\pin{mapping} \prm{$h$ from the terms of~$\mathcal{X}$ to the terms of~$\mathcal{Y}$}\pin{\mbox{$h:\adom{\mathcal{X}}\to\adom{\mathcal{Y}}$}} that satisfies:
(i)~\prm{$\rpred(h(t_1), \smallldots, h(t_n)) \in \mathcal{Y}$, for all~$\rpred(t_1, \smallldots, t_n) \in \mathcal{X}$}\pin{$\rpred(h(\vt)) \in \mathcal{Y}$, for all~$\rpred(\vt) \in \mathcal{X}$}%
, 
and 
(ii)~$h(\const{a}) = \const{a}$, for each~$\const{a} \in \mathfrak{C}$%
. 
$\mathcal{X}$ and $\mathcal{Y}$ are \emph{homomorphically equivalent} (written $\mathcal{X} \equiv \mathcal{Y}$) \iffi homomorphisms exist from $\mathcal{X}$ to $\mathcal{Y}$ and from $\mathcal{Y}$ to $\mathcal{X}$.
A~homomorphism $h$ is an \emph{isomorphism} \iffi it is bijective and $h^{-1}$ is also a homomorphism. 
An instance~$\instance'$ is an \emph{induced sub-instance} of an instance $\instance$ \iffi (i) $\instance' \subseteq \instance$ and (ii) \prm{if $\rpred(t_1,\smallldots,t_n) \in \instance$ and $t_1,\smallldots,t_n \in \adom{\instance'}$, then $\rpred(t_1,\smallldots,t_n) \in \instance'$}\pin{if $\rpred(\vt) \in \instance$ and $\vt \subseteq \adom{\instance'}$, then $\rpred(\vt) \in \instance'$}.
\renewcommand{\smallldots}{\ldots}

\medskip \noindent {\bf Existential rules.}
An \emph{(existential) rule} $\rho$ is a first-order sentence $\forall{\vx\vy} \; \phi(\vx, \vy)
\rightarrow \exists{\vz} \; \psi(\vy, \vz)$,
where~$\vx$, $\vy$, and $\vz$ are mutually disjoint tuples of variables, and both 
the \emph{body}~$\phi(\vx, \vy)$ and the \emph{head}~$\psi(\vy, \vz)$ of~$\rho$  (denoted with~$\body(\rho)$ and~$\head(\rho)$, respectively) are conjunctions (possibly empty, sometimes \prm{denoted }\pin{seen }as sets) of atoms \prm{containing}\pin{over} 
 the indicated variables. The \emph{frontier} $\fr(\rho)$ of $\rho$ is the set of variables $\vy$ shared between the body and the head. \prm{For better readability, we}\pin{We o}ften omit the universal quantifiers prefixing existential rules. 
\prm{We say that a}\pin{A} rule~$\rho$ is \mbox{(i)~\emph{$n$-ary}} \iffi all predicates appearing in~$\rho$ are of arity at most~$n$, \mbox{(ii) \emph{single-headed}} \iffi $\head(\rho)$ contains \prm{at most one atom}\pin{a single atom}, (iii) \emph{datalog} \iffi $\rho$ does not contain an existential quantifier\prm{, and (iv) \emph{non-datalog} \iffi $\rho$ contains an existential quantifier}\pin{ (otherwise \emph{non-datalog})}. We call a finite set of existential rules $\ruleset$ a {\em rule set}. Satisfaction of a rule $\rho$  (a rule set $\ruleset$) by an instance $\instance$ is defined as usual and is written $\instance \models \rho$ ($\instance \models \ruleset$, respectively). Given a database $\database$ and a rule set $\ruleset$, we define the pair $(\db,\ruleset)$ to be a \emph{knowledge base}, and define an instance $\instance$ to be a \emph{model} of \prm{the knowledge base (i.e., of $\database$ and $\ruleset$)}\pin{$(\database, \ruleset)$}, written $\instance\models (\database,\ruleset)$, \iffi $\database \subseteq \instance$ and $\instance \models \ruleset$. A model $\instance$ of $(\database,\ruleset)$ is called \emph{universal} \iffi there is a homomorphism from $\instance$ into every model of $(\database,\ruleset)$. 

\medskip \noindent {\bf Rule application and Skolem chase.} A rule $\rho\,\, = \,\, \phi(\vx,\vy) \rightarrow \exists \vz \psi(\vy,\vz)$ is \emph{appli\-ca\-ble} to an instance $\instance$ \ifandonlyif there is a homomorphism $\hism$ \prm{mapping}\pin{from} $\phi(\vec{x},\vec{y})$ to $\instance$. We then call $(\rho, \hism)$ a \emph{trigger} of $\instance$.
The \emph{application} of a trigger $(\rho, \hism)$ in $\instance$ yields the instance $\fchain{\inst,\rho,\hism} = \instance \cup {\bar\hism}(\psi(\vec{y},\vec{z}))$, where $\bar\hism$ extends $\hism$, mapping each variable $z$ from $\vz$ to a null denoted \smash{$z_{\rho,\hism(\vy)}$.} 
Note that this entails
$\fchain{\fchain{\inst,\rho,\hism},\rho,\hism'} = \fchain{\inst,\rho,\hism}$ whenever $\hism(\vy) = \hism'(\vy)$. 
%
Moreover, applications of different rules or the same rule with different frontier-mappings are independent, so their order is irrelevant. Hence we can define the parallel one-step application of all applicable rules as
$$\ksat{1}{\inst,\ruleset} = 
\hspace{-9ex}\bigcup_{\hspace{9ex}\rho\in \ruleset,\,(\rho, \hism) \text{\,trigger of\,} \instance}\hspace{-9ex} 
\fchain{\inst,\rho,\hism}.$$     
Then, we define the \emph{(breadth-first) Skolem chase sequence} by letting 
$\ksat{0}{\inst,\ruleset} = \inst$ and $\ksat{i+1}{\inst,\ruleset} = \ksat{1}{\ksat{i}{\inst,\ruleset}, \ruleset}$, ultimately obtaining the \emph{Skolem chase} $\sat{\inst,\ruleset} = \bigcup_{i\in \mathbb{N}} \ksat{i}{\inst,\ruleset}$. We note that the Skolem chase of a countable instance is countable, as  is the number of overall rule applications performed to obtain $\sat{\inst,\ruleset}$. 

\medskip \noindent {\bf (Unions of) conjunctive queries and their entailment.}
A {\em conjunctive query} (CQ) is a  formula $\query(\vy) = \exists \vx \, \phi(\vx, \vy)$ with $\phi(\vx, \vy)$ a conjunction (sometimes written as a set) of atoms over the variables from $\vx,\vy$ and constants.
 The variables from $\vy$ are called {\em free}. A \emph{Boolean} CQ (or BCQ) is a CQ with no free variables.
 A \emph{union of conjunctive queries} (UCQ) $\psi(\vy)$ is a disjunction of CQs with free variables $\vy$. 
 We will treat UCQs as sets of CQs. 
\prm{Satisfaction of a BCQ $\query = \exists \vx \, \phi(\vx)$ in an instance $\instance$ is defined as usual and coincides with the existence of a homomorphism from $\phi(\vx)$ to $\instance$.}\pin{A BCQ $\query = \exists \vx \, \phi(\vx)$ is satisfied in an instance $\instance$ if there exists a homomorphism from $\phi(\vx)$ to~$\instance$.} $\instance$ satisfies a union of BCQs if it satisfies at least one of its disjuncts.
An instance $\inst$ and rule set $\ruleset$ \emph{entail} a BCQ $\query = \exists \vx \, \phi(\vx)$, written $(\inst, \ruleset) \models \query$  \iffi $\phi(\vx)$ maps homomorphically into every model of $\inst$ and $\ruleset$. This coincides with  the existence of a homomorphism from $\phi(\vx)$ into any universal model of $\inst$ and $\ruleset$ (e.g., the Skolem chase $\sat{\inst, \ruleset}$).

\medskip \noindent {\bf Rewritings and finite-unification sets.}
Given a rule set $\ruleset$ and a CQ $\query(\vy)$, we say that a UCQ $\psi(\vy)$ is a {\em rewriting} of $\query(\vy)$ under the rule set $\ruleset$ \ifandonlyif for any database $\database$ and any tuple of its elements $\vec{\const{a}}$ the following holds: 
$\sat{\database,\ruleset} \models \query(\vec{\const{a}}) \ \ifandonlyif \ \database \models \psi(\vec{\const{a}}).$
A rule set $\ruleset$ is a {\em finite-unification set} ($\fus$) \iffi for every CQ, there exists a UCQ rewriting \cite{BagLecMugSal11}. This property is also referred to as \emph{first-order rewritability}.
If a rule set $\ruleset$ is $\fus$, then for any given CQ~$\query(\vy)$, we fix one of its rewritings under $\ruleset$ and denote it with $\rewrs{\ruleset}{\query(\vy)}$.

\medskip \noindent {\bf Treewidth and bounded-treewidth sets.}
Let $\inst$ be an instance. A \emph{tree decomposition} of $\inst$ is a (potentially infinite) tree $\tree = (V,E)$, where:
\begin{itemize}

\item  $V \subseteq 2^{\adom{\inst}}$, that is, each node $X \in V$ is a set of terms of $\inst$, and $\bigcup_{X \in V} X = \adom{\inst}$,

\item for each $\rpred(t_{1}, ...\,,t_{n}) \in \inst$, there is an $X \in V$ with $\{t_{1}, ...\,,t_{n}\} \subseteq X$,

\item for each term $t$ in $\inst$, the subgraph of $T$ induced by the nodes $X$ with $t \in X$ is connected.
\end{itemize}

The \emph{width} of a tree decomposition is set to be the maximum over the sizes of all its nodes minus $1$, if such a \prm{finite} maximum exists; otherwise, it is set to $\infty$. Last, we define the \emph{treewidth} of an instance $\inst$ to be the minimal width among all of its tree decompositions, and denote the treewidth of $\inst$ as $\treewidth{\inst}$.
A set of rules $\ruleset$ is a \emph{bounded-treewidth set} ($\bts$) \ifandonlyif for any database $\database$, there is a universal model for $(\database,\ruleset)$ of finite treewidth.\footnote{The term ``\emph{finite}-treewidth set'' would be more fitting and in line with our terminology, but we stick to the established name. Also, the $\bts$ notion is not used entirely consistently in the literature; it sometimes refers to structural properties of a specific type of chase. The ``semantic $\bts$'' notion adopted here subsumes all the others.}

\section{A Generic Decidability Argument}

In this section, we provide an abstract framework for establishing de\-ci\-da\-bi\-li\-ty of entailment for a wide range of queries, 
based on 
cer\-tain model-theoretic criteria being met 
by the considered rule set. 
We first recall the classical notion of a (Boolean) datalog query, and after, we specify the class of queries considered in our frame\-work.

\begin{definition}
Given a signature $\sig$, a \emph{(Boolean) datalog query} $\mathfrak{q}$ over $\sig$ is represented by a  
finite set $\ruleset_\mathfrak{q}$ of datalog rules with predicates from $\sig_{\mathrm{EDB}}\uplus\sig_{\mathrm{IDB}}$ where  $\sig_{\mathrm{EDB}}{\,\subseteq\,} \sig$ and $\sig_{\mathrm{IDB}}{\,\cap\,} \sig=\emptyset$ 
such that (i) $\sig_{\mathrm{EDB}}$-atoms do not occur in rule heads of $\ruleset_\mathfrak{q}$%
, and (ii) $\sig_{\mathrm{IDB}}$ contains a distinguished nullary predicate $\mathtt{Goal}$. 
Given an instance $\instance$ and a datalog query $\mathfrak{q}$\prm{ over $\Sigma$}, we say that $\mathfrak{q}$ \emph{holds in} $\instance$ (\prm{or, }$\instance$ \emph{satisfies} $\mathfrak{q}$), written $\instance\models\mathfrak{q}$, \iffi $\mathtt{Goal}\in\sat{\instance,\ruleset_\mathfrak{q}}$. Query entailment is\prm{ then} defined via satisfaction as usual.
\defend
\end{definition}

Datalog queries can be equivalently expressed as sentences in second-order logic (with the $\sig_{\mathrm{IDB}}$ predicates quantified over) or in least fixed-point logic (LFP). For our purposes, the given formulation is the most convenient; e.g., it makes clear that the second-order entailment problem $(\db, \ruleset) \models \mathfrak{q}$ reduces to the first-order entailment problem $(\db,\ruleset\cup\ruleset_\mathfrak{q}) \models \mathtt{Goal}$. 
\begin{definition}
A \emph{datalog/MSO query (DaMSOQ)} over a signature $\sig$
is a pair $(\mathfrak{q},\Xi)$, where~$\mathfrak{q}$ is a datalog query over $\sig$ and $\Xi$ is a monadic second-order (MSO)\footnote{For an introduction to monadic second-order logic, see~\cite[Section 1.3]{CourEngBook}.} sentence equivalent to~$\mathfrak{q}$. 
Satisfaction and entailment of DaMSOQs is defined via any of their constituents. 
\defend
\end{definition}
Consequently, DaMSOQs are the (semantic) intersection of datalog and MSO queries. While representing DaMSOQs as a pair $(\mathfrak{q},\Xi)$ is logically redundant, it is purposeful and necessary: the below decision procedure requires both constituents as input and it is not generally possible to compute one from the other. Arguably, the most comprehensive, known DaMSOQ fragment that is well-investigated and has a syntactic definition is that of \emph{nested monadically defined queries}, a very expressive yet computationally manageable formalism \cite{RudKro13}, subsuming (unions of) Boolean CQs,
 \emph{monadic datalog queries}~\cite{Cosmadakis88}, \emph{conjunctive 2-way regular path queries}~\cite{FlorescuLS98}, and nested versions thereof (e.g. \emph{regular queries}~\cite{ReutterRV17} and others~\cite{BourhisKR14}).


\begin{definition}\label{def:MSOfriendly}
Let $\Sigma$ be a finite signature.
A \emph{width measure} (for $\Sigma$) is a function $\fcn{w}$ mapping every countable instance over $\Sigma$ to a value from $\mathbb{N} \cup \{\infty\}$.
We call $\fcn{w}$ \emph{MSO-friendly} \iffi there exists an algorithm that, taking a number $n \in \mathbb{N}$ and an MSO sentence $\Xi$ as input,
\vspace{-0.5ex}
\begin{itemize}
\item never terminates if $\Xi$ is unsatisfiable, and
\item always terminates if $\Xi$ has a model $\instance$ with
$\fcn{w}(\instance)\leq n$.\defend
\end{itemize} 
\end{definition}

As an unsophisticated example, note that the \emph{expansion} function $\fcn{expansion}: \inst \mapsto |\adom{\inst}|$, mapping every countable instance to the size of its domain, is an MSO-friendly width measure: there are up to isomorphism only finitely many instances with $n$ elements, which can be computed and checked. As a less trivial example, the notion of treewidth has also been reported to fall into this category~\cite{BagLecMugSal11}.

\begin{definition}
Let $\fcn{w}$ be a width measure.
A rule set $\ruleset$ is called a \emph{finite-$\fcn{w}$ set} \iffi for every database $\database$, there exists a universal model $\instance^*$ of $(\db,\ruleset)$  satisfying $\fcn{w}(\instance^*) \in \mathbb{N}$.\defend
\end{definition}

Note that the finite width required by this definition does not need to be uniformly bounded: it may depend on the database and thus grow beyond any finite bound. This is already the case when using the $\fcn{expansion}$ measure from above, giving rise to the class of \emph{finite-expansion sets} ($\fes$), coinciding with the notion of \emph{core-chase terminating} rule sets~\cite{DeuNasRem08}.    

\begin{theorem}\label{thm:decidablequerying}
Let $\fcn{w}$ be an MSO-friendly width measure and let $\ruleset$ be a \emph{finite-$\fcn{w}$ set}.
Then, the entailment problem $(\db, \ruleset) \models (\mathfrak{q},\Xi)$ for any \prm{given }database $\database$ and DaMSOQ $(\mathfrak{q},\Xi)$ is decidable.
\end{theorem}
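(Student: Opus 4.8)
The plan is to combine three ingredients: (1) the reduction of datalog entailment to first-order entailment noted right after the definition of datalog queries, (2) the model-theoretic characterization of non-entailment via universal models of finite $\fcn{w}$, and (3) a complementary semi-decision procedure for entailment obtained by running a proof search (or, equivalently, enumerating finite initial segments of the Skolem chase). Running these two semi-procedures in parallel yields the claimed decidability.

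First I would set up the non-entailment side. Suppose $(\db,\ruleset)\not\models(\mathfrak{q},\Xi)$. Since $\ruleset$ is a finite-$\fcn{w}$ set, there is a universal model $\instance^*$ of $(\db,\ruleset)$ with $\fcn{w}(\instance^*)\in\mathbb{N}$. Because $\instance^*$ is universal, $(\db,\ruleset)\models\mathfrak{q}$ would force $\instance^*\models\mathfrak{q}$, so from non-entailment we get $\instance^*\not\models\mathfrak{q}$, i.e., $\instance^*\not\models\Xi$ (using that $\Xi\equiv\mathfrak{q}$). Now consider the MSO sentence $\Xi' := \Xi_{\db,\ruleset}\wedge\neg\Xi$, where $\Xi_{\db,\ruleset}$ is a first-order (hence MSO) sentence axiomatizing the models of $(\db,\ruleset)$ — this is expressible since $\db$ is finite and $\ruleset$ is a finite set of first-order sentences, and one must be slightly careful to also assert the presence of the finitely many database constants (or work with a constant-free encoding). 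Then $\instance^*$ is a model of $\Xi'$ with $\fcn{w}(\instance^*)\le n$ for some $n\in\mathbb{N}$. By MSO-friendliness of $\fcn{w}$, the associated algorithm, run on input $(n,\Xi')$, terminates. Dovetailing over all $n\in\mathbb{N}$ (run the MSO-friendly algorithm on $(0,\Xi'),(1,\Xi'),\dots$ in a round-robin fashion), we obtain a procedure that halts whenever $(\db,\ruleset)\not\models(\mathfrak{q},\Xi)$. Conversely, if it halts, then $\Xi'$ is satisfiable (the MSO-friendly algorithm never terminates on unsatisfiable input), so there is a model of $(\db,\ruleset)$ violating $\Xi$, hence violating $\mathfrak{q}$, witnessing non-entailment; the two directions match, so this is a sound and complete semi-decision procedure for non-entailment.

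For the entailment side I would use the datalog-to-first-order reduction: $(\db,\ruleset)\models(\mathfrak{q},\Xi)$ iff $(\db,\ruleset\cup\ruleset_\mathfrak{q})\models\mathtt{Goal}$, which is a first-order entailment problem and therefore recursively enumerable — concretely, one can enumerate the breadth-first Skolem chase $\ksat{0}{}\subseteq\ksat{1}{}\subseteq\cdots$ of $(\db,\ruleset\cup\ruleset_\mathfrak{q})$ and halt as soon as $\mathtt{Goal}$ appears in some $\ksat{i}{\db,\ruleset\cup\ruleset_\mathfrak{q}}$; this is correct because the Skolem chase is a universal model and $\mathtt{Goal}$, being nullary, is entailed iff it lies in some finite stage. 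Running this entailment semi-procedure in parallel with the non-entailment semi-procedure of the previous paragraph, exactly one of them halts on every input, and we output the corresponding answer — so the entailment problem is decidable.

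The main obstacle, and the only place requiring genuine care, is the non-entailment branch: one must be sure that (a) the class of models of $(\db,\ruleset)$ that also falsify $\mathfrak{q}$ is exactly captured by a single MSO sentence $\Xi'$ (handling the finitely many database constants correctly, e.g., by marking them with fresh monadic predicates or by Skolemizing the database), and (b) the width measure $\fcn{w}$ being finite on \emph{some} universal model — not necessarily bounded across databases — still suffices, which it does because for a \emph{fixed} $\db$ we only need one witnessing value $n$, after which the dovetailing over $n$ takes care of the rest. With these observations in place the argument is essentially the parallel-search template, and no further combinatorics is needed.
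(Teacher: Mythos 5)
Your proposal is correct and follows essentially the same two-semi-procedure strategy as the paper: a positive branch reducing DaMSOQ entailment to the first-order entailment of $\mathtt{Goal}$ over $(\db,\ruleset\cup\ruleset_\mathfrak{q})$, and a negative branch that characterizes countermodels by the MSO sentence $\bigwedge\db\wedge\bigwedge\ruleset\wedge\neg\Xi$ and dovetails the MSO-friendly algorithm over all width bounds $n$, using homomorphism-preservation of $\mathfrak{q}$ to transfer non-satisfaction to the finite-width universal model. The only cosmetic difference is that you instantiate the positive branch via the Skolem chase of $(\db,\ruleset\cup\ruleset_\mathfrak{q})$ rather than invoking completeness of first-order logic; both are valid.
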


\begin{proof}
We prove decidability by providing two semi-deci\-sion procedures: one terminating whenever  
$(\db,\ruleset)\models (\mathfrak{q},\Xi)$, the other terminating whenever $(\db,\ruleset) \not\models (\mathfrak{q},\Xi)$. Then, these two procedures, run in parallel, constitute a decision procedure.
\begin{itemize}
    \item Detecting $(\db,\ruleset) \models (\mathfrak{q},\Xi)$. We note that $(\db,\ruleset) \models (\mathfrak{q},\Xi)$ \iffi $(\db,\ruleset) \models \mathfrak{q}$ \iffi $(\db,\ruleset\cup\ruleset_\mathfrak{q})\models \mathtt{Goal}$. The latter is a first-order entailment problem. Thanks to the completeness of first-order logic \cite{God29FOLcomp}, we can recursively enumerate all the consequences of $(\db,\ruleset\cup\ruleset_\mathfrak{q})$ and terminate as soon as we find $\mathtt{Goal}$ among those, witnessing entailment of the query.
    \item Detecting $(\db,\ruleset) \not\models (\mathfrak{q},\Xi)$. 
    %
    %
    In that case, there must exist some model $\instance$ of $(\db,\ruleset)$ with $\instance\not\models (\mathfrak{q},\Xi)$. Note that such ``countermodels'' can be characterized by the MSO formula $\bigwedge\!\db \wedge \bigwedge\! \ruleset \wedge \neg\Xi$.
    Moreover, any universal model $\instance^*$ of $(\db,\ruleset)$ must satisfy $\instance^*\not\models (\mathfrak{q},\Xi)$, which can be shown (by contradiction) as follows: Let $\instance^*$ be a universal model of $(\db,\ruleset)$ and suppose $\instance^*\models (\mathfrak{q},\Xi)$, i.e., $\instance^*\models \mathfrak{q}$. Since satisfaction of datalog queries is preserved under homomorphisms and $\instance^*$ is universal, we know there exists a homomorphism from $\instance^*$ to $\instance$, implying $\instance\models \mathfrak{q}$, and thus $\instance\models (\mathfrak{q},\Xi)$, contradicting our assumption. As $\rs$ is a finite-$\fcn{w}$ set, there exists a universal model $\instance^*$ of $(\db,\ruleset)$ for which $\fcn{w}(\instance^*)$ is finite. Hence, the following procedure will terminate, witnessing non-entailment: enumerate all natural numbers in their natural order and for each number $n$ initiate a parallel thread with the algorithm from \Cref{def:MSOfriendly} with input $n$ and $\bigwedge\!\db \wedge \bigwedge\! \ruleset \wedge \neg\Xi$ (the algorithm is guaranteed to exist due to MSO-friendliness of $\fcn{w}$). Terminate as soon as one thread does.\qedhere
\end{itemize}
\end{proof}

\section{Cliquewidth and its Properties}\label{sec:cliquewidth}

In this section, we will propose a width measure, for which we use the term \emph{cliquewidth}.
Our definition of this measure works for arbitrary countable instances 
and thus properly generalizes Courcelle's earlier eponymous notions for finite directed edge-labelled graphs \cite{CourEngBook} and countable un\-label\-led un\-di\-rec\-ted graphs \cite{Cou04}, as well as
Grohe and Turán's cliquewidth notion for finite instances of arbitrary arity \cite{GroheT04}.

\subsection{Cliquewidth of Countable Instances}\label{sec:cwdef}

Intuitively, the notion of cliquewidth is based on the idea of assembling the considered structure (e.g., instance or graph) from its singleton elements (e.g., terms or nodes). To better distinguish these elements during assembly, each may be assigned an initial color (from some finite set $\cols$). The ``assembly process'' consists of successively applying the following operations to 
previously assembled node-colored structures:
\begin{itemize}
    \item take the disjoint union of two structures ($\oplus$),
    \item uniformly assign the color $k'$ to all hitherto $k$-colored elements ($\fcn{Recolor}_{k\to k'}$),
    \item given a predicate $\rpred$ and a color sequence $\vk$ of length $\arity{\rpred}$, add $\rpred$-atoms for all tuples of appropriately colored elements ($\fcn{Add}_{\rpred,\vk}$).
\end{itemize}
Then, the cliquewidth of a structure is the minimal number of colors needed to assemble it \prm{(or rather, some \prm{arbitrarily }colored version of it)} through successive applications of the above operations. 
For finite structures (e.g., graphs and instances), this is a straightforward, conceivable notion. 
In order to generalize it to the countably infinite case, one has to find a way to describe infinite assembly processes. 
In the finite case, an ``assembly plan'' can be described by an algebraic expression using the above operators, which in turn can be represented by its corresponding ``syntax tree.'' The more elusive idea of an ``infinite assembly plan'' is then implemented by allowing for infinite, ``unfounded'' syntax trees.
 We formalize this idea of ``assembly-plan-encoding syntax trees'' by representing them as countably infinite instances of a very particular shape.

\newcommand{\szero}{\pred{Succ}_0}
\newcommand{\sone}{\pred{Succ}_1}
\newcommand{\ibtree}{\mathcal{T}_\mathrm{bin}}
\newcommand{\decorum}{\mathrm{Dec}(\cols, \mathrm{Cnst},\Sigma)}

\begin{definition}
We define the \emph{infinite binary tree} to be the instance
$$
\ibtree = \big\{\szero(s,s0) \ \big| \ s {\,\in\,} \{0,1\}^*\big\} \cup 
          \big\{ \sone(s,s1) \ \big| \ s {\,\in\,} \{0,1\}^*\big\}
$$
with binary predicates $\szero$ and $\sone$.
That is, the\prm{infinitely many} nulls of $\ibtree$ are denoted by finite sequences of $0$ and $1$. The \emph{root} of $\ibtree$ is the null identified by the empty sequence, denoted $\varepsilon$.

Given a finite set $\cols$ of \emph{colors}, a finite set $\mathrm{Cnst}\subseteq \mathfrak{C}$ of constants, and a finite signature $\Sigma$, the set $\decorum$ of \emph{decorators} consists of the following unary predicate symbols: 

%
\smallskip
\noindent\begin{minipage}[t]{0.57\linewidth}
\begin{itemize}
\item $\const{c}_k$ for any $\const{c} \in\mathrm{Cnst} \cup\{\const{*}\}$ and $k\in\cols$,
\item $\pred{Add}_{\rpred,\vk}$ for any $\rpred\in\Sigma$ and $\vk\in\cols^{\arity{\rpred}}$,
\end{itemize}
\end{minipage}
\begin{minipage}[t]{0.42\linewidth}
\begin{itemize}
\item $\pred{Recolor}_{k\to k'}$ for $k,k'\in \cols$,
\item $\oplus$, and
$\pred{Void}$.
\end{itemize}
\end{minipage}\\[1ex]

A \emph{$(\cols,\mathrm{Cnst},\Sigma)$-decorated infinite binary tree} (or, \emph{decorated tree} for short) is $\ibtree$ extended with facts over $\decorum$ that only use nulls from the original domain of $\ibtree$, i.e., from $\{0,1\}^*$.
A decorated tree $\mathcal{T}$ is called a  \emph{well-decorated tree} \iffi
\begin{itemize}
\item for every null $s \in \{0,1\}^*$, $\mathcal{T}$ contains exactly one fact $\mathrm{Dec}(s)$ with  $\mathrm{Dec} \in \decorum$,
\item for every $\const{c}\in \mathrm{Cnst}$, $\mathcal{T}$ contains at most one fact of the form $\pred{c}_k(s)$,
\item if $\pred{Add}_{\rpred,\vk}(s) \in \mathcal{T}$ or $\pred{Recolor}_{k\to k'}(s) \in \mathcal{T}$,
       then $\pred{Void}(s0)\not\in \mathcal{T}$ and $\pred{Void}(s1) \in \mathcal{T}$,
\item if $\oplus(s) \in \mathcal{T}$, then $\pred{Void}(s0),\pred{Void}(s1) \not\in \mathcal{T}$,
\item if $\pred{Void}(s) \in \mathcal{T}$ or $\pred{c}_k(s) \in \mathcal{T}$, then  $\pred{Void}(s0),\pred{Void}(s1) \in \mathcal{T}$.\defend

\end{itemize}
\end{definition}

Recall that, due to Rabin's famous Tree Theorem \cite{Rabin69}, 
the validity of a given MSO sentence $\Xi$ in $\ibtree$ is decidable.
Also, it should be obvious that, given a decorated tree, checking well-decoratedness can be done in first-order logic. More precisely, fixing $\cols$, $\Sigma$, and $\mathrm{Cnst}$,  there is a first-order sentence $\Phi_\mathrm{well}$ such that for any $(\cols,\mathrm{Cnst},\Sigma)$-decorated tree $\mathcal{T}$, $\mathcal{T}$ is well-decorated \ifandonlyif $\mathcal{T} \models \Phi_\mathrm{well}$. 

\begin{definition}\label{def:entities-tree-to-instance}
Let $\mathcal{T}$ be a $(\cols,\mathrm{Cnst},\Sigma)$-well-decorated tree.
We define the function $\fcn{ent}^\mathcal{T}\!:\{0,1\}^* \to 2^{\mathrm{Cnst} \cup \{0,1\}^*}$ mapping each null $s \in \{0,1\}^*$ to its \emph{entities} (a set of nulls and constants) as follows:
$$\fcn{ent}^\mathcal{T}\!(s) = \big\{ ss' \mid \pred{*}_k(ss') \in \mathcal{T}, s' \in \{0,1\}^*\big\} \cup \big\{ \const{c} \mid \pred{c}_k(ss') \in \mathcal{T}, \const{c}\in \mathrm{Cnst}, s' \in \{0,1\}^*\big\}.$$
Every tree node $s$ also endows each of its entities with a \emph{color} from $\cols$ through the function $\fcn{col}^\mathcal{T}_s: \fcn{ent}^\mathcal{T}\!(s) \to \cols$ in the following way:
$$ \fcn{col}^\mathcal{T}_s\!(e) = \begin{cases}
k &\!\! \text{if } e=\const{c}\in \mathrm{Cnst} \text{ and } \pred{c}_k(s)\in \mathcal{T}, \text{or if } e=s\in \{0,1\}^* \text{ and } \pred{*}_k(s)\in \mathcal{T},\\
k' &\!\! \text{if } \pred{Recolor}_{k\to k'}(s)\in \mathcal{T} \text{ and } \fcn{col}^\mathcal{T}_{s0}(e) = k,\\
\fcn{col}^\mathcal{T}_{s0}(e) &\!\! \text{if } \pred{Recolor}_{k\to k'}(s)\in \mathcal{T} \text{ and } \fcn{col}^\mathcal{T}_{s0}(e) \neq k, \text{or if } \pred{Add}_{\rpred,\vk}(s)\in \mathcal{T},\\
\fcn{col}^\mathcal{T}_{sb}(e) &\!\! \text{if } {\oplus}(s) \in \mathcal{T}, e \in \fcn{ent}^\mathcal{T}\!(sb), \text{ and } b \in \{0,1\}.\\
\end{cases}$$
\prm{Furthermore, e}\pin{E}very node $s$ is assigned a set of $\Sigma$-atoms over its entities as indicated by the sets $\mathrm{Atoms}_s$\pin{:}\prm{ defined by}
$$ \mathrm{Atoms}_s = 
\begin{cases}
\{\top(\const{c})\} & \!\!\text{if } \pred{c}_k(s)\in \mathcal{T}\!, \const{c} \in \mathrm{Cnst}, \\
\{\top(s)\} & \!\!\text{if } {\pred{*}_k(s)}\in \mathcal{T}, \\
\{\rpred(\ve) \mid \fcn{col}^\mathcal{T}_{s}\!(\ve)=\vk\} & \!\!\text{if } \pred{Add}_{\rpred,\vk}(s)\in \mathcal{T},\\
\emptyset & \!\!\text{otherwise.}\\
\end{cases}$$
Defining a \emph{colored instance} as a pair $(\instance,\coloring)$ of an instance $\instance$ and a function $\coloring$ mapping elements of $\adom{\instance}$ to colors in a set $\cols$, we now associate each node $s$ in $\mathcal{T}$ with the 
colored $\Sigma$-instance $(\instance^\mathcal{T}_s,\coloring^\mathcal{T}_s)$, 
with
$\instance^\mathcal{T}_s = {\bigcup}_{s'\in \{0,1\}^*}\mathrm{Atoms}_{ss'}$
and
$
\coloring^\mathcal{T}_s = \fcn{col}^\mathcal{T}_{s}.
$
Finally, we define the \emph{colored instance $(\instance^\mathcal{T}\!,\coloring^\mathcal{T})$ represented by $\mathcal{T}$} as $(\instance^\mathcal{T}_\varepsilon\!,\coloring^\mathcal{T}_\varepsilon)$ where $\varepsilon$ is the root of $\mathcal{T}$.
\end{definition}

\begin{definition}
Given a colored instance $(\instance,\coloring)$ over a finite set $\mathrm{Cnst}$ of constants and a countable set of nulls as well as a finite signature $\Sigma$,
the \emph{cliquewidth} of $(\instance,\coloring)$, written $\cw{\instance,\coloring}$, is defined to be the smallest natural number $n$ such that $(\instance,\coloring)$ is isomorphic to a colored instance represented by some $(\cols,\mathrm{Cnst},\Sigma)$-well-decorated tree with $|\cols|=n$.
If no such number exists, we let \mbox{$\cw{\instance,\coloring}= \infty$}. The cliquewidth $\cw{\instance}$ of an instance $\instance$ is defined to be the minimum cliquewidth over all of its colored versions.\defend
\end{definition}

\begin{example}\label{ex:cw-transitivity}
The instance \mbox{$\inst_{<} =  \{\rpred(n,m) \mid n,m \in \mathbf{N}, n < m\}$} has a cliquewidth of $2$, witnessed by the well-decorated tree corresponding to the (non-well-founded) expression $E$ impli\-cit\-ly defined by $E = \fcn{Add}_{\rpred,1,2}({\const{*}_1} \oplus \fcn{Recolor}_{1\to 2}(E))$. 
\end{example}
%
%
%
We will later use the following operation on decorated trees.
\begin{definition}
Let $(\inst,\coloring)$ be a colored instance and $\mathcal{T}$ be a well-decorated tree witnessing that $\cw{\inst, \coloring} \leq n$. Then for any $\rpred \in \Sigma$
we let $\fcn{Add}_{\rpred,\vk}(\inst, \coloring)$ denote the instance $\inst^{\mathcal{T'}}$ represented by the well-decorated tree $\mathcal{T'}$ defined as follows:
\vspace{-0.3ex}
\begin{itemize}
    \item The root $\varepsilon$ of $\mathcal{T'}$ is decorated by $\addd{\rpred,\vk}$,
    \item the left sub-tree of $\varepsilon$ is isomorphic to $\mathcal{T}$,
    \item the right sub-tree of $\varepsilon$ is wholly decorated with $\pred{Void}$.\defend
    \end{itemize}
\end{definition}


\subsection{Finite-Cliquewidth Sets and Decidability}\label{sec:fcw-decidability}

We now identify a new class of rule sets for which DaMSOQ query entailment is decidable, that is, the class of \emph{finite-cliquewidth sets}. 

\begin{theorem}\label{thm:cliquewidth-k-MSO-decidable}
For a fixed $n\in \mathbb{N}$, determining if a given MSO formula~$\Xi$ 
has a model $\inst$ with $\cw{\inst} \leq n$ is decidable. Thus, cliquewidth is MSO-friendly.
\end{theorem}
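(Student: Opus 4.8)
The plan is to reduce the question ``does $\Xi$ have a model of cliquewidth $\le n$'' to an MSO-satisfiability question over the infinite binary tree $\ibtree$, which is decidable by Rabin's Tree Theorem. The key observation is that a colored instance has cliquewidth $\le n$ iff it is (isomorphic to) the instance represented by some $(\cols,\cnst,\sig)$-well-decorated tree with $|\cols|=n$, and well-decorated trees are exactly the $(\cols,\cnst,\sig)$-decorated trees satisfying the first-order sentence $\Phi_\mathrm{well}$ recalled above — hence a particular class of decorations of the fixed skeleton $\ibtree$. So the whole search space of candidate witnesses of bounded cliquewidth is encodable as a space of MSO-definable labelings of $\ibtree$.

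The main step is to show that, for fixed $\cols$ with $|\cols|=n$ (and with $\cnst$ the constants occurring in $\Xi$, enlarged if $\Xi$ mentions constants — strictly speaking we can take $\cnst$ to be the finite set of constants appearing in $\Xi$), one can write an MSO sentence $\Xi^\ast$ over the signature of $(\cols,\cnst,\sig)$-decorated trees such that $\ibtree$, decorated by some $\der$, satisfies $\Xi^\ast$ iff $\der$ is well-decorated and the represented colored instance $(\inst^\der,\coloring^\der)$ satisfies $\Xi$. This $\Xi^\ast$ is the conjunction of $\Phi_\mathrm{well}$ with a ``relativized'' copy of $\Xi$. The relativization is the delicate part: the elements of $\inst^\der$ are the entities of the root, i.e.\ tree nodes $s$ carrying a decorator $\pred{*}_k$ together with the constants $\const{c}$ carrying $\pred{c}_k$; first-order quantifiers of $\Xi$ must be relativized to this set (with constants handled by naming the unique node labelled $\const{c}_k$), and each atom $\rpred(\vx)$ of $\Xi$ must be translated into an MSO formula over the tree expressing ``$\rpred(\ve)$ holds in $\inst^\der$ for the entities $\ve$ named by $\vx$.'' Unwinding \Cref{def:entities-tree-to-instance}, membership of an atom in $\inst^\mathcal{T}_\varepsilon$ is an MSO-expressible property: it amounts to the existence of a node $s$ carrying an $\pred{Add}_{\rpred,\vk}$ decorator such that all the entities in question lie in $\fcn{ent}^\mathcal{T}(s)$ with colors matching $\vk$, where $\fcn{col}^\mathcal{T}_s$ itself is definable by an MSO formula tracing the recolorings along the path from $s$ down to the entities (this is a reachability/least-fixed-point-style condition, hence MSO-expressible on trees). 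Care is needed with the $\oplus$ clause: ``being an entity of the root'' is itself a path condition (the entity's node must be reached from $\varepsilon$ through $\oplus$-branchings and $\pred{Add}/\pred{Recolor}$-branches consistently), which is again MSO-definable.

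With $\Xi^\ast$ in hand, the algorithm is: given $n$, set $\cols = \{1,\dots,n\}$, compute $\cnst$ and $\Xi^\ast$, and decide whether $\Xi^\ast$ holds in $\ibtree$ (possibly with a finite additional set of MSO-definable unary predicates over $\ibtree$, which Rabin's theorem still handles — decorations are just a finite enrichment of the tree signature, and MSO over the enriched tree reduces to MSO over $\ibtree$ by existentially quantifying the decoration predicates). If it does, there is a well-decorated tree whose represented instance satisfies $\Xi$, i.e.\ $\Xi$ has a model of cliquewidth $\le n$; conversely any such model yields, by definition of cliquewidth, a well-decorated tree witnessing $\Xi^\ast$. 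This proves decidability, and since the only property of $n$ used is that $\cols$ is finite, it shows that for fixed $n$ the procedure terminates whenever $\Xi$ has a model of cliquewidth $\le n$; combined with the trivial non-terminating semi-decision procedure for unsatisfiability of $\Xi$ in the remaining direction (or simply noting that the problem as stated is a decision problem and our reduction is an honest decision procedure), this establishes MSO-friendliness of cliquewidth in the sense of \Cref{def:MSOfriendly}.

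The step I expect to be the main obstacle is the faithful MSO-translation of the ``atom membership'' relation of $\inst^\mathcal{T}_\varepsilon$, specifically the bookkeeping of colors through $\fcn{col}^\mathcal{T}_s$ across both $\oplus$-nodes and chains of $\pred{Recolor}$ decorators: one must verify that ``the color of entity $e$ as seen from node $s$ equals $k$'' is genuinely MSO-definable uniformly in $s$, $e$, $k$. This is true because it is a regular path property on the finite-alphabet-labelled tree (one can even think of it as running a finite automaton down the path from $\varepsilon$ to the node hosting $e$), but writing the formula and checking it matches \Cref{def:entities-tree-to-instance} clause by clause — including the well-decoratedness constraints that guarantee the path structure is as expected — is where the real work lies. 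Everything else (relativization of quantifiers, handling of constants, the appeal to Rabin) is routine.
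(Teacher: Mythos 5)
Your proposal is correct and follows essentially the same route as the paper: an MSO interpretation of the represented instance inside the decorated tree (relativizing quantifiers to entity-introducing nodes, translating atoms via a color-tracking formula along tree paths, conjoining with $\Phi_\mathrm{well}$, existentially quantifying the decorators) followed by Rabin's Tree Theorem. The color-bookkeeping step you flag as the main obstacle is exactly what the paper's appendix carries out explicitly, encoding the recoloring semantics as a least-fixed-point-style MSO formula over the path from the entity's node to the root.
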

\begin{proof}\!(Sketch)
We use the classical idea of MSO interpretations.
Picking $\cols=\{1,\ldots,n\}$, one shows that for every given MSO sentence $\Xi$, one can compute an MSO sentence $\Xi'$, 
such that for every $(\cols,\mathrm{Cnst},\Sigma)$-well-deco\-rated tree $\mathcal{T}$, $\instance^\mathcal{T} \models \Xi$ \iffi $\mathcal{T} \models \Xi'$. 
Thus, checking if $\Xi$ holds in some $\mathrm{Cnst},\Sigma$-instance of cliquewidth ${\leq}{n}$ can be done by checking if $\Xi'$ holds in some $(\cols,\mathrm{Cnst},\Sigma)$-well-decorated tree, which in turn is equivalent to the existence of a $(\cols,\mathrm{Cnst},\Sigma)$-decorated tree that is a model of $\Xi'\wedge \Phi_\mathrm{well}$. 
Obtain~$\Xi''$ from $\Xi'\wedge \Phi_\mathrm{well}$ by re\-in\-ter\-pre\-ting all unary predicates as MSO set variables that are quantified over existentially. $\Xi''$ is an MSO formula over the signature $\{\szero,\sone\}$ which is valid in $\ibtree$ \iffi some decoration exists that makes $\Xi'\wedge \Phi_\mathrm{well}$ true. Thus, we have reduced our problem to checking the validity of a MSO sentence in $\ibtree$, which is decidable by Rabin's Tree Theorem \cite{Rabin69}. 
\end{proof}

With this insight in place and the appropriate rule set notion defined, we can leverage \Cref{thm:decidablequerying} for our decidability result. 

\begin{definition}
A rule set $\ruleset$ 
is called a \emph{finite-cliquewidth set} ($\fcs$) \ifandonlyif for any database $\database$, there exists a universal model for $(\database,\ruleset)$ of finite cliquewidth.
\defend
\end{definition}

\begin{corollary}\label{cor:fcsdecidable}
For every $\fcs$ rule set $\ruleset$, 
the query entailment problem $(\database,\ruleset) \models (\mathfrak{q},\Xi)$ for databases $\database$, and DaMSOQs $(\mathfrak{q},\Xi)$ is decidable.
\end{corollary}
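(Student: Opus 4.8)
The plan is to obtain this as an immediate consequence of the two main ingredients already established in the paper: the generic decidability argument (Theorem~\ref{thm:decidablequerying}) and the MSO-friendliness of cliquewidth (Theorem~\ref{thm:cliquewidth-k-MSO-decidable}). Concretely, I would instantiate the abstract framework of Theorem~\ref{thm:decidablequerying} with the width measure $\fcn{w} = \cw{\cdot}$, i.e.\ the cliquewidth measure on countable instances introduced in Section~\ref{sec:cwdef}.

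First I would check that $\cw{\cdot}$ is a width measure in the sense of Definition~\ref{def:MSOfriendly}: it maps every countable instance over the fixed signature $\Sigma$ to a value in $\mathbb{N}\cup\{\infty\}$, which is exactly how $\cw{\instance}$ was defined (as the minimum cliquewidth over all colored versions, or $\infty$ if none is finite). Next, by Theorem~\ref{thm:cliquewidth-k-MSO-decidable}, $\cw{\cdot}$ is MSO-friendly: the required semi-decision algorithm takes $n\in\mathbb{N}$ and an MSO sentence $\Xi$, and — via the reduction through well-decorated trees and Rabin's Tree Theorem — decides whether $\Xi$ has a model of cliquewidth ${\leq}n$, thereby terminating exactly when such a model exists and (when run as a non-terminating loop in the unsatisfiable case) never terminating if $\Xi$ has no model at all. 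This matches the two bullet points of Definition~\ref{def:MSOfriendly}.

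Then I would observe that, by definition, a $\fcs$ rule set $\ruleset$ is precisely a \emph{finite-$\fcn{w}$ set} for $\fcn{w}=\cw{\cdot}$: for every database $\database$ there is a universal model $\instance^*$ of $(\database,\ruleset)$ with $\cw{\instance^*}\in\mathbb{N}$. Hence all hypotheses of Theorem~\ref{thm:decidablequerying} are met, and we conclude that for every database $\database$ and every DaMSOQ $(\mathfrak{q},\Xi)$, the entailment problem $(\database,\ruleset)\models(\mathfrak{q},\Xi)$ is decidable.

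I do not anticipate a genuine obstacle here, since the corollary is a pure specialization of an already-proven theorem; the only point requiring a word of care is ensuring that the fixed signature used throughout (including the built-in $\top$ predicate and any constants) is the same $\Sigma$ for which cliquewidth and MSO-friendliness were established, so that the instantiation of Theorem~\ref{thm:decidablequerying} is legitimate. This is immediate from the standing convention that $\Sigma$ is fixed, together with the fact that Theorem~\ref{thm:cliquewidth-k-MSO-decidable} is stated for this same $\Sigma$.
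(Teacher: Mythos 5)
Your proposal is correct and is exactly the argument the paper intends: the corollary is stated as an immediate instantiation of \Cref{thm:decidablequerying} with $\fcn{w}=\cw{\cdot}$, whose MSO-friendliness is supplied by \Cref{thm:cliquewidth-k-MSO-decidable} (with the decision procedure trivially converted into the required semi-decision behaviour, as you note). Nothing further is needed.
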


In view of \cref{ex:cw-transitivity}, it is not hard to verify that the rule set $\ruleset^\infty_\mathrm{tran}$ from the introduction is $\fcs$.
Yet, it is neither $\bts$ (as argued before), nor $\fus$, which can be observed from the fact that it does not admit a finite rewriting of the BCQ $\epred(\const{a},\const{b})$.
Notably, it also does not exhibit \emph{finite controllability} ($\mathbf{fc}$), another generic property that guarantees decidability of query entailment~\cite{Ros06}. 
A rule set $\ruleset$ is $\mathbf{fc}$ \iffi for every $\db$ and CQ $q$ with $(\db,\ruleset) \not\models q$ there exists a \emph{finite} \mbox{(possibly non-}\mbox{universal) model} $\inst \models (\db,\ruleset)$ with $\inst \not\models q$.  
Picking $\db=\{\epred(\const{a},\const{b})\}$ and $q=\exists x \epred(x,x)$ reveals that $\ruleset^\infty_\mathrm{tran}$ is not $\fc$. Therefore, $\fcs$ encompasses rule sets not captured by any of the popular general decidability classes (namely, $\bts$, $\fus$, and $\fc$). On another note, is no surprise that, akin to $\bts$, $\fus$, and $\fc$, the membership of a rule set in $\fcs$ is undecidable, which can be argued exactly in the same way as for $\bts$ and $\fus$ \cite{BagLecMugSal11}.

\subsection{Cliquewidth and~Treewidth}\label{sec:cw-vs-tw}

We now show that for binary signatures, the class of instances with finite cliquewidth subsumes the class of instances with finite tree\-width,\prm{implying that $\fcs$ generalizes $\bts$}
\pin{implying $\bts \subseteq \fcs$.} 

\begin{theorem}\label{thm:tree-width-implies-clique-width}
Let $\inst$ be a countable instance over a binary signature. If $\inst$ has finite treewidth, then $\inst$ has finite cliquewidth.
\end{theorem}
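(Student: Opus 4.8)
The classical result (Courcelle--Olariu) says that for finite graphs, bounded treewidth implies bounded cliquewidth, via an explicit construction: from a tree decomposition of width $k$ one builds a cliquewidth expression using roughly $2^{O(k)}$ colors. The idea is that the bag of the current node in the tree decomposition has $\leq k+1$ elements, and we can afford to give each of them a private color; then, when combining the instance assembled so far with a new piece, we reconnect along the (bounded) shared bag using $\fcn{Add}$ operations, after which the ``old'' interface elements can be recolored down to a single ``forgotten'' color since no future edge will touch them. I would adapt exactly this scheme to our setting of countable instances over a binary signature, with the infinite-assembly-tree formalism of \Cref{def:entities-tree-to-instance} standing in for finite cliquewidth expressions.

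\textbf{Key steps, in order.} First, fix a tree decomposition $T = (V,E)$ of $\inst$ of finite width $k$; by standard manipulations I may assume $T$ is rooted, that each node has at most two children, that adjacent bags differ by at most one element (``nice'' tree decomposition), and — crucially, since $T$ may be infinite — that every element $t \in \adom{\inst}$ occurs in some bag, with the set of bags containing $t$ forming a connected subtree. Second, set up the color palette: colors will be (roughly) functions assigning to each of the $\leq k+1$ ``slots'' of the current bag either one of the $k+1$ slot-indices or the flag ``forgotten,'' which is a finite set of size $2^{O(k)}$; intuitively the color of an element records which slot of the \emph{current} bag it occupies, or that it has been dropped from all future bags. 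Third, define, by structural recursion over $T$, for each node $x$ of $T$ a piece of the assembly tree (a decorated subtree in the sense of the excerpt) that assembles the induced sub-instance of $\inst$ on $\bigcup \{\text{bags in the subtree of } T \text{ rooted at } x\}$, with the elements currently in $\mathrm{bag}(x)$ colored by their slot and all others colored ``forgotten.'' The recursion mirrors the shape of $T$: a leaf of $T$ introduces its (finitely many) bag elements with $\const{*}_k$ decorators and adds the finitely many atoms among them; an ``introduce'' node performs a $\oplus$ with a fresh singleton and an $\fcn{Add}$ for each binary atom linking the new element to the current bag (here binarity of $\sig$ is what keeps the number of $\fcn{Add}$ operations, and hence the palette, bounded — with each new atom of arity $r$ one would need to track $r$-tuples of slots); a ``forget'' node applies $\fcn{Recolor}$ to send the forgotten slot's color to the ``forgotten'' flag; a ``join'' node takes $\oplus$ of the two recursively built subtrees (whose shared bag elements carry matching slot-colors, so no re-identification is needed — the two subtrees genuinely share those constants/nulls as entities). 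Fourth, check that the decorated tree so obtained is \emph{well-decorated} per the excerpt's clauses (void children in the right places after $\fcn{Add}$/$\fcn{Recolor}$, both children non-void after $\oplus$, at most one $\const{c}_k$ per constant), padding with $\pred{Void}$ where a node of $T$ has fewer than two children, and that the instance it represents is exactly $\inst$ (up to the chosen coloring) — this is a routine induction using the connectivity condition of the tree decomposition to see that every atom of $\inst$ is added exactly once and no spurious atom is added, since whenever two elements share a bag they carry distinct slot-colors at that moment.

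\textbf{Main obstacle.} The genuinely new difficulty compared to the finite case is that $T$ is an \emph{infinite, possibly non-well-founded} tree, so the ``structural recursion'' above is not a terminating bottom-up computation but must be set up as a simultaneous/coinductive definition producing an infinite decorated tree — which is precisely why the excerpt went to the trouble of formalizing infinite assembly plans as infinite decorated trees. I would handle this by defining directly a homomorphism from (a suitable subdivision of) $T$ into $\ibtree$'s node set $\{0,1\}^*$ and specifying the decorators node-by-node as an explicit function of the local structure of $T$, rather than by recursion; then well-decoratedness and correctness of the represented instance become statements quantified over all nodes, provable by induction on \emph{path length from the root} (finite, even though $T$ is infinite), using that the colored instance represented by a node depends only on its subtree. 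A secondary subtlety is the root: the tree decomposition's root bag may be nonempty, so the final represented instance has some elements still colored by slots rather than ``forgotten,'' which is fine since cliquewidth allows any coloring; alternatively one prepends finitely many $\fcn{Recolor}$ steps. I do \emph{not} expect the color count to be an obstacle — any finite bound suffices for our purposes, so the crude $2^{O(k)}$ palette is more than enough.
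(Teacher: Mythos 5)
Your overall strategy --- binarize the tree decomposition, assign slots to bag elements, and build the decorated tree node-by-node so that the infinitude of $T$ is handled by a direct, local definition rather than a terminating recursion --- coincides with the paper's. The gap is in your treatment of join nodes. You write that at a join you take $\oplus$ of the two recursively built subtrees and that the shared bag elements ``genuinely share those constants/nulls as entities,'' so no re-identification is needed. In the paper's formalism this is false: $\oplus$ is a \emph{disjoint} union, a null entity is literally identified with the tree position at which its $\const{*}_k$ decorator occurs, and well-decoratedness forbids introducing a constant $\const{c}$ at more than one node. So if an element's occurrence subtree spans both children of a join (e.g., the center of an infinite star, whose occurrence subtree branches at every join), your construction either introduces it in both branches --- producing two distinct entities and hence an instance not isomorphic to $\inst$ --- or introduces it in only one branch, in which case the edges to partners introduced (and, in your scheme, possibly already recolored to ``forgotten'') in the other branch can never be added. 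There is no merge or identification operation available to repair this afterwards.

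The fix requires a richer coloring, which is exactly what the paper's proof supplies: each element is introduced exactly once, at its \emph{pivotal} node (the bag closest to the root containing it), and its color is a pair $(j,S)$ where $j$ is its slot and $S$ is a set of pending, typed, directed ``link requests'' of the form $\rpred(\downarrow,i)$ or $\rpred(i,\downarrow)$ toward elements whose pivots lie strictly above, identified by their slots; when such a higher element is eventually introduced at its own pivot, the requests are discharged by $\fcn{Add}$ followed by a $\fcn{Recolor}$ that deletes the satisfied request from the color. Your palette of ``slot or forgotten'' cannot encode this deferred information, so atoms whose two endpoints are introduced at different depths in different branches are lost. The rest of your write-up (the slot function, padding with $\pred{Void}$, induction on depth from the root for correctness, indifference to the exact finite bound on the number of colors) is sound once this single-introduction-plus-deferred-edges mechanism is in place.
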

\begin{proof}\!(Sketch) We convert a tree decomposition $T$ of $\inst$ with a width $n$ into a well-decorated tree: (1) By copying nodes, transform $T$ into an infinite binary tree $T'$.  (2) For each term $t$ from $\inst$, let the \emph{pivotal} node of $t$ in $T'$ be the node closest to the root containing $t$. For any two terms $t$ and $t'$ from $\inst$ co-occurring in an atom, their pivotal nodes are in an ancestor relationship. (3) Assign one of $n+1$ ``slots'' to every term \prm{such}\pin{so} that \prm{inside}\pin{in} each node of $T'$, every element has a distinct slot. (4) Extract a well-decorated tree from $T'$ by transforming every node into the following bottom-up sequence of operations: (i) $\oplus$-assemble the input from below, (ii) introduce every element for which the current node is pivotal with a color that encodes ``open link requests'' to elements (identified by their slots) further up, (iii) satisfy the color-link requests \prm{transmitted }from below via $\pred{Add}$, (iv) remove the satisfied requests by $\pred{Recolor}$.
\end{proof}
We note that the converse of \Cref{thm:tree-width-implies-clique-width} does not hold: Despite its finite cliquewidth, the treewidth of instance $\instance_<$ from \Cref{ex:cw-transitivity}
is infinite, as its $\rpred$-edges form an infinite clique.

To the informed reader, \Cref{thm:tree-width-implies-clique-width} might not come as a surprise, given that this relationship is known to hold for 
countable unlabelled undirected graphs\footnote{This follows as a direct consequence of a compactness property relating the cliquewidth of countable undirected graphs to that of their finite induced subgraphs.}~\cite{Cou04}.
It does, however, cease to hold for infinite structures with predicates of higher arity. 

\begin{example}\label{ex:badnewsternary}
Let $\rpred$ be a ternary predicate. 
 The instance $\inst_\mathrm{tern} = \{ \rpred(-1,n,n{+}1) \mid n\in \mathbb{N} \}$ has a treewidth of $2$, however, it does not have finite cliquewidth (see \cref{app:example-15-justification}). Concomitantly, the rule set $\ruleset_\mathrm{tern} = \{ \rpred(v,x,y) \to \exists z \rpred(v,y,z) \}$ is $\bts$, but not $\fcs$.  
\end{example}

\noindent
While this result may be somewhat discouraging, one can show that its effects can be greatly mitigated by the technique of reification.

\begin{definition}
Given a finite signature $\Sigma = \Sigma_{\leq2} \uplus \Sigma_{\geq3}$ divided into at-most-binary and higher-arity predicates, we define the \emph{reified version} of $\Sigma$ as the binary signature $\Sigma^{\smash{\mathrm{rf}}} = \Sigma_{\leq2} \uplus \Sigma^{\smash{\mathrm{rf}}}_{\smash{2}}$ with
$\Sigma^{\smash{\mathrm{rf}}}_2= \{\rpred_i \mid \rpred\in \Sigma_{\geq3}, 1\leq i \leq \arity{\rpred}\}$ a fresh set of binary predicates.
The~function $\fcn{reify}$ maps atoms over $\Sigma_{\leq2}$ to themselves, while any higher-arity atom $\alpha=\rpred(t_1,...\,,t_k)$ with $k\geq 3$ is mapped~to the
set $\{\rpred_i(u_\alpha,t_i) \mid 1\leq i \leq k\}$,
where $u_\alpha$ is a fresh null or variable. We lift $\fcn{reify}$ to instances, rules, and queries in the natural way. 
\defend
\end{definition}

It is best to think of the ``reification term'' $u_\alpha$ as a locally existentially quantified variable. In particular, in rule heads, $u_\alpha$ will be existentially quantified. Moreover, in datalog queries, $\fcn{reify}$ is only applied to $\Sigma_\mathrm{EDB}$-atoms, while $\Sigma_\mathrm{IDB}$-atoms are left unaltered; this ensures that the result is again a datalog query.

\begin{example}
Consider the instance $\inst_\mathrm{tern}$ from \Cref{ex:badnewsternary}. We observe that
$\fcn{reify}(\inst_\mathrm{tern}) =\\ \{ \rpred_1(u_n,-1), \rpred_2(u_n,n), \rpred_3(u_n,n{+}1) \mid n\in \mathbb{N} \}$ has a treewidth of $3$ and a cliquewidth of $6$, witnessed by 
the (non-well-founded) expression $\fcn{Add}_{\rpred_1,5,1}(\const{*}_1 \oplus E)$, where 
$E$ is implicitly defined via
$ E = \fcn{Recolor}_{2\to 3} (\fcn{Recolor}_{4\to 5} (\fcn{Recolor}_{3\to 6}(\fcn{Add}_{\rpred_3,4,3}(\fcn{Add}_{\rpred_2,4,2}(\const{*}_2 \oplus (\const{*}_4 \oplus E))))))$.
\end{example}
Let us list in all brevity some pleasant and fairly straightforward properties of reification:
\begin{itemize}[leftmargin=5ex]
\item[(i)] 
If $\tw{\instance}$ is finite, then so are  
 $\tw{\fcn{reify}(\instance)}$ and $\cw{\fcn{reify}(\instance)}$.
\item[(ii)] 
$\sat{\fcn{reify}(\instance),\fcn{reify}(\ruleset)} \equiv \fcn{reify}(\sat{\instance,\ruleset})$.
\item[(iii)] 
If $\ruleset$ is $\bts$, then $\fcn{reify}(\ruleset)$ is $\bts$ and $\fcs$.
\item[(iv)] 
$(\database,\ruleset) \models (\mathfrak{q},\Xi)$ \iffi $(\fcn{reify}(\database), \fcn{reify}(\ruleset)) \!\models \!(\fcn{reify}(\mathfrak{q}),\fcn{reify}(\Xi))$.
\end{itemize}

\begin{example}
Revisiting \Cref{ex:badnewsternary}, we can confirm that $\fcn{reify}(\ruleset_\mathrm{tern})$ comprising the rule
$\rpred_1(u,v) \wedge \rpred_2(u,x) \wedge \rpred_3(u,y) \to \exists u'z \big( \rpred_1(u',v) \wedge \rpred_2(u',y) \wedge \rpred_3(u',z)\big)$ is $\bts$ and $\fcs$.
\end{example}

The above insights regarding reification allow us to effortlessly reduce any query entailment problem over an arbitrary $\bts$ rule set to a reasoning problem
over a binary $\fcs$ one. Also, reification is a highly local transformation; it can be performed independently and atom-by-atom on $\database$, $\ruleset$, and $(\mathfrak{q},\Xi)$. Therefore, restricting ourselves to $\fcs$ rule sets 
does not deprive us of the expressiveness, versatility, and reasoning capabilities offered by arbitrary $\bts$ rule sets over arbitrary signatures, which includes the numerous classes based on guardedness: guarded and frontier-guarded rules as well as their respective variants weakly, jointly, and glut-(frontier-)guarded rules \cite{BagLecMugSal11,CaliGK13,KrotzschR11}. 
It is noteworthy that our line of argument also gives rise to an independent proof of decidability of query entailment for $\bts$:\footnote{Note that this result establishes entailment for arbitrary DaMSOQs, while previously reported results \cite{BagLecMugSal11,CaliGK13} only covered CQs. But even when restricting the attention to CQ entailment, the proofs given in these (mutually inspired) prior works do not appear entirely conclusive to us: both invoke a result by Courcelle \cite{Cou90}, which, as stated in the title of the article and confirmed by closer inspection, only deals with classes of \emph{finite} structures/graphs with a uniform treewidth bound. Hence, the case of infinite structures is not covered, despite being the prevalent one for universal models. Personal communication with Courcelle confirmed that the case of arbitrary countable structures -- although generally believed to hold true -- is not an immediate consequence of his result.}

\begin{theorem}
For every $\bts$ rule set $\ruleset$, the query entailment problem $(\database,\ruleset) \models (\mathfrak{q},\Xi)$ for databases $\database$, and DaMSOQs $(\mathfrak{q},\Xi)$ is decidable.
\end{theorem}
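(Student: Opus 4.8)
The plan is to reduce query entailment over an arbitrary $\bts$ rule set to query entailment over a binary $\fcs$ rule set by means of the reification machinery, and then to invoke \Cref{cor:fcsdecidable}. Concretely, given a $\bts$ rule set $\ruleset$ over a signature $\Sigma$, a database $\database$, and a DaMSOQ $(\mathfrak{q},\Xi)$, I pass to the reified signature $\Sigma^{\mathrm{rf}}$ and consider $\fcn{reify}(\database)$, $\fcn{reify}(\ruleset)$, and the pair $(\fcn{reify}(\mathfrak{q}),\fcn{reify}(\Xi))$. Property~(iii) of reification guarantees that $\fcn{reify}(\ruleset)$ is $\fcs$, and property~(iv) gives $(\database,\ruleset)\models(\mathfrak{q},\Xi)$ \iffi $(\fcn{reify}(\database),\fcn{reify}(\ruleset))\models(\fcn{reify}(\mathfrak{q}),\fcn{reify}(\Xi))$. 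The right-hand side is an entailment problem for a $\fcs$ rule set, so \Cref{cor:fcsdecidable} decides it, and thereby decides the original problem.

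Hence the substance of the proof is in establishing properties (i)--(iv), which the main text only lists. For~(ii), I would argue by induction on the breadth-first Skolem chase that $\fcn{reify}$ commutes with rule application up to the naming of the fresh ``glue nulls'' $u_\alpha$: a trigger of $\fcn{reify}(\rho)$ in the reified chase must map the glue variable of each reified body atom to a glue null, so it factors through a trigger of $\rho$, and conversely every trigger of $\rho$ induces one of $\fcn{reify}(\rho)$. Since distinct $\Sigma$-atoms that happen to agree on all argument positions receive distinct glue nulls, one obtains homomorphic equivalence rather than isomorphism between $\sat{\fcn{reify}(\instance),\fcn{reify}(\ruleset)}$ and $\fcn{reify}(\sat{\instance,\ruleset})$, which suffices. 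For~(i), I would take a tree decomposition of $\instance$ and add each glue null $u_\alpha$ to a bag together with the (at most $\arity{\rpred}$-many) arguments of $\alpha$, keeping the width finite; finiteness of cliquewidth then follows from \Cref{thm:tree-width-implies-clique-width} because $\Sigma^{\mathrm{rf}}$ is binary. For~(iii), note that $\fcn{reify}$ sends homomorphisms to homomorphisms (extend $h$ by $u_\alpha\mapsto u_{h(\alpha)}$), so a finite-treewidth universal model $\instance^*$ of $(\database,\ruleset)$ yields a model $\fcn{reify}(\instance^*)$ of $(\fcn{reify}(\database),\fcn{reify}(\ruleset))$ which, by~(ii) and homomorphic equivalence of universal models, is homomorphically equivalent to $\sat{\fcn{reify}(\database),\fcn{reify}(\ruleset)}$ and hence itself universal, and by~(i) it has finite cliquewidth. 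For~(iv), I would observe that the original $\Sigma$-instance is MSO-interpretable inside its reification --- the domain being the non-glue elements and $\rpred(x_1,\dots,x_k)$ being encoded by $\exists u\,\bigwedge_i \rpred_i(u,x_i)$ --- so that $\fcn{reify}(\Xi)$ is the relativization of $\Xi$ along this interpretation; combined with~(ii) and the fact that reifying only the $\Sigma_{\mathrm{EDB}}$-atoms of $\ruleset_\mathfrak{q}$ leaves the IDB derivation intact, this delivers the query-level equivalence.

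The main obstacle is subtler than the reduction itself: \Cref{cor:fcsdecidable} expects a \emph{bona fide} DaMSOQ, i.e.\ a pair whose datalog and MSO constituents are equivalent over \emph{all} $\Sigma^{\mathrm{rf}}$-instances, whereas the interpretation argument above only relates $\instance$ with \emph{well-formed} reified instances (those in the image of $\fcn{reify}$). I would resolve this by designing $\fcn{reify}(\mathfrak{q})$ and $\fcn{reify}(\Xi)$ to behave identically on malformed inputs as well --- relativizing both to an MSO/datalog-definable set of ``legitimate original elements'' and treating the rest uniformly --- so that the equivalence becomes genuinely global. Alternatively, one can bypass \Cref{cor:fcsdecidable} altogether and re-run the two-semi-decision-procedure argument of \Cref{thm:decidablequerying} directly on $\fcn{reify}(\database)$, $\fcn{reify}(\ruleset)$, $\fcn{reify}(\ruleset_\mathfrak{q})$, and $\neg\fcn{reify}(\Xi)$, exploiting that $\fcn{reify}(\instance^*)$ is a finite-cliquewidth countermodel whenever $(\database,\ruleset)\not\models(\mathfrak{q},\Xi)$. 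Checking that this bypass is sound --- in particular that first-order entailment of $\mathtt{Goal}$ from $\fcn{reify}(\database)\cup\fcn{reify}(\ruleset)\cup\fcn{reify}(\ruleset_\mathfrak{q})$ characterizes $(\database,\ruleset)\models\mathfrak{q}$ via~(ii) --- is where I expect most of the remaining bookkeeping to lie.
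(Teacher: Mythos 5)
Your proposal is correct and follows essentially the same route as the paper: reduce $(\database,\ruleset)\models(\mathfrak{q},\Xi)$ to the reified problem, observe that $\fcn{reify}(\ruleset)$ is binary and (being $\bts$) therefore $\fcs$, and invoke \Cref{cor:fcsdecidable}; the supporting properties (i)--(iv) you sketch are precisely what the paper establishes in its appendix, and your remark about making the reified query pair a genuine DaMSOQ on malformed $\Sigma^{\mathrm{rf}}$-instances is a legitimate subtlety the paper glosses over. The only place you are weaker than the paper is property (iii): you exhibit finite-cliquewidth universal models only for databases of the form $\fcn{reify}(\database)$, whereas the paper's appendix lemma handles arbitrary databases over the reified signature (via a ``dereification'' step) so that $\fcn{reify}(\ruleset)$ is literally $\fcs$ --- but since the reduction only ever feeds reified databases to the decision procedure, your version suffices, particularly given your proposed fallback of re-running the argument of \Cref{thm:decidablequerying} directly on the specific reified instance.
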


%
%

\begin{proof}
As argued\prm{ above}, \prm{the problem }$(\database,\ruleset) \models (\mathfrak{q},\Xi)$ \prm{can be reduced }\pin{reduces }to $(\fcn{reify}(\database),\fcn{reify}(\ruleset)) \models (\fcn{reify}(\mathfrak{q}),\fcn{reify}(\Xi))$. As $\ruleset$ is $\bts$, so is 
$\fcn{reify}(\ruleset)$. The latter being binary, we conclude that it is \prm{also }$\fcs$. Then, the claim follows from decidability of DaMSOQ entailment for $\fcs$ rule sets (\Cref{cor:fcsdecidable}). 
\end{proof}






\section{Comparing FCS and FUS}\label{sect:cw-binary}

We have seen that the well-known $\bts$ class is subsumed by the $\fcs$ class (directly for arities ${\leq}2$, and via reification otherwise). As discussed in the introduction, another prominent class of rule sets (incomparable to $\bts$) with decidable CQ entailment is the $\fus$ class. We dedicate the remainder of the paper to mapping out the relationship between $\fcs$ and $\fus$, obtaining the following two results (established in \cref{sec:binary} and \cref{sec:higher-arity}, respectively):

\begin{theorem}\label{thm:binary}
Any \prm{finite-unification set }\pin{$\fus$ rule set }of single-headed rules over a binary signature is \prm{a finite-cliquewidth set}\pin{$\fcs$}.
\end{theorem}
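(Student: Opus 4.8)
The plan is to prove the sharper statement that for every database $\database$ the Skolem chase $\sat{\database,\ruleset}$, which is a universal model of $(\database,\ruleset)$, has finite cliquewidth; membership of $\ruleset$ in $\fcs$ then follows by definition. Concretely, I would exhibit, for each $\database$, a $(\cols,\mathrm{Cnst},\Sigma)$-well-decorated tree with $\mathrm{Cnst}=\adom{\database}$ and $\cols$ finite that represents a coloured version of $\sat{\database,\ruleset}$.

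The first step uses only single-headedness and binarity. Each rule of $\ruleset$ falls into one of three kinds: (a) its single head atom contains no existential variable, so applying it merely adds one atom over already-present terms; (b) its head is $\rpred(y,z)$ or $\rpred(z,y)$ with $y$ in the frontier and $z$ existential — over a binary signature, single-headedness forces $|\fr(\rho)|\le 1$ in this case — so applying it creates exactly one fresh null linked by a single $\rpred$-atom to the (unique) frontier image; (c) its head is $\pred{B}(z)$, $\rpred(z,z)$ or $\rpred(z,z')$ with $z,z'$ existential, so the frontier is empty and the rule fires at most once, creating one or two fresh nulls with no link to pre-existing terms. Hence the terms of $\sat{\database,\ruleset}$ carry a \emph{generation forest} $F$: its roots are the constants of $\database$ together with the finitely many nulls created by kind-(c) rules, and every other null, being a Skolem term $z_{\rho,p}$ determined by the rule $\rho$ and the term $p$ on which $\rho$ fired, is a child of $p$. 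As $z_{\rho,p}$ depends only on $(\rho,p)$, every term has at most $|\ruleset|$ children, so $F$ has branching bounded by $|\ruleset|$, though its depth may be infinite.

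The second step assigns to each term $t$ a \emph{type} $\theta(t)$ that records (i) whether $t$ is a constant or a root of $F$ and which rule generated it, (ii) the unary predicates holding at $t$ in $\sat{\database,\ruleset}$, and (iii) enough about the neighbourhood of $t$ in $\sat{\database,\ruleset}$ to determine all atoms incident to $t$ — concretely, for every bounded-size CQ $\psi$ with one free variable (ranging over a suitable finite set derived from the bodies of $\ruleset$) whether $\psi$ maps into $\sat{\database,\ruleset}$ with its free variable sent to $t$. Granting that only finitely many types occur, I would build the well-decorated tree by mimicking a binarised copy of $F$ — duplicating nodes as in the proof of \Cref{thm:tree-width-implies-clique-width} — using types as colours: each constant of $\database$ is introduced with its own dedicated colour, each null is introduced (via a $\const{*}_k$-decorator) coloured by its type, the forest edges together with all derived atoms joining forest-close terms are installed bottom-up with the usual pending-request-and-$\fcn{Recolor}$ bookkeeping (legitimate because each term has boundedly many forest-neighbours), and the remaining derived atoms, which join forest-far terms, are installed by finitely many $\fcn{Add}_{\rpred,\vk}$ operations, since by construction of the type whether such an atom $\rpred(t,t')$ holds is a function of $\theta(t)$ and $\theta(t')$ alone. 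It is exactly these forest-far ``long-range'' atoms — for instance the biclique created by a concept-product rule — that can push the treewidth of $\sat{\database,\ruleset}$ to infinity while leaving its cliquewidth finite, so cliquewidth is the right measure here.

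The crux, and the step I expect to be the main obstacle, is the finiteness of the set of types. This is precisely where $\fus$ is indispensable rather than just single-headedness and binarity: for $\ruleset^\infty_\mathrm{tran}$ (single-headed and binary but not $\fus$) the relevant long-range relation — its transitive closure — is not type-determined at all, since all nulls of its chase have identical local appearance yet are joined by edges depending on their forest-distance. The intended use of $\fus$ is that first-order rewritability precludes this unbounded recursion: the existence of a finite UCQ-rewriting of each atomic query bounds the derivation depth of any derived atom, from which one deduces that whether a bounded-size CQ $\psi$ holds at $t$ depends only on a bounded-radius portion of $\sat{\database,\ruleset}$ around $t$, and that — after closing the type definition under a refinement that makes the ``$\psi$ holds here'' predicates colour-definable — only finitely many types remain, the termination of this refinement mirroring the termination of piece-based rewriting for $\ruleset$. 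Establishing this, together with verifying that the constructed well-decorated tree indeed represents $\sat{\database,\ruleset}$ up to isomorphism and colouring, is the technical heart of the argument, deferred to the appendix.
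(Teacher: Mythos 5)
Your overall architecture matches the paper's: isolate the existentially generated atoms (your generation forest $F$ is the paper's typed polyforest $\se$ of bounded degree), note that $\db\cup\se$ has finite treewidth, and then add the remaining datalog-derived atoms using a finite coloring so that the long-range ones become $\fcn{Add}_{\rpred,\vk}$ operations. But the step you defer as "the technical heart" is the theorem, and the route you sketch for it does not work. First, $\fus$ does not bound derivation depth: the single-headed binary linear (hence $\fus$) set $\{\pred{A}(x)\to\exists y\,\pred{E}(x,y),\ \pred{E}(x,y)\to\pred{A}(y)\}$ derives atoms at every chase depth. Second, the claim that satisfaction of a bounded-size CQ at $t$ depends only on a bounded-radius portion of $\sat{\database,\ruleset}$ around $t$ is false precisely because of the long-range atoms you are trying to install: a witness for an existential variable of the CQ may sit at the far end of a concept-product edge. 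Since you evaluate your types over the final chase, which already contains those edges, your "refinement that terminates" is circular and you give no argument for its termination or for the resulting finiteness of the type set.

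The mechanism the paper uses, and which your proposal is missing, makes type-determinedness a consequence of a syntactic property of \emph{rewritten} rules rather than a locality property of the chase. Because $\sa=\sat{\db\cup\se,\rulesetdl}$, the UCQ rewritings $\rewrs{\ruleset}{\rpred(\vy)}$ of the atomic queries yield a datalog rule set $\rulesetnorm$ for which a \emph{single} parallel step over $\db\cup\se$ already produces all of $\sa$; this, not bounded derivation depth, is how $\fus$ enters. The rules of $\rulesetnorm$ are then split by whether their two frontier variables share a connected component of the body. Connected rules only link terms at bounded path-distance in $\db\cup\se$, so applying them preserves finite treewidth, giving the intermediate instance $\sep$ of finite cliquewidth; this absorbs all of your "forest-close" atoms with no types at all. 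A disconnected rule $\thephi(x_1,x_2,\vy)\to\rpred(x_1,x_2)$ has a body that factors as $\thephi_1(x_1,\vy_1)\wedge\thephi_2(x_2,\vy_2)$ with $\vy_1,\vy_2$ disjoint, so $\sep\models\exists\vy\,\thephi(t,t',\vy)$ holds exactly when the two halves match at $t$ and at $t'$ independently, \emph{evaluated over $\sep$} rather than over $\sa$ — which removes the circularity. The type of $t$ is then just the set of matching body-halves, a subset of a fixed finite set determined by $\rulesetnormminus$, so finiteness of the color set is immediate and no refinement process is needed; the Recoloring Lemma installs this coloring on the finite-cliquewidth witness for $\sep$, and the disconnected rules become $\fcn{Add}$ operations. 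Without the rewriting-and-disconnection reduction, your central assertion that forest-far atoms are functions of $\theta(t)$ and $\theta(t')$ remains unestablished.
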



\begin{theorem}\label{thm:higher-arity}
There exists a \prm{finite-unification set }\pin{$\fus$ rule set }of multi-headed rules over a binary signature that is not \prm{a finite-cliquewidth set}\pin{$\fcs$}.
\end{theorem}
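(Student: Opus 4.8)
\textbf{Proof proposal for \Cref{thm:higher-arity}.}

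The plan is to exhibit a concrete rule set $\ruleset$ of multi-headed binary rules, prove it is $\fus$ by producing UCQ-rewritings for all CQs, and then prove it is not $\fcs$ by showing that for a suitable database $\database$, \emph{every} universal model of $(\database,\ruleset)$ has infinite cliquewidth. The natural candidate is a rule set whose chase builds an unbounded ``grid-like'' or ``biclique-chain'' structure from a seed, while still being first-order rewritable. A good template: start from a database encoding a single seed element and let the rules propagate, in each step, a fresh pair of elements together with edges connecting the new elements to \emph{all} previously produced elements carrying a designated unary marker — i.e.\ a concept-product-style construction (as in the $\pred{BiggerThan}$ example from the introduction) iterated along an auxiliary linear order that the chase itself lays down. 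The multi-headedness is essential here: a single rule application must simultaneously create the new element and ``remember'' enough marker information so that the next application links to the whole accumulated set. Crucially, to stay $\fus$ one must ensure that no rule body can be matched in a way that requires ``looking arbitrarily deep'' into the chase — the rewriting of any CQ must saturate after finitely many unification steps. This is the delicate balancing act: the construction must be expansive enough to force infinite cliquewidth, yet ``rewriting-shallow'' enough to be $\fus$. I would design the rules so that every rule body is a single atom (or a small fixed conjunction over a bounded set of variables) with the head producing a bounded number of atoms — bodies that are essentially ``atomic'' tend to be easy to certify as $\fus$, since backward unification with a bounded-size head terminates quickly.

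For the $\fus$ direction, I would take whatever concrete $\ruleset$ I settle on and directly describe its piecewise-linear / atomic-body structure, then argue first-order rewritability by the standard criterion: the piecewise-linear or ``sticky''-style analysis (or simply: bounded derivation depth for the rewriting operator because each backward resolution step strictly decreases a well-founded measure on the query). Concretely, I would show that for any CQ $q$, the rewriting closure $\rewrs{\ruleset}{q}$ is reached after a number of rounds bounded by a function of $|q|$ — because each rule head, when unified with a query atom, can only introduce atoms over a strictly ``earlier'' portion of the structure, and the auxiliary linear-order atoms appearing in the rule heads force any generated CQ that is not yet over the database to shrink in a suitable ordinal rank. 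Alternatively, if I pick the rules to be \emph{domain-restricted} / \emph{acyclic-GRD}-like in their predicate dependency graph, $\fus$ can be read off a known syntactic sufficient condition; I would cite the relevant membership result and verify its hypotheses for my concrete $\ruleset$.

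For the not-$\fcs$ direction — which I expect to be the \textbf{main obstacle} — I would fix $\database$ to be the seed and must show every universal model of $(\database,\ruleset)$ has infinite cliquewidth. Since all universal models are homomorphically equivalent, and cliquewidth is \emph{not} in general preserved under homomorphic equivalence, the argument cannot simply analyze the Skolem chase; I need a property that is forced in every model and that obstructs finite cliquewidth. The standard tool is a ``large grid minor / large biclique-cover'' obstruction: a binary instance of cliquewidth $\leq k$ cannot contain, as an induced subinstance (after recoloring-bounded distortions), arbitrarily large ``half-graphs'' or $n\times n$ grids with all diagonals distinct — more precisely, I would invoke (or reprove in the countable setting, mirroring \Cref{thm:cliquewidth-k-MSO-decidable}'s MSO machinery) the fact that the class of instances of cliquewidth $\leq k$ is MSO-definable-obstructed, and then write an MSO sentence $\Xi_n$ asserting ``there is an induced $n$-half-graph on $\epred$-edges.'' The chase forces such half-graphs for every $n$; and because the half-graph pattern is preserved under the specific homomorphisms between universal models here (one shows the relevant elements cannot be identified, using the auxiliary linear order and the injectivity it enforces), every universal model contains arbitrarily large half-graphs, hence — by the cliquewidth obstruction lemma for binary instances — has unbounded, thus infinite, cliquewidth. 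The technical heart is therefore twofold: (a) a cliquewidth lower-bound lemma for countable binary instances containing unbounded half-graphs (analogous to the classical finite-graph fact that half-graphs / cographs complements have unbounded cliquewidth is false — one actually wants \emph{mixed} structures; so I would instead use grids with pairwise-distinct labeled diagonals, which do have unbounded cliquewidth, and engineer $\ruleset$'s chase to produce exactly those), and (b) showing the obstruction survives passage to an arbitrary universal model. I would discharge (a) by a counting/Myhill--Nerode-style argument on decorated trees: in a well-decorated tree with $|\cols| = k$, the colored instance at any node is determined up to the colors of its at most $k$ ``color classes'', so a $\oplus$-cut of the syntax tree induces an equivalence on element-subsets with at most $f(k)$ classes, contradicting the $2^{\Theta(n)}$-many ``interaction types'' forced by an $n$-grid-with-distinct-diagonals; and (b) by checking that the homomorphism from the chase into any other universal model restricts to an injection on the grid skeleton, because the rules enforce enough rigidity (distinct auxiliary-order positions cannot be merged).
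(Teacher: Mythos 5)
Your high-level plan for the ``not $\fcs$'' half is essentially the paper's: force a grid in the chase, invoke the known fact that $n\times n$ grids have unbounded cliquewidth, and transfer the obstruction to \emph{every} universal model via rigidity. The paper does exactly this with the rule set $\rsgrid$ and the instance $\ginf$: it shows the only homomorphism $\ginf\to\ginf$ is the identity, so composing the two homomorphisms guaranteed by universality shows that $\ginf$ embeds as an \emph{induced} sub-instance into any universal model, and cliquewidth is monotone under induced sub-instances. You correctly identify that cliquewidth is not preserved under homomorphic equivalence and that this is the point of the rigidity argument; your MSO-obstruction and Myhill--Nerode machinery is unnecessary for this step, but not wrong in spirit.

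The genuine gap is on the other half. First, the theorem is an existence claim and you never exhibit a witness rule set; the entire difficulty lives in constructing one that threads the needle, and your ``template'' does not resolve the tension you yourself flag. Second, and more decisively, the routes you propose for certifying $\fus$ would fail for any rule set that actually does the job. If the bodies are atomic (linear rules), the rule set is guarded, hence $\bts$, hence --- being over a binary signature --- $\fcs$ by \cref{thm:tree-width-implies-clique-width}, contradicting the goal. More generally, any rule that ``closes a grid cell'' must join two body atoms on a variable that is then dropped from the head (as in the paper's rule $\hpred(x,y)\wedge\vpred(x,x')\to\exists y'(\hpred(x',y')\wedge\vpred(y,y'))$), which is precisely what the known syntactic $\fus$ classes such as stickiness forbid; the paper explicitly notes that $\rsgrid$ falls outside all known syntactic $\fus$ subclasses. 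Consequently the $\fus$ proof cannot be ``read off'' a membership criterion: the paper has to build a bespoke rewriting procedure (marked queries, three rewriting operations, and soundness/completeness/termination proofs), and this constitutes the bulk of the actual argument. That entire component is absent from your proposal, so as written the proof does not go through.
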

%
%
As a consequence of these findings, the necessary restriction to single-headed rules prevents us from wielding the powers of reification in this setting.

\subsection{The Case of Single-Headed Rules}\label{sec:binary}

In this section, we establish \cref{thm:binary}. To this end, let a binary signature $\sig$, a finite unification set $\ruleset$ of single-headed rules over $\sig$, and a database $\database$ over $\sig$ be arbitrary but fixed for the remainder of the section. \prm{Hence, we will often conveniently}\pin{We will} abbreviate $\sat{\database,\ruleset}$ by $\sa$.
Noting that $\sa$ is a universal model, \cref{thm:binary} is an immediate consequence of the following lemma, which we are going to establish in this section.

\begin{lemma}\label{lem:binary-main}
$\sa$ has finite cliquewidth.
\end{lemma}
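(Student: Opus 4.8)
The plan is to exploit the structure of the Skolem chase $\sa$ of a \fus{} single-headed rule set over a binary signature. The key intuition is that such a chase, seen as an edge-labelled directed graph, decomposes into a ``core-like'' database part and an essentially tree-shaped null region, but with the crucial subtlety that \fus{}-ness may force many long-range edges (as in the concept-product-style rules). So the approach is \emph{not} to prove bounded treewidth — that is false in general — but to directly build a well-decorated tree witnessing finite cliquewidth, using colors to encode the finitely many ``relevant contexts'' that govern which atoms get added between terms.

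First I would analyse what the Skolem chase of a single-headed binary rule set looks like. Every null introduced by the chase is created by a trigger $(\rho,h)$, and since rules are single-headed, each null is the witness of exactly one head atom; following the ``parent'' relation (which frontier term a null was attached to) gives a forest structure on nulls rooted at the database. The next step is to bound, up to a finite amount of information, the ``type'' of each term: its unary-predicate type together with the finite amount of data needed to know which future atoms the chase (and, via \fus{}-style non-guarded rules, which \emph{datalog-like} derivations) can attach to it. The essential claim I would isolate is that there is a finite set $\Theta$ of such types so that (i) the type of a null is determined by the type of its parent and the rule/frontier-position used, and (ii) whether an atom $\rpred(s,t)$ belongs to $\sa$ is determined by the pair of types of $s$ and $t$ together with their ancestor relationship in the forest. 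Proving (ii) is where \fus{}-ness (first-order rewritability) does the work: a UCQ-rewritable rule set cannot create ``genuinely new'' binary connections whose presence depends on unboundedly much of the structure — intuitively, atoms reachable only through bounded-depth rewriting patterns — and this is what lets a finite color palette suffice.

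Given this analysis, I would construct the well-decorated tree following the recipe already used in the proof sketch of \Cref{thm:tree-width-implies-clique-width}: take the null-forest of $\sa$ (together with the database nodes), make it into an infinite binary tree by copying nodes, assign each term a color encoding its type $\theta\in\Theta$ together with a bounded ``slot/link'' bookkeeping for the at-most-finitely-many colors that still need to receive an edge from ancestors or descendants; then at each node perform the bottom-up sequence $\oplus$-assemble, introduce the pivotal element with the appropriate color, fire the relevant $\fcn{Add}_{\rpred,\vk}$ operations dictated by the type pairs, and clean up with $\fcn{Recolor}$. The database part has only finitely many terms, so it contributes only a constant number of colors and can be handled by a finite initial segment of the tree. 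Since $|\Theta|$ is finite and the slot bookkeeping is bounded, the total number of colors is finite, giving $\cw{\sa}<\infty$.

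The main obstacle I expect is precisely step (ii): showing that membership of a binary atom in $\sa$ depends only on bounded type information and the ancestor relation. For a \bts{} guarded rule set this would be comparatively easy because new atoms are always ``guarded'' by a previously existing atom, keeping everything local; but \fus{} rule sets explicitly allow non-guarded rules (concept products, transitivity-like patterns), so long-range atoms genuinely occur and one must argue that their pattern is nevertheless ``regular'' enough to be colour-coded. The right tool is to use the first-order rewriting $\rewrs{\ruleset}{\cdot}$ of the atomic queries $\rpred(x,y)$: a rewriting is a finite UCQ, so $\rpred(s,t)\in\sa$ iff some bounded-size conjunctive pattern maps into $\database$ in a way anchored at the ancestors of $s$ and $t$; the finitely many such patterns, together with how they ``climb'' the null-forest, are exactly what the finite type set $\Theta$ must record. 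Making this precise — in particular handling the interaction between the recursively generated null structure and the finitely-many rewriting witnesses that must bottom out in the fixed finite database — is the technical heart of the argument, and I would expect the appendix proof to spend most of its effort there.
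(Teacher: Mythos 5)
Your overall architecture --- polyforest skeleton of existential atoms, a finite set of term ``types'' extracted from the UCQ rewritings of atomic queries, and $\fcn{Add}$-operations on colors to install the long-range atoms --- matches the paper's proof. But the load-bearing claim (ii), that membership of $\rpred(s,t)$ in $\sa$ is determined by the pair of types of $s$ and $t$ together with their ancestor relationship, is not correct as stated, and the idea that repairs it is missing from your plan. The paper's key step is to pass from the datalog part of $\ruleset$ to the rewritten rule set $\rulesetnorm$ (one datalog rule per disjunct of $\rewrs{\ruleset}{\rpred(\vy)}$, so that a \emph{single} parallel application on $\db\cup\se$ yields all of $\sa$; \cref{lem:trigger-for-each-rewritten-rule}), and then to split $\rulesetnorm$ according to whether the two frontier variables lie in the same connected component of the rule body. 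Connected disjuncts can only link terms within a fixed distance $d$ in $\db\cup\se$; whether such a link is present depends on the \emph{joint} $d$-neighborhood of the two terms (which may be siblings or cousins in the polyforest, not ancestor-related), so no assignment of individual types can decide it --- these atoms are instead absorbed by a treewidth argument ($\sep=\ksat{1}{\db\cup\se,\rulesetnormcon}$ still has finite treewidth, \cref{cor:skeleton-plus-has-bounded-cw}, hence finite cliquewidth by \cref{thm:tree-width-implies-clique-width}). Only for the \emph{disconnected} disjuncts does your type idea work, and there it works for a stronger reason than ``types plus ancestry'': disconnectedness makes $\exists\vy\,\thephi(t,t',\vy)$ equivalent to a conjunction of two independent unary conditions on $t$ and $t'$ (\cref{lem:col_query_correspondency}), which is exactly the shape a single $\fcn{Add}_{\rpred,\ell,\ell'}$ can realize. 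Without isolating this dichotomy, your claim (ii) conflates two mechanisms that need different proofs, and the ``local'' half of it is false as formulated.

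Two smaller points. The rewriting is applied to the (generally infinite) skeleton instance $\db\cup\se$, not to the database $\db$ as you write --- the witnessing CQ patterns must embed into $\db\cup\se$. And your one-pass construction, in which colors simultaneously carry the slot/link bookkeeping for local atoms and the type information for global atoms, needs care: the $\fcn{Recolor}$ steps that retire satisfied link requests must not erase the types. The paper sidesteps this by staging the argument --- first obtain a well-decorated tree for $\sep$ from its tree decomposition, then superimpose the type coloring via the Recoloring Lemma (\cref{cor:recoloring}, at the cost of a multiplicative factor in the number of colors), and only then fire the $\fcn{Add}$ operations for the disconnected rules. I recommend adopting that staging rather than interleaving the two bookkeeping systems.
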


\msubsection{Looking past datalog}
Let $\se \subseteq \sa$ be the instance containing the \emph{existential atoms} of $\sa$, that is, the atoms derived via the non-datalog rules of $\ruleset$. We show that $\se$ forms a \emph{typed polyforest}, meaning that $\se$ can be viewed as a directed graph where
(i) edges are typed with binary predicates from $\sig$ and (ii) when disregarding the orientation of the edges, the graph forms a forest.
This implies that $\se$ has treewidth $1$. A tree decomposition of $\se$ can be extended into a finite-width tree decomposition of $\db \cup \se$ by adding  the finite set $\adom{\db}$ to every node. 
In the sequel, we use $\db \cup \se$ as a basis, to which, in a very controlled manner, we then add the ``missing'' non-existential atoms derived via datalog rules. Thereby, $\ruleset$ being $\fus$ will be of great help.

\msubsection{Rewriting datalog rules}
We transform the datalog subset of $\ruleset$ into a new rule set $\rulesetnorm$\!, which we then use to fix a set of colors and establish the finiteness of $\cw{\sa}$.
Letting $\rulesetdl$ denote the set of all datalog rules from $\ruleset$, we note the following useful equation:
$\sa = \sat{\db \cup \se,\rulesetdl}$ ($\dag$).
For any $\rpred \in \sig$, we let $\rew{\rpred}$ denote the datalog rule set 
$\set{\varphi(\vx, \vy) \to \rpred(\vy) \mid \exists{\vx} \varphi(\vx, \vy) \in \rewrs{\ruleset}{\rpred(\vy)}}$
giving rise to the overall datalog rule set
$
\rulesetnorm = {\bigcup}_{\rpred \in \sig} \rew{\rpred}.
$
We now show that the rule set $\rulesetnorm$ admits an important property:

\begin{lemma}\label{lem:trigger-for-each-rewritten-rule}
For any $\rpred(\vt) \in \sa \setminus (\db \cup \se)$, there exists a trigger $(\rho, h)$ in $\db \cup\se$ with $\rho \in \rulesetnorm$ such that applying $(\rho, h)$ adds $\rpred(\vt)$ to $\db \cup\se$.
\end{lemma}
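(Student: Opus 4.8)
The plan is to leverage the $\fus$ property of $\ruleset$ through the fixed rewriting $\rewrs{\ruleset}{\rpred(\vy)}$: by construction, the rules of $\rew{\rpred}$ are exactly the CQ-disjuncts of this rewriting, each turned into a datalog rule with head $\rpred(\vy)$. So it suffices to show that, over the instance $\db\cup\se$, the atom $\rpred(\vt)$ is ``witnessed'' by one such disjunct. The obstacle is that the defining property of rewritings speaks about \emph{databases} (null-free, finite), whereas $\db\cup\se$ contains nulls and is moreover infinite; the bulk of the argument is about legitimately transporting the rewriting property to this setting.

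First I would cut down to a finite, null-frozen database. By equation ($\dag$), $\rpred(\vt)\in\sat{\db\cup\se,\rulesetdl}$, and since a single derivation uses only finitely many atoms, there is a finite $F\subseteq\db\cup\se$ with $\rpred(\vt)\in\sat{F,\rulesetdl}$; as datalog rules introduce no fresh terms, $\vt\subseteq\adom{\sat{F,\rulesetdl}}=\adom{F}$. As $\rulesetdl\subseteq\ruleset$ and the Skolem chase is monotone in its rule set, also $\rpred(\vt)\in\sat{F,\ruleset}$. Now let $\flat$ be a bijection that fixes constants and maps each null occurring in $F$ to a fresh constant, and set $F^\flat:=\flat(F)$, which is a database. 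A routine induction along the breadth-first Skolem chase shows that $\flat$ extends to an isomorphism $\sat{F,\ruleset}\cong\sat{F^\flat,\ruleset}$: the only point to check is that a Skolem null $z_{\rho,h(\vy)}$ created on the $F$-side corresponds to $z_{\rho,(\flat\circ h)(\vy)}$ on the $F^\flat$-side. Consequently $\rpred(\vt^\flat)\in\sat{F^\flat,\ruleset}$, where $\vt^\flat:=\flat(\vt)$.

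Next I would invoke the rewriting. Since $F^\flat$ is a database and $\vt^\flat$ a tuple over $\adom{F^\flat}$, the defining property of $\rewrs{\ruleset}{\rpred(\vy)}$ yields $F^\flat\models\rewrs{\ruleset}{\rpred(\vy)}(\vt^\flat)$ from $\sat{F^\flat,\ruleset}\models\rpred(\vt^\flat)$. Hence there is a disjunct $\exists\vx\,\varphi(\vx,\vy)$ of $\rewrs{\ruleset}{\rpred(\vy)}$ together with a homomorphism $g\colon\varphi(\vx,\vy)\to F^\flat$ such that $g(\vy)=\vt^\flat$.

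Finally I would unfreeze. Put $h:=\flat^{-1}\circ g$; since $\flat^{-1}$ fixes constants, $h$ is a homomorphism $\varphi(\vx,\vy)\to F\subseteq\db\cup\se$ with $h(\vy)=\vt$. By construction the datalog rule $\rho:=\big(\varphi(\vx,\vy)\to\rpred(\vy)\big)$ belongs to $\rew{\rpred}\subseteq\rulesetnorm$, so $(\rho,h)$ is a trigger in $\db\cup\se$; applying it adds $\bar h(\rpred(\vy))=\rpred(h(\vy))=\rpred(\vt)$ (as $\rho$ has no existential head variables, $\bar h=h$ on the head), which is exactly what is claimed. As noted, the only genuine difficulty is this null-handling passage between $\db\cup\se$ and the frozen database $F^\flat$; everything else is bookkeeping.
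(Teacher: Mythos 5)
Your proof is correct and takes essentially the same route as the paper's: both obtain the trigger from a disjunct of the fixed UCQ rewriting $\rewrs{\ruleset}{\rpred(\vy)}$ whose body matches into $\db\cup\se$ with frontier image $\vt$. The only difference is that the paper applies the rewriting property to the infinite, null-containing instance $\db\cup\se$ without comment, whereas you insert the compactness-plus-null-freezing detour to legitimately transfer that property from finite ground databases — a genuine (and, strictly speaking, needed) piece of added care, but not a different argument.
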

\prm{\begin{proof}
Given that $\ruleset$ is $\fus$, it follows that for any $\arity{\rpred}$-tuple $\vt$ of terms from $\db \cup \se$, there exists a CQ $\exists \vx \thephi(\vx,\vy) \in \rewrs{\ruleset}{\rpred(\vy)}$ such that the following holds:
$$\sat{\db \cup \se,\ruleset} \models \rpred(\vt) \ \ \ \ifandonlyif \ \ \ \db \cup \se \models \exists \vx \thephi(\vx,\vt).
$$
Thus, said trigger exists for some $\rho \in \rew{\rpred}$ in $\db \cup \se$.
\end{proof}}
\pin{\begin{proof}
Given that $\ruleset$ is $\fus$, it follows that for any $\arity{\rpred}$-tuple $\vt$ of terms from $\db \cup \se$, there exists a CQ $\exists \vx \thephi(\vx,\vy) \in \rewrs{\ruleset}{\rpred(\vy)}$ such that the following holds:
$\sat{\db \cup \se,\ruleset} \models \rpred(\vt) \ \ifandonlyif \ \db \cup \se \models \exists \vx \thephi(\vx,\vt).
$
Thus, said trigger exists for some $\rho \in \rew{\rpred}$ in $\db \cup \se$.
\end{proof}}
From \prm{the above lemma }\pin{\cref{lem:trigger-for-each-rewritten-rule} }and ($\dag$), we \prm{can }conclude $\ksat{1}{\db \cup \se,\rulesetnorm} \;=\; \sa$ ($\ddag$).
This \prm{equation }tells us that we can apply \prm{all }rules of $\rulesetnorm$  \prm{within a single }\pin{in one }step to obtain \prm{the same result as }$\sa$. 
Ultimately, we will leverage this \prm{observation }to bound the cliquewidth of $\sa$.
%
In view of ($\ddag$), we may reformulate \cref{lem:binary-main} as follows:

\begin{lemma}\label{lem:binary-reformulated}
$\ksat{1}{\db \cup \se, \rulesetnorm}$ has finite cliquewidth.
\end{lemma}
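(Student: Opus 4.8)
The plan is to build a well-decorated tree for $\ksat{1}{\db \cup \se, \rulesetnorm}$ by combining three ingredients: (1) a bounded-cliquewidth decorated tree for the ``skeleton'' $\db \cup \se$; (2) a uniform coloring of the skeleton rich enough to encode, for each term $t$, all the information needed to decide which $\rulesetnorm$-rule bodies can be satisfied with $t$ in a given frontier position; and (3) a single round of $\fcn{Add}$-operations that fires all the rules of $\rulesetnorm$ at once. The governing equation is ($\ddag$): a single parallel application of $\rulesetnorm$ to $\db \cup \se$ already yields $\sa$, so no iteration is needed and no new nulls are created in this last step --- $\rulesetnorm$ is datalog. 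Since $\sig$ is binary, every head atom added is either unary or binary, so each rule contributes at most a colour-pair's worth of $\fcn{Add}$-work.

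\textbf{Step 1: the skeleton has finite cliquewidth.} By the ``Looking past datalog'' paragraph, $\se$ is a typed polyforest, hence $\tw{\se}=1$, and adding the finite set $\adom{\db}$ to every bag of a tree decomposition of $\se$ gives a finite-width tree decomposition of $\db\cup\se$; so $\tw{\db\cup\se}$ is finite, and by \Cref{thm:tree-width-implies-clique-width} (applicable because $\sig$ is binary), $\cw{\db\cup\se}$ is finite. Fix a colouring $\coloring_0$ and a well-decorated tree $\mathcal{T}_0$ witnessing $\cw{\db\cup\se,\coloring_0}=m_0$.

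\textbf{Step 2: enrich the colouring to a ``type colouring''.} The key point is that each rule $\rho\in\rew{\rpred}$ has the shape $\varphi(\vx,\vy)\to\rpred(\vy)$ where $\varphi$ is a (connected-or-not) CQ body over the \emph{binary} signature $\sig$ --- it is a fixed, finite query. Whether $\db\cup\se\models\exists\vx\,\varphi(\vx,\vt)$ depends only on a bounded amount of local-and-global data around the tuple $\vt$. Concretely: since $\sig$ is binary and $\rulesetnorm$ is a fixed finite set, there is a finite set $\Theta$ of ``CQ-answerability types'' such that the type of a term $t$ (the set of rule-and-frontier-position pairs $(\rho,i)$ for which some match of $\body(\rho)$ sends the $i$-th frontier variable to $t$) is determined by a bounded neighbourhood together with a bounded ``global summary'' (e.g. which unary atoms hold somewhere, which constants participate in which atoms --- $\adom{\db}$ being finite, and the polyforest structure of $\se$ being tame). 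The plan is: take $\coloring^{\mathrm{type}}$ assigning to each $t$ its type in $\Theta$, observe $\cw{\db\cup\se,\coloring^{\mathrm{type}}}$ is still finite --- this is where one wants a product-type lemma (cf.\ the commented-out Observation relating $\cw{\inst,\coloring\times\coloring'}$ to $\cw{\inst,\coloring}\cdot\cw{\inst,\coloring'}$), multiplying $\coloring_0$ with $\coloring^{\mathrm{type}}$ --- and fix a well-decorated tree $\mathcal{T}_1$ witnessing $\cw{\db\cup\se,\coloring_0\times\coloring^{\mathrm{type}}}=m_1<\infty$.

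\textbf{Step 3: fire all of $\rulesetnorm$ with one layer of $\fcn{Add}$.} For each $\rho=\varphi(\vx,\vy)\to\rpred(\vy)\in\rulesetnorm$, the set of tuples $\vt$ to which the head should be added is exactly the set of tuples whose components carry types certifying satisfiability of $\exists\vx\,\varphi$ with that tuple in the frontier. Because $\rpred$ is binary (or unary) and the needed information is entirely encoded in the colours $\coloring_0\times\coloring^{\mathrm{type}}$, this is a union over finitely many colour-tuples $\vk$ of $\fcn{Add}_{\rpred,\vk}$ operations. Applying the $\fcn{Add}_{\rpred,\vk}(\cdot)$ operation from \Cref{def:entities-tree-to-instance}'s follow-up definition, once for each such $(\rho,\vk)$, on top of $\mathcal{T}_1$, yields a well-decorated tree over a colour set of size $m_1$ whose represented instance is, by ($\ddag$) and \Cref{lem:trigger-for-each-rewritten-rule}, precisely $\ksat{1}{\db\cup\se,\rulesetnorm}=\sa$; no recolouring between these $\fcn{Add}$-steps is needed since datalog only adds atoms and never changes which term is which. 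Hence $\cw{\sa}\leq m_1$, proving the lemma.

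\textbf{Main obstacle.} The crux is Step 2: showing that the type colouring $\coloring^{\mathrm{type}}$ keeps the cliquewidth finite, i.e.\ that ``satisfiability of a fixed CQ body with a given tuple in the frontier'' is a property expressible by a bounded colouring of a finite-cliquewidth structure. The clean way is an MSO argument: the property ``$t$ has type $\theta$'' is MSO-definable in $\db\cup\se$, and one invokes a feedback-of-$\cw$-under-MSO-colourings principle; but one must be careful because $\db\cup\se$ is infinite. The safer route --- and the one I would actually carry out --- exploits the very concrete structure at hand: $\se$ is a polyforest of bounded degree attached to the finite $\db$, so matches of a fixed connected CQ fragment live in bounded-radius balls (plus the finitely many possibilities involving $\db$), letting one compute $\coloring^{\mathrm{type}}$ by a bottom-up pass on the \emph{same} well-decorated tree $\mathcal{T}_0$, propagating bounded ``partial-match summaries'' exactly as recolourings propagate colours. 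Making this propagation precise --- enumerating the bounded summary information and checking it is preserved by $\oplus$, $\fcn{Recolor}$, and $\fcn{Add}$ along $\mathcal{T}_0$ --- is the technical heart of the argument.
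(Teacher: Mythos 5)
There is a genuine gap in Step 3: the strategy ``fire all of $\rulesetnorm$ with one layer of $\fcn{Add}$, guided by a per-term type colouring'' is unsound for rules with a \emph{connected} body. Your type $\coloring^{\mathrm{type}}(t)$ records, for each $(\rho,i)$, whether \emph{some} match of $\body(\rho)$ sends the $i$-th frontier variable to $t$. But $\fcn{Add}_{\rpred,\ell,\ell'}$ adds $\rpred(t,t')$ for \emph{every} pair with those colours, i.e.\ it can only realize finite unions of ``rectangles'' $C\times C'$ of colour classes. For a connected body such as $\epred(x_1,z)\wedge\epred(z,x_2)\to\rpred(x_1,x_2)$, the relation to be added is ``there is a \emph{single joint} match sending $x_1\mapsto t$ and $x_2\mapsto t'$''; knowing separately that $t$ admits some match in position $1$ and $t'$ admits some (possibly different) match in position $2$ does not certify a joint match, so your $\fcn{Add}$ overgenerates. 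On an infinite path this relation is essentially a matching, which is not a finite union of rectangles over any finite colour set, so no refinement of the unary type can repair this. The factorization you need — joint match iff (match at $t$) and (match at $t'$) — holds precisely when the two frontier variables lie in \emph{distinct connected components} of the body.

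This is why the paper splits $\rulesetnorm$ into connected rules $\rulesetnormcon$ and disconnected rules $\rulesetnormminus$ and treats them by entirely different means. Connected rules are applied first, and the claim is that $\sep=\ksat{1}{\db\cup\se,\rulesetnormcon}$ still has finite \emph{treewidth} (\cref{cor:skeleton-plus-has-bounded-cw}): since the frontier variables of a connected body are at bounded path-distance, and $\db\cup\se$ is a bounded-degree polyforest glued to the finite $\adom{\db}$, the new atoms only link terms within bounded-radius neighbourhoods, so fattened balls give a tree decomposition. Only the \emph{disconnected} rules are then handled by your Step 2 / Step 3 mechanism: the type $\tau(t)$ consists of the unary queries $\rho_i(x)$ obtained by existentially quantifying away the other component of the body, the Recoloring Lemma installs $\tau$ on top of a well-decorated tree for $\sep$, and \cref{lem:col_query_correspondency} supplies exactly the factorization that licenses the colour-pair $\fcn{Add}$. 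So your Steps 1 and 3 are essentially the paper's treatment of $\rulesetnormminus$, but the proposal as written silently assumes the rectangle decomposition for \emph{all} rules; without the connected/disconnected split and the treewidth-preservation argument for $\rulesetnormcon$, the proof does not go through.
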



\msubsection{Separating connected from disconnected rules}
\prm{In the next step}\pin{Next}, we distinguish between two types of rules in $\rulesetnorm$\!.
We \prm{define}\pin{say} a datalog rule \prm{to be}\pin{is} {\em disconnected} \iffi \prm{variables in its head}\pin{its head variables} belong to distinct connected components in its body; otherwise it is called {\em connected}. 
We let $\rulesetnormcon$ denote all connected rules of $\rulesetnorm$ and let $\rulesetnormminus$ denote all disconnected rules\prm{, that is, $\rulesetnorm = \rulesetnormcon \;\uplus\; \rulesetnormminus$}.
One can observe that upon applying connected rules, frontier variables can only be mapped to -- and hence connect -- ``nearby terms'' of $\db \cup \se$ (using a path-based distance, bounded by the size of rule bodies), which, together with our insights about the structure of $\db \cup \se$, permits the construction of a tree decomposition of finite width, giving rise to the following lemma:

\begin{lemma}\label{cor:skeleton-plus-has-bounded-cw} 
$\ksat{1}{\db \cup \se,\rulesetnormcon}$ has finite treewidth.
\end{lemma}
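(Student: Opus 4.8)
The plan is to produce a finite-width tree decomposition of $\ksat{1}{\db\cup\se,\rulesetnormcon}$ by ``thickening'', by a fixed bounded radius, a finite-width tree decomposition of $\db\cup\se$, so that the comparatively few atoms freshly produced by connected rules become covered.

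First I would fix $B$ as the maximum number of body atoms among the (finitely many) rules of $\rulesetnormcon$, and let $G$ denote the Gaifman graph of $\db\cup\se$. For any trigger $(\rho,h)$ with $\rho\in\rulesetnormcon$, the head $\head(\rho)$ is a single atom $\rpred(t_1,t_2)$ since $\sig$ is binary. If one of $t_1,t_2$ is a constant, it lies in $\adom{\db}$, which we will keep inside every bag; otherwise $t_1,t_2$ are frontier variables, and since $\rho$ is \emph{connected} they lie in one connected component of $\body(\rho)$ seen as a graph on variables (each body atom being binary), so they are joined by a path of at most $B$ body atoms whose $h$-image is a walk of length ${\leq}B$ in $G$. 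Either way, both terms of every newly derived atom sit inside a common ball $\ball{B}{t}$ of $G$ (with $t\in\adom{\db}$ in the degenerate constant case). Hence every atom of $\ksat{1}{\db\cup\se,\rulesetnormcon}$ is an atom of $\db\cup\se$, or has both its terms in one such ball.

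Next I would invoke the structure of $\db\cup\se$: $\db$ is finite, $\se$ is a typed polyforest, and moreover $\se$ has bounded degree -- since (assuming, as is standard and w.l.o.g., that every existential variable occurs in the head) a single-headed rule over a binary signature whose head genuinely contains an existential variable has frontier of size at most one, each term occurs in boundedly many atoms of $\se$. Thus $G$ has bounded degree $d$, and every ball $\ball{B}{t}$ has at most $\sum_{i=0}^{B}d^{i}$ elements. Now fix a tree decomposition $\mathcal{T}=(V,E)$ of $\db\cup\se$ of finite width $w$ with $\adom{\db}$ inside every bag (available as observed earlier in this section), and let $\mathcal{T}^{+}$ be $\mathcal{T}$ with each bag $X$ replaced by $X^{+}:=\bigcup_{t\in X}\ball{B}{t}$. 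Then $|X^{+}|\leq(w{+}1)\sum_{i=0}^{B}d^{i}$, so $\mathcal{T}^{+}$ has finite width; old atoms stay covered since $X\subseteq X^{+}$, new atoms are covered by the previous paragraph, and datalog rules create no new terms, so the whole active domain is still covered.

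What remains -- and what I expect to be the main obstacle, alongside the bounded-degree claim -- is the connectivity axiom for $\mathcal{T}^{+}$: for a term $t$, $\{X\in V : t\in X^{+}\}=\bigcup_{t'\in\ball{B}{t}}\{X : t'\in X\}$, each piece $\{X : t'\in X\}$ is connected in $\mathcal{T}$, and for $t',t''\in\ball{B}{t}$ a $G$-walk from $t'$ through $t$ to $t''$ of length ${\leq}2B$ all of whose vertices remain in $\ball{B}{t}$ chains the pieces together -- consecutive walk vertices co-occur in $\db\cup\se$, hence share a $\mathcal{T}$-bag -- while staying inside $\{X : t\in X^{+}\}$. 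This makes $\mathcal{T}^{+}$ a tree decomposition of $\ksat{1}{\db\cup\se,\rulesetnormcon}$ of finite width, giving the lemma. The real content is the finiteness of $B$-balls in $\db\cup\se$: without it (e.g. were $\se$ a forest of unbounded degree) the thickened bags could blow up to infinite size, which is precisely why the separation of connected from disconnected rules, and the structural analysis of $\se$, are needed here.
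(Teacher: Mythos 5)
Your proof is correct and follows essentially the same route as the paper's: both arguments rest on the observation that a connected rule can only link terms whose distance in $\db \cup \se$ is bounded by the body size, combined with the bounded degree of $\db \cup \se$ (coming from the finiteness of $\db$, the Skolem chase, and the polyforest structure of $\se$), so that the relevant bounded-radius neighbourhoods are finite and can serve as bags. The only difference is presentational — the paper takes the radius-$d$ neighbourhoods themselves as the bags of a fresh decomposition indexed by the nulls of $\se$, whereas you graft the same balls onto an existing finite-width decomposition of $\db \cup \se$; the width bound and the covering/connectivity checks are the same in substance.
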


Note that this lemma does not generalize to all of $\rulesetnorm$, since disconnected rules from $\rulesetnormminus$ might realize ``concept products'' as in $\pred{A}(x) \wedge \pred{A}(y) \to \pred{E}(x,y)$. Clearly, such rules from $\rulesetnormminus$ are the reason why $\bts$ fails to subsume $\fus$. Let us define 
$\sep = \ksat{1}{\db \cup \se,\rulesetnormcon}.$
As $\ksat{1}{\se,\rulesetnorm} = \ksat{1}{\sep,\rulesetnormminus}$, all that is left to show is:

\begin{lemma}\label{lem:binary-reformulated-plus}
$\ksat{1}{\sep, \rulesetnormminus}$ has finite cliquewidth.
\end{lemma}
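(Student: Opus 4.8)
The goal is to show that $\ksat{1}{\sep,\rulesetnormminus}$ has finite cliquewidth, where $\sep = \ksat{1}{\db\cup\se,\rulesetnormcon}$ has finite treewidth by \cref{cor:skeleton-plus-has-bounded-cw} and hence (being binary) finite cliquewidth by \cref{thm:tree-width-implies-clique-width}. The instance $\ksat{1}{\sep,\rulesetnormminus}$ is obtained from $\sep$ by a single parallel application of the disconnected datalog rules in $\rulesetnormminus$. My plan is to exploit the defining feature of a disconnected rule $\rho\colon \varphi(\vx,\vy)\to \rpred(\vy)$: its (at most two, since $\sig$ is binary) head variables $\vy = (y_1,y_2)$ lie in \emph{distinct} connected components of $\body(\rho)$. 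Consequently, a trigger $h$ for $\rho$ factors: whether $\rpred(h(y_1),h(y_2))$ gets added depends only on the pair of ``unary types'' of $h(y_1)$ and $h(y_2)$, where the relevant type of a term $t$ records, for each connected component $C$ of each body of a rule in $\rulesetnormminus$ with a designated ``port'' variable, whether $C$ maps homomorphically into $\sep$ sending that port to $t$. Since $\rulesetnormminus$ is finite, there are only finitely many such component/port pairs, so each term of $\sep$ receives one of finitely many types $\tau(t)$ from a fixed finite set $\mathbb{T}$.

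**Key steps, in order.** First I would make precise the notion of the type $\tau(t)\in \mathbb{T}$ of a term $t\in\adom{\sep}$ and prove the \emph{factorization lemma}: for every disconnected rule $\rho\in\rulesetnormminus$ and all $t_1,t_2\in\adom{\sep}$, the atom $\rpred(t_1,t_2)$ is added by the one-step application of $\rulesetnormminus$ to $\sep$ if and only if there is a trigger of $\rho$ in $\sep$ mapping $y_1\mapsto t_1$, $y_2\mapsto t_2$, and — by disconnectedness — this holds iff the components of $\body(\rho)$ containing $y_1$ resp.\ $y_2$ each embed into $\sep$ with the appropriate port at $t_1$ resp.\ $t_2$, i.e.\ iff $(\tau(t_1),\tau(t_2))$ lies in a certain set $S_\rho\subseteq\mathbb{T}\times\mathbb{T}$ determined by $\rho$. (One also records the case $t_1=t_2$, i.e.\ a rule firing with both head variables mapped to the same term; this contributes a subset of $\mathbb{T}$ rather than of $\mathbb{T}\times\mathbb{T}$, handled by an $\fcn{Add}_{\rpred,(k,k)}$ with a single color $k$ on a recolored copy.) Second, starting from a well-decorated tree $\mathcal{T}$ witnessing $\cw{\sep}\le m$ for some $m\in\mathbb{N}$, I would build a refined well-decorated tree $\mathcal{T}'$ for the \emph{colored} instance $(\sep,\coloring^\times)$ where $\coloring^\times(t) = (\coloring^{\mathcal{T}}(t),\tau(t))$ pairs the old color with the type; this uses only $m\cdot|\mathbb{T}|$ colors. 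Crucially I must argue that refining the coloring in this way does not increase cliquewidth beyond the product bound — the standard argument is that each original operation of $\mathcal{T}$ ($\oplus$, $\fcn{Recolor}$, $\fcn{Add}$) can be simulated on the product colors (a single $\fcn{Recolor}_{k\to k'}$ becomes a bounded batch of recolorings $\fcn{Recolor}_{(k,\sigma)\to(k',\sigma)}$, one per $\sigma\in\mathbb{T}$; $\fcn{Add}$ and $\oplus$ lift similarly), and since $\tau$ is a fixed a priori labelling this is just the ``cross-coloring'' observation on product colors. Third, from $\mathcal{T}'$ I obtain $\ksat{1}{\sep,\rulesetnormminus}$ by prefixing finitely many $\fcn{Add}_{\rpred,\vk}$ operations to the root: enumerate all $\rpred\in\sig$, all pairs $((k_1,\sigma_1),(k_2,\sigma_2))$ of product-colors with $(\sigma_1,\sigma_2)\in S_\rho$ for some $\rho\in\rulesetnormminus$ with head predicate $\rpred$, plus the diagonal cases; each such operation adds exactly the $\rpred$-atoms prescribed by the factorization lemma, and composing all of them (using the $\fcn{Add}$-on-decorated-trees construction from \cref{def:entities-tree-to-instance}'s sequel) yields precisely $\ksat{1}{\sep,\rulesetnormminus}$ with a colored witness over $m\cdot|\mathbb{T}|$ colors. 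Hence $\cw{\ksat{1}{\sep,\rulesetnormminus}}\le m\cdot|\mathbb{T}| < \infty$.

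**Main obstacle.** The delicate point is the factorization lemma and the finiteness of the type set — i.e.\ arguing that ``whether a disconnected rule fires on $(t_1,t_2)$'' genuinely depends only on finitely much local information about $t_1$ and $t_2$ separately. Disconnectedness of $\rho$ gives that $\body(\rho)$ splits as $\varphi_1(\vx_1,y_1)\wedge\varphi_2(\vx_2,y_2)\wedge\varphi_0(\vx_0)$ with the variable sets pairwise disjoint except that $\varphi_0$ shares nothing with $y_1,y_2$; a trigger then exists iff each conjunct is independently satisfiable in $\sep$ with its port pinned — but one must be careful that the ``rest'' $\varphi_0$ (components touching no head variable) contributes only a Boolean constant ($\sep\models\exists\vx_0\,\varphi_0$ or not, computable once and for all), and that constants occurring in bodies are handled (they are pinned to themselves, which is fine since $\sep$ is fixed). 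A secondary subtlety is verifying the product-coloring step rigorously for our tree-based formalism of cliquewidth rather than the classical algebraic one; this is routine but must be stated, perhaps as an auxiliary lemma that $\cw{\inst,\coloring^\times}\le \cw{\inst,\coloring}\cdot|\mathrm{codom}(\tau)|$ whenever $\tau$ is any fixed colouring — essentially the cross-colouring observation that appears in commented-out form earlier. Everything else is bookkeeping.
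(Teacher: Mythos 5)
Your proposal is correct and follows essentially the same route as the paper: your term types and factorization lemma are exactly the paper's $\tau(\cdot)$ and \cref{lem:col_query_correspondency}, your product-coloring step is the Recoloring Lemma (\cref{cor:recoloring}), and your final batch of $\fcn{Add}$ operations at the root is \cref{lem:adding_per_rule} together with the concluding chaining argument. The only cosmetic difference is that the paper folds the head-variable-free body components and the diagonal case $t_1=t_2$ into the same uniform treatment rather than handling them separately.
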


\msubsection{Searching for a suitable coloring}
Having established the finite treewidth of $\sep$ by \cref{cor:skeleton-plus-has-bounded-cw}, its finite cliquewidth can be inferred by \cref{thm:tree-width-implies-clique-width}, implying the existence of a well-decorated tree $\mathcal{T}$ with $\inst^{\mathcal{T}} = \sep$. Moreover, we know that $\sep$ already contains all terms of $\sa$. Thus, all that remains is to ``add'' the missing datalog atoms of $\sa \setminus \sep$ to $\inst^\mathcal{T}$. Certainly, the coloring function $\fcn{col}^\mathcal{T}_\varepsilon$ provided by $\mathcal{T}$ cannot be expected to be very helpful in this task. To work around this, the following \pin{technical} lemma ensures that an arbitrary coloring can be installed on top of a given instance of finite cliquewidth. 

\begin{lemma}[Recoloring Lemma]
\label{cor:recoloring}
Let $\inst$ be an 
instance 
satisfying $\cw{\inst} = n$ and let $\coloring' : \adom{\inst} \to \cols'$ be an arbitrary coloring of~$\inst$. Then $\cw{\inst,\coloring'} \leq (n+1) \cdot |\cols'|.$
\end{lemma}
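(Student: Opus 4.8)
The plan is to work directly with the well-decorated tree formalism. Let $\mathcal{T}$ be a $(\cols,\mathrm{Cnst},\Sigma)$-well-decorated tree witnessing $\cw{\inst}=n$, so $\inst \cong \inst^{\mathcal{T}}$ and $|\cols|=n$. My goal is to produce a new well-decorated tree $\mathcal{T}'$ over a color set of size $(n+1)\cdot|\cols'|$ whose represented colored instance is (isomorphic to) $(\inst,\coloring')$. The natural choice is the \emph{product color set} $\cols \times \cols'$ together with one auxiliary ``fresh'' first component, i.e.\ $(\cols \cup \{\star\}) \times \cols'$; this has exactly $(n{+}1)\cdot|\cols'|$ colors. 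The intuition: the first coordinate will faithfully re-run the original assembly $\mathcal{T}$, while the second coordinate silently carries, for each element, its target color $\coloring'$, which never changes and is never used by any $\pred{Add}$ operation during the main construction. Only at the very end (the root) do we read off the second coordinate.

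First I would set up the element-introduction step. In $\mathcal{T}$, each element $e$ is introduced at some leaf-like node by a decorator $\pred{*}_k$ or $\pred{c}_k$ for some $k\in\cols$. In $\mathcal{T}'$ I replace this by $\pred{*}_{(k,\coloring'(e))}$ (resp.\ $\pred{c}_{(k,\coloring'(e))}$), so the element is born already tagged with its final color in the second coordinate. Then I would lift every structural operation of $\mathcal{T}$ to act on the first coordinate only: a node decorated $\pred{Recolor}_{k\to k'}$ becomes a block of recolorings $\pred{Recolor}_{(k,c)\to(k',c)}$ ranging over all $c\in\cols'$ (a fixed finite sequence, realizable by a short path of $\pred{Recolor}$-nodes in the binary tree, with the appropriate $\pred{Void}$ padding); a node $\pred{Add}_{\rpred,\vk}$ with $\vk=(k_1,\dots,k_r)$ becomes a block of $\pred{Add}_{\rpred,\vk'}$ over all $\vk'$ whose first coordinates are $k_1,\dots,k_r$ and whose second coordinates range arbitrarily over $\cols'$; and $\oplus$-nodes stay $\oplus$-nodes. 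A routine induction on the structure of $\mathcal{T}$ (exactly the kind of induction sketched in the commented-out cross-coloring proof in the source) shows that at every node $s$, the represented instance $\inst^{\mathcal{T}'}_s$ equals $\inst^{\mathcal{T}}_s$ and $\fcn{col}^{\mathcal{T}'}_s(e) = (\fcn{col}^{\mathcal{T}}_s(e),\coloring'(e))$ for every entity $e$ — the second coordinate is preserved because it is never the target of a recolor and the $\pred{Add}$ blocks are precisely those that fire iff the corresponding first-coordinate $\pred{Add}$ fires.

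At the root $\varepsilon$ we then have the colored instance $(\inst,\ e\mapsto(\fcn{col}^{\mathcal{T}}_\varepsilon(e),\coloring'(e)))$ over the color set $(\cols\cup\{\star\})\times\cols'$. To finish, I would append a final block of recoloring operations above this root that collapses the first coordinate: for every $(k,c)$ with $k\in\cols$, apply $\pred{Recolor}_{(k,c)\to(\star,c)}$. After this block every element $e$ has color $(\star,\coloring'(e))$, which under the obvious bijection $(\star,c)\mapsto c$ is exactly $(\inst,\coloring')$. Since all colors used lie in $(\cols\cup\{\star\})\times\cols'$, we get $\cw{\inst,\coloring'} \le |\cols\cup\{\star\}|\cdot|\cols'| = (n+1)\cdot|\cols'|$. (The extra $\star$ is what forces the $n{+}1$ rather than $n$: we need a landing color disjoint from all original colors so the final collapse does not accidentally merge elements that still needed distinct first-coordinate colors earlier; alternatively one could avoid $\star$ if $\coloring'$ alone already separates enough, but in general the $+1$ is needed.)

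The main obstacle — really the only non-bookkeeping point — is the faithful simulation of the infinite, non-well-founded tree $\mathcal{T}$ when each original node must be expanded into a finite \emph{block} of several operations (the $\pred{Recolor}$ and $\pred{Add}$ blocks above). One must check that inserting these finite gadgets, each of which occupies a short path of nodes in the infinite binary tree with all off-path children decorated $\pred{Void}$, still yields a legal well-decorated tree (the constraints on $\pred{Void}$-placement in the definition of well-decoratedness must be respected) and that the resulting tree still represents the intended limit instance — i.e.\ that no element's color or atom-membership is disturbed ``in the limit.'' This is a purely mechanical verification against the clauses of the well-decoratedness definition and of $\fcn{ent}^{\mathcal{T}}$, $\fcn{col}^{\mathcal{T}}_s$, $\mathrm{Atoms}_s$, so I expect no conceptual difficulty, only care in the indexing. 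Everything else is the straightforward structural induction described above.
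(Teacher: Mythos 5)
Your proposal is correct and follows essentially the same route as the paper's proof: simulate $\mathcal{T}$ on the first coordinate of a product color set while carrying $\coloring'$ inertly in the second coordinate (expanding each $\pred{Add}$/$\pred{Recolor}$ node into a finite block), then project onto the second coordinate at the root; the paper lands the projection in a disjoint copy of $\cols'$ (color set $(\cols\times\cols')\uplus\cols'$) where you land in $\{\star\}\times\cols'$, which is the same count $(n{+}1)\cdot|\cols'|$ up to renaming. The "mechanical verification" you defer is exactly what the paper's appendix carries out in detail.
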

\begin{proof}(Sketch) Let $\cw{\inst} = n$ be witnessed by a well-decora\-ted tree $\mathcal{T}$. Let $\cols$ with $|\cols|=n$ be the set of colors used in $\mathcal{T}$. We construct a well-decorated tree $\mathcal{T}^{\coloring'}$ representing $(\inst,\coloring')$ and using the color set $(\cols \times \cols') \;\uplus\; \cols' $, thus
witnessing $\cw{\inst,\coloring'} \leq (n+1) \cdot |\cols'|$. 
We obtain $\mathcal{T}^{\coloring'}$ from $\mathcal{T}$ by modifying each node $s$ of $\mathcal{T}$ as follows:
\begin{itemize}
    \item If $s$ is labeled with $\pred{*}_k$, we change it to $\pred{*}_{(k, \coloring'(s))}$. 
    \item If $s$ is labeled with $\pred{c}_k$ with $\const{c} \in \mathrm{Cnst}$, we change it to $\pred{c}_{(k, \coloring'(\const{c}))}$. 
    \item If $s$ is labeled with $\pred{Add}_{\rpred,k}$, we replace $s$ with a sequence of nodes, one for  each decorator in $\set{\pred{Add}_{\rpred,(k,\ell)} \mid \ell \in \cols'}$. We proceed in an analogous fashion for decorators $\pred{Add}_{\rpred,k,k'}$ and $\pred{Recolor}_{k \to k'}$.
    \item If $v$ is labeled with other decorators, we keep it as is.
\end{itemize}
Last, on top of the obtained tree, we apply a ``color projection'' to $\cols'$ by adding recoloring statements of the form $\pred{Recolor}_{(k,\ell) \to \ell}$ for all $(k,\ell) \in \cols \times \cols'$.
To complete the construction, we add the missing nodes to $\mathcal{T}^{\coloring'}$, each decorated with $\pred{Void}$.
\end{proof}


We proceed by defining types for elements in $\sep$, giving rise to the desired coloring.
For the following, note that by definition, any rule from $\rulesetnormminus$ must have two frontier variables.

\begin{definition}
Let $\rho \in \rulesetnormminus$ with $\body(\rho) = \thephi(x_1, x_2,\vy)$, with $x_1$, $x_2$ frontier variables. 
 Let us define
$
\rho_1(x) =
\exists{\vy x_2}\; \thephi(x, x_2, \vy)
\ \text{and}\ 
\rho_2(x) =
\exists{\vy x_1}\; \thephi(x_1, x, \vy).
$
Then, for a term $t \in \adom{\sep}$, we define its {\em type} $\tau(t)$ as $\big\{\rho_i(x) \mid \rho \in \rulesetnormminus ,\ \sep \models \rho_i(t), i \in \set{1,2}\big\}.$\defend
\end{definition}

\noindent
\prm{Enough groundwork has been laid }\pin{Now, we }define our new coloring function $\coloringe$. Given a term $t$ of $\sep$, we let
$
\coloringe(t) = \tau(t).
$


\begin{corollary}
There exists an $n_\exists \in \mathbb{N}$ with $\cw{\sep, \coloringe} = n_\exists$.
\end{corollary}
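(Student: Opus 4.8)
The plan is to show that $\coloringe$ uses only finitely many colors and then appeal to the Recoloring Lemma (\cref{cor:recoloring}) together with the finite cliquewidth of $\sep$. First I would observe that the codomain of $\coloringe$ is contained in the powerset of the set $\{\rho_i(x) \mid \rho \in \rulesetnormminus, i \in \{1,2\}\}$. Since $\ruleset$ is finite, its $\fus$-rewriting machinery produces a finite rewriting for each of the finitely many predicates in $\sig$, so $\rulesetnorm$ is finite, hence so is $\rulesetnormminus$; consequently the set of one-variable ``half-body'' queries $\rho_1(x), \rho_2(x)$ ranging over $\rho \in \rulesetnormminus$ is finite, and therefore $\tau$ takes values in a finite set. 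Writing $m$ for the size of that finite set, the number of distinct types is at most $2^{2|\rulesetnormminus|}$, so $|\mathrm{codom}(\coloringe)| \leq 2^{2|\rulesetnormminus|} =: N$, a fixed natural number depending only on $\ruleset$ (and $\sig$, $\db$ implicitly through $\rulesetnorm$).

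Next I would invoke the chain of results already established in this section: by \cref{cor:skeleton-plus-has-bounded-cw}, $\sep = \ksat{1}{\db \cup \se, \rulesetnormcon}$ has finite treewidth, and since $\sig$ is binary, \cref{thm:tree-width-implies-clique-width} yields that $\sep$ has finite cliquewidth, say $\cw{\sep} = n$ for some $n \in \mathbb{N}$. Now apply the Recoloring Lemma with $\inst = \sep$ and $\coloring' = \coloringe$ (more precisely, $\coloringe$ corestricted to its finite codomain of size at most $N$): this gives $\cw{\sep, \coloringe} \leq (n+1)\cdot N$, which is a natural number. Since cliquewidth of a colored instance is by definition a natural number or $\infty$, and we have just exhibited a finite upper bound, there exists $n_\exists \in \mathbb{N}$ with $\cw{\sep, \coloringe} = n_\exists$, as claimed.

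I expect the statement to be essentially routine once the preceding lemmas are in hand; the only point requiring a moment's care is the finiteness of $\rulesetnormminus$, which traces back through $\rulesetnorm = \bigcup_{\rpred \in \sig} \rew{\rpred}$ and the fact that each $\rewrs{\ruleset}{\rpred(\vy)}$ is a single fixed finite UCQ. There is no genuine obstacle here: the corollary is really just packaging ``$\coloringe$ has finitely many colors'' plus ``$\sep$ has finite cliquewidth'' plus ``recoloring preserves finite cliquewidth'' into one line. The real work sits upstream in \cref{cor:skeleton-plus-has-bounded-cw} and the Recoloring Lemma, and downstream in \cref{lem:binary-reformulated-plus}, where this coloring will actually be used to install the missing disconnected-rule atoms on top of $\sep$.
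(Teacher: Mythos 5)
Your proposal is correct and follows essentially the same route as the paper: finite treewidth of $\sep$ (\cref{cor:skeleton-plus-has-bounded-cw}) plus \cref{thm:tree-width-implies-clique-width} give finite cliquewidth, the codomain of $\coloringe$ is finite, and the Recoloring Lemma yields the bound $(n+1)\cdot|\mathrm{codom}(\coloringe)|$. You merely spell out the finiteness of $\mathrm{codom}(\coloringe)$ in more detail than the paper does, and you are slightly more careful in reading the Recoloring Lemma as an upper bound rather than an equality.
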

\begin{proof}
From \cref{cor:skeleton-plus-has-bounded-cw} and \cref{thm:tree-width-implies-clique-width}, we know that $\cw{\sep}$ is finite. Hence, there exists a natural number $n$ and a coloring $\coloring$ such that $\cw{\sep, \coloring} = n$. Moreover, the codomain of $\coloringe$ is finite, say $n'$. Thus, we get $n_\exists = (n+1)\cdot n'$ by \cref{cor:recoloring}.
\end{proof}

\msubsection{Coping with disconnected rules}
We now conclude the proof of \cref{lem:binary-reformulated-plus}, i.e., we show that $\cw{\so, \coloringe} \leq n_\exists$, thereby proving \cref{thm:binary}. To this end, consider some rule $\rho \in \rulesetnormminus$. As stated earlier, we can assume that $\rho$ is of the~form
%
$
\thephi(x_1, x_2, \vy) \to \rpred(x_1, x_2).
$
From this, we obtain the following useful correspondence:

\begin{lemma}\label{lem:col_query_correspondency}
For any $t,t' \in \adom{\sep}$ and $\rho \in \rulesetnormminus$,
$$\sep \models \exists{\vy}\thephi(t,t',\vy) \text{ \ \ \ifandonlyif \ \ } \rho_1(x) \in \tau(t) \text{ and } \rho_2(x) \in \tau(t').$$
\end{lemma}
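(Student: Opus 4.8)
The plan is to use the defining feature of disconnected rules: for $\rho \in \rulesetnormminus$, the two frontier variables $x_1, x_2$ of $\body(\rho) = \thephi(x_1,x_2,\vy)$ sit in distinct connected components of the body, so the body can be cut into two variable-disjoint pieces and two partial witnesses can be glued. The ($\Rightarrow$) direction is immediate: if $\sep \models \exists\vy\,\thephi(t,t',\vy)$ via some assignment $g$ of $\vy$, then $g$ together with $x_2 \mapsto t'$ witnesses $\sep \models \rho_1(t)$ and $g$ together with $x_1 \mapsto t$ witnesses $\sep \models \rho_2(t')$, hence $\rho_1(x) \in \tau(t)$ and $\rho_2(x) \in \tau(t')$ by definition of the type.

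For ($\Leftarrow$), I would first decompose the body. Since $x_1$ and $x_2$ lie in distinct connected components of $\body(\rho)$, let $\thephi_1$ be the conjunction of all body atoms lying in components that do not contain $x_2$ (in particular the component of $x_1$), and let $\thephi_2$ be the conjunction of the atoms in the component of $x_2$; write $\vy_1$ and $\vy_2$ for the non-frontier variables occurring in $\thephi_1$ and $\thephi_2$, respectively. Then $\body(\rho) = \thephi_1(x_1,\vy_1) \wedge \thephi_2(x_2,\vy_2)$, with $\vy_1 \uplus \vy_2 = \vy$ and $(\{x_1\}\cup\vy_1) \cap (\{x_2\}\cup\vy_2) = \emptyset$. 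From $\rho_1(x) \in \tau(t)$, i.e.\ $\sep \models \exists\vy x_2\,\thephi(t,x_2,\vy)$, fix a homomorphism $h\colon \body(\rho) \to \sep$ with $h(x_1) = t$; restricting $h$ to $\thephi_1$ shows $\sep \models \exists\vy_1\,\thephi_1(t,\vy_1)$. Symmetrically, from $\rho_2(x) \in \tau(t')$ fix $h'\colon \body(\rho) \to \sep$ with $h'(x_2) = t'$; restricting to $\thephi_2$ shows $\sep \models \exists\vy_2\,\thephi_2(t',\vy_2)$. Finally define $g$ on $\{x_1,x_2\}\cup\vy$ by $g(x_1) = t$, $g(x_2) = t'$, $g = h$ on $\vy_1$, and $g = h'$ on $\vy_2$. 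This is well-defined because the variable blocks are disjoint, and it is a homomorphism into $\sep$ because every atom of $\body(\rho)$ lies entirely in $\thephi_1$ (mapped correctly by $h$) or entirely in $\thephi_2$ (mapped correctly by $h'$). Since $g(x_1) = t$ and $g(x_2) = t'$, we obtain $\sep \models \exists\vy\,\thephi(t,t',\vy)$.

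The only step needing care is the combinatorial bookkeeping of the partition---verifying that, because $\thephi_1$ and $\thephi_2$ share no variables yet together cover all body atoms, the two partial homomorphisms $h$ and $h'$ really do combine into a single homomorphism. This is exactly where the hypothesis $\rho \in \rulesetnormminus$ is used; for a connected rule the two frontier variables could be linked through a shared variable of $\vy$, on which $h$ and $h'$ need not agree, so the gluing---and hence the lemma---would fail. I do not expect any further obstacle, since the rewriting $\rulesetnorm$ and the passage from $\se$ to $\sep$ play no role here beyond providing the instance $\sep$ on which the queries $\rho_i$ are evaluated.
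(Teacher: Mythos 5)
Your proof is correct and follows essentially the same route as the paper's: split the disconnected body into two variable-disjoint parts, restrict the two witnessing homomorphisms, and glue them back together. The only (immaterial) difference is that you place the components containing neither frontier variable into $\thephi_1$ whereas the paper places them into $\thephi_2$.
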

\begin{proof}
\prm{The forward direction} \pin{($\Rightarrow$)} follows from the definition of each set. For \prm{the backward direction} \pin{$(\Leftarrow)$}, we exploit disconnectedness and split $\thephi(x_1, x_2, \vy)$ into two distinct parts. Let $\thephi_1(x_1, \vy_1)$ be the connected component of $\thephi(x_1, x_2, \vy)$ that contains $x_1$, and $\thephi_2(x_2, \vy_2)$ denote the remainder (which includes the connected component of $x_2$). By assumption, $\vy_1$ and $\vy_2$ are disjoint, whence $\exists{\vy}\thephi(t,t',\vy)$ is equivalent to $\exists{\vy_1} \thephi_1(t, \vy_1) \wedge \exists{\vy_2} \thephi_2(t', \vy_2)$. Note that $\rho_1(x) \in \tau(t)$ implies $\sep \models \exists{\vy_1} \thephi_1(t, \vy_1)$, while $\rho_2(x) \in \tau(t')$ implies $\sep \models \exists{\vy_2} \thephi_2(t', \vy_2)$. Therefore, $\sep \models \exists{\vy} \thephi(t, t', \vy)$.
\end{proof}

\begin{lemma}\label{lem:adding_per_rule}
Let $\rho \in \rulesetnormminus$ be of the form 
$\thephi(x_1, x_2, \vy) \to \rpred(x_1, x_2)$. Then,
$$\hspace{-12ex}
\bigcup_%
{\hspace{12ex}
\rho_1\!(x)\in\,\ell,\ \rho_2\!(x)\in\,\ell'
}\hspace{-12ex}
\fcn{Add}_{\rpred,\ell,\ell'}(\sep, \coloringe) = \sep \cup \big\{\rpred(t,t') \mid \sep\models\exists{\vy}\thephi(t,t',\vy) \big\}.$$
\end{lemma}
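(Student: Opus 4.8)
The plan is to prove the set-equality by double inclusion, reasoning atom-by-atom. Recall that the left-hand side is a finite union (indexed by pairs of colors $(\ell,\ell')$ with $\rho_1(x)\in\ell$ and $\rho_2(x)\in\ell'$) of applications of the $\fcn{Add}_{\rpred,\ell,\ell'}$ operation to the colored instance $(\sep,\coloringe)$; since each $\fcn{Add}$ operation only adds atoms and never removes any, the union contains $\sep$ and whatever $\rpred$-atoms get added, so it suffices to characterize exactly which $\rpred$-atoms appear. By the semantics of the $\fcn{Add}_{\rpred,\vk}$ operation (\cref{def:entities-tree-to-instance}, the $\mathrm{Atoms}_s$ clause for $\pred{Add}_{\rpred,\vk}$), applying $\fcn{Add}_{\rpred,\ell,\ell'}$ to $(\sep,\coloringe)$ adds precisely the atoms $\rpred(t,t')$ with $\coloringe(t)=\ell$ and $\coloringe(t')=\ell'$. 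Hence the union adds exactly those $\rpred(t,t')$ for which there is some admissible pair $(\ell,\ell')$ — i.e. $\rho_1(x)\in\ell$, $\rho_2(x)\in\ell'$ — with $\coloringe(t)=\ell$ and $\coloringe(t')=\ell'$.

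Now I unwind the definitions. Since $\coloringe(t)=\tau(t)$, the condition ``there exists $(\ell,\ell')$ with $\rho_1(x)\in\ell$, $\rho_2(x)\in\ell'$, $\coloringe(t)=\ell$, $\coloringe(t')=\ell'$'' is literally equivalent to ``$\rho_1(x)\in\tau(t)$ and $\rho_2(x)\in\tau(t')$'': given such a pair we read off $\ell=\tau(t)$ and $\ell'=\tau(t')$; conversely, taking $\ell=\tau(t)$, $\ell'=\tau(t')$ witnesses it. So the set of $\rpred$-atoms added by the left-hand side is exactly $\{\rpred(t,t')\mid \rho_1(x)\in\tau(t)\text{ and }\rho_2(x)\in\tau(t')\}$. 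At this point I invoke \cref{lem:col_query_correspondency}, which states that $\rho_1(x)\in\tau(t)$ and $\rho_2(x)\in\tau(t')$ holds if and only if $\sep\models\exists{\vy}\thephi(t,t',\vy)$. Substituting this equivalence, the added atoms are exactly $\{\rpred(t,t')\mid \sep\models\exists{\vy}\thephi(t,t',\vy)\}$, which together with the base instance $\sep$ gives the right-hand side.

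One small point needs care: for the left-hand side to even be well-defined I should note that $(\sep,\coloringe)$ has finite cliquewidth (so that a well-decorated tree witnessing it exists, which is what $\fcn{Add}_{\rpred,\ell,\ell'}$ operates on) — this is exactly the corollary established just above, $\cw{\sep,\coloringe}=n_\exists$. A second routine check is that $t$ and $t'$ range over $\adom{\sep}$, which matches the $\ve$-tuples considered in the $\mathrm{Atoms}_s$ clause, and that distinctness of $t,t'$ is not required (the $\fcn{Add}$ operation freely adds $\rpred(t,t')$ for colored pairs regardless of whether $t=t'$), which is consistent with the right-hand side. I do not expect any genuine obstacle here: the lemma is essentially a bookkeeping identity that lines up the semantics of the $\fcn{Add}$ operation with the type-based coloring, and the only non-trivial ingredient, \cref{lem:col_query_correspondency}, is already in hand. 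The mild subtlety — and the one place I would be careful in the write-up — is making explicit that the union over admissible color pairs exhausts exactly the pairs $(\tau(t),\tau(t'))$, rather than more or fewer, so that no spurious $\rpred$-atoms are introduced and none are missed.
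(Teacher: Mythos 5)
Your proposal is correct and follows essentially the same route as the paper's proof: both sides trivially contain $\sep$, the semantics of $\fcn{Add}_{\rpred,\ell,\ell'}$ together with $\coloringe = \tau$ identifies the added atoms as exactly those $\rpred(t,t')$ with $\rho_1(x)\in\tau(t)$ and $\rho_2(x)\in\tau(t')$, and \cref{lem:col_query_correspondency} converts this to the right-hand side. Your extra remarks on well-definedness via $\cw{\sep,\coloringe}=n_\exists$ and on exhausting exactly the pairs $(\tau(t),\tau(t'))$ are sensible bookkeeping that the paper leaves implicit.
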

\begin{proof}
All $\sep$ atoms are contained in both sides of the equation.
Considering any $\rpred(t,t') \not\in \sep$ we find it contained in the left-hand side \iffi $\rho_1(x) \in \coloringe(t) = \tau(t)$ and $\rho_2(x) \in \coloringe(t') = \tau(t')$ \iffi  $\sep \models \exists{\vy}\thephi(t,t'\vy)$ (by \cref{lem:col_query_correspondency}) \iffi $\rpred(t,t')$ is contained in the right-hand side of the equation.
\end{proof}

\noindent
Of course, the right hand side of the equation in \cref{lem:adding_per_rule} coincides with $\ksat{1}{\sep,\{\rho\}}$. 
\prm{We observe -- given that the coloring $\coloringe$ remains unaltered -- that the finitely many distinct applications of $\fcn{Add}_{\rpred,\ell,\ell'}$ in \cref{lem:adding_per_rule} are independent and can be sequentialized without changing the result. Moreover, no such application increases the cliquewidth of the colored instance it is applied to. Since these arguments can be lifted to the application of \emph{all} rules from $\rulesetnormminus$, we obtain}
\pin{We observe -- given that the coloring $\coloringe$ remains unaltered -- that the finitely many distinct applications of $\fcn{Add}_{\rpred,\ell,\ell'}$ in \cref{lem:adding_per_rule} are independent and can be chained without changing the result. Further, no such application increases the cliquewidth of the instance it is applied to. Since these arguments can be lifted to the application of \emph{all} rules from $\rulesetnormminus$, we get}
$$
\fcn{cw} \Big(
\sep \cup 
\hspace{-15ex}
\bigcup_%
{\hspace{15ex}
\thephi(x_1, x_2, \vy) {\to} \rpred(x_1, x_2) \, \in \, \rulesetnormminus} 
\hspace{-15ex}
\big\{\rpred(t,t') \mathrel{\big|} \sep\models\exists{\vy}\thephi(t,t',\vy) \big\}\Big) \leq n_\exists.
$$ 
Observe that the considered instance is equal to $\so$. Hence, we have established 
\cref{lem:binary-reformulated-plus}, 
concluding \cref{lem:binary-reformulated}, 
which finishes the proof of \cref{lem:binary-main}, 
thus entailing the desired \cref{thm:binary}.
\subsection{The Case of Multi-Headed Rules}\label{sec:higher-arity}

We will now prove \cref{thm:higher-arity}, which implies that for multi-headed rules $\fus \not\subseteq \fcs$. To this end, we exhibit a $\fus$ rule set that yields universal models of infinite clique\-width. 
Let $\sigi = \set{\hpred, \vpred
}$ 
with $\hpred, \vpred$ binary predicates
. Let $\rsgrid$ denote the following rule set over $\sigi$:
$$
\begin{array}{@{}l@{\ \ \ \ \ \ }r@{\ }l@{\ \ \ \ \ \ \ \ }}
\text{(loop)} &  & \rightarrow \exists \, x \ \; \big(\,\hpred(x\,,x\,) \land \vpred (x,x\,)\big) \\[0.5ex]
\text{(grow)} &  & \rightarrow \exists yy' \big(\,\hpred(x\,,y\,) \land \vpred (x,y')\big) \\[0.5ex]
\text{(grid)} & \hpred(x,y) \land \vpred(x,x')  & \rightarrow \exists \, y' \; \big(\,\hpred(x'\!,y') \land \vpred(y,y')\big)
\end{array}
$$
We make use of $\rsgrid$ to establish \cref{thm:higher-arity}
, the proof of which consists of two parts. First, we provide a database $\dbg$ to form a knowledge base $(\dbg, \rsg)$ for which no universal model of finite cliquewidth exists (\cref{lem:no-universal-model-cw}). Second, we show that $\rsg$ is a finite unification set (\cref{lem:rsg-is-fus}). 

\msubsection{$\rsgrid$ is not a finite-cliquewidth set}
Toward establishing this result, let $\dbg = \set{
\top(\const{a})}
$ and define the instance $\ginf$, where $\const{a}$ is a constant and $y$ as well as $x_{i,j}$ with $i,j \in \mathbb{N}$ are nulls: 
\begin{eqnarray*}
\ginf & = & \big\{
\hpred(\const{a},x_{1,0}),\vpred(\const{a},x_{0,1}),\hpred(y,y),\vpred(y,y)\big\} \\
 & & \cup\,\big\{\hpred(x_{i,j},x_{i+1,j}), \vpred(x_{i,j},x_{i,j+1}) \mid  (i,j) \in (\mathbb{N} \times \mathbb{N}) \setminus \{(0,0)\} \big\}.
\end{eqnarray*}

\begin{figure}
    \centering
    \includegraphics[width=0.7\linewidth]{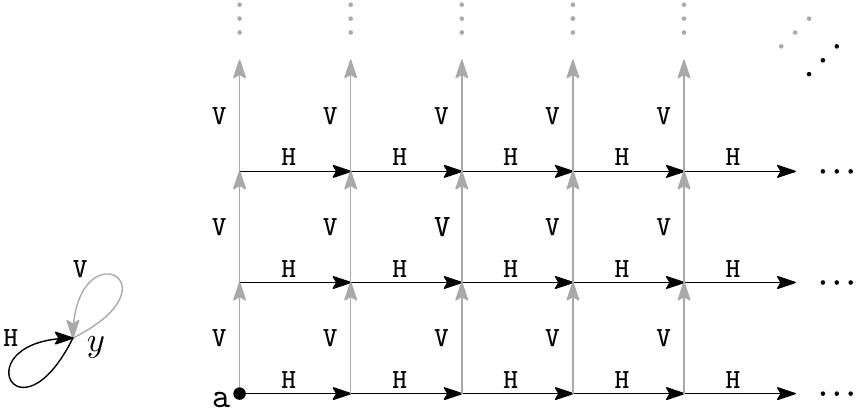}
    \vspace{-1ex}
    \caption{ The instance $\ginf$.}
    \label{fig:ginf}
\end{figure}

\cref{fig:ginf} depicts $\ginf$ graphically.
The following lemma summarizes consecutively established properties of $\ginf$ and its relationship with $\kbaseg$.

\begin{lemma}
With $\ginf$ and $\kbaseg$ as given above, we obtain:
\begin{itemize}
\item $\ginf$ has infinite cliquewidth,
\item $\ginf$ is a universal model of $\kbaseg$,
\item the only homomorphism $h: \ginf \to \ginf$ is the identity,
\item any universal model of $\kbaseg$ contains an induced sub-instance isomorphic to $\ginf$,
\item $\kbaseg$ has no universal model of finite cliquewidth.
\end{itemize}
\end{lemma}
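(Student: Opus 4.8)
I would prove the five items in the displayed order, as each relies on the preceding ones. That $\ginf$ is a model of $\kbaseg$ is routine to check: the self-loop null $y$ witnesses (loop); at every element the unique outgoing $\hpred$- and $\vpred$-edges (and the self-loops at $y$) witness (grow); and (grid) holds because the diagonal node $x_{i+1,j+1}$ completes the corner rooted at $x_{i,j}$, while the self-loop at $y$ closes the only remaining corner. Universality I would establish directly: given a model $M$, put $h(\const a)=\const a$ and $h(y)=m$ for some $m$ with $\hpred(m,m),\vpred(m,m)\in M$ (which exists since $M\models$ (loop)), and then define $m_{i,j}:=h(x_{i,j})$ by induction, using (grow) to obtain the $\hpred$-successors $m_{i+1,0}$ of $m_{i,0}$ and (grid) at each corner $m_{i,j}$ to obtain $m_{i+1,j+1}$ together with the edges $\hpred(m_{i,j+1},m_{i+1,j+1})$ and $\vpred(m_{i+1,j},m_{i+1,j+1})$; one then checks that every atom of $\ginf$ lands in $M$.

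\emph{Rigidity.} Any endomorphism $h$ of $\ginf$ fixes the constant $\const a$; since every element of $\ginf$ has exactly one $\hpred$-successor and exactly one $\vpred$-successor, an induction on $i+j$ gives $h(x_{i,j})=x_{i,j}$ for all $i,j$; finally $h(y)=y$ because $y$ is the only element carrying an $\hpred$-loop.

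\emph{Infinite cliquewidth -- the crux.} For each $N$, the elements $x_{i,j}$ with $0\le i,j<N$ (with $\const a$ as corner) induce in $\ginf$ a copy of the $N\times N$ grid, and it is classical that the cliquewidth of the $N\times N$ grid grows (linearly) with $N$; the underlying lower-bound argument is the standard separator argument, which applies verbatim to directed, $\{\hpred,\vpred\}$-edge-labelled graphs. It then suffices to note that cliquewidth is monotone under induced sub-instances: from a well-decorated tree realizing a coloured version of $\ginf$ with $k$ colours, one extracts a well-decorated tree for a chosen finite induced subgrid by replacing the $\pred{*}_k$-leaves of the other elements with $\pred{Void}$ and contracting or relabelling the now-superfluous $\oplus$-nodes so as to preserve well-decoratedness, all without increasing the colour count. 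Hence $\cw{\ginf}=\infty$. (Alternatively one can run the separator argument on $\ginf$ directly: at a node of the decorated tree that is balanced with respect to a large finite subgrid, the subgrid elements assembled below it that still gain edges later fall into at most $k$ colour classes, each connecting uniformly to the outside -- contradicting that balanced cuts of large grids exhibit $\Omega(N)$ distinguishable classes.)

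\emph{Universal models and no finite cliquewidth.} Let $U$ be a universal model. Since $\ginf$ is a model, universality yields $f\colon U\to\ginf$; since $\ginf$ is itself a universal model, there is $g\colon\ginf\to U$. By rigidity, $f\circ g=\mathrm{id}_{\ginf}$, so $g$ is injective; applying $f$ to an arbitrary atom of $U$ over $g(\adom{\ginf})$ shows that the image of $g$ is precisely the induced sub-instance of $U$ on $g(\adom{\ginf})$ and that $g$ is an isomorphism onto it. Consequently every universal model $U$ contains an induced copy of $\ginf$, so by $\cw{\ginf}=\infty$ and monotonicity of cliquewidth under induced sub-instances we get $\cw{U}=\infty$; as $U$ was arbitrary, $\kbaseg$ has no universal model of finite cliquewidth. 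The main obstacle throughout is the cliquewidth item -- pushing the classical finite-grid lower bound through the infinite, directed, edge-labelled decorated-tree formalism, where the only genuinely delicate point is the bookkeeping needed to keep the restricted decorated tree well-decorated.
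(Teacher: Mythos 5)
Your proof is correct and follows essentially the same route as the paper's: rigidity of $\ginf$ together with the two-way homomorphism composition argument to embed $\ginf$ as an induced sub-instance of every universal model, and the classical finite-grid cliquewidth lower bound combined with monotonicity of cliquewidth under induced sub-instances. The only cosmetic difference is that the paper discharges the directed, edge-labelled issue via two explicit reduction facts (erasing labels and symmetrizing edges cannot increase cliquewidth) so as to invoke the undirected unlabelled grid result directly, rather than rerunning the separator argument, and it asserts without proof the induced-sub-instance monotonicity that you sketch via the decorated-tree surgery.
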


From the last point (established via the insight that taking induced subinstances never increases clique\-width), the announced result immediately follows.

\begin{lemma}\label{lem:no-universal-model-cw}
$\rsg$ is not $\fcs$.
\end{lemma}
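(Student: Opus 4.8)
The plan is to read \cref{lem:no-universal-model-cw} off directly from the final item of the lemma stated immediately above it, using nothing but the definition of $\fcs$. Recall that $\rsg$ is a finite-cliquewidth set precisely when \emph{every} database $\database$ admits a universal model of $(\database,\rsg)$ of finite cliquewidth; hence, to refute $\fcs$-membership it suffices to exhibit a single database for which no such universal model exists. I would take that database to be $\dbg=\{\top(\const{a})\}$.

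With this database fixed, the argument is immediate. By the last bullet of the preceding lemma, the knowledge base $\kbaseg=(\dbg,\rsg)$ has no universal model of finite cliquewidth. Were $\rsg$ a finite-cliquewidth set, instantiating the definition at $\database=\dbg$ would produce exactly such a model --- a contradiction. Therefore $\rsg$ is not $\fcs$, and this is all the proof of \cref{lem:no-universal-model-cw} itself requires.

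The substance of the matter is entirely upstream, in establishing the five properties of $\ginf$ collected in the preceding lemma, and the chain I would follow there is: (1) $\ginf$ is a universal model of $\kbaseg$ (checked by inspecting the rules (loop), (grow), (grid) against the explicit description of $\ginf$); (2) $\ginf$ is rigid, i.e.\ its only endomorphism is the identity (the constant $\const{a}$ is fixed, the $\hpred$/$\vpred$-grid forces each null $x_{i,j}$ to its unique position, and the isolated self-loop on $y$ has nowhere else to go); (3) hence, for any universal model $\inst$ of $\kbaseg$, universality yields a homomorphism $\ginf\to\inst$ while $\ginf$ being a model yields $\inst\to\ginf$, and composing with rigidity shows the image of $\ginf$ in $\inst$ is an induced sub-instance isomorphic to $\ginf$; (4) passing to an induced sub-instance never increases cliquewidth (turn a well-decorated tree for $\inst$ into one for the sub-instance by relabelling with $\pred{Void}$ every introduction of an entity outside the chosen domain; since the sub-instance is induced, the surviving $\pred{Add}$-operations reproduce exactly its atoms); and (5) $\ginf$ has infinite cliquewidth. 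Items (3)--(5) together give the final bullet: every universal model of $\kbaseg$ contains an induced copy of $\ginf$, hence has cliquewidth at least $\cw{\ginf}=\infty$.

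The main obstacle is step (5). I expect it to be proved by a grid-style argument: one isolates inside $\ginf$ arbitrarily large finite ``rectangular'' induced sub-instances carrying the full $\hpred$/$\vpred$-grid pattern and argues, as for the classical $n\times n$ grid, that their cliquewidth grows without bound; since cliquewidth of $\ginf$ dominates that of each of its induced sub-instances, it is infinite. Making the lower bound for the finite grids precise --- adapting Courcelle--Engelfriet-type arguments, or a direct cut/bramble argument --- is the technically delicate part, while everything else is routine bookkeeping about homomorphisms and well-decorated trees.
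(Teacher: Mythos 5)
Your proposal is correct and follows essentially the same route as the paper: fix $\dbg=\{\top(\const{a})\}$, show $\ginf$ is a rigid universal model of $\kbaseg$ of infinite cliquewidth, deduce that every universal model contains an induced copy of $\ginf$ and hence has infinite cliquewidth (since passing to induced sub-instances cannot increase cliquewidth), and conclude by the definition of $\fcs$. The only difference is that for your step (5) the paper simply cites the known fact that the cliquewidth of $n\times n$ grids tends to infinity rather than re-proving the finite-grid lower bound.
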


\msubsection{$\rsgrid$ is a finite-unification set}
Unfortunately, $\rsg$ does not fall into any of the known syntactic $\fus$ subclasses and showing that a provided rule set is $\fus$ is a notoriously difficult task in general.\footnote{This should not be too surprising however, as being $\fus$ is an undecidable property~\cite{BagLecMugSal11}.}
At least, thanks to the rule $(\mathrm{loop})$, every BCQ that is free of constants is always entailed and can thus be trivially re-written. 
For queries involving constants, we provide a hand-tailored query rewriting algorithm, along the lines of prior work exploring the multifariousness of $\fus$ \cite{OstMarCarRud22}. The required argument is quite elaborate and we just sketch the main ideas here due to space restrictions.

We make use of special queries, referred to as \emph{marked queries}: queries with some of their terms ``marked''  with the purpose of indicating terms that need to be mapped to database constants in the course of rewriting. The notion of query satisfaction is lifted to such marked CQs, which enables us to identify the subclass of \emph{properly marked queries} as those who actually have a match into some instance $\sg$, where $\db$ is an arbitrary database.

With these notions at hand, we can now define a principled rewriting procedure consisting of the exhaustive application of three different types of transformation rules. We show that this given set of transformations is in fact sound and complete, i.e., it produces correct first-order rewritings upon termination. Last, we provide a  termination argument by showing that the operations reduce certain features of the rewritten query. Thus, we arrive at the second announced result, completing the overall proof.

\begin{lemma}\label{lem:rsg-is-fus}
$\rsg$ is $\fus$.
\end{lemma}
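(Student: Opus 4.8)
The plan is to prove first-order rewritability directly, by giving, for an arbitrary CQ $\query$, a construction of a UCQ rewriting under $\rsg$. Since a free variable of $\query$ must, in any witnessing match, be sent to a database constant, it behaves exactly like a \emph{marked term} in the sense described below; hence it is enough to treat Boolean $\query$ and carry the free variables along unchanged.

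The first, easy observation is that a constant-free $\query$ is entailed by $(\db,\rsg)$ for every $\db$: rule (loop) forces an element $t$ with $\hpred(t,t)$ and $\vpred(t,t)$, and as $\sigi=\set{\hpred,\vpred}$ (with $\top$ holding universally), collapsing all of $\query$ onto $t$ is a homomorphism into every model; more generally, every connected component of $\query$ mentioning no constant may be discarded. So we may assume $\query$ is connected and contains at least one constant. Next I would record the shape of the universal model $\sg$: up to homomorphic equivalence it is $\db$ together with a single ``loop element'' and, glued at each $t\in\adom{\db}$, a grid-shaped ``tentacle'' whose db-facing corner is $t$ and which, away from $\db$, is isomorphic to the grid part of $\ginf$. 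Crucially the tentacles are forward-acyclic, so a homomorphism from a connected, constant-using $\query$ into $\sg$ either lands inside $\db$ or leaves $\db$ into a single tentacle along $\hpred$/$\vpred$ edges; and since $\query$ is connected to a constant, its image reaches only boundedly deep (at most the diameter of $\query$) into that tentacle.

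To turn this into a rewriting procedure I would, following the style of prior $\fus$ case studies \cite{OstMarCarRud22}, work with \emph{marked CQs}: CQs equipped with a set of marked terms, with the semantics that marked terms must be sent to an element of $\adom{\db}$ in any witnessing match; satisfaction lifts accordingly, and a marked CQ is \emph{properly marked} if it matches some $\sg$. The initial object is $\query$ with its constants as marked terms. The procedure exhaustively applies three kinds of transformations: (i) \emph{loop elimination} --- delete any connected component with no constant and no marked term; (ii) \emph{grow-reversal} --- delete an atom $\hpred(s,t)$ or $\vpred(s,t)$ whose target $t$ is an unmarked variable occurring in no other atom, since rule (grow) guarantees every element such a successor; (iii) \emph{grid-reversal} --- replace a pair $\hpred(s,t),\vpred(u,t)$ whose shared $t$ is an unmarked variable occurring nowhere else by $\hpred(w,s),\vpred(w,u)$ with $w$ fresh (undoing rule (grid)), marking (and, when forced, substituting by a database constant) any term thereby pushed onto a db-facing tentacle corner. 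A marked CQ on which none of these applies is ``fully reversed''; replacing its marks by membership constraints over $\sig$ yields an ordinary CQ. The candidate rewriting of $\query$ is the union of all fully-reversed CQs reachable from $\query$.

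Finally one argues soundness, completeness, and termination. Soundness is the routine check that each transformation $\query\mapsto\query'$ satisfies $\sg\models\query \iff \sg\models\query'$ for every $\db$ (with marks read as ``image in $\adom{\db}$''), immediate from the three rule definitions and the loop observation. Completeness is shown by proving that any match of $\query$ into $\sg$ using at least one chase atom outside $\db$ is ``peeled'' by some transformation into a match of the resulting $\query'$, so that iterating drives every match down to one using only $\db$-atoms, which the corresponding fully-reversed disjunct then certifies. The delicate part --- and the step I expect to be the main obstacle --- is termination, since a tentacle is infinite and naive reversal need not stop; one must equip properly marked CQs with a well-founded measure (I would try a lexicographic combination of: the number of $\sigi$-atoms not yet certified over $\db$; a ``depth potential'' bounding how far into a tentacle the constant-bearing part still reaches; and the number of unmarked variables), note that grow-reversal and loop elimination strictly lower the atom count while grid-reversal keeps it constant but strictly lowers the depth potential --- using that grid-reversal moves the anchor of the match one step toward the database and that this depth is bounded by $\query$'s size --- and conclude that only finitely many marked CQs are reachable. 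Their fully-reversed members then form a finite UCQ which, by soundness and completeness, is a rewriting of $\query$; as $\query$ was arbitrary, $\rsg$ is $\fus$.
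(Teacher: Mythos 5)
Your strategy is essentially the paper's: work with marked CQs (marks recording which terms must land on database constants), and rewrite backward by undoing (loop), (grow) and (grid) — your grow-reversal is the paper's $\cutop$ and your grid-reversal is the paper's $\reduceop$, including the branching on whether the fresh source variable gets marked. The genuine gap is that your case analysis at an unmarked sink variable is incomplete. Grow-reversal requires the sink to occur in exactly one atom, and grid-reversal requires it to occur in exactly one incoming $\hpred$-atom and one incoming $\vpred$-atom and nowhere else. Neither applies when the sink has two incoming atoms with the \emph{same} predicate and distinct sources, e.g.\ $\hpred(s,x)\wedge\hpred(s',x)$ with $s\neq s'$, nor when it has three or more incoming atoms; loop elimination does not help because such a component may contain a constant. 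The paper closes exactly this case with a third operation, $\mergeop$, which exploits the functionality of the Skolem chase (a null of $\sg$ has at most one $\hpred$- and one $\vpred$-predecessor, so a same-predicate convergence on an unmarked sink forces the sources to be identified, the whole component collapsing onto the (loop)-element or into $\db$) to merge $s$ and $s'$. Only with all three operations does ``some operation applies to every alive query'' (\cref{lem:maximal-cases}) hold; without $\mergeop$ your procedure gets stuck on alive queries and your completeness argument fails. A second, smaller error: your grid-reversal replaces $\hpred(s,t),\vpred(u,t)$ by $\hpred(w,s),\vpred(w,u)$, but reading off the body of (grid) the correct reversal is $\hpred(w,u),\vpred(w,s)$ — the predicates must swap which predecessor they point to, as in the paper's $\reduceop$ — otherwise the step is unsound.

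Two further cautions. Your structural picture of $\sg$ as independent grid ``tentacles'' glued at single database elements is inaccurate: rule (grid) merges the quarter-grids emanating from distinct database elements (from $\hpred(\const{a},\const{b}),\vpred(\const{a},\const{c})\in\db$ one obtains a null whose $\hpred$-predecessor is $\const{c}$ and whose $\vpred$-predecessor is $\const{b}$), so a connected constant-using query need not descend into a single tentacle — this is precisely why grid-reversal must branch on whether the fresh variable is marked. Finally, your termination measure is phrased semantically (depth of a match into a tentacle), whereas the procedure needs a well-founded \emph{syntactic} measure on the query; the paper uses the count of converging pairs $\hpred(t_1,t),\vpred(t_2,t)$ together with the number of atoms containing the chosen maximal variable. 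You rightly flagged termination as delicate, but the missing merge operation is the decisive defect.
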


\subsection{Fus and Expressive Queries}

\newcommand{\TM}{{\scaleobj{0.8}{\mathit{TM}}}}

Let us put the negative result of \cref{sec:higher-arity} into perspective.
Thanks to \cref{cor:fcsdecidable}, we know that $\fcs$ ensures decidability of arbitrary DaMSOQ entailment, a quite powerful class of queries. By contrast, $\fus$ is a notion tailored to CQs and unions thereof. As it so happens, searching for a method to establish decidability of DaMSOQ entailment for arbitrary $\fus$ rule sets turns out to be futile. This even holds for a fixed database and a fixed rule set. 
\begin{lemma}
DaMSOQ entailment is undecidable for $(\dbg,\rsg)$.
\end{lemma}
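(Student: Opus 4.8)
The starting point is that the preceding lemma gives us a \emph{universal} model $\ginf$ of $\kbaseg$, and the proof of \Cref{thm:decidablequerying} records that satisfaction of datalog queries is preserved under homomorphisms. Hence, for every DaMSOQ $(\mathfrak{q},\Xi)$ we get $\kbaseg \models (\mathfrak{q},\Xi)$ \iffi $\ginf \models \mathfrak{q}$ \iffi $\mathtt{Goal}\in\sat{\ginf,\ruleset_{\mathfrak{q}}}$: the forward direction holds since $\ginf$ is itself a model of $\kbaseg$, the backward direction since from $\ginf$ there is a homomorphism into every model. So it suffices to construct, for every Turing machine $M$, a DaMSOQ $(\mathfrak{q}_M,\Xi_M)$ over $\sigi$ such that $\mathtt{Goal}\in\sat{\ginf,\ruleset_{\mathfrak{q}_M}}$ \iffi $M$ halts on the empty input; undecidability of the halting problem then yields the claim.

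The structural fact we exploit is that, identifying $\const{a}$ with $x_{0,0}$, the instance $\ginf$ contains the full quadrant grid $\{\hpred(x_{i,j},x_{i+1,j}),\vpred(x_{i,j},x_{i,j+1})\mid i,j\in\mathbb{N}\}$ on pairwise distinct nulls (by construction of $\ginf$), plus a self-loop node $y$ that is disconnected from $\const{a}$. The plan is to let $\mathfrak{q}_M$ be a \emph{monadic} datalog query simulating $M$ on this grid in cellular-automaton fashion: the column index runs along $\hpred$, the time step along $\vpred$, so that the unary label attached to $x_{i,j}$ encodes the content of tape cell $i$ at time $j$ together with head/state information if present. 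Concretely, $\ruleset_{\mathfrak{q}_M}$ seeds $\const{a}$ with the start label (initial state, blank); propagates a ``row-$0$'' marker along $\hpred$ and a ``column-$0$'' marker along $\vpred$ from $\const{a}$ (used to install the blank initial tape and a virtual left end-marker); contains one local transition rule deriving the label of $x_{i,j+1}$ from those of $x_{i-1,j},x_{i,j},x_{i+1,j}$ via the deterministic transition function of $M$; and a rule $\mathtt{Label}_{q_f}(u)\to\mathtt{Goal}$ for each halting state $q_f$. One then checks that on $\ginf$ the set of elements reachable from $\const{a}$ is exactly the grid, that $y$ is never labelled, and that the derivation reproduces precisely the unique run of $M$, so that $\mathtt{Goal}$ appears \iffi $M$ halts. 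Care is needed in the boundary rules so that no \emph{second}, conflicting label is ever derived for a grid cell; this can be arranged by making the transition rule require a labelled $\vpred$-predecessor and $\hpred$-predecessor, which automatically disables it on row $0$ and column $0$, where the dedicated boundary rules take over.

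For the MSO companion $\Xi_M$ we use that $\mathfrak{q}_M$ is \emph{monadic}: all IDB predicates are unary, except the nullary $\mathtt{Goal}$. For such a query, $\mathtt{Goal}$ lies in the least fixed point of $\ruleset_{\mathfrak{q}_M}$ over an instance $\instance$ \iffi $\mathtt{Goal}$ holds in \emph{every} pre-fixed-point of its rules, since the least fixed point is itself a pre-fixed-point and is contained in all of them. Quantifying the finitely many unary IDB predicates together with the nullary one universally, this is expressed by the MSO sentence $\Xi_M := \forall \overline{P}\,\forall g\,\big(\bigwedge\ruleset_{\mathfrak{q}_M}[\overline{P},g]\to g\big)$ over $\sigi$, whose matrix is first-order; thus $\Xi_M$ is equivalent to $\mathfrak{q}_M$ on \emph{all} instances and $(\mathfrak{q}_M,\Xi_M)$ is a bona fide DaMSOQ. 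Chaining this with the reduction from the first paragraph completes the proof.

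The main obstacle I expect is twofold. First, keeping the simulation faithful on the concrete rigid instance $\ginf$: because datalog is monotone, one must make sure that no spurious label is ever derived for a grid cell (which would corrupt the automaton invariant), which is what forces the careful treatment of the two boundary lines and of the loop node $y$. Second, meeting the DaMSOQ requirement that $\Xi_M$ be equivalent to $\mathfrak{q}_M$ on \emph{every} structure, not just on $\ginf$ — this is exactly where monadicity of the encoding and the pre-fixed-point characterisation are essential, as a non-monadic datalog simulation would have no obvious MSO counterpart.
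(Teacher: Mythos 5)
Your proposal is correct and follows essentially the same route as the paper: reduce entailment to evaluation over the universal model $\ginf$ via homomorphism-preservation, then simulate a deterministic Turing machine on the grid with a monadic datalog query whose MSO companion is obtained by the standard pre-fixed-point translation. The paper only sketches this argument, and your additional care about the boundary rows, the isolated loop node $y$, and the equivalence of $\Xi_M$ with $\mathfrak{q}_M$ on all instances fills in exactly the details the sketch leaves implicit.
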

The corresponding proof is rather standard:
One can, given a deterministic Turing machine $\mathit{TM}$, create a DaMSOQ $(\mathfrak{q}_\TM,\Xi_\TM)$ that is even expressible in monadic datalog and, when evaluated over $\ginf$, uses the infinite grid to simulate a run of that Turing machine, resulting in a query match \ifandonlyif $\mathit{TM}$ halts on the empty tape.
As $\ginf$ is a universal model of $(\dbg,\rsg)$, and DaMSOQ satisfaction is preserved under homomorphisms, the entailment $(\dbg,\rsg) \models (\mathfrak{q}_\TM,\Xi_\TM)$ coincides with the termination of $\mathit{TM}$, concluding the argument.

\section{Conclusions and Future Work}

In this paper, we have introduced a generic framework, by means of which model-theoretic properties of a class of existential rule sets can be harnessed to establish that the entailment of \emph{datalog/MSO queries} -- a very expressive query formalism subsuming numerous popular query languages -- is decidable for that class.
We have put this framework to use by introducing \emph{finite-cliquewidth sets} and clarified this 
class's 
relationship with notable others, resulting in the insight that a plethora of known as well as hitherto unknown decidability results can be uniformly obtained via the decidability of DaMSOQ entailment over finite-cliquewidth sets of rules.
Our results entail various appealing directions for follow-up investigations: 
\begin{itemize}
\item 
Certainly, the class of single-headed binary $\fus$ rule sets is not the most general fragment of $\fus$ subsumed by $\fcs$. We strongly conjecture that many of the known, syntactically defined, ``concrete subclasses'' of $\fus$ are actually contained in $\fcs$, and we are confident that corresponding results can be established.
\item
Conversely, we are striving to put the notion of $\fcs$ to good use by identifying comprehensive, syntactically defined subclasses of $\fcs$, enabling decidable, highly expressive querying beyond the realms of $\bts$, $\fus$, or $\mathbf{fc}$.
As a case in point, the popular modeling feature of \emph{transitivity} of a relation has been difficult to accommodate in existing frameworks \cite{BagetBMR15,GanzingerMV99,KieronskiR21,SzwastT04}, whereas the observations presented in this paper seem to indicate that $\fcs$ can tolerate transitivity rules well and natively. 
\item
Finally, we are searching for even more general MSO-friendly width notions that give rise to classes of rule sets subsuming $\fcs$, and which also encompass $\bts$ natively, without arity restrictions or ``reification detours.'' 
\end{itemize}
Aside from these major avenues for future research, there are also interesting side roads worth exploring:
\begin{itemize}
\item 
Are there other, more general model-theoretic criteria of ``structural well-behavedness'' that, when ensured for universal models, guarantee decidability of query entailment ``just'' for (U)CQs?
Clearly, if we wanted $\fus$ to be subsumed by such a criterion, it would have to accept structures like $\ginf$ (i.e., infinite grids) and we would have to relinquish decidability of DaMSOQ entailment. 
\item
More generally: Are there other ``decidability sweet spots'' between expressibility of query classes and structural restrictions on universal models?
\end{itemize}


\balance
\bibliographystyle{abbrv}
\bibliography{bibliography}

\clearpage

\noindent{\huge{\scaleobj{1.1}{\bf Appendices}}}
\medskip\nobalance
\appendix
\section{Definition of $\Phi_\mathrm{well}$ (\NoCaseChange{\cref{sec:cwdef}})}

Recall that a decorated tree $\mathcal{T}$ is called a \emph{well-decorated tree} \iffi
\begin{itemize}
\item for every null $s \in \{0,1\}^*$, $\mathcal{T}$ contains exactly one fact $\mathrm{Dec}(s)$ with  $\mathrm{Dec} \in \decorum$,
\item for every $\const{c}\in \mathrm{Cnst}$, $\mathcal{T}$ contains at most one fact of the form $\pred{c}_k(s)$,
\item if $\pred{Add}_{\rpred,\vk}(s) \in \mathcal{T}$ or $\pred{Recolor}_{k\to k'}(s) \in \mathcal{T}$,
       then $\pred{Void}(s0)\not\in \mathcal{T}$ and $\pred{Void}(s1) \in \mathcal{T}$,
\item if $\oplus(s) \in \mathcal{T}$, then $\pred{Void}(s0),\pred{Void}(s1) \not\in \mathcal{T}$,
\item if $\pred{Void}(s) \in \mathcal{T}$ or $\pred{c}_k(s) \in \mathcal{T}$, then  $\pred{Void}(s0),\pred{Void}(s1) \in \mathcal{T}$.

\end{itemize}

Here, fixing $\cols$, $\Sigma$, and $\mathrm{Cnst}$,  we will define a first-order sentence $\Phi_\mathrm{well}$ such that for any $(\cols,\mathrm{Cnst},\Sigma)$-decorated tree $\mathcal{T}$, $\mathcal{T}$ is well-decorated \ifandonlyif $\mathcal{T} \models \Phi_\mathrm{well}$. 

Let us denote $\decorum$ with $\Gamma$ and we write $\varphi^i_{\pred{Void}}(x)$ to abbreviate $\exists z (\pred{Succ}_i(x,z) \wedge \pred{Void}(z))$. 
Consider the following (where $\pred{c} \in \mathrm{Cnst}$, $\pred{d} \in \mathrm{Cnst} \cup\{\pred{*}\}$):
\begin{align*}
    \Phi^{\mathrm{funct}} &= \forall{x} \bigvee_{\mathrm{Dec} \in \Gamma}\Big( \mathrm{Dec}(x) \wedge \scaleobj{0.8}{\bigwedge_{\scaleobj{1.25}{\mathrm{Dec}' \in \Gamma \setminus \{\mathrm{Dec}\}}}} \neg\mathrm{Dec}'(x) \Big)\\ 
    \Phi^{\mathrm{unique}}_{\pred{c}_k} &=  \forall{xy}\; \pred{c}_k(x) \Rightarrow (\pred{c}_k(y) \Rightarrow x = y) \wedge \bigwedge_{k'\in \cols\setminus\{k\}}\neg\pred{c}_{k'}(y)\\
\end{align*}
\begin{align*}
    \Phi_{\pred{d}_k} &=  & \forall{x}\; \pred{d}_k(x) & \Rightarrow \ \ \varphi^0_{\pred{Void}}(x) \wedge \varphi^1_{\pred{Void}}(x)\\
    \Phi_{\pred{Add}_{\rpred,\vk}} &= & \forall{x}\; \pred{Add}_{\rpred,\vk}(x) & \Rightarrow \neg \varphi^0_{\pred{Void}}(x) \wedge \varphi^1_{\pred{Void}}(x)\\
    \Phi_{\pred{Recolor}_{k\to k'}} &= & \forall{x}\; \pred{Recolor}_{k\to k'}(x) & \Rightarrow \neg \varphi^0_{\pred{Void}}(x) \wedge \varphi^1_{\pred{Void}}(x)\\
    \Phi_{\oplus} &= & \forall{x}\; \oplus(x) & \Rightarrow \neg \varphi^0_{\pred{Void}}(x) \wedge \neg\varphi^1_{\pred{Void}}(x)\\
    \Phi_{\pred{Void}} &= &\forall{x}\; \pred{Void}(x) & \Rightarrow \ \ \varphi^0_{\pred{Void}}(x) \wedge \varphi^1_{\pred{Void}}(x)\\
\end{align*}
Then we let 
$$
\Phi_\mathrm{well} = 
\Phi^{\mathrm{funct}} \wedge 
\hspace{-3ex}\bigwedge_{\hspace{3ex}\pred{c}_k\in \Gamma,\,\pred{c} \in \mathrm{Cnst}}\hspace{-3ex} \Phi^{\mathrm{unique}}_{\pred{c}_k} \wedge 
\bigwedge_{\mathrm{Dec}\in \Gamma} \Phi_{\mathrm{Dec}}.
$$
\section{Proofs for \NoCaseChange{\cref{{sec:fcw-decidability}}}}\label{appendix:btsvsfcs}

\begin{definition}
We define 
$$
\varphi^\mathrm{dom}(x) = 
\hspace{-5ex}\bigvee_{\hspace{5ex}{\const{c} \in \mathrm{Const} \cup\{\const{*}\}}\atop {k \in \cols}}\hspace{-5ex}\const{c}_k(x)
\quad\text{ and }\quad
\varphi^\mathrm{Dom}(X) = \forall x. \Big( x{\in} X \to \varphi^\mathrm{dom}(x) \Big),
$$
which are lifted to variable sequences by
$$
\varphi^\mathrm{dom}(\vx) = \hspace{-4ex} \bigwedge_{\hspace{4ex} x\text{ from }\vx}\hspace{-3ex}\varphi^\mathrm{dom}(x)
\quad\text{ and }\quad
\varphi^\mathrm{Dom}(\VX) = \hspace{-4ex} \bigwedge_{\hspace{4ex} X\text{ from }\VX}\hspace{-3ex}\varphi^\mathrm{Dom}(X).
$$
Furthermore, let
$$
\thephi_\fcn{up}(x,y) = \szero(y,x) \vee \sone(y,x)
$$
and
$$
\varphi^\mathrm{colIn}_{k}(x,y) = \varphi^\mathrm{dom}(x) \wedge 
\forall_{\ell \in \cols} X_\ell. 
\Big(\big(\hspace{-3ex} \bigwedge_{\hspace{3ex}\psi \in M_{xy}}\hspace{-3ex}\psi\, \big) \to X_k(y)\Big),
$$
where $X_\ell$ are MSO set variables and $M_{xy}$ contains the following formulae (for all respective elements of $\decorum$):
\begin{align*}
 \const{c}_k(x) & \to X_{k}(x)\\ 
\forall x'\!y'\!. X_k(x') \wedge \thephi_\fcn{up}(x'\!,y') \wedge \pred{Add}_{\rpred,k',k''}(y') & \to X_{k}(y')\\ 
\forall x'\!y'\!. X_k(x') \wedge \thephi_\fcn{up}(x'\!,y') \wedge \pred{Add}_{\rpred,k'}(y') & \to X_{k}(y')\\ 
\forall x'\!y'\!. X_k(x') \wedge \thephi_\fcn{up}(x'\!,y') \wedge \pred{Recolor}_{k\to k'}(y') & \to X_{k'}(y')\\ 
\forall x'\!y'\!. X_k(x') \wedge \thephi_\fcn{up}(x'\!,y') \wedge \pred{Recolor}_{k'\!\to k''}(y') & \to X_{k}(y')\hspace{1.5ex}\text{for }k{\neq} k'\\ 
\forall x'\!y'\!. X_k(x') \wedge \thephi_\fcn{up}(x'\!,y') \wedge \oplus(y') & \to X_{k}(y')
\end{align*}
%
%
For each predicate $\rpred\in \Sigma$ of arity $n$, we define
\begin{align*}
\varphi_{\rpred_{n}}(x_1,\ldots,x_n) & = 
\exists z.\hspace{-4ex}
\bigvee_{\hspace{2ex}
\scaleobj{0.9}{(k_1,\ldots,k_n) \in \cols^n}
}\hspace{-4ex} 
\scaleobj{0.8}{\Big(}
\,\pred{Add}_{\rpred,k_1,\ldots,k_n}(z) \wedge\! \scaleobj{0.8}{\bigwedge_{\scaleobj{1.25}{1 \leq i \leq n}}} \varphi^\mathrm{colIn}_{k_i}(x_i,z) 
\scaleobj{0.8}{\Big)}
\end{align*}

\vspace{-3ex}
$\left.\right.$\defend
\end{definition}

\begin{definition}
Let $\mathcal{T}$ be a $(\cols,\mathrm{Cnst},\Sigma)$-decorated tree.
For $\const{c}\in \mathrm{Cnst}$, let $\const{c}^\mathcal{T}$ denote the unique $s \in \{0,1\}^*$ with $\pred{c}_k(s)\in \mathcal{T}$. Moreover, we define the injection $\fcn{orig}: \adom{\inst^\mathcal{T}} \to \{0,1\}^*$ 
by letting $\fcn{orig}(e)=e^\mathcal{T}$ if $e\in \mathrm{Cnst}$, and $\fcn{orig}(e)=e$ otherwise.\defend
\end{definition}

\begin{lemma}\label{lem:building_blocks_correct}
Given an $(\cols,\mathrm{Cnst},\Sigma)$-well-decorated tree $\mathcal{T}$, we have: 
\begin{enumerate}
\item $\mathcal{T},\{x \mapsto s\} \models \varphi^\mathrm{dom}(x)$ \iffi $s = \fcn{orig}(e)$ for some $e \in \adom{\inst^\mathcal{T}}$,
\item $\mathcal{T},\{X \mapsto S\} \models \varphi^\mathrm{Dom}(X)$ \iffi $S = \{\fcn{orig}(e) \mid e \in E\}$ for some $E \subseteq \adom{\inst^\mathcal{T}}$,
\item $\mathcal{T},\{x \mapsto s,y \mapsto s'\} \models \varphi^\fcn{up}(x,y)$ \iffi $s \in \{s'0,s'1\}$,
\item $\mathcal{T},\{x \mapsto s,y \mapsto s'\} \models \varphi^\mathrm{colIn}_{k}(x,y)$ \iffi $s = \fcn{orig}(e)$ for some $e \in \fcn{ent}^\mathcal{T}(s')$ and $\fcn{col}^\mathcal{T}_{s'}(e) = k$,
\item $\mathcal{T},\{\vx \mapsto \vs\} \models \varphi_{\rpred_{n}}(x)$ \iffi \,\! $\vs = \fcn{orig}(\ve)$ for some $\rpred(\ve) \in \inst^\mathcal{T}$.
\end{enumerate}
\end{lemma}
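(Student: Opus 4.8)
The plan is to verify the five equivalences by unwinding the definitions, with items (1)--(3) essentially bookkeeping, item (4) carrying the real content, and item (5) reducing to (4). For (1): $\varphi^\mathrm{dom}$ holds at a node $s$ exactly when $s$ is decorated by some $\const{c}_k$ with $\const{c}\in\mathrm{Cnst}\cup\{\const{*}\}$, and by the definitions of $\ent^\mathcal{T}$, $\mathrm{Atoms}_s$ and $\orig$ --- together with well-decoratedness, which makes $\orig$ injective and introduces each constant at most once --- these are precisely the nodes $\orig(e)$ for $e\in\adom{\inst^\mathcal{T}}$. Item (2) is then immediate, since $\varphi^\mathrm{Dom}(X)$ asserts $X\subseteq\{\orig(e)\mid e\in\adom{\inst^\mathcal{T}}\}$ and injectivity of $\orig$ lets one take $E=\orig^{-1}(S)$. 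Item (3) is read off the definition of $\ibtree$, whose only successor facts are $\szero(s,s0)$ and $\sone(s,s1)$, so $\varphi^{\fcn{up}}$ holds at $(s,s')$ iff $s'$ is the parent of $s$, i.e.\ $s\in\{s'0,s'1\}$.

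The core is (4). Reading $\varphi^\mathrm{colIn}_k(x,y)$ at $\{x\mapsto s,y\mapsto s'\}$, the formula demands $\varphi^\mathrm{dom}(s)$ and, for every family of set variables $(X_\ell)_{\ell\in\cols}$ satisfying the definite Horn clauses in $M_{xy}$, that $s'\in X_k$. The key idea is that there is a \emph{least} such family, which I would call the canonical family: writing $e$ for the unique element with $\orig(e)=s$ (which exists exactly when $\varphi^\mathrm{dom}(s)$ holds), set $X^{\mathrm{can}}_\ell = \{t\mid t\text{ is an ancestor of or equal to }s,\ e\in\ent^\mathcal{T}(t),\ \col^\mathcal{T}_t(e)=\ell\}$. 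I would then establish two facts. First, $X^{\mathrm{can}}$ satisfies every clause of $M_{xy}$: this is a case analysis, pairing each clause of the recursive definition of $\col^\mathcal{T}$ --- the introducing-decorator case, the two $\pred{Recolor}$ cases, the $\pred{Add}$ case, the $\oplus$ case --- with the matching implication in $M_{xy}$, using that $\ent^\mathcal{T}$ is monotone along parent edges and that for a non-$\oplus$ decorator node only the left child can carry entities. Second, every family satisfying $M_{xy}$ contains $X^{\mathrm{can}}$: well-decoratedness gives the unique seed $\const{c}_{k_0}(s)$ with $k_0 = \col^\mathcal{T}_s(e)$, forcing $s\in X_{k_0}$, and the propagation clauses then force $e$'s color at every ancestor of $s$ by induction on tree-distance. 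Combining the two, $\varphi^\mathrm{colIn}_k(x,y)$ holds at $\{s,s'\}$ iff $s'\in X^{\mathrm{can}}_k$, which unwinds to exactly ``$s=\orig(e)$ for some $e\in\ent^\mathcal{T}(s')$ with $\col^\mathcal{T}_{s'}(e)=k$''; this also covers the degenerate case where $s'$ is not an ancestor of $s$, where both sides fail.

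Item (5) then follows by unfolding $\varphi_{\rpred_{n}}$: $\mathcal{T},\{\vx\mapsto\vs\}\models\varphi_{\rpred_{n}}$ iff some node $t$ satisfies $\pred{Add}_{\rpred,\vk}(t)$ and $\varphi^\mathrm{colIn}_{k_i}(s_i,t)$ for all $i$, which by (4) is equivalent to there being a tuple $\ve\in(\ent^\mathcal{T}(t))^n$ with $\orig(\ve)=\vs$ and $\col^\mathcal{T}_t(\ve)=\vk$, i.e.\ $\rpred(\ve)\in\mathrm{Atoms}_t\subseteq\inst^\mathcal{T}$; conversely every $\rpred(\ve)\in\inst^\mathcal{T}$ lies in some $\mathrm{Atoms}_t$, which by definition forces $t$ to carry $\pred{Add}_{\rpred,\vk}$ with $\vk=\col^\mathcal{T}_t(\ve)$. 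I expect the main obstacle to be the bookkeeping behind (4): one has to track the ancestor relation and the upward monotonicity of $\ent^\mathcal{T}$ simultaneously with the five-way case split on decorators, and in the $\oplus$ case take care that a color is propagated up only from the single child below which $e$ actually resides.
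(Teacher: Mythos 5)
Your proposal is correct and follows essentially the same route as the paper's own (much terser) proof: for item (4) the paper likewise treats $M_{xy}$ as a monadic datalog program whose least fixpoint assigns $X_{\col^{\mathcal{T}}_{s^*}(\orig^{-1}(s))}$ to each node $s^*$ on the upward path from $s$ — exactly your canonical family — with the universally quantified implication testing membership in that least fixpoint, and it dispatches (1)–(3) and (5) the same way you do. Your write-up merely makes explicit the two halves (closure under the clauses and minimality) that the paper leaves as "induction over the distance of $s^*$ from $s$."
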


\begin{proof} Most cases are immediate with earlier cases being used to show latter cases.
For $(4)$, note that $M_{xy}$ serves as a monadic datalog program (for which $x$ acts like a constant) that, following the path from $s$ upward, assigns to each node $s^*$ the appropriate coloring predicate $X_{\fcn{col}^\mathcal{T}_{s^*}(\fcn{orig}^{-1}(s))}$ (to be shown by induction over the distance of $s^*$ from $s$). The subformula $$\forall_{\ell \in \cols} X_\ell.\Big(\big(\hspace{-3ex} \bigwedge_{\hspace{3ex}\psi \in M_{xy}}\hspace{-3ex}\psi\, \big) \to X_k(y)\Big)$$ then checks if the smallest fixpoint of $M_{xy}$ assigns $X_k$ to $s'$, which means $\fcn{col}^\mathcal{T}_{s'}(\fcn{orig}^{-1}(s))=k$.  
\end{proof}

\begin{definition}\label{def:normalized-trans}
An MSO formula $\xi$ over $\Sigma$ and $\mathrm{Cnst}$ is called \emph{normalized} \iffi (i) the only connectives it uses are existential quantifiers (over set and individual variables), conjunction, and negation, and (ii) the constants $\const{c}\in \mathrm{Cnst}$ only occur in equality statements of the form $x=\const{c}$.
We then recursively define the function $\fcn{trans}$, mapping normalized MSO formulae over $\Sigma$ and $\mathrm{Cnst}$ to MSO formulae over $\decorum \cup \{\szero,\sone\}$ as follows (where $\xi(\vx,\VX)$ denotes an MSO formula with free individual variables $\vx$ and free set variables $\VX$):
\newcommand{\tight}{\hspace{-1pt}}
\begin{align*}
x=y & \ \mapsto\  \varphi^\mathrm{dom}\tight(x) \land \varphi^\mathrm{dom}\tight(y) \wedge x=y \\ 
x=\const{c} & \ \mapsto\  \pred{c}_k(x) \\ 
\rpred(x_{1}, \ldots,x_{n}) & \ \mapsto\  \varphi_{\rpred_{n}}(x_{1}, \ldots,x_{n}) \\
\neg \xi(\vx,\VX) & \ \mapsto\ \varphi^\mathrm{dom}\tight(\vx) \wedge \varphi^\mathrm{Dom}\tight(\VX) \wedge \neg \fcn{trans}\big(\xi(\vx,\VX)\big) \\
\xi_1\tight(\vx_1,\VX_1) \wedge \xi_2\tight(\vx_2,\VX_2) & \ \mapsto\ \fcn{trans}\big(\xi_1\tight(\vx_1,\VX_1)\big) \wedge \fcn{trans}\big(\xi_2\tight(\vx_2,\VX_2)\big) \\
\exists x.\xi(\vx,\VX) & \ \mapsto\ \exists x.\Big(\varphi^\mathrm{dom}\tight(x) \wedge \fcn{trans}\big(\xi(\vx,\VX)\big)\Big)\\ 
\exists X.\xi(\vx,\VX) & \ \mapsto\ \exists X.\Big(\varphi^\mathrm{Dom}\tight(X) \wedge \fcn{trans}\big(\xi(\vx,\VX)\big)\Big)\\[-7.7ex] 
\end{align*}
~\defend
\end{definition}

\begin{observation}
Every MSO formula can be converted into an equivalent normalized one.
The function $\fcn{trans}$ preserves free individual variables and free set variables.
\end{observation}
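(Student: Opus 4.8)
The plan is to prove the two parts of the observation separately; both are bookkeeping inductions with no deep content, so I will only indicate the shape of each.

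\textbf{Normalization.} I would rewrite an arbitrary MSO formula in two passes. The first pass eliminates every connective other than $\exists$, $\wedge$, and $\neg$ by the standard dualities, applied bottom-up: $\forall x.\chi \equiv \neg\exists x.\neg\chi$ and $\forall X.\chi \equiv \neg\exists X.\neg\chi$ for both kinds of quantifier, $\chi_1 \vee \chi_2 \equiv \neg(\neg\chi_1 \wedge \neg\chi_2)$, $\chi_1 \to \chi_2 \equiv \neg(\chi_1 \wedge \neg\chi_2)$, and $\chi_1 \leftrightarrow \chi_2 \equiv (\chi_1\to\chi_2)\wedge(\chi_2\to\chi_1)$ with a second round on the resulting implications; a straightforward structural induction confirms logical equivalence and that condition~(i) of normalizedness holds afterwards. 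The second pass enforces condition~(ii): every atom $\alpha$ that mentions constants $\const{c}_1,\ldots,\const{c}_m$ and is not already of the shape $x = \const{c}$ (this covers relational atoms and the remaining equality shapes, and — if bare set-membership atoms are present — atoms $X(\const{c})$) is replaced by
\[
\exists x_1 \cdots \exists x_m\,\big(\, x_1 = \const{c}_1 \wedge \cdots \wedge x_m = \const{c}_m \wedge \alpha[x_1/\const{c}_1,\ldots,x_m/\const{c}_m] \,\big)
\]
with the $x_i$ fresh. This replacement is logically equivalent, introduces only $\exists$ and $\wedge$ (so it preserves condition~(i)), and afterwards constants occur exclusively inside the freshly introduced equalities $x_i = \const{c}_i$, as required.

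\textbf{$\fcn{trans}$ preserves free variables.} I would argue by structural induction along the recursive clauses of $\fcn{trans}$ from \cref{def:normalized-trans}. The base step is an inspection of the auxiliary formulae (whose semantics is the content of \cref{lem:building_blocks_correct}; here only their syntactic free-variable content matters): directly from the definitions, $\varphi^\mathrm{dom}(x)$ has free individual variable exactly $x$ and no free set variable, $\varphi^\mathrm{Dom}(X)$ has free set variable exactly $X$ and no free individual variable, $\varphi^\mathrm{colIn}_{k}(x,y)$ has free individual variables exactly $\{x,y\}$ (all occurring $X_\ell$ are quantified), and $\varphi_{\rpred_{n}}(x_1,\ldots,x_n)$ has free individual variables exactly $\{x_1,\ldots,x_n\}$ (the witness $z$ and the $X_\ell$ inside $\varphi^\mathrm{colIn}$ are bound), with no stray free set variables anywhere. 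Hence the three atomic clauses of $\fcn{trans}$ preserve the free-variable sets on the nose. For the inductive step, reading the notation $\xi(\vx,\VX)$ as ``$\vx$ and $\VX$ enumerate the free individual, resp.\ set, variables of $\xi$'': in the clause for $\neg\xi$ the extra conjuncts $\varphi^\mathrm{dom}(\vx)$ and $\varphi^\mathrm{Dom}(\VX)$ contribute precisely the variables $\vx$, $\VX$ that are already free in $\fcn{trans}(\xi)$ by the induction hypothesis, so nothing changes; the clause for $\wedge$ is just the union of the two hypotheses; and in $\exists x.\xi \mapsto \exists x.(\varphi^\mathrm{dom}(x)\wedge\fcn{trans}(\xi))$ the bound variable $x$ — free in $\varphi^\mathrm{dom}(x)$ and, by hypothesis, in $\fcn{trans}(\xi)$ — is quantified away on both sides, leaving free individual variables $\vx\setminus\{x\}$ and free set variables $\VX$, in agreement with $\exists x.\xi$; the clause for $\exists X.\xi$ is symmetric.

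\textbf{Expected obstacle.} There is no genuine obstacle; the statement is a routine lemma supporting the MSO-interpretation argument behind \cref{thm:cliquewidth-k-MSO-decidable}. The two places that call for a moment of care are: in the second normalization pass, ensuring one has actually listed every atom shape that can carry a constant and that the rewriting does not reintroduce a forbidden connective; and, in the induction for $\fcn{trans}$, nailing down the meaning of the argument-list notation $\xi(\vx,\VX)$ so that the existential clauses provably remove exactly the bound variable from the free-variable sets.
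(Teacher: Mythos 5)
Your proof is correct and takes essentially the same route as the paper's: standard dualities to eliminate the extra connectives, followed by a routine structural induction over the clauses of $\fcn{trans}$ for free-variable preservation. Your explicit second pass enforcing condition~(ii) (constants confined to equalities $x=\const{c}$) is a point the paper's one-line argument glosses over, but it is a refinement of, not a departure from, the same approach.
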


\begin{proof} By making use of standard equivalences (e.g. $\thephi \lor \psi \equiv \neg \thephi \land \neg \psi$,  $\thephi \rightarrow \psi \equiv \neg (\thephi \land \neg \psi)$, $\forall x \thephi \equiv \neg \exists x \neg \thephi$, and $\forall X \thephi \equiv \neg \exists X \neg \thephi$), every MSO formula can be converted into an equivalent normalized MSO formula. By a routine induction on the structure of a given normalized, MSO formula over $\sig$ and $\mathrm{Cnst}$, one can show that free individual and set variables are preserved through applications of $\fcn{trans}$.
\end{proof}

\begin{lemma}\label{lem:translation_correct}
Given a finite set of colors $\cols$, a finite set of constants $\mathrm{Cnst}$, and a signature $\Sigma$, let  
\begin{itemize}
    \item $\xi = \xi(\vx,\VX)$ be a normalized MSO formula over $\Sigma$ and $\mathrm{Cnst}$ with free individual variables $\vx$ and free set variables $\VX$,
    \item $\mathcal{T}$ be a $(\cols,\mathrm{Cnst},\Sigma)$-well-decorated tree,
    \item $\mathrm{Assig}$ be the set of assignments $\mu$ that map variables from $\vx$ to terms from $\adom{\mathcal{T}}$ and set variables from $\VX$ to subsets of $\adom{\mathcal{T}}$,
    \item $\mathrm{Assig}'$ be the set of assignments $\mu$ that map variables from $\vx$ to terms from $\adom{\inst^\mathcal{T}}$ and set variables from $\VX$ to subsets of $\adom{\inst^\mathcal{T}}$,
    \item for some $\mu \in \mathrm{Assig}'$, denote with $\mu_\fcn{orig}$ the assignment from $\mathrm{Assig}$ that satisfies $\mu_\fcn{orig}(x) = \fcn{orig}(\mu(x))$ for every $x$ from $\vx$ and $\mu_\fcn{orig}(X) = \{\fcn{orig}(t) \mid t \in \mu(X)\}$ for every $X$ from $\VX$.  
\end{itemize}
Then,
$$
\{ \mu \in \mathrm{Assig} \mid \mathcal{T},\mu \models \fcn{trans}(\xi)\} 
= 
\{ \mu_\fcn{orig}' \mid \mu' \in \mathrm{Assig}', \inst^\mathcal{T},\mu' \models \xi\}. 
$$
\end{lemma}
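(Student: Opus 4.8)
The plan is to prove \cref{lem:translation_correct} by structural induction on the normalized MSO formula $\xi$, mirroring the recursive definition of $\fcn{trans}$ in \cref{def:normalized-trans}. The central conceptual point is that $\fcn{orig}: \adom{\inst^\mathcal{T}} \to \{0,1\}^*$ is an injection whose image is exactly $\{s \mid \mathcal{T},\{x\mapsto s\}\models\varphi^\mathrm{dom}(x)\}$ (by \cref{lem:building_blocks_correct}(1)), so $\mu \mapsto \mu_\fcn{orig}$ is a bijection between $\mathrm{Assig}'$ and the restriction of $\mathrm{Assig}$ to assignments landing in the image of $\fcn{orig}$. Thus the claimed set equality is really the assertion that $\fcn{trans}$ both (a) enforces, via its guard conjuncts $\varphi^\mathrm{dom}$ and $\varphi^\mathrm{Dom}$, that all relevant individuals/sets live in the $\fcn{orig}$-image, and (b) on those assignments, faithfully simulates satisfaction in $\inst^\mathcal{T}$.

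First I would set up the induction, observing that it suffices to prove the two inclusions: for every $\mu' \in \mathrm{Assig}'$ with $\inst^\mathcal{T},\mu'\models\xi$ we have $\mathcal{T},\mu'_\fcn{orig}\models\fcn{trans}(\xi)$; and conversely every $\mu\in\mathrm{Assig}$ with $\mathcal{T},\mu\models\fcn{trans}(\xi)$ is of the form $\mu'_\fcn{orig}$ for some such $\mu'$. For the base cases $x=y$, $x=\const{c}$, and $\rpred(x_1,\dots,x_n)$, I would invoke \cref{lem:building_blocks_correct}: cases (1), (5), and the observation that $\fcn{orig}$ restricted to constants sends $\const{c}$ to the unique $\const{c}^\mathcal{T}$ with $\pred{c}_k(\const{c}^\mathcal{T})\in\mathcal{T}$. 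For $x=\const{c}\mapsto\pred{c}_k(x)$ one uses that $\mu'(x)=\const{c}$ iff $\mu'_\fcn{orig}(x)=\const{c}^\mathcal{T}$ iff $\pred{c}_k(\mu'_\fcn{orig}(x))\in\mathcal{T}$ (here $k$ is the fixed color with which $\const{c}$ is introduced in $\mathcal{T}$; well-decoratedness guarantees uniqueness). The inductive cases for $\wedge$ and $\exists$ (over individual and set variables) are routine: the guard conjuncts $\varphi^\mathrm{dom}(x)$ and $\varphi^\mathrm{Dom}(X)$ exactly restrict the witness to the $\fcn{orig}$-image, using \cref{lem:building_blocks_correct}(1) and (2), and then the induction hypothesis transfers satisfaction. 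The negation case $\neg\xi(\vx,\VX)\mapsto\varphi^\mathrm{dom}(\vx)\wedge\varphi^\mathrm{Dom}(\VX)\wedge\neg\fcn{trans}(\xi(\vx,\VX))$ is where one must be careful: the guards are essential because $\neg\fcn{trans}(\xi)$ alone would be satisfied by spurious assignments whose values lie outside the $\fcn{orig}$-image; the guards restrict attention to legitimate assignments, on which the induction hypothesis gives $\mathcal{T},\mu\models\fcn{trans}(\xi)$ iff $\inst^\mathcal{T},\mu'\models\xi$, and negating both sides yields the claim.

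I expect the main obstacle to be the negation case, and more precisely keeping the bookkeeping of free variables and guards correct throughout: one must verify that $\fcn{trans}$ preserves the free-variable signature (the \emph{Observation} preceding the lemma), so that the guard conjuncts in the negation and quantifier clauses mention exactly the right variables, and that nested negations do not cause the domain restrictions to interact badly. A secondary subtlety is the treatment of set variables in the semantics: $\varphi^\mathrm{Dom}(X)$ only forces $X$ to be a subset of the $\fcn{orig}$-image, and one needs \cref{lem:building_blocks_correct}(2) to identify such an $X$ with $\{\fcn{orig}(e)\mid e\in E\}$ for a unique $E\subseteq\adom{\inst^\mathcal{T}}$ — unique precisely because $\fcn{orig}$ is injective. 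With these points handled, the induction goes through, and instantiating the lemma at a \emph{sentence} $\xi$ (empty $\vx$, $\VX$) gives $\mathcal{T}\models\fcn{trans}(\xi)$ iff $\inst^\mathcal{T}\models\xi$, which is the key equivalence needed (together with $\Phi_\mathrm{well}$) for the MSO-interpretation argument behind \cref{thm:cliquewidth-k-MSO-decidable}.
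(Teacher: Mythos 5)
Your proposal is correct and follows essentially the same route as the paper's proof: a structural induction on the normalized formula $\xi$ mirroring the clauses of $\fcn{trans}$, with the base cases discharged via \cref{lem:building_blocks_correct} and the quantifier/negation cases handled by the $\varphi^\mathrm{dom}$/$\varphi^\mathrm{Dom}$ guards together with injectivity of $\fcn{orig}$. If anything, you are more explicit than the paper about the negation case (which the paper dismisses as "similar"), and your identification of the guards as precisely what makes negation sound is the right emphasis.
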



\begin{proof} We prove the result by induction on the structure of $\xi$, which we take to be a normalized MSO formula over $\sig$ and $\cnst$.

\textit{Base case.} We have three cases to consider: either (i) $\xi$ is of the form $x = y$, (ii) $\xi$ is of the form $x = \const{c}$, or (iii) $\xi$ is of the form $\rpred(x_{1}, \ldots, x_{n})$. 
 We prove cases (i) and (iii) as (ii) is similar, and in both cases we first show that the left side of the equation is a subset of the right, and then that the right side of the equation is a subset of the left.

(i) Suppose that $\mathcal{T},\mu \models \fcn{trans}(x = y)$. By \cref{def:normalized-trans} above, $\fcn{trans}(x = y) = \varphi^\mathrm{dom}(x) \land \varphi^\mathrm{dom}(y) \wedge x=y$, implying that for $\mu(x) = \mu(y) = s \in \{0,1\}^{*}$. Moreover, since $\mathcal{T}, \mu \models \varphi^\mathrm{dom}(x) \land \varphi^\mathrm{dom}(y)$, there exists an $e \in \fcn{ent}^\mathcal{T}\!(\varepsilon)$ (that is, $e$ is a term in $\inst^{\mathcal{T}}$) such that $\orig(e) = s$. Let us define $\mu'(z) = \orig^{-1}(\mu(z))$ for $z \in \{x,y\}$. Then, $\mu'(x) = \orig^{-1}(\mu(x)) = e = \orig^{-1}(\mu(y)) = \mu'(y)$, showing that $\inst^{\mathcal{T}}, \mu' \models x = y$. Therefore, $\mu_{\orig}' = \mu$ is an element of the right side of the equation.

To show the opposite direction, let $\mu' \in \mathrm{Assig}'$ and suppose that $\inst^\mathcal{T},\mu' \models \xi$. Then, there exists a term $e$ in $\inst^{\mathcal{T}}$ (that is, an $e \in \fcn{ent}^\mathcal{T}\!(\varepsilon)$) such that $\mu'(x) = \mu'(y) = e$. As the function $\orig$ is injective, we have that $\mu_{\orig}'(x) = \mu_{\orig}'(y) = s \in \{0,1\}^{*}$, and also, since $e \in \fcn{ent}^\mathcal{T}\!(\varepsilon)$, it follows that $\mathcal{T}, \mu_{\orig}' \models \varphi^\mathrm{dom}(x) \land \varphi^\mathrm{dom}(y)$. Thus, $\mu_{\orig}'$ is an element of the left side of the equation, establishing that the two sets are indeed equal.

(iii) Suppose that $\mathcal{T}, \mu \models \fcn{trans}(\rpred(x_{1}, \ldots, x_{n}))$. By \cref{def:normalized-trans} above, $\fcn{trans}(\rpred(x_{1}, \ldots, x_{n})) = \thephi_{\rpred_{n}}(x_{1}, \ldots, x_{n})$, that is to say,
$$
\mathcal{T}, \mu \models \exists z. \bigvee_{(k_1,\ldots,k_n) \in \cols^{n}} \Big(\pred{Add}_{\rpred,k_1,\ldots,k_n}(z) \wedge \bigwedge_{1 \leq i \leq n} \varphi^\mathrm{colIn}_{k_i}(x_i,z) \Big)
$$
Then, for $\mu[s/z]$ (which is the same as $\mu$, but maps $z$ to $s$) and some $(k_1,\ldots,k_n) \in \cols^{n}$, we know that
$$
\mathcal{T}, \mu[s/z] \models \pred{Add}_{\rpred,k_1,\ldots,k_n}(z) \wedge \bigwedge_{1 \leq i \leq n} \varphi^\mathrm{colIn}_{k_i}(x_i,z).
$$
Let $s_{i} = \mu[s/z](x_{i})$ for $i \in \{1,\ldots,n\}$. By \cref{lem:building_blocks_correct}, there exist $e_{1}, \ldots, e_{n} \in \ent^{\mathcal{T}}(s)$ such that $e_{i} = \orig(s_{i})$ and $\col^{\mathcal{T}}_{s}(e_{i}) = k_{i}$, for $i \in \{1, \ldots,n\}$, and by \cref{def:entities-tree-to-instance}, we know that $\rpred(e_{1}, \ldots, e_{n}) \in \inst^{\mathcal{T}}$. Let $\mu'(y) = \orig^{-1}(\mu(y))$ for $y \in \{x_{1},\ldots,x_{n}\}$. Then, it follows that $\inst^{\mathcal{T}}, \mu' \models \rpred(x_{1},\ldots, x_{n})$, showing that $\mu = \mu_{\orig}'$ is an element of the right side of the equation.

For the opposite direction, suppose that $\mu' \in \mathrm{Assig}'$ and $\inst^{\mathcal{T}}, \mu' \models \rpred(x_{1}, \ldots, x_{n})$. Therefore, there exist terms $e_{1}, \ldots, e_{n}$ in $\inst^{\mathcal{T}}$ 
 such that $\mu'(x_{i}) = e_{i}$ for $i \in \{1, \ldots, n\}$, and $\rpred(e_{1}, \ldots, e_{n}) \in \inst^{\mathcal{T}}$. Let $\orig(e_{i}) = s_{i}$, for $i \in \{1, \ldots, n\}$. Since $\rpred(e_{1}, \ldots, e_{n}) \in \inst^{\mathcal{T}}$, by \cref{def:entities-tree-to-instance} there must exist an $s$ in $\mathcal{T}$ with $e_{1},\ldots, e_{n} \in \ent^{\mathcal{T}}(s)$ such that $\pred{Add}_{\rpred,k_{1}, \ldots, k_{n}}(s) \in \mathcal{T}$ and $\col_{s}^{\mathcal{T}}(e_{i}) = k_{i}$, for $i \in\{1, \ldots,n\}$. Therefore, $\mathcal{T}, \mu_{\orig}' \models \thephi_{\rpred_{n}}(x_{1}, \ldots,x_{n})$, showing that $\mu_{\orig}'$ is an element of the left side of the equation.
 
 \textit{Inductive step.} We have four cases to consider: either (i) $\xi$ is of the form $\neg \xi' (\vx,\VX)$, (ii) $\xi$ is of the form $\xi_{1}(\vx_{1},\VX_{1}) \land \xi_{2}(\vx_{2},\VX_{2})$, (iii) $\xi$ is of the form $\exists x. \xi'(\vx,\VX)$, or (iv) $\xi$ is of the form $\exists X. \xi'(\vx,\VX)$. We show cases (ii) and (iv) as the remaining cases are proven in a similar fashion. As before, we first show that the left side of the equation is a subset of the right, and then the opposite.
 
 (ii) Suppose that $\mathcal{T},\mu \models \mathrm{trans}(\xi)$. By our assumption, $\mathrm{trans}(\xi) = \mathrm{trans}(\xi_{1}(\vx_{1},\VX_{1}) \land \xi_{2}(\vx_{2},\VX_{2}))$, which by \cref{def:normalized-trans}, is equal to $\mathrm{trans}(\xi_{1}(\vx_{1},\VX_{1})) \land \mathrm{trans}(\xi_{2}(\vx_{2},\VX_{2}))$. It follows that $\mathcal{T}, (\mu \restriction i) \models \mathrm{trans}(\xi_{i}(\vx_{i},\VX_{i}))$ for $i \in \{1,2\}$ (with $(\mu \restriction i) = \mu \restriction \{\vx_{i},\VX_{i}\}$), from which it follows that there exist $\mu^{1}$ and $\mu^{2}$ such that $\inst^{\mathcal{T}}, \mu^{i} \models \xi_{i}(\vx_{i},\VX_{i})$ and $\mu^{i}_{\orig} = (\mu \restriction i)$ for $i \in \{1,2\}$. Let us define
 \[
  \mu'(z) =
  \begin{cases}
  (\mu^{i})(z) & \text{if $z$ is in $\vx_{1},\vx_{2}$} \\
  (\mu^{i})(Z) & \text{if $Z$ is in $\VX_{1},\VX_{2}$.}
  \end{cases}
\]
Observe that $\inst^{\mathcal{T}}, \mu' \models \xi_{1}(\vx_{1},\VX_{1}) \land \xi_{2}(\vx_{2},\VX_{2})$ and that $\mu_{\orig}' = \mu$, thus showing that $\mu$ is an element of the right side of the equation.

For the opposite direction, let $\mu' \in \mathrm{Assig}'$ and assume that $\inst^{\mathcal{T}}, \mu' \models \xi_{1}(\vx_{1},\VX_{1}) \land \xi_{2}(\vx_{2},\VX_{2})$. Let $(\mu' \restriction i) = \mu' \restriction \{\vx_{i},\VX_{i}\}$ and observe that $\inst^{\mathcal{T}}, (\mu' \restriction i) \models \xi_{i}(\vx_{i},\VX_{i})$ for $i \in \{1,2\}$. Therefore, by IH, we have $\mathcal{T}, (\mu' \restriction i)_{\orig} \models \mathrm{trans}(\xi_{i}(\vx_{i},\VX_{i}))$ for $i \in \{1,2\}$, which entails that $\mathcal{T}, \mu_{\orig}' \models \mathrm{trans}(\xi_{1}(\vx_{1},\VX_{1})) \land \mathrm{trans}(\xi_{2}(\vx_{2},\VX_{2}))$. Hence, $\mu_{\orig}'$ is an element of the left side of the equation.

(iv) Let us suppose that $\mathcal{T},\mu \models \mathrm{trans}(\exists X. \xi(\vx,\VX))$, that is to say,
$$
\mathcal{T}, \mu \models \exists X.\Big(\varphi^\mathrm{Dom}(X) \wedge \fcn{trans}\big(\xi(\vx,\VX)\big)\Big).
$$
It follows that $\mathcal{T}, \mu[X/S] \models \varphi^\mathrm{Dom}(X) \wedge \fcn{trans}\big(\xi(\vx,\VX)\big)$ for some $S \subseteq \{0,1\}^{*}$ with $\mu[X/S]$ the same as the assignment $\mu$, but where the set variable $X$ is mapped to $S$. By IH, we know that $\inst^{\mathcal{T}}, \mu'[X/S] \models \xi(\vx,\VX)$, where $\mu'[X/S](z) = \orig^{-1}(\mu[X/S](z))$ for $z$ in $\vx$ and $\mu'[X/S](Z) = \orig^{-1}(\mu[X/S](Z))$ for $Z$ in $\VX$. Note that $\mu'[X/S]$ is defined because $\mathcal{T}, \mu[X/S] \models \varphi^\mathrm{Dom}(X)$, i.e., there exists a set $E \subseteq \ent^{\mathcal{T}}(\varepsilon)$ such that $S = \{\orig(e) \ | \ e \in E\}$. Hence, $\inst^{\mathcal{T}}, \mu' \models \exists X. \xi(\vx,\VX)$, showing that $\mu_{\orig}' = \mu$ is an element of the right side of the equation.

For the opposite direction, let $\mu' \in \mathrm{Assig}'$ and assume that $\inst^{\mathcal{T}}, \mu' \models \exists X. \xi(\vx,\VX)$. This entails that $\inst^{\mathcal{T}}, \mu'[X/E] \models \xi(\vx,\VX)$ for some set $E$ of terms in $\inst^{\mathcal{T}}$ (i.e., for some $E \subseteq \ent^{\mathcal{T}}(\varepsilon)$) with $\mu'$ being the same as $\mu'$, but where the set variable $X$ is mapped to $E$. By IH, we have that $\mathcal{T}, \mu'[X/E]_{\orig} \models \xi(\vx,\VX)$, where $\mu'[X/E]_{\orig}$ is equal to the assignment $\mu_{\orig}'[X/S]$ with $S = \{\orig(e) \ | \ e \in E\}$. We have that
$$
\mathcal{T}, \mu_{\orig}' \models \exists X.\Big(\varphi^\mathrm{Dom}(X) \wedge \fcn{trans}\big(\xi(\vx,\VX)\big)\Big)
$$
is a consequence, thus showing that $\mu_{\orig}'$ is an element of the left side of the equation.
\end{proof}

\begin{corollary}\label{cor:MSOsentence_cw_decidable}
Given a finite set of colors $\cols$, a finite set of constants $\mathrm{Cnst}$, and a signature $\Sigma$, if
\begin{itemize}
    \item $\mathcal{T}$ is a $(\cols,\mathrm{Cnst},\Sigma)$-well-decorated tree, and 
    \item $\Xi$ is a normalized MSO sentence over $\Sigma$ and $\mathrm{Cnst}$,
\end{itemize}
then
$$
\mathcal{T} \models \fcn{trans}(\Xi) \text{\ \ \ \iffi \ \ } 
\inst^\mathcal{T} \models \Xi. 
$$
\end{corollary}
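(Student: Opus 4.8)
The plan is to obtain this as a direct specialization of \cref{lem:translation_correct} to the case of sentences. Since $\Xi$ is a \emph{sentence}, it has no free individual variables and no free set variables, so in the notation of \cref{lem:translation_correct} the sequences $\vx$ and $\VX$ are empty. Consequently both $\mathrm{Assig}$ and $\mathrm{Assig}'$ consist of a single element, namely the empty assignment $\mu_\emptyset$ (the unique function with empty domain), and one trivially has $(\mu_\emptyset)_\fcn{orig} = \mu_\emptyset$. First I would record this degenerate-case observation.

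Then I would apply \cref{lem:translation_correct} with $\xi := \Xi$, which yields the set equality
$$
\{\, \mu \in \mathrm{Assig} \mid \mathcal{T},\mu \models \fcn{trans}(\Xi)\,\} \;=\; \{\, \mu'_\fcn{orig} \mid \mu' \in \mathrm{Assig}',\ \inst^\mathcal{T},\mu' \models \Xi \,\}.
$$
The left-hand side is a subset of $\{\mu_\emptyset\}$: it equals $\{\mu_\emptyset\}$ exactly when $\mathcal{T},\mu_\emptyset \models \fcn{trans}(\Xi)$, i.e.\ when $\mathcal{T} \models \fcn{trans}(\Xi)$, and is empty otherwise. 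Symmetrically, the right-hand side equals $\{\mu_\emptyset\}$ exactly when $\inst^\mathcal{T},\mu_\emptyset \models \Xi$, i.e.\ when $\inst^\mathcal{T} \models \Xi$, and is empty otherwise. Since the two sides coincide, $\mathcal{T} \models \fcn{trans}(\Xi)$ holds if and only if $\inst^\mathcal{T} \models \Xi$, which is the assertion of the corollary.

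There is essentially no obstacle here — all the real work lies in \cref{lem:translation_correct}, whose more general formulation with free variables is precisely what makes its inductive proof go through. The only point requiring a moment's care is the bookkeeping of the degenerate case: verifying that with empty variable lists the sets $\mathrm{Assig}$ and $\mathrm{Assig}'$ are genuinely singletons (independently of $\adom{\mathcal{T}}$ and $\adom{\inst^\mathcal{T}}$) and that $\fcn{orig}$ acts as the identity on $\mu_\emptyset$, after which the set equality specializes immediately to the desired biconditional.
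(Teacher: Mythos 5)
Your proposal is correct and matches the paper's own argument: the corollary is obtained as the special case of \cref{lem:translation_correct} for formulae without free variables, where your explicit bookkeeping of the empty assignment merely spells out what the paper leaves implicit.
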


\begin{proof} The corollary follows directly from \Cref{lem:translation_correct} above, as it serves as a special case where the normalized MSO formula over $\Sigma$ and $\mathrm{Cnst}$ contains no free variables.  
\end{proof}

\begin{customthm}{\ref{thm:cliquewidth-k-MSO-decidable}}
Let $n\in \mathbb{N}$. Determining if a given MSO formula $\Xi$ 
has a model $\inst$ with $\cw{\inst} \leq n$ is decidable.
\end{customthm}

\begin{proof}
Let $\cols=\{1,\ldots,n\}$, $\mathrm{Cnst}$ be the set of all constants mentioned in $\Xi$, and $\Xi' = \fcn{trans}(\Xi)$. By \Cref{cor:MSOsentence_cw_decidable}, we know that for the sentence $\Xi$ (which uses the signature $\decorum \cup \{\szero,\sone\}$) and any $(\cols,\mathrm{Cnst},\Sigma)$-well-decorated tree $\mathcal{T}$, $\instance^\mathcal{T} \models \Xi$ \iffi $\mathcal{T} \models \Xi'$. Hence, in order to check if $\Xi$ is satisfiable in some instance of cliquewidth less than or equal to $n$, we can compute $\Xi'$ and check if it is satisfiable in an $(\cols,\mathrm{Cnst},\Sigma)$-well-decorated tree $\mathcal{T}$. Checking if $\Xi'$ is satisfiable in some well-decorated tree $\mathcal{T}$ is equivalent to checking if $\Xi'\wedge \Phi_\mathrm{well}$ is satisfiable in an arbitrary $(\cols,\mathrm{Cnst},\Sigma)$-decorated tree $\mathcal{T}'$. Let us define $\Xi''$ to be $\Xi'\wedge \Phi_\mathrm{well}$ where (i) all predicates from $\decorum$ are replaced by MSO set variables, which are (ii) quantified over existentially. Observe that $\Xi''$ is an MSO formula over the signature $\{\szero,\sone\}$, which will be valid on $\ibtree$ \iffi some decoration exists satisfying $\Xi'\wedge \Phi_\mathrm{well}$. Thus, we have reduced our problem to checking the validity of an MSO sentence on $\ibtree$, which is decidable by Rabin's Tree Theorem \cite{Rabin69}. 
\end{proof}

\begin{customthm}{\ref{thm:tree-width-implies-clique-width}}
Let $\inst$ be a countable instance over a binary signature.
If $\inst$ has finite treewidth, then $\inst$ has finite cliquewidth.
\end{customthm}
\begin{proof}
Let $T = (V,E)$ be a tree-decomposition of $\inst$ witnessing that $\inst$ has a treewidth of $k$. 
Then, countability of $\inst$ implies countability (and thus countable degree) of $(V,E)$.

Therefore, $(V,E)$ can be oriented (by repeated introduction of $E$-neighboring copies of nodes) and transformed into a full binary tree $T'=(V',E',\termset)$ with $V'= \{0,1\}^*$, $E'= \{(s,s0), (s,s1) \mid s\in\{0,1\}^*\}$, and $\termset:V'\to 2^{\adom{\inst}}$, which inherits all tree decomposition properties from $T$. The moderate reformulation using the $\termset$ function is necessary due to the introduction of node copies containing identical sets of terms.

Let $\fcn{slot}: \adom{\inst} \to \{0,\ldots,k\}$ be a function that satisfies $\fcn{slot}(t_1)\neq \fcn{slot}(t_2)$ for any $t_1\neq t_2$ with $\{t_1,t_2\} \subseteq \termset(s)$ for some $s\in V'$; note that the existence of such a function is guaranteed by the treewidth bound of $T'$ inherited from $T$. Moreover, for each domain element $t\in \adom{\inst}$ let $\fcn{pivot}(t)$ denote the node $s \in V'$ with $t \in \termset(s)$ that is closest to the root of $T'$.
For each atom $\rpred(t_1,t_2) \in \inst$, one of the following three statements must hold for $T'$:

\begin{itemize}
    \item $\fcn{pivot}(t_1) = \fcn{pivot}(t_2)$, denoted $t_1 \approx t_2$,
    \item $\fcn{pivot}(t_1)$ is an ancestor of $\fcn{pivot}(t_2)$, denoted $t_1 \prec t_2$,
    \item $\fcn{pivot}(t_2)$ is an ancestor of $\fcn{pivot}(t_1)$, denoted $t_2 \prec t_1$.
\end{itemize}

Let now the finite set $\cols$ of colors consist of pairs $(j,S)$ with $0\leq j\leq k$ and $S$ being a subset of
$$ 
\{ \rpred(\downarrow) \mid \rpred \in \Sigma_1 \} \cup \{ \rpred(\downarrow,i), \rpred(i,\downarrow), \rpred(\downarrow,\downarrow) \mid \rpred \in \Sigma_2, 0\leq i\leq k\}.
$$

Let us provide a description of the process transforming~$T'$ into a $(\cols,\mathrm{Cnst},\Sigma)$–well-decorated infinite binary tree representing $\inst$. For a node $s$ of $T'$, we define $\fcn{tf}(s)$ as follows:

\begin{itemize}
\item Take a fresh node $v$, decorate it with $\oplus$, and set its two children to be $\fcn{tf}(s0)$ and $\fcn{tf}(s1)$ with $s0$ and $s1$ the two children of $s$ in $T'$. Set $v$ to be the {\em current root} of the transformation.
\item Next, go through all $t \in \adom{\inst}$ with $\fcn{pivot}(t)=s$, which we assume are in ascending order by $\fcn{slot}(t)$ (there are at most $k+1$ of such terms) and do the following:
    \begin{enumerate}
    \item Take a fresh node $v'$, decorate it with $\oplus$, and set its children to be the current root and a fresh node decorated with $t_{(j,S)}$ if $t \in \mathrm{Cnst}$, or $\const{*}_{(j,S)}$ otherwise, where $j = \fcn{slot}(t)$ and $S$ contains:
        \begin{itemize}
        \item[$\circ$] all $\rpred(\downarrow)$ with $\rpred(t) \in \inst$ and all $\rpred(\downarrow,\downarrow)$ with $\rpred(t,t) \in \inst$, 
        \item[$\circ$] all $\rpred(\downarrow,i)$ with $\rpred(t,t') \in \inst$ for some $t' \in \adom{\inst}$ with $\fcn{slot}(t')=i$ and where $t'\prec t$ or ($t'\approx t$ and $i>j$), and
        \item[$\circ$] all $\rpred(i,\downarrow)$ with $\rpred(t',t) \in \inst$ for some $t' \in \adom{\inst}$ with $\fcn{slot}(t')=i$ and where $t'\prec t$ or ($t'\approx t$ and $i>j$).
        \end{itemize}
    \item Build a chain $P$ of nodes decorated with $\pred{Add}$ and $\pred{Recolor}$ predicates to reflect the following set of operations:
        \begin{itemize}
        \item[$\circ$] for every color $(j,S)$, for which $S$ contains some $\rpred(\downarrow)$, perform $\fcn{Add}_{\rpred,(j,S)}$ and then $\fcn{Recolor}_{(j,S)\to(j,S \setminus \{\rpred(\downarrow)\})}$,
        \item[$\circ$] for every color $(j,S)$ such that $S$ contains some $\rpred(\downarrow,\downarrow)$, perform $\fcn{Add}_{\rpred,(j,S),(j,S)}$ and then $\fcn{Recolor}_{(j,S)\to(j,S \setminus \{\rpred(\downarrow,\downarrow)\})}$,
        \item[$\circ$] for any two colors $(j,S)$ and $(i,S')$, for which $S'$ contains some $\rpred(\downarrow,j)$, perform $\fcn{Add}_{\rpred,(i,S'),(j,S)}$ and immediately afterward perform $\fcn{Recolor}_{(i,S')\to(i,S' \setminus \{\rpred(\downarrow,j)\})}$,
        \item[$\circ$] for any two colors $(j,S)$ and $(i,S')$, for which $S'$ contains some $\rpred(j,\downarrow)$, perform $\fcn{Add}_{\rpred,(j,S),(i,S')}$ and immediately afterward perform $\fcn{Recolor}_{(i,S')\to(i,S' \setminus \{\rpred(\downarrow,j)\})}$.
        \end{itemize}
    \item Set the current root to be the highest node of $P$ and set $v'$ to be the child of the lowest node of $P$.
    \end{enumerate}
\end{itemize}

To obtain $\fcn{tf}(s)$, take the resulting decorated tree and extend the resulting tree to a full binary one by adding missing elements and decorating them with $\pred{Void}$.

Finally, take the root $\varepsilon$ of $T'$, obtain $\fcn{tf}(\varepsilon)$, add on top nodes decorated with $\pred{Add}_\rpred$ for all nullary predicates $\rpred \in \Sigma$ with $\rpred \in \inst$ (in arbitrary order), again 
add missing nodes, and decorate them with $\pred{Void}$.
The resulting $(\cols,\mathrm{Cnst},\Sigma)$-well-decorated tree represents $\inst$.
\end{proof}

\section{Justification of \Cref{ex:badnewsternary} (\NoCaseChange{\cref{sec:cw-vs-tw}})}\label{app:example-15-justification}

To provide some insight into why $\inst_\mathrm{tern}$ from \Cref{ex:badnewsternary} does not have finite cliquewidth, first observe that the instance has the structure of the natural number line, but with an additional term (viz. $-1$) that stands in a ternary relation with each natrual number and its successor. We will argue that no coloring $\coloring$ exists such that $(\inst_{\mathrm{tern}},\coloring)$ is isomorphic to a colored instance represented by some well-decorated tree $\mathcal{T}$. For a contradiction, let us suppose the contrary. Then, we may assume w.l.o.g. that some node $s$ in $\mathcal{T}$ is decorated with $-1_k(s)$ (if $-1$ is assumed to be a null, introduced by a node $\const{*}_k(s)$, then an almost identical argument can be given). Now, we know that at some node $s'$ along the path from $s$ to the root $\varepsilon$ of $\mathcal{T}$ a disjoint union will occur introducing an infinite amount of pairs of consecutive natural numbers. In order for the ternary edges to be added between $-1$ and each pair of consecutive natural numbers, an infinite number of $\pred{Add}_{\rpred,\vk}$ operations are required. However, since only a finite number of operations can be performed after the introduction of $-1$ (as only a finite number of nodes exist between $s$ and the root $\varepsilon$ of $\mathcal{T}$), all such ternary edges cannot be added, showing that for all colorings $\coloring$, $(\inst_{\mathrm{tern}},\coloring)$ is not isomorphic to a colored instance represented by some well-decorated tree. Concomitantly, the rule set $\ruleset_\mathrm{tern} = \{ \rpred(v,x,y) \to \exists z \rpred(v,y,z) \}$ is $\bts$, but not $\fcs$.

\section{Properties of Reification (\NoCaseChange{\cref{sec:cw-vs-tw}})}

\newcommand{\reify}[1]{\fcn{reify}(#1)}
\newcommand{\reifym}[1]{\fcn{reify}^{-1}(#1)}
\newcommand{\homm}{h}
\newcommand{\reifynew}[1]{\fcn{new}(\reify{#1})}
\newcommand{\newreify}[1]{\reifynew{#1}}
\newcommand{\pn}[1]{\reify{\ksat{#1}{\inst, \ruleset}} \equiv \ksat{#1}{\reify{\inst}, \reify{\ruleset}}}
\newcommand{\pnl}[1]{\reify{\ksat{#1}{\inst, \ruleset}}}
\newcommand{\pnr}[1]{\ksat{#1}{\reify{\inst}, \reify{\ruleset}}}

\subsection{$\tw{\instance} \in \mathbb{N}$ implies $\tw{\fcn{reify}(\instance)}\in \mathbb{N}$ and $\cw{\fcn{reify}(\instance)}\in \mathbb{N}$.}

Let $\Sigma$ be a finite signature, $\inst$ an instance over $\Sigma$, $T = (V, E)$ a tree decomposition of $\inst$ of width $n$, and denote by $\inst_{\ge 3} \subseteq \inst$ the sub-instance of $\inst$ consisting of all atoms over $\Sigma_{\ge 3}$ from $\inst$. Define $\reify{T}$ to be a pair $(V', E')$ consisting of:
\begin{itemize}
    \item Let $V'$ contain the sets $X'$ obtained in the following way: For $X \in V$ of the original tree decomposition, let \[ X' = X \cup \{ u_{\alpha} \mid \alpha \in \inst_{\ge 3}[X] \} \] (where $\inst_{\ge 3}[X]$ denotes the induced sub-instance of $\inst_{\ge 3}$ with active domain $X$). This gives rise to a bijection $V \to V', X\mapsto X'$, which we will use implicitly in the following.
    \item $E' = \{ (X_1', X_2') \in V' \times V' \mid (X_1, X_2)\in E \}$
\end{itemize}

\begin{observation}\label{obs:reify-tree-ftw}
    Let $\Sigma$ be a finite signature, $\inst$ an instance over $\Sigma$ of finite treewidth, and $T = (V, E)$ a tree decomposition of $\inst$ of width $n$. Then $\reify{T}$ is a tree decomposition of $\reify{\inst}$ of finite width.
\end{observation}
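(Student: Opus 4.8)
The plan is to verify directly that $\reify{T} = (V', E')$ satisfies the three defining clauses of a tree decomposition of $\reify{\inst}$, plus the finite-width requirement. Before doing so, I would record the structure of $\reify{\inst}$: its active domain is $\adom{\inst}$ together with the nulls $u_\alpha$, one \emph{fresh, pairwise distinct} null for each atom $\alpha \in \inst_{\ge 3}$ of arity $\geq 3$, and its atoms are (i) the $\Sigma_{\leq2}$-atoms already present in $\inst$ and (ii) for each $\alpha = \rpred(t_1,\dots,t_k)\in\inst_{\ge 3}$, the binary atoms $\rpred_i(u_\alpha, t_i)$ with $1\le i\le k$. Since the $u_\alpha$ are fresh and $\alpha\mapsto u_\alpha$ is injective, we obtain the key equivalence $u_\alpha \in X'$ iff $\alpha \in \inst_{\ge 3}[X]$, for every $X\in V$; I would use this throughout. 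I would also note at the outset that $\reify{T}$ is genuinely a tree, being the image of the tree $T$ under the bijection $X\mapsto X'$ together with the induced edge set $E'$.

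For the \textbf{finite-width} clause, each new bag satisfies $|X'| \le |X| + |\inst_{\ge 3}[X]|$, and $|\inst_{\ge 3}[X]|$ is bounded by $|\Sigma_{\ge 3}|\cdot|X|^{a}$ with $a$ the largest arity appearing in $\Sigma$; since $|X|\le n+1$, the width of $\reify{T}$ is bounded by a constant depending only on $n$ and $\Sigma$. For the \textbf{covering} and \textbf{atom} clauses, the $\Sigma_{\leq2}$-part is inherited verbatim from $T$ because $X\subseteq X'$; for each higher-arity atom $\alpha = \rpred(t_1,\dots,t_k)\in\inst_{\ge 3}\subseteq\inst$ I would pick, using the atom clause for $T$, a bag $X\in V$ with $\{t_1,\dots,t_k\}\subseteq X$, so that $u_\alpha\in X'$ by the key equivalence and $t_1,\dots,t_k\in X'$, which covers $u_\alpha$ and simultaneously witnesses all the reified atoms $\rpred_i(u_\alpha, t_i)$.

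The one mildly delicate clause — which I expect to be the main (though still routine) obstacle — is \textbf{connectivity}. For an old term $t\in\adom{\inst}$, the bags of $\reify{T}$ containing $t$ are precisely the images under $X\mapsto X'$ of the bags of $T$ containing $t$ (the added elements being fresh), so connectivity transports from $T$ along the bijection, which also carries $E$ to $E'$. For a reification null $t = u_\alpha$ with $\alpha = \rpred(t_1,\dots,t_k)$, the key equivalence identifies the relevant bags with $\{X\in V \mid \{t_1,\dots,t_k\}\subseteq X\} = \bigcap_{i=1}^{k}\{X\in V \mid t_i\in X\}$; each member of this finite intersection induces a connected subgraph of the tree $T$ by $T$'s connectivity clause, and the intersection of finitely many subtrees of a tree is again a subtree (the Helly property for subtrees), which is moreover non-empty since $\alpha\in\inst$. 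Transporting this connected, non-empty family of bags along the bijection yields connectivity in $\reify{T}$, finishing all four clauses and hence the proof.
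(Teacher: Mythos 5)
Your proof is correct and follows essentially the same route as the paper's: a direct verification of the covering, atom, and connectivity clauses plus the width bound, using the correspondence $u_\alpha \in X'$ iff all terms of $\alpha$ lie in $X$. The one place you are more careful than the paper is the connectivity clause for a reification null $u_\alpha$, where the paper simply asserts that the bags containing all terms of $\alpha$ induce a connected subgraph of $T$, while you supply the (correct and needed) justification that this set is a non-empty intersection of finitely many subtrees of a tree and hence itself a subtree.
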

\begin{proof}
    Let $\inst$ be an instance of treewidth $n$ and $T=(V,E)$ a tree decomposition with width $n$. $\reify{T} = (V', E')$ is a tree decomposition of $\reify{\inst}$:
    \begin{itemize}
        \item We will establish that $\bigcup_{X'\in V'}X' = \adom{\reify{\inst}}$. By definition each $X'\in V'$ is a subset of $\adom{\reify{\inst}}$, which shows the $\subseteq$ inclusion. On the other hand, let $t\in\adom{\reify{\inst}}$. If $t\in \adom{\inst}$, then there exists an $X\in V$ such that $t\in X$ and hence by definition of $\reify{T}$, the node $X'$ in $\reify{T}$ corresponding to $X$ contains $t$. So assume $t\not\in\adom{\inst}$. Then, there exists an atom $\alpha\in \inst$ such that $t = u_{\alpha}$. Furthermore, there exists an $X\in V$ such that every term of $\alpha$ is contained in $X$. By construction this means that $u_{\alpha}\in X'$, where $X'$ is the node in $\reify{T}$ corresponding to node $X$. We obtain thus the inclusion $\supseteq$ and hence have shown the claim.
        \item Let $\rpred(t_1, \ldots, t_n)\in \reify{\inst}$. In order to show that this implies $t_{1}, \ldots, t_{n} \in X'$ for some $X'\in V'$, we discuss two cases: Either 
        $\rpred\in \Sigma_{\leq2}$ or $\rpred\in \Sigma^{\mathrm{rf}}\setminus \Sigma_{\leq2}$ (which means that 
        $n = 2$). In the former case, there exists an $X\in V$ such that $t_{1}, \ldots, t_{n}\in X$, and hence (by definition) $t_{1}, \ldots, t_{n}\in X'$, the corresponding node of $X$ in $\reify{T}$. In the latter case, there exists an atom $\alpha\in\inst$ and a null $u_{\alpha}$ such that 
         $\rpred(t_1, \ldots, t_n)$ is of the form $\rpred(t_{1}, u_{\alpha})$. By definition, the node $X'$ in $\reify{T}$ corresponding to $X$ contains not just $t_{1}$, but $u_{\alpha}$ as well. Hence, the claim follows.
        \item Last, we show that for every $t\in \adom{\reify{\inst}}$ the subgraph of $\reify{T}$ induced by all nodes $X'$ containing $t$ is connected. Again, we have two cases: Either $t\in \adom{\inst}$ or $t\not\in\adom{\inst}$ with $t\in \adom{\reify{\inst}}$. The former case is immediate by the definition of $\reify{T}$, so we focus on the latter case, and assume that $t = u_{\alpha}$ for some  
        $\alpha \in \inst$. By the definition of $\reify{T}$, $u_{\alpha}$ is in exactly those $X'\in V'$ for which the corresponding $X\in V$ contains all of $t_{1}, \ldots, t_{n}$. By $T$ being a tree decomposition, the set of all those $X$ induces a connected subgraph of $T$. By the definition of $\reify{T}$, the nodes of $\reify{T}$ are connected in exactly the same way in $E'$ as their corresponding nodes in $T$; hence, we immediately obtain that all $X'\in V'$ containing $t$ induce a connected subgraph of $\reify{T}$.
    \end{itemize}
    This establishes, that $\reify{T}$ is a tree decomposition of $\reify{\inst}$.
    
    Last, it is clear that $\reify{T}$ has a width of at most $n + K$ where $K$ is the maximal number of atoms over $\Sigma_{\ge 3}$ using elements from a set of cardinality $n+1$, as this constitutes the maximal amount of ``reification-nulls'' introduced in a node $X'$ in $\reify{T}$ in comparison to the corresponding node $X$ in $T$.
\end{proof}
\begin{corollary}\label{cor:reify-i}
    Let $\Sigma$ be a finite signature. If $\inst$ over $\Sigma$ has finite treewidth, then $\tw{\reify{\inst}}$ and $\cw{\reify{\inst}}$ are finite.
\end{corollary}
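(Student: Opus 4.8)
The plan is to obtain this corollary by simply concatenating the two ingredients already assembled: \Cref{obs:reify-tree-ftw} on the one hand, and \Cref{thm:tree-width-implies-clique-width} on the other. Concretely, suppose $\inst$ over $\Sigma$ has finite treewidth, witnessed by a tree decomposition $T=(V,E)$ of some width $n\in\mathbb{N}$. I would first apply \Cref{obs:reify-tree-ftw} to conclude that $\reify{T}$ is a tree decomposition of $\reify{\inst}$ of finite width; as recorded there, its width is at most $n+K$, where $K$ is the (finite, since $\Sigma$ is finite) maximal number of atoms over $\Sigma_{\geq 3}$ that can be formed from a set of $n+1$ terms, this being the worst-case number of reification nulls added to a single bag. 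Hence $\tw{\reify{\inst}}\in\mathbb{N}$.

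For the cliquewidth bound, I would then note two things about $\reify{\inst}$. First, by the definition of $\fcn{reify}$, the target signature $\Sigma^{\mathrm{rf}}=\Sigma_{\leq 2}\uplus\Sigma^{\mathrm{rf}}_2$ is binary, so $\reify{\inst}$ is an instance over a binary signature. Second, $\reify{\inst}$ is still countable: $\inst$ is countable by assumption, and $\fcn{reify}$ introduces at most one fresh null $u_\alpha$ per atom $\alpha\in\inst$, so $\adom{\reify{\inst}}$ remains countable. These are exactly the hypotheses of \Cref{thm:tree-width-implies-clique-width}, which therefore yields $\cw{\reify{\inst}}\in\mathbb{N}$, completing the proof.

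There is essentially no substantive obstacle at this stage: all the genuine work has been pushed into \Cref{obs:reify-tree-ftw} (the concrete construction of $\reify{T}$ and the verification of the three tree-decomposition conditions) and into \Cref{thm:tree-width-implies-clique-width} (the bounded-treewidth-to-bounded-cliquewidth translation for binary signatures), both of which we may invoke. The only points requiring a moment's attention are the two sanity checks just mentioned — that $\reify{\inst}$ is over a binary signature and that it stays countable — each of which is immediate from the definition of reification, so the corollary is a one-line consequence of the preceding results.
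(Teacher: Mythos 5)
Your proof is correct and follows exactly the route the paper takes: invoke \Cref{obs:reify-tree-ftw} to get a finite-width tree decomposition $\reify{T}$ of $\reify{\inst}$, then apply \Cref{thm:tree-width-implies-clique-width} to conclude finite cliquewidth. Your explicit checks that $\reify{\inst}$ is countable and over the binary signature $\Sigma^{\mathrm{rf}}$ are sensible (the paper leaves them implicit) but add nothing beyond the paper's two-line argument.
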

\begin{proof}
    Since $\inst$ has finite treewidth, there is a tree decomposition $T = (V, E)$ of $\inst$. Then, $\reify{T}$ is a tree decomposition of $\reify{I}$ of finite width. This implies finite treewidth and, by \cref{thm:tree-width-implies-clique-width}, finite cliquewidth of $\reify{I}$.
\end{proof}


\subsection{$\sat{\hspace{-0.3mm}\fcn{reify}(\instance),\!\fcn{reify}(\ruleset)\hspace{-0.3mm}} \equiv \fcn{reify}(\sat{\instance\!,\hspace{-0.3mm}\ruleset})$}

\newcommand{\homfrom}{h_{\leftarrow}}
\newcommand{\homto}{h_\to}
\newcommand{\homtop}{h'_\to}



\begin{definition}\label{def:reify-hom-to}
Let $h$ be a homomorphism from some instance $\inst$ to other instance $\inst'$. We let $\reify{h}$ denote the following homomorphism from $\reify{\inst}$ to $\reify{\inst'}$:
$$ \reify{h}(x) = \begin{cases}
\homm(x)\;&\text{if $x \in \termset(\inst)$},\\
u_{h(\alpha)}\;&\text{if $x$ = $u_{\alpha}$},
 \text{where $\alpha$ is an atom of $\inst$}.\ \ \ 
\end{cases}
$$

\vspace{-4ex}
~\defend
\end{definition}

\begin{fact}\label{fact:reification-homm}
If $\homm$ is a homomorphism from $\inst$ to $\inst'$, then $\reify{\homm}$ is a homomorphism from $\reify{\inst}$ to $\reify{\inst'}$.
\end{fact}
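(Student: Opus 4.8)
The plan is to verify directly that $\reify{\homm}$, as specified in \cref{def:reify-hom-to}, is a well-defined map $\adom{\reify{\inst}} \to \adom{\reify{\inst'}}$ that (a) fixes every constant and (b) maps every atom of $\reify{\inst}$ to an atom of $\reify{\inst'}$; by definition of homomorphism, (a) and (b) are exactly what is needed. For well-definedness I would first recall that $\adom{\reify{\inst}}$ consists of the terms of $\inst$ together with one fresh null $u_\alpha$ per higher-arity atom $\alpha$ of $\inst$; on the former part $\reify{\homm}$ acts as $\homm$, with values in $\adom{\inst'} \subseteq \adom{\reify{\inst'}}$, and it sends each $u_\alpha$ to $u_{\homm(\alpha)}$. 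Since a homomorphism never changes the predicate symbol of an atom, $\homm(\alpha)$ is again a higher-arity atom of $\inst'$ whenever $\alpha$ is one of $\inst$, so $u_{\homm(\alpha)}$ is a legitimate reification null of $\reify{\inst'}$; moreover this value depends only on the atom $\homm(\alpha)$, so $\reify{\homm}$ remains a genuine function even if $\homm$ identifies two distinct atoms of $\inst$.

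For (a): the constants occurring in $\reify{\inst}$ are precisely those occurring in $\inst$, as every reification term $u_\alpha$ is a null; on such a constant $\const{c}$ we have $\reify{\homm}(\const{c}) = \homm(\const{c}) = \const{c}$ because $\homm$ is a homomorphism. For (b), by the definition of $\fcn{reify}$ on instances, every atom of $\reify{\inst}$ has one of two forms. If it is $\rpred(\vt)$ with $\rpred \in \Sigma_{\leq 2}$, then $\rpred(\vt) \in \inst$, hence $\rpred(\homm(\vt)) \in \inst'$; since this is an at-most-binary atom it lies in $\reify{\inst'}$, and it equals $\reify{\homm}(\rpred(\vt))$ because $\reify{\homm}$ agrees with $\homm$ on ordinary terms. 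If it is $\rpred_i(u_\alpha, t_i)$ arising from a higher-arity atom $\alpha = \rpred(t_1,\dots,t_k) \in \inst$ with $k \geq 3$ and $1 \leq i \leq k$, then $\reify{\homm}(\rpred_i(u_\alpha, t_i)) = \rpred_i(u_{\homm(\alpha)}, \homm(t_i))$; as $\homm(\alpha) = \rpred(\homm(t_1),\dots,\homm(t_k)) \in \inst'$ is again a higher-arity atom, $\fcn{reify}(\homm(\alpha)) = \{\rpred_j(u_{\homm(\alpha)}, \homm(t_j)) \mid 1 \leq j \leq k\} \subseteq \reify{\inst'}$, and in particular $\rpred_i(u_{\homm(\alpha)}, \homm(t_i)) \in \reify{\inst'}$, as desired.

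The statement is essentially a bookkeeping exercise, so I do not anticipate a genuine obstacle; the only point deserving attention is aligning the two-way case split on the shape of atoms in $\reify{\inst}$ (at-most-binary versus reified higher-arity) with the two-way case split in the definition of $\reify{\homm}$, together with the observation that $\homm$, by preserving predicate symbols and hence arities, sends higher-arity atoms to higher-arity atoms and therefore reification nulls to reification nulls.
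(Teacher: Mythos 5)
Your proof is correct. The paper states this as a \emph{Fact} without any proof, and your direct verification --- well-definedness of $\fcn{reify}(\homm)$ on reification nulls, preservation of constants, and the case split on at-most-binary versus reified higher-arity atoms --- is precisely the routine bookkeeping argument the authors leave implicit, so there is nothing to add.
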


\begin{definition}\label{def:reify-hom-from}
Let $\inst$ and $\inst'$ be instances and let $h$ be a homomorphism from $\reify{\inst}$ to $\reify{\inst'}$. We let $\reifym{h}$ denote the function $t \mapsto \homm(t)$ from $\inst$ to $\inst'$, i.e. the restriction of $h$ to elements of $\adom{\inst}$.
\defend
\end{definition}

\begin{fact}\label{fact:reification-homm-two}
If $\homm$ is a homomorphism from $\reify{\inst}$ to $\reify{\inst'}$, then $\reifym{\homm}$ is a homomorphism from $\inst$ to $\inst'$.
\end{fact}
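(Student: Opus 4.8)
The plan is to verify directly that $\reifym{\homm}$ --- the restriction of $\homm$ to $\adom{\inst}$, which I will call $h'$ --- satisfies the two defining conditions of a homomorphism from $\inst$ to $\inst'$. First I would observe that $h'$ is well defined as a map $\adom{\inst}\to\adom{\inst'}$: every $t\in\adom{\inst}$ occurs in some atom of $\inst$, and the atom-preservation argument below shows that the $\homm$-image of such an atom lies in $\inst'$, so $\homm(t)\in\adom{\inst'}$. Condition~(ii) (fixing constants) is immediate: any constant $\const{a}$ of $\inst$ also lies in $\adom{\reify{\inst}}$, so $\homm(\const{a})=\const{a}$, hence $h'(\const{a})=\const{a}$.

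For condition~(i), I would take an arbitrary atom $\rpred(t_1,\ldots,t_k)\in\inst$ and split on the arity of $\rpred$. If $\rpred\in\Sigma_{\leq2}$, then $\fcn{reify}$ leaves the atom unchanged, so $\rpred(t_1,\ldots,t_k)\in\reify{\inst}$, and therefore $\rpred(\homm(t_1),\ldots,\homm(t_k))\in\reify{\inst'}$. Since the predicates of $\Sigma^{\mathrm{rf}}_2$ are fresh and thus disjoint from $\Sigma_{\leq2}$, the only $\Sigma_{\leq2}$-atoms occurring in $\reify{\inst'}$ are those already present in $\inst'$; hence this atom lies in $\inst'$, and because $h'(t_i)=\homm(t_i)$ the claim follows.

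If instead $\rpred\in\Sigma_{\geq3}$, put $\alpha=\rpred(t_1,\ldots,t_k)$, so that $\{\rpred_i(u_\alpha,t_i)\mid 1\leq i\leq k\}\subseteq\reify{\inst}$ and hence $\rpred_i(\homm(u_\alpha),\homm(t_i))\in\reify{\inst'}$ for every $i$. The crux is to exploit that $\fcn{reify}$ introduces a \emph{fresh} reification null $u_\beta$ for each atom $\beta$, so that $\beta\mapsto u_\beta$ is injective and every $\Sigma^{\mathrm{rf}}_2$-atom of $\reify{\inst'}$ comes from a unique higher-arity atom of $\inst'$. Consequently, the shared first coordinate $\homm(u_\alpha)$ must equal $u_\beta$ for a single atom $\beta=\rpred'(s_1,\ldots,s_m)\in\inst'$, and matching predicates and indices across all $i$ then forces $\rpred'=\rpred$, $m=k$, and $s_i=\homm(t_i)$ for every $i$. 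Thus $\rpred(\homm(t_1),\ldots,\homm(t_k))=\rpred(h'(t_1),\ldots,h'(t_k))\in\inst'$, completing the verification of condition~(i).

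I expect the gluing step in the higher-arity case to be the only genuine obstacle: one must argue that the $k$ separate binary atoms sharing the image $\homm(u_\alpha)$ really reassemble into one $k$-ary atom of $\inst'$, rather than into fragments of several distinct atoms --- and this is exactly where the per-atom freshness of reification nulls, together with the pairwise distinctness of the predicates $\rpred_i$ (for varying $\rpred$ and $i$), is needed. Everything else is routine bookkeeping about the shape of $\fcn{reify}(\inst')$, and the argument is essentially dual to the one used for \cref{fact:reification-homm}.
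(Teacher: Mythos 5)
Your proof is correct: the paper states this as a \emph{Fact} with no proof supplied, and your direct verification is exactly the routine argument it treats as immediate. You also correctly isolate the one point that actually needs care --- that the $k$ binary atoms sharing the first coordinate $\homm(u_\alpha)$ reassemble into a single $k$-ary atom of $\inst'$, which follows because each reification null $u_\beta$ is fresh per atom and the predicates $\rpred_i$ only occur in reifications of $\rpred$-atoms.
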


\begin{observation}\label{obs:reification-bijection}
Given two instances $\inst$ and $\inst'$ and two sets of homomorphisms $H = \set{h : \inst \to \inst'}$ and $H' = \set{h : \reify{\inst} \to \reify{\inst'}}$, the function $\reify{\cdot}$ is a bijection from $H$ to $H'$.
\end{observation}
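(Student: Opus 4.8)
The plan is to exhibit an explicit two-sided inverse for $\reify{\cdot}$. By \cref{fact:reification-homm}, the assignment $\homm \mapsto \reify{\homm}$ restricts to a well-defined map $H \to H'$, and by \cref{fact:reification-homm-two}, the assignment $\homm \mapsto \reifym{\homm}$ restricts to a well-defined map $H' \to H$. It then suffices to show that these two maps are mutually inverse, which immediately gives that both are bijections.

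First I would check that $\reifym{\reify{\homm}} = \homm$ for every $\homm \in H$. By \cref{def:reify-hom-from}, $\reifym{\reify{\homm}}$ is by definition the restriction of $\reify{\homm}$ to $\adom{\inst}$; and by \cref{def:reify-hom-to}, $\reify{\homm}$ agrees with $\homm$ on all of $\adom{\inst}$. So the two coincide, and $\reify{\cdot}$ has $\reifym{\cdot}$ as a left inverse.

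The crux is the other identity, $\reify{\reifym{\homm}} = \homm$ for every $\homm \in H'$. These two maps agree on $\adom{\inst}$ by construction, so everything reduces to showing that an arbitrary homomorphism $\homm \colon \reify{\inst} \to \reify{\inst'}$ sends each reification null $u_\alpha$ of $\reify{\inst}$ to $u_{\reifym{\homm}(\alpha)}$, where $\alpha = \rpred(t_1,\ldots,t_k) \in \inst$ with $k \ge 3$ and $\reifym{\homm}(\alpha) = \rpred(\homm(t_1),\ldots,\homm(t_k))$ is the corresponding atom of $\inst'$ (it lies in $\inst'$ since $\reifym{\homm}$ is a homomorphism, and has arity $k \ge 3$, hence is reified). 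This is where I expect essentially all of the work. The idea is that $u_\alpha$ is ``pinned down'' in $\reify{\inst}$ by the atoms $\rpred_1(u_\alpha,t_1),\ldots,\rpred_k(u_\alpha,t_k)$, so $\homm(u_\alpha)$ must be a term $v$ with $\rpred_i(v,\homm(t_i)) \in \reify{\inst'}$ for every $i$. By the definition of $\fcn{reify}$, every $\rpred_i$-atom of $\reify{\inst'}$ carries a fresh reification null $u_\beta$ in its first position, with $\beta$ a $\rpred$-atom of $\inst'$, and these nulls are pairwise distinct and disjoint from $\adom{\inst'}$; hence $v = u_\beta$ for a single $\beta$, and matching the second arguments forces $\beta = \rpred(\homm(t_1),\ldots,\homm(t_k)) = \reifym{\homm}(\alpha)$. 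Thus $\homm(u_\alpha) = u_{\reifym{\homm}(\alpha)}$, which by \cref{def:reify-hom-to} is exactly $\reify{\reifym{\homm}}(u_\alpha)$, finishing the verification.

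Combining the two identities, $\reify{\cdot}$ and $\reifym{\cdot}$ are mutually inverse, so $\reify{\cdot}$ is a bijection from $H$ to $H'$. The only nonroutine ingredient is the pinning-down argument for reification nulls in the previous paragraph, which crucially uses that distinct higher-arity atoms of $\inst'$ receive distinct fresh nulls under $\fcn{reify}$; the remaining steps are just unwinding \cref{def:reify-hom-to} and \cref{def:reify-hom-from}.
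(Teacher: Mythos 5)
Your proof is correct and follows the same route as the paper, whose entire proof of this observation is the one-liner ``Follows from \cref{def:reify-hom-to} and \cref{def:reify-hom-from}.'' You simply make explicit the two mutual-inverse identities and, in particular, the pinning-down argument showing that any homomorphism $\reify{\inst}\to\reify{\inst'}$ must send $u_\alpha$ to $u_{\reifym{\homm}(\alpha)}$ because distinct higher-arity atoms receive distinct fresh nulls -- a detail the paper leaves implicit.
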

\begin{proof}
Follows from \cref{def:reify-hom-to} and \cref{def:reify-hom-from}.
\end{proof}

We now prove the following statement by induction on $n$: For every instance $\inst$ and every rule set $\ruleset$, 
$$
\pn{n}.
$$

\textit{Base case.} We note that the base case for $n = 0$ is trivial since
$$
\pnl{0} = \reify{\inst} = \pnr{0}.
$$
Hence, we show the case for $n = 1$, that is, we show that
$$
\pnl{1} \equiv \pnr{1}.
$$
We define two homomorphisms: (i) $\homto$ from $\pnl{1}$ to $\pnr{1}$ and (ii) $\homfrom$ from $\pnr{1}$ to $\pnl{1}$. For any term $t$ in $\reify{\inst}$ we set $\homto(t) = t = \homfrom(t)$. By \cref{obs:reification-bijection}, we know that any trigger $\tau = (\rho, \homm)$ in $(\inst, \ruleset)$ is mirrored in $(\reify{\inst}, \reify{\ruleset})$ by $\tau' = (\reify{\rho}, \reify{\homm})$. Moreover, we know that both $\tau$ and $\tau'$ are satisfied in $\pnl{1}$ and $\pnr{1}$  by the homomorphisms $\homm'_l : \head(\rho) \to \ksat{1}{\inst, \rs}$ and $\homm_r : \head(\reify{\rho}) \to \pnr{1}$, respectively. Let $\homm_l = \reify{\homm'_l}$ be a homomorphism from $\head(\reify{\rho})$ to $\pnl{1}$.

Let us assume that an application of $\tau$ resulted in the creation of a new atom $\pred{R}_i(t,u)$ in $\pnl{1}$ (after reification). Let $\pred{R}_i(x,v_u)$ be the respective atom of the head of $\reify{\rho}$. Then, we set $\homto(u) = \homm_r(v_u)$, and if $x$ is existentially quantified, we set $\homto(t) = \homm_r(x)$. 

Let us assume that an application of $\tau'$ resulted in the creation of a new atom $\pred{R}_i(t,u)$ in $\pnr{1}$. Let $\pred{R}_i(x,v_u)$ be the respective atom of the head of $\reify{\rho}$. Then, we set $\homfrom(u) = \homm_l(v_u)$, and if $x$ is existentially quantified, we set $\homfrom(t) = \homm_l(x)$.

\textit{Inductive step.} Now, given a natural number $n$, we will prove the claim for $n+ 1$, assuming that it holds for every natural number $i \leq n$.
\begin{align*}
\pnl{n + 1} =&\; \reify{\ksat{1}{\ksat{n}{\inst, \ruleset}, \ruleset}} \\
            \equiv&\; \ksat{1}{\reify{\ksat{n}{\inst, \ruleset}}, \reify{\ruleset}}\\
            \equiv&\; \ksat{1}{\ksat{n}{\reify{\inst}, \reify{\ruleset}}, \reify{\ruleset}}\\
            =&\; \pnr{n + 1}
\end{align*}
Therefore, we may conclude the following: 
\begin{corollary}\label{cor:reify-iv}
For every instance $\inst$ and every rule set $\ruleset$,
$$
\pn{\infty}.
$$
\end{corollary}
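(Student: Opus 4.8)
The plan is to obtain this from the finite-stage identity $\pn{n}$ — just established for all $n\in\mathbb{N}$ by the induction above — by passing to the limit. The first ingredient is that $\fcn{reify}$ commutes with the ascending chain that defines the Skolem chase: since $\ksat{n}{\inst,\ruleset}\subseteq\ksat{n+1}{\inst,\ruleset}$ for every $n$ and each reification null $u_\alpha$ is consistently indexed by the atom $\alpha$ it reifies, we have $\reify{\sat{\inst,\ruleset}}=\bigcup_{n\in\mathbb{N}}\reify{\ksat{n}{\inst,\ruleset}}$, and of course $\sat{\reify{\inst},\reify{\ruleset}}=\bigcup_{n\in\mathbb{N}}\ksat{n}{\reify{\inst},\reify{\ruleset}}$. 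So it remains to upgrade the family of finite-stage homomorphic equivalences to a single equivalence of the two unions.

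For the direction $\sat{\reify{\inst},\reify{\ruleset}}\to\reify{\sat{\inst,\ruleset}}$ I would argue purely model-theoretically. First, $\reify{\sat{\inst,\ruleset}}$ is a model of $(\reify{\inst},\reify{\ruleset})$: the inclusion $\reify{\inst}\subseteq\reify{\sat{\inst,\ruleset}}$ is immediate, and for any $\reify{\rho}\in\reify{\ruleset}$ whose body matches into $\reify{\sat{\inst,\ruleset}}$, \cref{fact:reification-homm-two} (via \cref{obs:reification-bijection}) yields a matching body homomorphism of $\rho$ into $\sat{\inst,\ruleset}$; since $\sat{\inst,\ruleset}\models\rho$ this extends over $\head(\rho)$, and reifying that extension by \cref{fact:reification-homm} gives the required extension over $\head(\reify{\rho})$. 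As $\sat{\reify{\inst},\reify{\ruleset}}$ is a universal model of $(\reify{\inst},\reify{\ruleset})$, a homomorphism into the model $\reify{\sat{\inst,\ruleset}}$ follows.

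For the converse direction $\reify{\sat{\inst,\ruleset}}\to\sat{\reify{\inst},\reify{\ruleset}}$ I would exploit that the Skolem chase of the reified rule set stays ``in reified form'': every reification null occurring in $\sat{\reify{\inst},\reify{\ruleset}}$ carries its complete bundle of $\Sigma^{\mathrm{rf}}_2$-atoms, because such bundles are only ever introduced atom-by-atom — through $\reify{\inst}$, or through a single head of some $\reify{\rho}$. Hence one can ``un-reify'' $\sat{\reify{\inst},\reify{\ruleset}}$ into an instance $\mathcal{N}$ over $\Sigma$ with $\reify{\mathcal{N}}\equiv\sat{\reify{\inst},\reify{\ruleset}}$; a symmetric use of \cref{fact:reification-homm} and \cref{fact:reification-homm-two} shows $\mathcal{N}\models(\inst,\ruleset)$, so universality of $\sat{\inst,\ruleset}$ gives $\sat{\inst,\ruleset}\to\mathcal{N}$, and reifying this homomorphism yields $\reify{\sat{\inst,\ruleset}}\to\reify{\mathcal{N}}$ and thence into $\sat{\reify{\inst},\reify{\ruleset}}$. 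Combining the two directions gives $\reify{\sat{\inst,\ruleset}}\equiv\sat{\reify{\inst},\reify{\ruleset}}$, i.e.\ $\pn{\infty}$.

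The step I expect to be the main obstacle is precisely this passage to the limit. The finite-stage equivalences $\pn{n}$ are a priori incoherent — each one involves independently chosen, non-canonical homomorphisms, since reification yields only homomorphic equivalence and never a literal isomorphism — so they cannot simply be ``unioned''. Either one routes around this via the universal-model detour above, or one re-runs the inductive step to produce a genuinely \emph{coherent} chain of homomorphisms $h_n:\reify{\ksat{n}{\inst,\ruleset}}\to\sat{\reify{\inst},\reify{\ruleset}}$ with $h_{n+1}$ extending $h_n$, which is possible exactly because the Skolem chase adds atoms monotonically and names every new null deterministically, and then takes $h=\bigcup_n h_n$. The remaining work is routine bookkeeping of the fresh reification and Skolem nulls so that all the identifications used above are well defined.
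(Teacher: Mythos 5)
Your argument is correct, but it takes a genuinely different route from the paper. The paper's entire proof is the four-line chain
$\pnl{\infty}=\reify{\bigcup_{i}\ksat{i}{\inst,\ruleset}}=\bigcup_{i}\pnl{i}\equiv\bigcup_{i}\pnr{i}=\pnr{\infty}$,
i.e.\ it passes directly from the stagewise equivalences $\pnl{i}\equiv\pnr{i}$ to an equivalence of the unions. As you rightly observe, that inference is not valid for arbitrary chains of homomorphically equivalent structures; it works here only because the homomorphisms built in the paper's induction are coherent (they are the identity on all previously existing terms and are extended canonically on the fresh Skolem and reification nulls), a fact the paper uses tacitly. Your fallback option of re-running the induction to produce an explicitly compatible chain $h_n$ with $h=\bigcup_n h_n$ is therefore essentially the paper's argument made rigorous. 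Your primary route --- showing $\reify{\sat{\inst,\ruleset}}$ is a model of $(\reify{\inst},\reify{\ruleset})$ and un-reifying $\sat{\reify{\inst},\reify{\ruleset}}$ into a model of $(\inst,\ruleset)$, then invoking universality of the two Skolem chases --- is different in kind: it bypasses the finite-stage lemma entirely and mirrors the dereification technique the paper only deploys later, in the proof of \cref{lem:reify-ii}. It buys robustness (it would survive a non-deterministic or restricted chase, as long as the result is universal) at the cost of having to verify the ``full bundle'' invariant for reification subjects and to check that the equivalence homomorphisms between $\sat{\reify{\inst},\reify{\ruleset}}$ and $\reify{\mathcal{N}}$ are the identity on non-reification terms, so that head extensions still agree on rule frontiers; both checks go through. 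The paper's route is shorter given the induction already done; yours is self-contained and makes explicit the limit step the paper elides.
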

\begin{proof}
\begin{align*}
\pnl{\infty} =&\; \reify{\bigcup_{i\in\mathbb{N}} \ksat{i}{\inst, \ruleset}}\\
            =&\; \bigcup_{i\in\mathbb{N}} \pnl{i}\\
            \equiv&\; \bigcup_{i\in\mathbb{N}} \pnr{i}\\
            =&\; \pnr{\infty}
\end{align*}
\end{proof}

\subsection{If $\ruleset$ is $\bts$, then $\fcn{reify}(\ruleset)$ is $\bts$ and $\fcs$.}

\begin{lemma}\label{lem:reify-ii}
    If $\ruleset$ is $\bts$, then $\reify{\ruleset}$ is $\bts$ and $\fcs$.
\end{lemma}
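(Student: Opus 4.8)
The plan is to prove the (formally stronger, but no harder) statement that for \emph{every} database $\database'$ over the binary signature $\Sigma^{\mathrm{rf}}$, the knowledge base $(\database',\reify{\ruleset})$ has a universal model of finite treewidth; since $\Sigma^{\mathrm{rf}}$ is binary, \cref{thm:tree-width-implies-clique-width} then upgrades this to finite cliquewidth, so the very same models witness that $\reify{\ruleset}$ is at once $\bts$ and $\fcs$. Hence it suffices, for each $\database'$, to produce a model $M$ of $(\database',\reify{\ruleset})$ with $\tw{M}\in\mathbb{N}$ that maps homomorphically into every model of $(\database',\reify{\ruleset})$; such an $M$ is automatically a universal model.

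\emph{Step 1 (reifying a $\bts$ universal model).} For a $\Sigma$-database $\database$, since $\ruleset$ is $\bts$ we may fix a universal model $U$ of $(\database,\ruleset)$ with $\tw{U}\in\mathbb{N}$ --- crucially we cannot take $U=\sat{\database,\ruleset}$, since for $\bts$ rule sets the Skolem chase may have infinite treewidth. Then $\tw{\reify{U}}\in\mathbb{N}$ by \cref{cor:reify-i}, and in fact $\reify{U}$ is a \emph{model} of $(\reify{\database},\reify{\ruleset})$: it contains $\reify{\database}$ as $\database\subseteq U$, and any match of a reified rule body into $\reify{U}$ sends every reification variable to some reification node $u_\alpha$ with $\alpha\in U$ (these being the only elements of $\reify{U}$ that carry a full set of coordinate-edges for a higher-arity predicate), so the match ``un-reifies'' to a match of the original rule body into $U$, which extends by $U\models\ruleset$; reifying that extension yields the required head match, in agreement with the given match on the frontier. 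Finally $\reify{U}\equiv\reify{\sat{\database,\ruleset}}\equiv\sat{\reify{\database},\reify{\ruleset}}$ by \cref{fact:reification-homm,fact:reification-homm-two,cor:reify-iv}, so $\reify{U}$ is a finite-treewidth universal model of $(\reify{\database},\reify{\ruleset})$.

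\emph{Step 2 (reducing an arbitrary $\database'$ to a reified database).} Pass to the \emph{pseudo-reified core} $\inst_{\mathrm{prf}}\subseteq\database'$, i.e.\ the maximal sub-instance in which every reification node carries at least one edge on each coordinate of its predicate. One checks that $\reify{\ruleset}$ has the same triggers on $\database'$ and on $\inst_{\mathrm{prf}}$ at every chase stage, so $\sat{\database',\reify{\ruleset}}=\sat{\inst_{\mathrm{prf}},\reify{\ruleset}}\cup(\database'\setminus\inst_{\mathrm{prf}})$, with $\database'\setminus\inst_{\mathrm{prf}}$ a finite set of atoms over the finite domain $\adom{\database'}$; hence a finite-treewidth universal model of $(\inst_{\mathrm{prf}},\reify{\ruleset})$ yields one of $(\database',\reify{\ruleset})$ by adding those finitely many atoms and putting $\adom{\database'}$ into every bag. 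Let $\database:=\fcn{unreify}(\inst_{\mathrm{prf}})$ be the finite $\Sigma$-database obtained by turning each completed reification node into the corresponding higher-arity atom(s) (all combinations, if a node has several edges on a coordinate); there is a canonical homomorphism $g\colon\reify{\database}\to\inst_{\mathrm{prf}}$ mapping each reification node of $\reify{\database}$ back onto the node it originated from. Now take $U$ and $\reify{U}$ from Step 1 for this $\database$ and set $M:=\inst_{\mathrm{prf}}\cup\reify{U}$ (well-defined: $\inst_{\mathrm{prf}}$ and $\reify{U}$ overlap exactly on $\adom{\database}$, and the fresh reification nodes of $\reify{U}$ are disjoint from $\adom{\inst_{\mathrm{prf}}}$). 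Then $M\supseteq\inst_{\mathrm{prf}}$; $\tw{M}\in\mathbb{N}$ (add the finite set $\adom{\inst_{\mathrm{prf}}}$ to every bag of a finite-width tree decomposition of $\reify{U}$); and $M$ maps homomorphically into every model $N$ of $(\inst_{\mathrm{prf}},\reify{\ruleset})$ --- send $\inst_{\mathrm{prf}}$ by inclusion and extend $g$ (followed by $\inst_{\mathrm{prf}}\hookrightarrow N$) to a homomorphism $\reify{U}\to N$, using that $\reify{U}$ is universal for $(\reify{\database},\reify{\ruleset})$ and $N\models\reify{\ruleset}$; the two maps agree on the overlap.

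The main work is the remaining claim: $M=\inst_{\mathrm{prf}}\cup\reify{U}$ satisfies $\reify{\ruleset}$ (which then makes $M$ a universal model). The subtlety is that a pseudo-reified instance need not be a genuine reification (a node may carry several edges on one coordinate), so $\inst_{\mathrm{prf}}$ and $\reify{\database}$ are \emph{not} homomorphically equivalent and rule satisfaction cannot simply be transported along $g$. The resolving observation is that the higher-arity atoms stored at the reification nodes of $\inst_{\mathrm{prf}}$ are, by definition of $\fcn{unreify}$, exactly the higher-arity atoms of $\database\subseteq U$, each of which is \emph{also} realized at a fresh, single-edged reification node of $\reify{U}$; moreover the reification nodes of $\inst_{\mathrm{prf}}$ and of $\reify{U}$ are disjoint, so a match of a reified rule body into $M$ never routes one reification pattern partly through $\inst_{\mathrm{prf}}$ and partly through $\reify{U}$. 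Consequently one can ``migrate'' any such match to an equivalent match into $\reify{U}$ alone --- replace each $\inst_{\mathrm{prf}}$-reification-node used by the match with the fresh $\reify{U}$-node realizing the same higher-arity atom, leaving all non-reification images (in particular all frontier images) unchanged, and noting that the $\Sigma_{\leq2}$-atoms of $\inst_{\mathrm{prf}}$ lie among those of $\reify{U}$ --- whereupon $\reify{U}\models\reify{\ruleset}$ supplies a head match inside $\reify{U}\subseteq M$. This establishes that $\reify{\ruleset}$ is $\bts$, and with \cref{thm:tree-width-implies-clique-width} also $\fcs$.
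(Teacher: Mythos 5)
Your proof is correct and, at its core, builds the same object as the paper: you dereify the given binary database to a $\Sigma$-database (your $\fcn{unreify}(\inst_{\mathrm{prf}})$ coincides with the paper's $\fcn{dereify}(\db)$, since pruning the incomplete reification centres first and then completing the remaining ones yields exactly the ``all completed higher-arity patterns plus all low-arity atoms'' database), you reify a finite-treewidth universal model $U$ supplied by $\bts$-ness of $\ruleset$, glue it with the original database, and pass from treewidth to cliquewidth via \cref{thm:tree-width-implies-clique-width}. Where you genuinely diverge is in the two verification steps. For modelhood you first establish, as a standalone claim, that $\reify{U}$ is a universal model of $(\reify{\db},\reify{\ruleset})$ (using \cref{cor:reify-iv}) and then \emph{migrate} any body match in $M$ onto the fresh reification nodes of $\reify{U}$; the paper instead dereifies each match down to $U$ and re-reifies the resulting head match. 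Both hinge on the same observation, namely that a reification pattern cannot be split between the two parts of the union because the image of its centre variable determines where all of its atoms live. For universality the paper dereifies the arbitrary target model $N$ and proves that the result satisfies $(\db',\ruleset)$ (its Claim~$\dag$, the bulkiest part of its proof); you avoid touching $N$ altogether by composing the canonical homomorphism $\reify{\db}\to\inst_{\mathrm{prf}}\hookrightarrow N$ with the universality of $\reify{U}$. This is slicker, but note that it invokes a property slightly stronger than the paper's literal definition of universal model: you need $\reify{U}$ to map into an instance $N$ that merely admits a homomorphism from $\reify{\db}$, not one that contains $\reify{\db}$ as a subset. That step is sound --- it follows from the hom-equivalence $\reify{U}\equiv\sat{\reify{\db},\reify{\ruleset}}$ you established in Step 1 together with the standard fact that the Skolem chase extends any initial homomorphism into a model of the rules --- but it is the one place where your argument leans on something not literally in the definitions and should be made explicit; the agreement of the two maps on the overlap is then automatic because the overlap consists only of constants.
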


\begin{proof}
    Let $\ruleset$ be a $\bts$ rule set. We will show that $\reify{\ruleset}$ is also $\bts$. To this end, we let $\db$ be an arbitrary database over $\Sigma^{\mathrm{rf}}$ and show in the following that $(\db,\reify{\ruleset})$ has a universal model of finite treewidth. 


    From $\db$, we obtain the database $\db' = \fcn{dereify}(\db)$ over the original signature $\Sigma$, defined to consist of
    \begin{itemize}
        \item every atom $\rpred(\vt)$ from $\db$ with $\arity{\rpred}\leq 2$ and
        \item each atom $\rpred(t_{1}, \ldots, t_{\arity{\rpred}})$ with ${\arity{\rpred}}\geq 3$ for which there exists some $t\in \adom{\db}$ such that $\{ \rpred_{i}(t, t_{i}) \mid 1 \le i \le \arity{\rpred}\} \subseteq \db$.
    \end{itemize}
    Intuitively, $\fcn{dereify}$ constitutes the ``best effort'' of reconstructing the original database from some ``imperfectly reified'' one.    


    Since $\ruleset$ is $\bts$ there exists a universal model $\unimod$ of $(\db', \ruleset)$ that has finite treewidth $n$. Let $T = (V,E)$ be a tree decomposition of $\unimod$ of width $n$.

    Next, we will show that $\reify{\unimod}\cup \db$ is a model of $(\db, \reify{\ruleset})$.
    As $\db \subseteq \reify{\unimod}\cup \db$, it remains to be shown that $\reify{\ruleset}$ is satisfied in $\reify{\unimod}\cup \db$. 
    Let $\rho \in \reify{\ruleset}$ be a rule of the form 
    \[ \rho = \forall\vx\vy\phi(\vx, \vy) \rightarrow \exists\vz \psi(\vy, \vz)\] 
    and let $\rho_{\mathrm{urf}}\in\ruleset$ be the rule for which $\rho = \reify{\rho_{\mathrm{urf}}}$. We will denote by $\vu$ the variables from $\vx$ that are introduced by the reification.
    Last, let $h$ be a homomorphism mapping $\body(\rho)$ to $\reify{\unimod}\cup \db$. We will use $v$ and $v_{i}$ to denote variables from $\vx\vy$. 
    
    Note that if for $\rpred\in\Sigma_{\ge 3}$ and $1\le i \le \arity{\rpred}$ an atom of the form $\rpred_{i}(u, v{})$ from $\body(\rho)$ is mapped via $h$ to $\reify{\unimod}\cup \db$, then 
    one of the two holds:
    \begin{itemize}
        \item If $h(u)$ is a constant, then $h(v)$ is a constant. We may conclude this from the following argument: Since $\reify{\unimod}$ has appearances of atoms of the form $\ppred_{k}(u_{\beta}, t)$ for $u_{\beta}$ being a null, $h$ can only map $\rpred_{i}(u, v{})$ to $\db$. But then, since $\db$ is a database and contains only constants as terms, $h(v{})$ has to be a constant as well.
        \item If $h(u)$ is not a constant, then no restrictions are imposed on $h(v{})$. More precisely, $h$ would map $\rpred_{i}(u, v{})$ into $\reify{\unimod}$.
    \end{itemize}
    
    We differentiate several cases in which $h$ maps the atoms of $\body(\rho)$ to $\reify{\unimod}\cup \db$:
    \begin{enumerate}
        \item\label{item:reify-ii-1} Suppose $h$ maps the atom $\alpha$ from $\body(\rho)$ into $\reify{\unimod}\setminus \db$. If  $\alpha$ is an atom over $\Sigma_{\le 2}$, then $h$ will map it not only to $\reify{\unimod}$, but also to $\unimod$ by the definition of reification. Therefore, we suppose that $\alpha$ is an atom over $\Sigma_{\ge 3}$. Then, it is part of the conjunction in $\body(\rho)$ consisting of the ``complete set'' of atoms of the form $\rpred_{i}(u, v{}_i)$ (with $1\le i \le \arity{\rpred}$) which are mapped to $\rpred_{i}(h(u), h(v{}_{i}))\in \reify{\unimod}$ such that there is an atom $\beta = \rpred(h(v{}_{1}), \ldots, h(v{}_{\arity{\rpred}}))\in \unimod$ and $h(u)=u_{\beta}$.

        \item\label{item:reify-ii-2} Suppose $h$ maps the atom $\alpha$ from $\body(\rho)$ into $\db$. Hence, $\alpha$ is of the form $\rpred(\vvv)$, where either (1) $\rpred\in\Sigma_{\le 2}$ or (2) $\rpred\in\Sigma^{\mathrm{rf}}$. Since $\db$ restricted to the signature $\Sigma_{\le 2}$ is a sub-instance of $\db'$, $\alpha$ is mapped via $h$ to $\db'$ in case (1). In case (2), there exists an atom $\beta = \rpred(h(v{}_{1}), \ldots, h(v{}_{\arity{\rpred}}))$ in $\body(\rho_{\mathrm{urf}})$ such that $\alpha = \rpred_{i}(u_{\beta}, v{}_{i}) \in \reify{\beta}$. As $\rho = \reify{\rho_{\mathrm{urf}}}$, the conjunction 
        \[ \reify{\beta} = \bigwedge_{1\le i \le \arity{\rpred}}\rpred_{i}(u_{\beta}, v{}_{i})\] 
        appears in $\body(\rho)$. Furthermore, since $h(u_{\beta})$ is a constant appearing in $\db$, and the whole of $\body(\rho)$ is mapped into $\db$, we know that each $h(v{}_{i})$ is a constant, and hence, every $\rpred_{i}(u_{\beta}, v{}_{i})$ is mapped into $\db$. Consequently, $\beta$ is mapped via $h$ to an atom of $\db'$, and in the case of $\alpha$ being an atom over $\Sigma^{\mathrm{rf}}$, it is also part of a conjunction $\bigwedge_{1\le i \le \arity{\rpred}}\rpred_{i}(u_{\beta}, v{}_{i})$, which is the same as $\reify{\beta}$, and which constitutes a ``full set'' of $\Sigma^{\mathrm{rf}}$ atoms holding that are mapped atom by atom to $\db$. Also, note that by construction, $\beta \in \db'$. 

\end{enumerate}
    
    Putting all of this together, we obtain the following: Every atom over $\Sigma_{\le 2}$ is mapped by $h$ either into $\db$ or into $\reify{\unimod}$. In the former case, each atom over $\Sigma_{\le 2}$ is also mapped to $\db'$, as we are not dealing with predicates introduced by reification. In the latter case, by the definition of reification, it follows that each atom is actually mapped to $\unimod$.
    Atoms over $\Sigma^{\mathrm{rf}}$ (meaning that they are of the form $\rpred_{i}(u, v_{i})$ for $\rpred\in\Sigma_{\ge 3}$) are dealt with in two cases: The first is when $h(u)$ is a constant. Then, they are mapped into $\db$ (meaning that $h(v{}_{i})$ is a constant), so by \cref{item:reify-ii-2}, we know that the whole conjunction $\bigwedge_{1\le i \le \arity{\rpred}}\rpred_{i}(u, v{}_{i})$ of $\rho$ is mapped to $\db$. Furthermore, by the same token, we know that $\rpred(h(v{}_{1}), \ldots, h(v{}_{\arity{\rpred}})) \in \db'$.
    For the second case (meaning that $h(u)$ is a null) note that first it is mapped to $\reify{\unimod}$ and by \cref{item:reify-ii-1} we obtain an analogous result where we know that the whole conjunction $\bigwedge_{1\le i \le \arity{\rpred}}\rpred_{i}(u, v{}_{i})$ of $\rho$ is mapped to $\reify{\unimod}$. Furthermore, by the same token, we know that $\rpred(h(v{}_{1}), \ldots, h(v{}_{\arity{\rpred}})) \in \unimod$.
    This implies that the body of $\rho_{\mathrm{urf}}$ is mapped into $\unimod$ (note that $\db'\subseteq \unimod$ as $\unimod$ is a model of $\db'$) by $h$. 
    Since $\unimod$ is a model of $(\db', \ruleset)$, $\head(\rho_{\mathrm{urf}})$ is mapped into $\unimod$ by a homomorphism $h_{\mathrm{head}}$ extending $h$ mapping $\vz$ to $\vt$, componentwise, for a tuple of terms $\vt$ from $\unimod$.
    Then, by reifying again, we obtain by \cref{fact:reification-homm} that $\reify{h_{\mathrm{head}}}$ is a homomorphism that maps $\reify{\head(\rho_{\mathrm{urf}})}$ to $\reify{\unimod}$. Note that, by the definition of reification for rules, $\reify{\head(\rho_{\mathrm{urf}})} = \head(\rho)$. 
    Consequently, $\rho$ is satisfied in $\reify{\unimod}\cup\db$.
    
    Additionally, as $\db$ is finite and since $\reify{T}$ is a tree decomposition of $\reify{\unimod}$ of finite width (see \cref{obs:reify-tree-ftw}), we can conclude that $\reify{\unimod}\cup \db$ has finite treewidth (a witnessing tree decomposition can be obtained from $\reify{T}$ by adding every term from $\db$ to every node).
    
    To prove universality of $\reify{\unimod}\cup \db$, we consider an arbitrary instance $\inst$ over $\Sigma^\mathrm{rf}$ satisfying $(\db, \reify{\ruleset})$ and show the existence of a homomorphism $h^\mathrm{univ}$ from $\reify{\unimod}\cup \db$ into $\inst$. 
    Let $\inst' = \fcn{dereify}(\inst)$, that is, $\inst'$ is obtained from $\inst$ in the same way $\db'$ was obtained from $\db$. 
    
    We next show Claim~${\dag}$: $\inst'$ satisfies $(\db', \ruleset)$. Because $\db\subseteq \inst$, we immediately obtain $\db'\subseteq \inst'$. It remains to show that $\ruleset$ is satisfied in $\inst'$. For this suppose that $\rho \in\ruleset$ of the form \[ \forall\vx\vy \phi(\vx, \vy) \rightarrow \exists\vz\psi(\vy, \vz) \] 
    and $h$ is a homomorphism mapping $\body(\rho)$ to $\inst'$. Similar to before, $v{}$ or $v{}_{i}$ will signify variables from $\vx\vy$. Furthermore, let 
    \[ \reify{\rho} = \forall\vx\vu\vy\,\reify{\phi}(\vx, \vu, \vy) \rightarrow \exists\vz\vu'\,\reify{\psi}(\vy, \vz, \vu'), \] 
    where we explicitly mention the nulls introduced by the reification of $\rho$, namely $\vu$ and $\vu'$. 
    For each atom of $\body(\rho)$, there are two possibilities: Either the atom is over $\Sigma_{\le 2}$, or it is over $\Sigma_{\ge 3}$. 
    $h$ (and any extension of it) already maps atoms from the first case directly into $\inst$. 
    For the second case, let $\alpha$ be such an atom from $\body(\rho)$. 
    We know by the definition of $\inst'$ that any atom of the form $\rpred(t_{1}, \ldots, t_{\arity{\rpred}})\in\inst'$ corresponds to $\{\rpred_{1}(t{}, t_{1}), \ldots, \rpred_{\arity{\rpred}}(t{}, t_{\arity{\rpred}})\}\subseteq \inst$ for some $t{}\in\adom{\inst}$. 
    In this case, we extend $h$ to a homomorphism $h^{\alpha}$ by mapping $u_{\alpha}$ to $t{}$. Since $u_{\alpha}$ appears only in the conjunction $\bigwedge_{1\le i \le \arity{\rpred}}\rpred_{i}(u_{\alpha}, v{}_{i})$ in $\body(\reify{\rho})$, $h^{\alpha}$ is a partial homomorphism from $\body(\reify{\rho})$ into $\inst$.
    In this way we extend $h$ successively to a homomorphism $h^{\ast}$ mapping all of $\body(\reify{\rho})$ into $\inst$. As $\inst$ satisfies $(\db, \reify{\ruleset})$, it also satisfies $\reify{\rho}$. Consequently, there is an extension $h^{\ast}_{\mathrm{head}}$ of $h^{\ast}$ mapping $\head(\reify{\rho})$ to $\inst$.
    Concerning the atoms appearing in $\head(\reify{\rho})$ we again have two cases: Either, an atom $\beta$ is over $\Sigma_{\le 2}$ or it is over $\Sigma_{2}^{\mathrm{rf}}$. In the former case, $h^{\ast}_{\mathrm{head}}$ maps $\beta$ into $\inst'$. 
    Note that in this case $\beta$ is an atom from $\head(\rho)$. 
    In the latter case, observe that $\beta$ is a conjunct of a conjunction of the form $\bigwedge_{1\le i \le \arity{\ppred}}\ppred_{i}(u_{\gamma}, z^{\ast}_{i})$ in $\head(\reify{\rho})$ for an atom $\gamma = \ppred(z^{\ast}_{1}, \ldots, z^{\ast}_{\arity{\ppred}})\in \head(\rho)$. 
    This conjunction is mapped via $h^{\ast}_{\mathrm{head}}$ into $\inst$ and by construction, $\gamma$ is mapped by the same homomorphism $h^{\ast}_{\mathrm{head}}$ into $\inst'$.
    It follows that $\head(\rho)$ is mapped by $h^{\ast}_{\mathrm{head}}$ to $\inst'$. 
    As $h$ was an arbitrary homomorphism mapping $\body(\rho)$ into $\inst'$, we have shown that $\inst'$ satisfies $\rho$. And since $\rho$ is an arbitrary rule from $\ruleset$, $\inst'$ satisfies $\ruleset$, establishing Claim~${\dag}$.

    Because of Claim~${\dag}$ and $\unimod$ being a universal model of $(\db',\ruleset)$, there exists a homomorphism $h'\colon \unimod \to \inst'$, which we extend to a homomorphism $h^\mathrm{univ}\colon \reify{\unimod}\to \inst$ in the following way: Observe that for every atom using $\rpred\in\Sigma_{\ge 3}$ of the form 
    \[ \alpha = \rpred(t_{1}, \ldots, t_{\arity{\rpred}})\in \unimod \] 
    we have 
    \[ \rpred_{1}(u_{\alpha}, t_{1}), \ldots, \rpred_{\arity{\rpred}}(u_{\alpha}, t_{\arity{\rpred}})\in\reify{\unimod} \, . \] 
    Moreover, since $h'$ is a homomorphism, it holds that 
    \[ h'(\alpha) = \rpred(h'(t_{1}), \ldots, h'(t_{\arity{\rpred}}))\in \inst'\, , \] 
    hence there exists a $t{}\in\adom{\inst}$ such that 
    \[ \rpred_{1}(t{}, t_{1}), \ldots, \rpred_{\arity{\rpred}}(t{}, t_{\arity{\rpred}}) \in \inst \, . \] 
    Taking an arbitrary such $t{}$, we let $u_{\alpha}^{h'} = t{}$. Then, $h^\mathrm{univ}$ defined by 
    \[ h^\mathrm{univ}(x) = \begin{cases} h'(x) & \text{if } x\in\adom{\unimod} \\ u_{\alpha}^{h'} & \text{if } x = u_{\alpha} \text{ for some } \alpha \in\unimod \end{cases} \] 
    is a homomorphism.
    We conclude by arguing that $h^\mathrm{univ}$ also serves as a homomorphism from $\reify{\unimod}\cup \db$ into $\inst$. Note that every term in $\adom{\db}$ is already in $\db'$ by construction (albeit, the term is possibly isolated, only occurring in the context of $\top$) of which $\unimod$ is a model (and thus $h$ has the same domain and codomain as $h^\mathrm{univ}$). From this insight, together with the fact that $\db\subseteq \inst$ (due to $\inst$ being a model of $(\db,\reify{\ruleset})$), and the fact that the terms from $\db'$ are constants, the claimed correspondence holds.

    In summary, $\reify{\unimod}\cup\db$ is a universal model of $(\db, \reify{\ruleset})$ with finite treewidth. As $\db$ was assumed to be an arbitrary database over $\Sigma^{\mathrm{rf}}$, we obtain that $\reify{\ruleset}$ is $\bts$, and by \cref{thm:tree-width-implies-clique-width}, this implies that $\reify{\ruleset}$ is also $\fcs$, completing the proof.
\end{proof}

\subsection{\large $(\database,\ruleset) \models (\mathfrak{q},\Xi)$ \iffi $(\fcn{reify}(\database), \fcn{reify}(\ruleset)) \models \fcn{reify}((\mathfrak{q},\Xi))$}

\newcommand{\damsoq}{(\mathfrak{q},\Xi)}
\newcommand{\rdamsoq}{(\reify{\mathfrak{q}},\reify{\Xi})}
\newcommand{\daq}{\mathfrak{q}}
\newcommand{\rsdaq}{\rs_{\daq}}
\newcommand{\rdaq}{\reify{\mathfrak{q}}}
\newcommand{\msoq}{\Xi}
\newcommand{\msoqv}{\msoq(\vX, \vx)}
\newcommand{\rmsoq}{\reify{\Xi}}
\newcommand{\msoqq}{\Xi(\vX, \vx)}
\newcommand{\rmsoqq}{\reify{\Xi(\vX, \vx)}}
\newcommand{\kbs}{\db, \rs}
\newcommand{\kbr}{\reify{\db}, \reify{\rs}}
\newcommand{\ch}{\sat{\db, \rs}}
\newcommand{\reipm}[1]{#1^{+}}
\newcommand{\instp}{\reipm{\inst}}

Let us begin by stating the following observation and fact:

\begin{fact}\label{fact:reify-iii-1}
For a given datalog query $\daq$, a database $\db$, and a rule set $\rs$, the following are equivalent:
\begin{itemize}
    \item $(\kbs) \models \daq$.
    \item $\sat{\kbs} \models \daq$.
    \item $\sat{\sat{\kbs}, \rsdaq} \models \goal$.
\end{itemize}
\end{fact}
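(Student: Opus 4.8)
The plan is to establish the three-way equivalence by deriving $(2)\Leftrightarrow(3)$ directly from the definition of datalog query satisfaction, and then closing the triangle $(1)\Rightarrow(2)\Rightarrow(1)$ using the universality of the Skolem chase $\sat{\db,\rs}$.

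First I would dispatch $(2)\Leftrightarrow(3)$: this is nothing but the definition of datalog query satisfaction instantiated at the particular instance $\sat{\db,\rs}$, since by that definition $\sat{\db,\rs}\models\daq$ holds exactly when $\goal\in\sat{\sat{\db,\rs},\rsdaq}$, which is precisely statement $(3)$. Next, $(1)\Rightarrow(2)$ is immediate: $\sat{\db,\rs}$ is in particular a model of $(\db,\rs)$, and $(\db,\rs)\models\daq$ means --- by the usual definition of query entailment via satisfaction --- that $\daq$ is satisfied in \emph{every} model of $(\db,\rs)$, hence in $\sat{\db,\rs}$.

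For $(2)\Rightarrow(1)$ I would take an arbitrary model $\inst$ of $(\db,\rs)$; since $\sat{\db,\rs}$ is a \emph{universal} model, there is a homomorphism $h\colon\sat{\db,\rs}\to\inst$, and because satisfaction of datalog queries is preserved under homomorphisms, $\sat{\db,\rs}\models\daq$ transfers to $\inst\models\daq$. As $\inst$ was an arbitrary model of $(\db,\rs)$, this yields $(\db,\rs)\models\daq$, completing the cycle.

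The only step that is not purely formal is the preservation of datalog query satisfaction under homomorphisms invoked in $(2)\Rightarrow(1)$; it is the same fact already relied upon in the proof of \cref{thm:decidablequerying}, so I would cite it there, but a self-contained argument is obtained by a routine induction over the breadth-first Skolem-chase levels of $\rsdaq$: a homomorphism $h$ lifts to a homomorphism $\bar h\colon\sat{\sat{\db,\rs},\rsdaq}\to\sat{\inst,\rsdaq}$, because every datalog trigger in the source instance is, via $\bar h$, a datalog trigger in the target whose application there produces the $\bar h$-images of the freshly derived atoms (this is clean because $\rsdaq$ only reads $\sig_{\mathrm{EDB}}\subseteq\sig$ predicates from the input and never writes them, so the construction interacts with $h$ without friction). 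Since $\goal$ is nullary, $\goal\in\sat{\sat{\db,\rs},\rsdaq}$ then forces $\goal\in\sat{\inst,\rsdaq}$, i.e.\ $\inst\models\daq$. I do not anticipate any real difficulty beyond keeping this bookkeeping straight.
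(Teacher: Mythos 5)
Your proof is correct and matches what the paper intends: the paper states this as a \emph{Fact} without proof, treating it as immediate from the definition of datalog query satisfaction ($\instance\models\mathfrak{q}$ iff $\mathtt{Goal}\in\sat{\instance,\ruleset_\mathfrak{q}}$) combined with the universality of the Skolem chase and the preservation of datalog query satisfaction under homomorphisms --- exactly the two ingredients the paper itself invokes in the proof of \cref{thm:decidablequerying}. Your spelled-out induction for homomorphism preservation is sound (datalog rules introduce no fresh nulls, so triggers and their consequences transfer along the lifted homomorphism), so nothing further is needed.
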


\begin{fact}\label{fact:reify-iii-2}
For a given datalog query $\daq$, a database $\db$, and a rule set $\rs$, the following are equivalent:
\begin{itemize}
    \item  $\sat{\kbr} \models \rdaq.$ 
    \item  $\sat{\reify\db, \reify\rs \cup \reify{\rs_{\daq}}} \models \goal.$
\end{itemize}
\end{fact}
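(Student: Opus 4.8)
The plan is to read \cref{fact:reify-iii-2} off from \cref{fact:reify-iii-1} together with the way $\fcn{reify}$ interacts with the datalog-query machinery. First I would record the purely definitional observation that $\reify{\rsdaq} = \rs_{\rdaq}$: since $\fcn{reify}$ acts atom by atom and leaves $\sig_{\mathrm{IDB}}$-atoms (in particular $\goal$) untouched, $\rdaq$ is again a well-formed datalog query, this time over the binary signature $\sig^{\mathrm{rf}}$, with the \emph{same} $\mathrm{IDB}$-signature $\sig_{\mathrm{IDB}}$; note that $\sig_{\mathrm{IDB}}$ is disjoint from $\sig$ and hence from $\sig^{\mathrm{rf}}$, and that the reified $\sig_{\mathrm{EDB}}$-predicates still do not occur in the heads of $\reify{\rsdaq}$. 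Thus $\reify{\rsdaq}$ is exactly the defining rule set of the datalog query $\rdaq$.

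With this in hand I would argue the chain
$$
\sat{\kbr}\models\rdaq
\ \iff\ (\reify\db,\reify\rs)\models\rdaq
\ \iff\ (\reify\db,\reify\rs\cup\reify{\rsdaq})\models\goal
\ \iff\ \goal\in\sat{\reify\db,\reify\rs\cup\reify{\rsdaq}}.
$$
Here the first equivalence is the equivalence of the first two items of \cref{fact:reify-iii-1}, instantiated with database $\reify\db$, rule set $\reify\rs$, and datalog query $\rdaq$; the second is the reduction of datalog-query entailment to first-order $\goal$-entailment noted right after the definition of datalog queries (applied to the same instance, using $\rs_{\rdaq}=\reify{\rsdaq}$); and the third holds because $\goal$ is nullary, so its entailment from a knowledge base coincides with its membership in any universal model, such as the Skolem chase. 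Reading off the two ends of this chain yields the claimed equivalence of $\sat{\kbr}\models\rdaq$ and $\sat{\reify\db,\reify\rs\cup\reify{\rsdaq}}\models\goal$.

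The only step carrying genuine content is the middle one -- equivalently, the ``stratification'' identity $\sat{\reify\db,\reify\rs\cup\reify{\rsdaq}} = \sat{\sat{\reify\db,\reify\rs},\reify{\rsdaq}}$ -- and that is the step I expect to be the (essentially only) obstacle, though it is routine. Its inclusion $\supseteq$ follows from monotonicity of the chase together with closure of the right-hand side under $\reify{\rsdaq}$. For $\subseteq$ one inducts on the breadth-first chase of $\reify\rs\cup\reify{\rsdaq}$ over $\reify\db$, using that $\reify{\rsdaq}$ is datalog (hence creates no nulls) and that its rule heads mention only $\sig_{\mathrm{IDB}}$-predicates, which occur nowhere in $\reify\rs$: therefore every $\sig^{\mathrm{rf}}$-atom ever produced is produced by $\reify\rs$ from $\sig^{\mathrm{rf}}$-atoms alone and so already lies in $\sat{\reify\db,\reify\rs}$, after which a second induction places every derived $\sig_{\mathrm{IDB}}$-atom into $\sat{\sat{\reify\db,\reify\rs},\reify{\rsdaq}}$. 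Since this identity is precisely the standard ``fresh-IDB stratum'' property of datalog (already implicitly relied on for the reduction cited above), combining it with \cref{fact:reify-iii-1} completes the argument.
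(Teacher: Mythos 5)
The paper records this statement as an unproven \emph{Fact}, so there is no authorial argument to compare against; your derivation is correct and supplies precisely the justification left implicit, namely unwinding the definition of datalog-query satisfaction via \cref{fact:reify-iii-1} and the reduction to $\goal$-entailment, and then establishing the ``fresh-IDB stratum'' identity $\sat{\reify{\db},\reify{\rs}\cup\reify{\rsdaq}} = \sat{\sat{\reify{\db},\reify{\rs}},\reify{\rsdaq}}$. Your proof of that identity is sound: the definitional observation that $\reify{\rsdaq}$ is exactly the rule set of the datalog query $\rdaq$, together with the disjointness of $\sig_{\mathrm{IDB}}$ from $\Sigma^{\mathrm{rf}}$ and the fact that $\reify{\rsdaq}$ is datalog with only $\sig_{\mathrm{IDB}}$-atoms in heads, is exactly what makes the two-stage induction go through (and the deterministic Skolem naming of nulls ensures the $\Sigma^{\mathrm{rf}}$-atoms of the combined chase literally coincide with those of $\sat{\reify{\db},\reify{\rs}}$).
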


\newcommand{\siff}[1]{{\stackrel{\textrm{\cref{#1}}\ \ }{\hfill\iff}} &}
\newcommand{\ssiff}[1]{{\stackrel{#1\ \ }{\hfill\iff}} &}

\begin{lemma}\label{lem:reify-iii-lem-1}
For a given datalog query $\daq$, a database $\db$, and a rule set $\rs$, the following holds:
$$(\kbs) \models \daq \iff (\kbr) \models \rdaq$$
\end{lemma}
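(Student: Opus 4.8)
The plan is to reduce the statement, using the two facts just recorded together with Corollary~\ref{cor:reify-iv}, to a single ``local'' claim about applying a datalog query to a fixed instance. First I would observe that $\fcn{reify}$ commutes with datalog-query satisfaction up to the chase: by \cref{fact:reify-iii-1}, $(\kbs)\models\daq$ is equivalent to $\sat{\db,\rs}\models\daq$, and (applying the same fact to the reified data and query) $(\kbr)\models\rdaq$ is equivalent to $\sat{\reify\db,\reify\rs}\models\rdaq$. By \cref{cor:reify-iv} we have $\sat{\reify\db,\reify\rs}\equiv\reify{\sat{\db,\rs}}$, and since satisfaction of datalog queries is preserved under homomorphisms (as recorded in the proof of \cref{thm:decidablequerying}) it is in particular invariant under homomorphic equivalence; hence $\sat{\reify\db,\reify\rs}\models\rdaq$ is equivalent to $\reify{\sat{\db,\rs}}\models\rdaq$. (Equivalently, one may route through the combined chase $\sat{\reify\db,\reify\rs\cup\reify\rsdaq}$ via \cref{fact:reify-iii-2}.) So everything boils down to the claim $(\star)$: for every instance $\inst$ over $\Sigma$, $\inst\models\daq$ iff $\reify\inst\models\rdaq$; applying $(\star)$ with $\inst=\sat{\db,\rs}$ then closes the chain of equivalences.

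For $(\star)$, unfold both sides: $\inst\models\daq$ means $\goal\in\sat{\inst,\rsdaq}$ and $\reify\inst\models\rdaq$ means $\goal\in\sat{\reify\inst,\reify\rsdaq}$, where $\rs_{\reify{\daq}}=\reify\rsdaq$. Since $\rsdaq$ is datalog it creates no new terms, and since $\Sigma_\mathrm{EDB}$-atoms never occur in its rule heads all derived atoms are $\Sigma_\mathrm{IDB}$-atoms; so $\ksat{i}{\inst,\rsdaq}=\inst\cup D_i$ and $\ksat{i}{\reify\inst,\reify\rsdaq}=\reify\inst\cup D_i'$ for suitable sets $D_i,D_i'$ of $\Sigma_\mathrm{IDB}$-atoms over $\adom\inst$. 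I would prove $D_i=D_i'$ for all $i$ by induction. The base case is immediate, as neither $\inst$ nor $\reify\inst$ contains $\Sigma_\mathrm{IDB}$-atoms. For the step, fix a rule $\rho=\phi(\vx,\vy)\to\ppred(\vy)$ of $\rsdaq$, so that $\reify\rho=\reify\phi\to\ppred(\vy)$, with $\reify\phi$ additionally using the fresh reification variables, which occur only in the body. Two observations do the work: (i) $\reify{(\inst\cup D_i)}=\reify\inst\cup D_i$, since $\fcn{reify}$ acts atom-by-atom and leaves $\Sigma_\mathrm{IDB}$-atoms untouched; and (ii) by \cref{obs:reification-bijection}, applied with the body of $\rho$ (viewed as an instance) in place of the source, $h\mapsto\reify h$ is a bijection from the homomorphisms $\phi\to\inst\cup D_i$ onto those $\reify\phi\to\reify\inst\cup D_i=\reify{(\inst\cup D_i)}$, and by \cref{def:reify-hom-to} each such pair satisfies $\reify h(\vy)=h(\vy)$ (the frontier variables are not reification nulls). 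Hence the $\ppred$-atoms added in round $i{+}1$ are literally the same on both sides, so $D_{i+1}=D_{i+1}'$; taking the union over $i$ yields $(\star)$.

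The step I expect to require the most care is precisely observation (ii) and its interaction with the asymmetry of $\fcn{reify}$: on instances and rule sets $\fcn{reify}$ reifies \emph{all} higher-arity atoms, whereas on a datalog query it reifies only $\Sigma_\mathrm{EDB}$-atoms so that heads stay existential-free. This asymmetry is harmless here, because $\rsdaq$ reads only $\Sigma_\mathrm{EDB}$- and $\Sigma_\mathrm{IDB}$-atoms and writes only $\Sigma_\mathrm{IDB}$-atoms, so the reified non-$\Sigma_\mathrm{EDB}$ atoms present in $\reify\inst$ never participate in a derivation, exactly as the corresponding non-$\Sigma_\mathrm{EDB}$ atoms of $\inst$ do not. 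I would also make explicit that in the matches used in (ii) a reification null of $\reify\phi$ can only be mapped to a reification null of $\reify\inst$ (these being the sole terms occurring in the reified binary atoms), which is what makes the bijection of \cref{obs:reification-bijection} restrict to exactly the homomorphisms we need and keeps the frontier images among the original terms.
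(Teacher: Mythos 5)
Your outer reduction coincides with the paper's: both use \cref{fact:reify-iii-1} on each side and \cref{cor:reify-iv} together with homomorphism-invariance of datalog queries to bring everything down to comparing $\sat{\db,\rs}\models\mathfrak{q}$ with $\fcn{reify}(\sat{\db,\rs})\models\fcn{reify}(\mathfrak{q})$. The paper finishes by applying \cref{cor:reify-iv} a second time, now to the query's rule set $\ruleset_{\mathfrak{q}}$, and then only uses that the nullary atom $\goal$ is preserved under the resulting homomorphic \emph{equivalence}; you instead try to prove the stronger round-by-round \emph{equality} $D_i=D_i'$ of the derived IDB atoms.

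That equality is where your argument breaks. The reified instance has a strictly larger active domain: every reification null $u_\alpha$ is a term of $\fcn{reify}(\inst)$, and the paper stipulates that the universal predicate $\top$ holds of every term (this is exactly why it introduces $\inst^{+}$ and \cref{fact:reify-iii-3}). Consequently, a body variable whose only body occurrences are in $\top$-atoms (or in IDB atoms) acquires additional matches on the $u_\alpha$'s. Concretely, for $\ruleset_{\mathfrak{q}}\supseteq\{\top(x)\to\pred{A}(x)\}$ and $\inst=\{\rpred(a,b,c)\}$ with $\rpred$ ternary, one gets $D_1=\{\pred{A}(a),\pred{A}(b),\pred{A}(c)\}$ but $D_1'\ni\pred{A}(u_{\rpred(a,b,c)})$, so $D_1\neq D_1'$ and $h\mapsto\fcn{reify}(h)$ is not surjective onto the homomorphisms on the reified side (the same corner case shows that \cref{obs:reification-bijection} cannot be invoked verbatim here). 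The lemma itself, and your claim $(\star)$, remain true: the surplus atoms of $D_i'$ are homomorphically redundant, since collapsing each reification null to an arbitrary original term sends every derivation over $\fcn{reify}(\inst)$ to one over $\inst$, and $\goal$ is nullary. To repair your induction you must weaken the invariant to something like ``$D_i\subseteq D_i'$, and every atom of $D_i'$ becomes an atom of $D_i$ under any substitution of original terms for reification nulls'' --- or simply argue at the level of homomorphic equivalence, as the paper does. A second, smaller imprecision: $\fcn{reify}(\inst\cup D_i)=\fcn{reify}(\inst)\cup D_i$ only holds under the query-level convention that IDB atoms are left untouched; the instance-level $\fcn{reify}$ would reify IDB atoms of arity ${\geq}3$, so the two conventions must be kept apart when you cite \cref{obs:reification-bijection}.
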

\begin{proof}
\begin{align*}
                        & \hspace{-8ex} (\kbs) \models \daq \\
                        \iff & \ch \models \daq \\
\siff{fact:reify-iii-1}    \sat{\ch, \rsdaq} \models \goal\\
\ssiff{\goal \in \sig_0}  \reify{\sat{\ch, \rsdaq}} \models \goal\\
\siff{cor:reify-iv}        \sat{\reify{\ch}, \reify{\rsdaq}} \models \goal\\
\siff{cor:reify-iv}        \sat{\sat{\kbr}, \reify{\rsdaq}} \models \goal\\
\siff{fact:reify-iii-2}    \sat{\kbr} \models \rdaq\\
\siff{fact:reify-iii-1}    (\kbr) \models \rdaq
\end{align*}

\vspace{-4ex}
\end{proof}

\begin{definition}\label{def:reify-iii-reipm}
For an instance $\inst$, let us define the following: $$\reipm{\inst} = \set{\top(u_\alpha) \;|\; \alpha \in \inst} \cup \inst.$$

\vspace{-4ex}~\defend
\end{definition}

\begin{fact}\label{fact:reify-iii-3}
For every instance $\inst$, $\adom{\reipm{\inst}} = \adom{\reify{\inst}}$.
\end{fact}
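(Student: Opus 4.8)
The plan is to unfold both definitions and check that $\adom{\reipm{\inst}}$ and $\adom{\reify{\inst}}$ are literally the same set, namely $\adom{\inst}$ together with the reification nulls that $\fcn{reify}$ attaches to the higher-arity atoms of $\inst$. To fix notation, write $\inst = \inst_{\leq 2} \uplus \inst_{\geq 3}$ for the partition of $\inst$ into its at-most-binary atoms and its atoms of arity ${\geq}3$, and let $U = \{\, u_\alpha \mid \alpha \in \inst_{\geq 3}\,\}$ be the set of fresh nulls introduced by reification (recall that $\fcn{reify}$ creates such a null for each higher-arity atom, and only for those).

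First I would compute $\adom{\reify{\inst}}$. By definition, $\reify{\inst}$ consists of every atom of $\inst_{\leq 2}$ unchanged together with, for each $\alpha = \rpred(t_1,\ldots,t_k) \in \inst_{\geq 3}$, the binary atoms $\rpred_i(u_\alpha, t_i)$ for $1 \leq i \leq k$. The inclusion $\adom{\reify{\inst}} \subseteq \adom{\inst} \cup U$ is then immediate, since reification introduces no terms beyond the members of $U$. For the converse inclusion, every $u_\alpha \in U$ occurs in $\reify{\inst}$ by construction, and every $t \in \adom{\inst}$ occurs in some atom $\alpha$ of $\inst$ whose image under $\fcn{reify}$ still mentions $t$ — identically if $\alpha \in \inst_{\leq 2}$, and as the second argument of some atom $\rpred_i(u_\alpha, t)$ if $\alpha \in \inst_{\geq 3}$. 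Hence $\adom{\reify{\inst}} = \adom{\inst} \cup U$.

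It then remains to observe, by \cref{def:reify-iii-reipm}, that $\reipm{\inst} = \inst \cup \{\, \top(u_\alpha) \mid \alpha \in \inst_{\geq 3} \,\}$, so that $\adom{\reipm{\inst}} = \adom{\inst} \cup U$ as well, yielding the claimed equality $\adom{\reipm{\inst}} = \adom{\reify{\inst}}$. I do not anticipate any real difficulty: this is a bookkeeping identity, and the only point that needs a moment's care is that the set $\{\, \top(u_\alpha) \mid \alpha \in \inst \,\}$ in the definition of $\reipm{\inst}$ should be read as ranging over exactly those atoms $\alpha$ for which $\fcn{reify}$ actually creates a reification null — i.e., over $\inst_{\geq 3}$ — which is precisely the index set of the fresh nulls appearing in $\reify{\inst}$.
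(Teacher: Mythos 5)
Your proof is correct; the paper states this as a Fact without proof, and your definition-unfolding --- both active domains equal $\adom{\inst}$ together with the reification nulls $u_\alpha$ of the higher-arity atoms --- is exactly the intended justification. Your caveat about reading $\set{\top(u_\alpha) \mid \alpha \in \inst}$ as ranging only over the atoms of arity ${\geq}3$ (the only ones for which $u_\alpha$ is even defined) is the right call: read literally over all atoms, the stated equality would fail, and the later use of this fact in \cref{lem:reify-iii-mso-instp} depends on the restricted reading.
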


\begin{observation}\label{obs:reify-iii-hom-closed}
For every DaMSOQ $\damsoq$ and every instance $\inst$ we have:
$$\inst \models \damsoq \iff \reipm{\inst} \models \damsoq$$
\end{observation}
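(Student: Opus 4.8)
The plan is to reduce the statement to the datalog constituent $\mathfrak{q}$ of the DaMSOQ. Reasoning about the MSO constituent $\Xi$ directly would not work: attaching fresh, isolated $\top$-only elements to an instance can change its MSO theory (MSO can, for instance, count such elements). What rescues us is exactly the defining feature of a DaMSOQ, namely that $\Xi$ is equivalent to the datalog query $\mathfrak{q}$, and that satisfaction of datalog queries is preserved under homomorphisms (a fact already invoked in the proof of \cref{thm:decidablequerying}). Since $\inst \models \damsoq$ \iffi $\inst \models \mathfrak{q}$, and likewise with $\reipm{\inst}$ in place of $\inst$, it suffices to prove that $\inst \models \mathfrak{q}$ \iffi $\reipm{\inst} \models \mathfrak{q}$.

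The key step is to observe that $\inst$ and $\reipm{\inst}$ are homomorphically equivalent. In one direction, the inclusion $\inst \subseteq \reipm{\inst}$ is itself a homomorphism (the identity on $\adom{\inst}$). For the other direction I would construct a homomorphism $h\colon \reipm{\inst} \to \inst$ by folding the added elements onto an existing term. We may assume $\adom{\inst} \neq \emptyset$ — otherwise $\reipm{\inst} = \inst$, since the reification terms $u_\alpha$ arise only from higher-arity atoms of $\inst$, each of which already contributes a term to $\adom{\inst}$. So fix some $t_0 \in \adom{\inst}$ and let $h$ act as the identity on $\adom{\inst}$ and send every $u_\alpha$ to $t_0$. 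To see that $h$ is a homomorphism: every atom of $\inst$ uses only terms of $\adom{\inst}$ and is mapped to itself, hence stays in $\inst$; every added atom $\top(u_\alpha)$ is mapped to $\top(t_0)$, which holds in $\inst$ by the standing assumption that $\top$ is satisfied by every term; and constants are fixed because $h$ is the identity on the original domain.

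From $\inst \equiv \reipm{\inst}$ together with the homomorphism-invariance of datalog query satisfaction, we get $\inst \models \mathfrak{q}$ \iffi $\reipm{\inst} \models \mathfrak{q}$, and therefore $\inst \models \damsoq$ \iffi $\reipm{\inst} \models \damsoq$, which is the claim.

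I do not anticipate a serious obstacle. The one point that requires care is conceptual rather than technical: one must route the argument through the datalog component of the query, because the MSO component alone is not homomorphism-invariant. Beyond that, only the routine verification that the folding map $h$ respects all atoms — which hinges on the built-in universality of $\top$ — needs to be carried out.
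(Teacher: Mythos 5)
Your proof is correct and follows essentially the same route as the paper's: both rest on DaMSOQ satisfaction being homomorphism-invariant (via the datalog constituent) together with the homomorphic equivalence of $\inst$ and $\reipm{\inst}$, witnessed by the inclusion in one direction and a folding map collapsing the fresh $\top$-only elements in the other. The only blemish is your side remark that the $u_\alpha$ "arise only from higher-arity atoms" — by the definition of $\reipm{\inst}$ a fresh $u_\alpha$ is added for \emph{every} atom $\alpha\in\inst$ — but this does not affect the argument, since $\adom{\inst}=\emptyset$ still forces $\reipm{\inst}=\inst$.
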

\begin{proof}
Recall that every DaMSOQ is closed under homomorphisms.\\
$(\Leftarrow)$ Note that there exists a homomorphism from $\reipm{\inst}$ to $\inst$.\\
$(\Rightarrow)$ Note that $\inst \subset \reipm{\inst}$.
\end{proof}

\begin{lemma}\label{lem:reify-iii-mso-instp}
If $\msoq$ be an MSO formula and $\inst$ be an instance, then
$$\reipm{\inst} \models \msoq \iff \reify{\inst} \models \reify{\msoq}.$$
\end{lemma}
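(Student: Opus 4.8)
The plan is to recognize the reification map as an MSO \emph{interpretation} of the $\Sigma$-structure $\reipm{\inst}$ inside the $\Sigma^{\mathrm{rf}}$-structure $\reify{\inst}$, and then to transfer satisfaction along it by the standard ``interpretation lemma'' argument, i.e.\ a routine induction on formula structure. The interpretation is the \emph{identity} on domains: by \cref{fact:reify-iii-3}, $\adom{\reipm{\inst}} = \adom{\reify{\inst}}$, so there is no domain restriction to track, and as a consequence individual- and set-variable quantifiers range over exactly the same universe on both sides.

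First I would isolate the atomic correspondence. For each $\rpred \in \Sigma$ let $\delta_\rpred(\vx) := \rpred(\vx)$ when $\arity{\rpred} \leq 2$, and $\delta_\rpred(x_1,\dots,x_k) := \exists u\, \bigwedge_{1 \leq i \leq k} \rpred_i(u, x_i)$ when $k = \arity{\rpred} \geq 3$; these are precisely the formulas one obtains by reifying a single atom. The claim to establish is: for every tuple $\vt$ of elements of the common domain, $\rpred(\vt) \in \reipm{\inst}$ \iffi $\reify{\inst} \models \delta_\rpred(\vt)$. For $\arity{\rpred} \leq 2$ this is immediate, since $\reify{\cdot}$ leaves such atoms untouched while $\reipm{\cdot}$ only adds $\top$-atoms (and, by the convention that $\top$ holds of every term, the $\top$-relation is the full domain in both structures). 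For $\arity{\rpred} \geq 3$ the forward direction is witnessed by $u := u_{\rpred(\vt)}$; the backward direction is the one point that needs care: if $\rpred_i(u, t_i) \in \reify{\inst}$ for all $i$ with a single common $u$, then the construction of $\reify{\inst}$ forces $u = u_\beta$ for some atom $\beta \in \inst$ whose $i$-th argument is $t_i$, for each $i$, and since the same fresh null $u_\beta$ witnesses all coordinates while distinct source atoms receive distinct fresh nulls, $\beta$ is exactly $\rpred(t_1,\dots,t_k)$; hence $\rpred(\vt) \in \inst \subseteq \reipm{\inst}$.

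With the atomic case in hand I would prove the statement — phrased more generally for MSO formulas with free individual and set variables, under an arbitrary assignment into the common domain — by induction on the structure of $\msoq$. The atomic step is the correspondence above (with $x = y$ and set-membership atoms handled trivially, as the domains agree), negation and conjunction are immediate, and for $\exists x$ and $\exists X$ the witnessing element, or witnessing set, transfers verbatim because the two universes coincide. Since reification on formulas commutes with all connectives and quantifiers and replaces every higher-arity atom exactly by the corresponding $\delta_\rpred$ (with the bound variable $u$ chosen fresh to avoid clashes), the formula produced by running this induction is precisely $\reify{\msoq}$. Specializing to a sentence gives $\reipm{\inst} \models \msoq$ \iffi $\reify{\inst} \models \reify{\msoq}$.

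The main obstacle — really the only step beyond bookkeeping — is the backward direction of the atomic correspondence for higher-arity predicates: one must exclude a ``spurious'' match of $\exists u\, \bigwedge_i \rpred_i(u, x_i)$ in $\reify{\inst}$ assembled from coordinate facts originating in different source atoms, and this is exactly where the freshness and atom-determining nature of the reification nulls $u_\alpha$ is used.
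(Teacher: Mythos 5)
Your proposal is correct and follows essentially the same route as the paper's proof: a structural induction on MSO formulas under assignments ranging over the common domain $\adom{\reipm{\inst}} = \adom{\reify{\inst}}$ (\cref{fact:reify-iii-3}), with the only substantive work concentrated in the atomic case for higher-arity predicates. If anything, you treat the backward direction of that atomic case (excluding spurious matches of $\exists u\,\bigwedge_i \rpred_i(u,x_i)$ assembled from different source atoms, via the freshness and injectivity of the nulls $u_\alpha$) more explicitly than the paper, which asserts that equivalence directly ``by the definition of $\fcn{reify}$.''
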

\begin{proof}

We prove the lemma by induction on the structure of $\msoq=\msoqv$, with the following thesis: For every 
assignment $\mu$ that maps variables from $\vx$ to terms from $\adom{\reipm{\inst}}$ and set variables from $\VX$ to subsets of $\adom{\reipm{\inst}}$ holds
$$\reipm{\inst}, \mu \models \msoqv \iff \reify{\inst}, \mu \models \reify{\msoqv}.$$

For the base case, let us assume that $\msoqv$ is atomic. Then, there are three cases to consider:
\begin{itemize}
    \item $\msoqv = \pred{R}(\vx)$ with $\arity{\pred{R}} > 2$. Then, letting $\vx = x_1, \ldots, x_n$, we have $\reify{\msoqv}=\reify{\pred{R}(\vx)} = \exists{y} \bigwedge_{1 \leq i \leq n} \pred{R}_i(y, x_i)$. 
    Now, we note that $\alpha = \pred{R}(\mu(x_1), \ldots, \mu(x_n)) \in \reipm{\inst}$ holds by the definition of $\reify{\cdot}$ \iffi $\reify{\alpha} = \bigwedge_{1 \leq i \leq n} \pred{R}_i(u_\alpha, \mu(x_i)) \in \reify{\inst}$.
    \item $\msoqv = \pred{R}(\vx)$, where $\arity{\pred{R}} \leq 2$. This case trivially follows since $\reify{\pred{R}(\vx)} = \pred{R}(\vx)$ and $\reify{\inst}$ coincides with $\reipm{\inst}$ on atoms of low arity.
    \item $\msoqv = X(x)$, where $X$ is a second-order variable. This case follows as $\reify{X(x)} = X(x)$.
\end{itemize}

The cases for the Boolean connectives are trivial. Therefore, we prove one of the quantifier cases and note that all other cases are shown in a similar fashion.
Assume $\msoqv = \forall{y}\; \msoq'(\vX, \vx, y)$. 
            By induction hypothesis, we know that for every 
            $t' \in \adom{\reipm{\inst}}$, 
            we have:
            \begin{align*}
            & \instp,\mu\cup\{y \mapsto t'\}\models \msoq'(\vX, \vx, y) \iff\\[-0.8ex] 
            & \ \ \ \ \reify{\inst},\mu\cup\{y \mapsto t'\} \models \reify{\msoq'(\vX, \vx, y)}.
            \end{align*}
            From this and \cref{fact:reify-iii-3}  we may conclude that the thesis holds for $\msoqv = \forall{y}\; \msoq'(\vX, \vx, y)$.
%
\end{proof}

Finally, combining \cref{cor:reify-iv}, \cref{lem:reify-iii-lem-1}, \cref{obs:reify-iii-hom-closed}, and \cref{lem:reify-iii-mso-instp}, we obtain the following corollary:

\begin{corollary}\label{cor:reify-iii}
For every DaMSOQ $\damsoq$, every database $\db$, and every rule set $\rs$, we have:
$$(\kbs) \models \damsoq \!\iff\! (\kbr) \models \rdamsoq.$$
\end{corollary}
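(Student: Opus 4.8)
The plan is to push both sides of the claimed equivalence onto the Skolem chase, which is a universal model, and then stitch together the four auxiliary results. Recall that a DaMSOQ is by definition equivalent to a datalog query and that satisfaction of datalog queries is preserved under homomorphisms; hence a DaMSOQ is entailed by a knowledge base iff it holds in the Skolem chase of that knowledge base. Using that $\rdaq=\reify{\daq}$ is again a datalog query, the pair $\rdamsoq$ is itself a DaMSOQ --- its two constituents agree on every $\Sigma^{\smash{\mathrm{rf}}}$-instance, since both only read off the ``de-reified'' content of the instance, on which $\daq$ and $\msoq$ coincide --- so the same reduction applies on the reified side. It therefore suffices to establish $\ch\models\damsoq$ iff $\sat{\kbr}\models\rdamsoq$.

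For this I would argue along two strands that meet at $\reify{\ch}$. On the MSO strand,
\[
\ch\models\damsoq \iff \reipm{\ch}\models\damsoq \iff \reipm{\ch}\models\msoq \iff \reify{\ch}\models\rmsoq,
\]
invoking \cref{obs:reify-iii-hom-closed} (using closure of $\damsoq$ under homomorphisms), then the definitional equivalence of the DaMSOQ's constituents, then \cref{lem:reify-iii-mso-instp}. On the datalog strand,
\[
\ch\models\damsoq \iff (\kbs)\models\daq \iff (\kbr)\models\rdaq \iff \sat{\kbr}\models\rdaq \iff \reify{\ch}\models\rdaq,
\]
using constituent-equivalence together with universality of $\ch$, then \cref{lem:reify-iii-lem-1}, then universality of $\sat{\kbr}$, then \cref{cor:reify-iv} (which supplies $\reify{\ch}\equiv\sat{\kbr}$, so ``$\models\rdaq$'' transports across this homomorphic equivalence because $\rdaq$ is closed under homomorphisms). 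Since $\rdaq$ and $\rmsoq$ are the two constituents of the DaMSOQ $\rdamsoq$, we get $\sat{\kbr}\models\rdamsoq$ iff $\sat{\kbr}\models\rdaq$, which by the datalog strand is iff $\ch\models\damsoq$; combined with the reduction to the chase, and with $(\kbr)\models\rdamsoq$ iff $\sat{\kbr}\models\rdamsoq$, this yields the corollary.

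The one genuinely delicate point is that $\rmsoq$ is a bare MSO sentence, and MSO is not preserved under homomorphic equivalence, so one must never transport ``$\models\rmsoq$'' directly along $\reify{\ch}\equiv\sat{\kbr}$. This is exactly why the MSO strand is routed through the auxiliary instance $\reipm{\ch}$ and \cref{lem:reify-iii-mso-instp} rather than through $\sat{\kbr}$, and why every passage across $\equiv$ happens only on the homomorphism-closed datalog query $\rdaq$. Everything else --- the reduction to the chase, the equivalence of a DaMSOQ's constituents, and the routine bookkeeping of reification nulls $u_\alpha$ hidden inside \cref{lem:reify-iii-lem-1} and \cref{lem:reify-iii-mso-instp} --- is straightforward given the cited lemmas, so I expect the only real care to be needed precisely at this MSO/datalog interface on the reified side.
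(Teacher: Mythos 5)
Your proposal is correct and follows essentially the same route as the paper's proof: it combines \cref{lem:reify-iii-lem-1} for the datalog constituent with the chain through $\reipm{\sat{\db,\rs}}$, \cref{obs:reify-iii-hom-closed}, \cref{lem:reify-iii-mso-instp}, and \cref{cor:reify-iv} for the MSO constituent. If anything, you are more careful than the paper at the final step, where the passage along the homomorphic equivalence $\reify{\sat{\db,\rs}}\equiv\sat{\kbr}$ must be routed through the homomorphism-closed datalog constituent $\rdaq$ (and through the equivalence of the constituents of $\rdamsoq$) rather than transported directly on the bare MSO sentence $\rmsoq$.
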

\begin{proof}
From \cref{lem:reify-iii-lem-1} we know that:
$$(\kbs) \models \daq \iff (\kbr) \models \rdaq.$$
Moreover, by \cref{obs:reify-iii-hom-closed} and \cref{lem:reify-iii-mso-instp}, we obtain:
\begin{align*}
    &\hspace{-8ex} (\kbs) \models \msoq\\        
\ssiff{\text{$\msoq$ is hom. closed}} \sat{\db, \rs} \models \msoq\\
\siff{obs:reify-iii-hom-closed} \reipm{(\sat{\db, \rs})} \models \msoq\\
\siff{lem:reify-iii-mso-instp}  \reify{\sat{\db, \rs}} \models \rmsoq\\
\siff{cor:reify-iv}             (\kbr) \models \rmsoq
\end{align*}

\vspace{-4ex}
\end{proof}

\section{Proofs for \NoCaseChange{\cref{sec:binary}}}\label{appendix:binary}




\begin{observation}{\label{obs:sate-is-forest}}
$\se$ is a typed polyforest of bounded degree.
\end{observation}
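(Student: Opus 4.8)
The plan is to realise $\se$ as a $\sig$-typed directed graph --- binary atoms $\rpred(t,t')$ with $t\neq t'$ become $\rpred$-labelled edges, binary atoms $\rpred(t,t)$ become self-loops, and unary atoms $\rpred(t)$ become labels on $t$ --- and to show that the underlying undirected (simple) graph $G_\se$ with vertex set $\adom{\se}$ is a forest in which every vertex has bounded degree. Before anything else I would fix two easy preliminaries. First, since $\sig$ is binary every atom of $\se$ has arity ${\le}2$, so unary atoms contribute at most $|\sig|$ labels per vertex and no edges; only binary atoms matter for $G_\se$. Second, since every non-datalog rule of $\ruleset$ is single-headed, its head is a single atom mentioning at most two variables, and I may assume w.l.o.g.\ that every existentially quantified variable of such a rule actually occurs in its head atom (a dangling existential quantifier is vacuous, making the rule equivalent to a datalog one). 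As a consequence, whenever a non-datalog rule fires during the Skolem chase it produces an atom containing at least one of the freshly created nulls $z_{\rho,h(\vy)}$; in particular, every binary atom of $\se$ has at least one null endpoint.

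The heart of the argument is a ``birth-atom'' analysis of the breadth-first Skolem chase. I would assign birthday $0$ to every constant and to every null its birthday, i.e.\ the least $i$ with $n\in\adom{\ksat{i}{\database,\ruleset}}$; each null $n$ has an essentially unique \emph{birth trigger} $(\rho,h)$ (determined by $\rho$ and $h(\vy)$), whose head atom I call the \emph{birth atom} of $n$ (if the head has two distinct existential variables, the two freshly created nulls share one birth atom). Inspecting the possible shapes of a single head atom in a binary signature --- $\rpred(z,z')$, $\rpred(z,y)$, $\rpred(y,z)$, $\rpred(z,z)$, $\rpred(z)$ --- I would then prove: \emph{every atom of $\se$ containing a null $n$ is either the birth atom of $n$, or the birth atom of some strictly younger null $w$, the latter case forcing the atom to be binary with endpoints exactly $n$ and $w$.} The point making this work is that a frontier variable not occurring in the head leaves no trace in the produced atom, so $n$ can reappear in an $\se$-atom only as the ``host'' of some later null.

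Given this characterisation, acyclicity of $G_\se$ is immediate: suppose $G_\se$ contained a cycle through ${\ge}3$ distinct terms; since every edge of it is a binary $\se$-atom, hence has a null endpoint, some cycle vertex is a null, so we may pick a null $t_*$ of maximal birthday among the cycle vertices. Its two cycle-neighbours are distinct, and each is joined to $t_*$ by an $\se$-atom which, by the characterisation, must be a birth atom \emph{of $t_*$} --- it cannot be the birth atom of a strictly younger null, since that null would then be a cycle-neighbour of $t_*$, contradicting maximality. But the birth atom of $t_*$ links $t_*$ to at most one other term (its host, or its sibling, or to nothing in the self-loop and unary cases), so the two cycle-neighbours coincide, a contradiction. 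Hence $G_\se$ is a forest. The degree bound uses the same characterisation: a term $t$ lies in at most one birth atom of its own, and for each non-datalog rule $\rho$ whose head atom contains a frontier variable, $t$ is the host of at most one birth atom --- the null created by $\rho$ with frontier image $t$ is $z_{\rho,(t)}$, depending on $\rho$ and $t$ alone --- so $\deg_{G_\se}(t)\le |\ruleset|+1$. Together with the remark on unary atoms, this shows that $\se$ is a typed polyforest of bounded degree (and in particular $\tw{\se}\le 1$).

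I expect the acyclicity step to be the main obstacle: a null born early may be reused arbitrarily often in the bodies and frontiers of later triggers, so a naive count of its incident atoms yields nothing, and one genuinely needs the birthday ordering together with the ``non-head frontier variables vanish'' observation to see that all extra incident $\se$-atoms point to strictly younger hosts --- which is exactly what prevents any cycle from closing. A minor but necessary side point, disposed of by the normal-form assumption above, is to pin down the meaning of ``existential atom'' so that dangling existential quantifiers (which could otherwise destroy both the forest shape and the degree bound) are excluded.
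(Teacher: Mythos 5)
Your proof is correct and follows essentially the same route as the paper's: both rest on enumerating the possible head shapes of a single-headed, non-datalog, binary rule and observing that each Skolem-chase application hangs a fresh typed edge off an existing term (or creates an isolated fresh edge), with the degree bound coming from the fact that the Skolem null is determined by $\rho$ and $h(\vy)$ alone, so each rule contributes at most one hosted edge per term. Your birthday/birth-atom induction merely makes rigorous the acyclicity step that the paper asserts informally (``hence, we can see that $\se$ is a typed polyforest''), and your explicit exclusion of dangling existential quantifiers is a normalization the paper leaves implicit by listing only head forms in which the existential variable actually occurs.
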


\begin{proof}
Since $\ruleset$ is a set of single-headed rules over a binary signature, and since we only consider non-datalog rules (as $\se$ only contains atoms derived by the non-datalog rules from $\ruleset$), we know that each rule $\rho$ applied during the Skolem chase (see~\cref{sec:preliminaries}) is of the form $\phi(\vec{x},y) \rightarrow head(\rho)$ with $head(\rho) \in \{\exists z \rpred(y,z), \exists z \rpred(z,y),\\ \exists z \exists z' \rpred(z,z')\}$. That is to say, for each rule $\rho$, $\fr(\rho) = 1$ or $\fr(\rho) = 0$. Observe that if a rule of frontier $1$ is applied during the Skolem chase, then it gives rise to a single edge (typed with the predicate $R$ occurring in the head of the rule) in 
 $\se $ that can either be directed toward (if $\head(\rho) = \exists z \rpred(y,z)$) or away from (if $\head(\rho) = \exists \vec{z} \rpred(z,y)$) the frontier term it is attached to. If a rule of frontier $0$ is applied during the Skolem chase, then it creates a disconnected typed edge that corresponds to, and gives rise to, a new typed polytree in 
 $\se$. Hence, we can see that 
 $\se$ is a typed polyforest. Moreover, as we are using the Skolem chase and due to the fact that $\ruleset$ is finite, there can only be a finite number of typed edges added to each vertex in the typed polyforest, ensuring that it is of bounded degree.
\end{proof}

\newcommand{\sed}{\database \cup \se}
\begin{customlem}{\ref{cor:skeleton-plus-has-bounded-cw}}
$\ksat{1}{\sed, \rulesetnormcon}$ has finite treewidth.
\end{customlem}
\begin{proof}
Given an instance $\inst$ and two its terms $t, t'$ define the {\em distance} $\ldist{\inst}{t,t'}$ from $t$ to $t'$ in $\inst$ as the minimal length of any path from $t$ to $t'$ treating binary atoms of $\inst$ as undirected edges.

For a null $t$ of $\se$ and a natural number $n$ let {\em neighbourhood} $\neigh{n}{t}$ of $t$ of distance $n$ be a following set of terms $\set{t \in \termset(\sed) \;|\; \ldist{\sed}{t, t'} \leq n}$. Also let $\neigh{n}{\db} = \set{t \;|\; \exists{\const{c}\in\adom{\db}}\\ \ldist{\sed}{t,\const{c}} < n}$. Observe that every $t^n$ (and $\neigh{n}{\db}$) is finite as $\sed$ have bounded degree - a direct consequence of Skolem chase usage.

Let $T_n = (V_n, E_n)$ be a following tree decomposition:
\begin{itemize}
    \item $V_n = \set{\neigh{n}{t} \;|\; t \in \nullset(\se) } \cup \set{\neigh{n}{\db}}$.
    \item $E_n = \set{(\neigh{n}{t}, \neigh{n}{t'}) \;|\;  \wedge\exists{\rpred\in\sig}\; \se \models \rpred(t,t') \vee \rpred(t', t)} \;\cup\;\\ \set{(\neigh{n}{\db}, \neigh{n}{t}) \;|\;  \exists{\const{c} \in \conset(\db)}\; \exists{\rpred\in\sig} \se \models \rpred(t,\const{c}) \vee \rpred(\const{c}, t)}$.
\end{itemize}

Now we will argue that for every $n > 1$ tree $T_n$ is a tree decomposition of $\sed$ as:
\begin{itemize}
    \item $\bigcup_{v\in V_n}v$ is equal to $\termset(\sed)$.
    \item Each atom $\rpred(t,t')$ of $\sed$ is indeed contained in one of bags of $V_n$, namely $\neigh{n}{t}$, as $n > 1$.
    \item From the definition of neighbourhood, one can note that for each term $t$ in $\sed$, the subgraph of $T_n$ induced by the nodes $v$ of $V$ such that $t \in v$ is connected. 
\end{itemize}


Let $d \in \mathbb{N}$ be such that for every rule $\rho \in \rulesetnormcon$, with two frontier variables $x,y$, the following holds $d \geq \ldist{\body(\rho)}{x,y}$. Note that such a $d$ exists because $x$ and $y$ share the same connected component of $\body(\rho)$. Finally, observe that for every atom $\rpred(t,t')$ of $\ksat{1}{\sed, \rulesetnormcon} \setminus (\sed)$ we have that $t,t' \in \neigh{d}{t}$ or $t,t' \in \neigh{d}{\db}$. From this we conclude that $T_d$ is a tree decomposition of $\ksat{1}{\sed, \rulesetnormcon}$ of bounded width.

\end{proof}

\subsection*{\large Recoloring lemma}


\begin{customlem}{\ref{cor:recoloring}}
Let $\inst$ be an 
instance 
satisfying $\cw{\inst} = n$ and let $\coloring' : \adom{\inst} \to \cols'$ be an arbitrary coloring of~$\inst$. Then $\cw{\inst,\coloring'} \leq (n+1) \cdot |\cols'|.$
\end{customlem}

First, consider the following:

\begin{lemma}
\label{obs:cross-coloring}
 Let $\inst$ be a countable instance over some signature $\sig$. Moreover, let $\coloring : \adom{\inst} \to \cols$ and $\coloring' : \adom{\inst} \to \cols'$ be two colorings and let $\coloring {\times} \coloring'$ denote the coloring
 of $\inst$ defined by $\coloring {\times} \coloring'(t) = (\coloring(t), \coloring'(t))$. If $\cw{\inst, \coloring} \leq |\cols|$, then $\cw{\inst, \coloring {\times} \coloring'} \leq |\cols| \cdot |\cols'|$.
\end{lemma}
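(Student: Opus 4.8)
The plan is to proceed by a structural induction mirroring the recursive construction of a well-decorated tree, just as in the (commented-out) earlier attempt, but phrased for countable instances via the decorated-tree formalism. Concretely, suppose $\cw{\inst,\coloring} \leq |\cols|$ is witnessed by a $(\cols,\mathrm{Cnst},\Sigma)$-well-decorated tree $\mathcal{T}$ with $(\inst^\mathcal{T}, \coloring^\mathcal{T}) \cong (\inst,\coloring)$. I would build a new well-decorated tree $\mathcal{T}^\times$ over the color set $\cols \times \cols'$ such that $(\inst^{\mathcal{T}^\times}, \coloring^{\mathcal{T}^\times}) \cong (\inst, \coloring {\times} \coloring')$; since $|\cols \times \cols'| = |\cols|\cdot|\cols'|$, this gives the bound. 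The construction is local, node by node: a leaf decorator $\pred{*}_k(s)$ (resp. $\pred{c}_k(s)$) becomes $\pred{*}_{(k,\coloring'(e))}(s)$ (resp. $\pred{c}_{(k,\coloring'(\const{c}))}(s)$), where $e$ is the entity introduced at $s$; a $\pred{Recolor}_{k\to k'}(s)$ node is replaced by a finite chain of nodes carrying $\pred{Recolor}_{(k,\ell)\to(k',\ell)}$ for all $\ell \in \cols'$; an $\pred{Add}_{\rpred,\vk}(s)$ node is replaced by a finite chain of $\pred{Add}_{\rpred,\vk^\ell}$ nodes where $\vk^\ell$ ranges over all componentwise $\cols'$-refinements of $\vk$ (this is exactly the set of color tuples $\vk''$ over $\cols\times\cols'$ whose first-component projection equals $\vk$); $\oplus$ and $\pred{Void}$ nodes are kept as is; and missing nodes are padded with $\pred{Void}$.

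The key invariant to establish, by induction on the tree (reading bottom-up from leaves, i.e. formally by the same induction used to define $\fcn{col}^\mathcal{T}_s$ and $\mathrm{Atoms}_s$ in \cref{def:entities-tree-to-instance}), is: at each node $s$ of $\mathcal{T}$ and the corresponding node $s^\times$ in $\mathcal{T}^\times$, we have $\fcn{ent}^{\mathcal{T}^\times}(s^\times) = \fcn{ent}^\mathcal{T}(s)$ and $\fcn{col}^{\mathcal{T}^\times}_{s^\times}(e) = (\fcn{col}^\mathcal{T}_s(e),\, \coloring'(e))$ for every entity $e$, and moreover $\instance^{\mathcal{T}^\times}_{s^\times} = \instance^\mathcal{T}_s$ as uncolored instances. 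The first two conjuncts handle the coloring; the $\pred{Recolor}$ chains preserve the first component while leaving the second untouched, and the $\pred{Add}$ chains add exactly the atoms among appropriately $\cols$-colored tuples, because a tuple of entities has $\cols\times\cols'$-color-profile $\vk^\ell$ for some $\ell$ iff its $\cols$-profile is $\vk$ — so ranging over all refinements $\vk^\ell$ adds precisely the atoms that the single $\pred{Add}_{\rpred,\vk}$ would add. One then reads off the root node to conclude $(\inst^{\mathcal{T}^\times}_\varepsilon, \coloring^{\mathcal{T}^\times}_\varepsilon) \cong (\inst, \coloring{\times}\coloring')$.

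I expect the main obstacle to be bookkeeping rather than conceptual: one must check that $\mathcal{T}^\times$ remains \emph{well-decorated} after inserting the finite chains — in particular that the $\pred{Void}$-side-subtree conditions on $\pred{Add}$ and $\pred{Recolor}$ nodes, and the "exactly one decorator per node" condition, are all respected along the newly introduced chain of nodes (each new internal node in a chain needs its "$s0$ not $\pred{Void}$, $s1$ is $\pred{Void}$" obligation met, which is fine since each has a single genuine child carrying the next link of the chain). The other mild subtlety is that the chains are finite precisely because $\cols$ and $\cols'$ are finite, so only finitely many refinements $\vk^\ell$ and recoloring pairs arise per node; this is what keeps the result a legitimate well-decorated tree. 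Given this lemma, the Recoloring Lemma (\cref{cor:recoloring}) follows by taking $\coloring$ to be any optimal coloring of $\inst$ (so $|\cols| = n$), forming $\coloring{\times}\coloring'$ to get $\cw{\inst,\coloring{\times}\coloring'} \leq n\cdot|\cols'|$, and then applying a final "color projection" layer of $\pred{Recolor}_{(k,\ell)\to\ell}$ operations on top — at the cost of the extra $|\cols'|$ target colors needed transiently, giving the stated $(n+1)\cdot|\cols'|$ bound.
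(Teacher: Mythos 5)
Your proposal is correct and follows essentially the same route as the paper's proof: transform the witnessing well-decorated tree node by node, expanding each $\pred{Add}_{\rpred,\vk}$ into a finite chain over all $\cols'$-refinements of $\vk$ and each $\pred{Recolor}$ into a finite chain acting on the first component only, then verify by induction the invariant that the new tree's node colorings are exactly $(\fcn{col}^\mathcal{T}_s(e),\coloring'(e))$ while the represented instances coincide. The only remaining work, as you correctly anticipate, is the well-decoratedness bookkeeping along the inserted chains, which the paper carries out explicitly; your formulation of the recoloring step (second component held fixed) is the intended and correct one.
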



\begin{proof}
  Let $(\inst, \coloring)$ be a colored instance and $\coloring'\colon\adom{\inst}\to \cols'$ another coloring for $\inst$. Furthermore, let $\cw{\inst, \lambda} = k$. Then there exists for a finite set of colors $\cols$ a well $(\cols, \mathrm{Cnst}, \Sigma)$-decorated tree $\mathcal{T}$ such that $\inst^{\mathcal{T}}=\inst$ and $\fcn{col}^\mathcal{T} = \coloring$.
  We obtain a $(\cols\times\cols', \mathrm{Cnst}, \Sigma)$-deco\-rated tree $\mathcal{T}'$ inductively over the nodes of $\mathcal{T}$ and $\mathcal{T}'$ in the following way:
  
  \textit{Transformations.} Let $s, s'\in \{ 0,1 \}^{\ast}$ and $N_{p} \subseteq (\{ 0,1 \}^{\ast})^{2}$. Furthermore, let $\sigma \colon \{ 1, \ldots, |\cols'| \} \to \cols'$ be a bijection enumerating every color in $\cols'$.
    \begin{enumerate}
        \item\label{item: Transformation1} If $\oplus(s)\in \mathcal{T}$, then $\oplus(s') \in \mathcal{T}^{s}$ and add to $N_{p}$ the pairs $(s0, s'0)$ and $(s1, s'1)$.
        \item\label{item: Transformation2} If for $\const{c}\in \mathrm{Cnst} \cup \{ \const{*} \}$ and $\ell\in\cols$, $\pred{c}_{\ell}(s)\in \mathcal{T}$, then $\pred{c}_{(\ell, \lambda'(s))}(s')\in \mathcal{T}^{s}$ and add to $N_{p}$ the pairs $(s0, s'0)$ and $(s1, s'1)$.
        \item\label{item: Transformation3} If $\pred{Void}(s)\in \mathcal{T}_{s}$, then add $\pred{Void}(s')$ to $\mathcal{T}'$ and add to $N_{p}$ the pairs $(s0, s'0)$ and $(s1, s'1)$.
        \item\label{item: Transformation4} If for $\rpred\in\Sigma_1$ and $\ell\in\cols$ $\pred{Add}_{\rpred, \ell}(s)\in\mathcal{T}_{s}$, then add 
        \[ \{ \pred{Add}_{\rpred, (\ell, \ell')}(0^{i-1}) \, | \,  1\le j \le |\cols'|, \, \sigma(j) = \ell' \} \]to $\mathcal{T}^{s}$ and add to $N_{p}$ for $1\le i\le k'$ the pairs $(s0, s'0^{i})$ and $(s1, s'0^{i-1}1)$.
        \item\label{item: Transformation5} If for $\rpred\in\Sigma_2$ and $\ell_1, \ell_2\in\cols$ $\pred{Add}_{\rpred, \ell_1, \ell_2}(s)\in\mathcal{T}_s$, then add 
        \begin{align*}
            \{ \pred{Add}_{\rpred, (\ell_1, \ell_1'), (\ell_2, \ell_2')}&(0^{i-1 + (j-1)\cdot |\cols'|}) \, | \\
            & 1\le i, j \le |\cols'|, \, \sigma(i) = \ell_1' , \, \sigma(j) = \ell_2'  \}
        \end{align*}
        to $\mathcal{T}^{s}$ and add to $N_{p}$ for $1\le i, j\le k'$ the pairs $(s0, s'0^{i + (j-1)\cdot k'})$ and $(s1, s'0^{i-1 + (j-1)\cdot k'}1)$.
        \item\label{item: Transformation6} If for $\ell_1, \ell_2\in\cols$ $\pred{Recolor}_{\ell_1 \to \ell_2}(s)\in\mathcal{T}_{s}$, then add 
        \begin{align*}
            \{ \pred{Recolor}_{(\ell_1, \ell_1') \to (\ell_2, \ell_2')}&(0^{i-1 + (j-1)\cdot |\cols'|}) \, |  \\
            & 1\le i, j \le |\cols'|, \, \sigma(i) = \ell_1' , \, \sigma(j) = \ell_2' \}
        \end{align*}
        to $\mathcal{T}^{s}$ and add to $N_{p}$ for $1\le i, j\le k'$ the pairs $(s0, s'0^{i + (j-1)\cdot k'})$ and $(s1, s'0^{i-1 + (j-1)\cdot k'}1)$.
    \end{enumerate}
  
  \textit{Base case.} Initialize $N_{p} = \{ (\varepsilon, \varepsilon) \}$. Then apply the appropriate transformation for the tuple $(\varepsilon, \varepsilon)$.
    
  \textit{Induction Step.} Suppose $N_p$ has been obtained after a finite amount of transformations. With $\prec$ we denote the usual prefix order on $\{ 0,1 \}^{\ast}$. We define a partial order on $N_p$ in the following way: \[ (s_1, s_1') < (s_2, s_2') \quad \text{\ifandonlyif} \quad s_1 \prec s_2 \text{ or } s_1 = s_2 \land s_1' \prec s_2' \, . \] Take the tuple $(s,s')\in N_p$ such that $(s,s')$ is a $|s'|$-minimal and $<$-maximal element in $N_p$. Then apply the appropriate transformation to $(s,s')$.
  
  Finally, obtain $\mathcal{T}' = \mathcal{T}_{\mathrm{bin}}\cup \bigcup_{s\in\ \{0,1\}^{\ast}}\mathcal{T}^{s}$.
  
  We now show some properties of the set $N_p$.
  \begin{enumerate}
    \item\label{item: N_property1} If $(s, s')\in N_p$ with $\const{c}\in \mathrm{Cnst} \cup \{ \const{*} \}$ and $\pred{c}_{\ell}(s)\in\mathcal{T}$ for some $\ell\in \cols$, then $\pred{c}_{(\ell, \lambda'(s))}(s')\in \mathcal{T}'$ and $s'$ is the unique $s''\in \{ 0, 1\}^{\ast}$ such that $(s,s'')\in N_p$.
        \begin{subproof}
            Let $s\in \{ 0,1 \}^{\ast}$ and for some $\ell\in \cols$ and $\const{c}\in \mathrm{Cnst}$ it holds that $\const{c}_{\ell}(s)\in \mathcal{T}$. Furthermore, suppose that $s', s''\in \{ 0,1 \}^{\ast}$ such that $(s, s'), (s,s'')\in N_p$ with $\pred{c}_{(\ell, \lambda'(s))}(s'), \pred{c}_{(\ell, \lambda'(s))}(s'')\in \mathcal{T}^{s}$. This can only happen, if there was previously a transformation applied for $r, s''' \in \{ 0,1 \}^{\ast}$ to $(r, s''')\in N_p$ with $\pred{Succ}_0(r, s)\in \mathcal{T}$ and $s''' \prec s', s''$ such that either
            \begin{enumerate}
                \item\label{item: N_property1_a} for $\rpred\in\Sigma_1$ and $k\in\cols$ $\pred{Add}_{\rpred, k}(r)\in\mathcal{T}$, or
                \item\label{item: N_property1_b} for $\rpred\in\Sigma_2$ and $k_1, k_2\in\cols$ $\pred{Add}_{\rpred, k_1, k_2}(r)\in\mathcal{T}$, or
                \item\label{item: N_property1_c} for $k_1, k_2\in\cols$ $\pred{Recolor}_{k_1 \to k_2}(r)\in\mathcal{T}$.
            \end{enumerate}
            But then w.l.o.g. (in case \ref{item: N_property1_a}) $s' = s'''0^{i}$ and $s'' = s'''0^{j}$ with $1\le i < j \le |\cols'|$ or (in cases \ref{item: N_property1_b} and \ref{item: N_property1_c}) $s' = s'''0^{i}$ and $s'' = s'''0^{j}$ with $1\le i < j \le |\cols'|^{2}$. But then, by the Inductions Step, Transformation \ref{item: Transformation2} is only applied to the tuple $(s, s'''0^{|\cols'|})$ (in case \ref{item: N_property1_a}) or $(s, s'''0^{|\cols'|^{2}})$ (in cases \ref{item: N_property1_b} and \ref{item: N_property1_c}) as this is $<$-maximal for the introduced $N_p$-pairs. Hence $s'''0^{|\cols'|^{2}} = s' = s''$, showing uniqueness. Furthermore, by Transformation \ref{item: Transformation2} we obtain $\pred{c}_{(\ell, \lambda'(s))}(s'), \pred{c}_{(\ell, \lambda'(s))}(s'')\in \mathcal{T}^{s}$.
            \end{subproof}
    \item\label{item: N_property2} If $(s, s')\in N_p$ with $\pred{Void}(s)\in\mathcal{T}$, then $\pred{Void}(s')\in \mathcal{T}'$ and for all $s''\in\{ 0,1 \}^{\ast}$ with $s' \prec s''$ we have $\pred{Void}(s'')$.
        \begin{subproof}
            If $(s, s')\in N_p$, application of Transformation \ref{item: Transformation3} yields $\pred{Void}(s')\in \mathcal{T}'$. As then $(s0, s'0), (s1, s'1)$ are added to $N_p$, applications of Transformation \ref{item: Transformation3} to both will yield $\pred{Void}(s'0)\in \mathcal{T}'$ and $\pred{Void}(s'1)\in \mathcal{T}'$. Inductively we obtain for all $s''\in\{ 0,1 \}^{\ast}$ with $s' \prec s''$ we have $\pred{Void}(s'')$.
        \end{subproof}
  \end{enumerate}
  
  $\mathcal{T}'$ is a well $(\cols\times\cols', \mathrm{Cnst}, \Sigma)$-decorated tree.
    \begin{itemize}
        \item For every $s\in \{ 0,1 \}^{\ast}$, $\mathrm{Dec}(s)\in\mathcal{T}'(s)$ for exactly one $\mathrm{Dec} \in \mathrm{Dec}(\cols', \mathrm{Cnst}, \Sigma)$ by constructions.
        \item Since, by construction, if for $\const{c}\in \mathrm{Cnst}$, $(\ell_1, \ell_2)\in \cols\times\cols'$ and $s'\in\{ 0,1 \}^{\ast}$ it holds that $\pred{c}_{(\ell_1, \ell_2)}(s')\in\mathcal{T}'$, then there is a $s\in\{ 0,1 \}^{\ast}$ such that $\pred{c}_{\ell_1}(s')\in\mathcal{T}$ and $(s, s')\in N_p$. By Property \ref{item: N_property1} $s'$ is unique for $s$. On the other hand, Transformation \ref{item: Transformation2} is only executed, if there is a pair $(s, s')\in N_p$ with $\pred{c}_{\ell_1}(s')\in\mathcal{T}$. In conclusion, for every $s'\in \{ 0,1 \}^{\ast}$, $\const{c}\in \mathrm{Cnst}$, and $\ell\in\cols\times\cols'$ there is at most one fact of the form $\pred{c}_{\ell}(s')\in\mathcal{T}'$.
        \item If $\pred{Void}(s')\in\mathcal{T}'$ or $\pred{c}_{(\ell_1, \ell_2)}(s')\in \mathcal{T}'$, then $\pred{Void}(s'0), \pred{Void}(s'1)$. In the case of $\pred{Void}(s')\in \mathcal{T}'$ this follows immediately from Property \ref{item: N_property2}. If on the other hand $\pred{c}_{(\ell_1, \ell_2)}(s')\in \mathcal{T}'$ we observe that there exists by construction exactly one $s\in\{0,1 \}^{\ast}$ such that $\pred{c}_{\ell_1}(s)\in\mathcal{T}$ and $(s, s')\in N_p$. Note that $\pred{c}_{(\ell_1, \ell_2)}(s')$ was then produced by application of Transformation \ref{item: Transformation2} and $(s0, s'0), (s1, s1')$ will have been added to $N_p$. Application of Transformation \ref{item: Transformation3} then has yielded $\pred{Void}(s'0), \pred{Void}(s'1)\in \mathcal{T}'$ as $\pred{Void}(s0), \pred{Void}(s1)\in\mathcal{T}$.
        \item If $\oplus(s')\in\mathcal{T}$ then $\pred{Void}(s'0), \pred{Void}(s'1)\not\in \mathcal{T}'$ follows by construction.
        \item If for $\ell, \ell'\in\cols\times\cols'$ we have $\pred{Recolor}_{\ell\to \ell'}(s') \in \mathcal{T}'$ or $\pred{Add}_{\rpred,\ell,\ell'}(s') \in \mathcal{T}'$ or $\pred{Add}_{\rpred,\ell}(s') \in \mathcal{T}'$ then by definition of Transformations \ref{item: Transformation4} - \ref{item: Transformation6} $\pred{Void}(s'0)\not\in \mathcal{T}'$ and $\pred{Void}(s'1) \in \mathcal{T}'$ hold by construction.
    \end{itemize}
  
  We now show that $\mathcal{I}^{\mathcal{T}'}$ and $\mathcal{I}^{\mathcal{T}}$ are isomorphic and $\fcn{col}_{\epsilon}^{\mathcal{T}'} = \coloring \times \coloring'$.
  Obtaining the isomorphism works as follows: By Property \ref{item: N_property1} for every $(s,s')\in N_p$ such that for some $\const{c}\in \mathrm{Cnst}\cup\{ \const{*} \}$ and $\ell\in \cols$ it holds that $\pred{c}_{\ell}(s)\in\mathcal{T}$, $s'$ is unique. Furthermore, by construction we obtain \[ \mathrm{Cnst}\cap ent^{\mathcal{T}}(\varepsilon) = \mathrm{Cnst}\cap ent^{\mathcal{T}'}(\varepsilon) \, . \] Hence we define $h_{\mathrm{Cnst}}\colon \mathrm{Cnst}\cap \fcn{ent}^{\mathcal{T}}(\varepsilon) \to \mathrm{Cnst} \cap \fcn{ent}^{\mathcal{T}'}(\varepsilon), \const{c}\mapsto \const{c}$, which is a bijection by the discussion beforehand.
  We extend $h_{\mathrm{Cnst}}$ to $h$ by mapping each $s\in\{ 0,1 \}^{\ast}$ with $\pred{*}_{k}(s)\in\mathcal{T}$ to the unique $s'\in\{ 0,1 \}^{\ast}$ such that $(s, s')\in N_p$. This yields a bijection $h\colon \fcn{ent}^{\mathcal{T}}(\varepsilon) \to \fcn{ent}^{\mathcal{T}'}(\varepsilon)$. We show that $h$ is an isomorphism $\inst^{\mathcal{T}} \to \inst^{\mathcal{T}'}$. We need only to show that atoms are preserved in both directions.
  We show this in the case of binary atoms, the rest will be analogously dealt with. So let $\rpred(t_0, t_1)\in\inst^{\mathcal{T}}$ be an atom. Then there exist a $s\in\{ 0,1 \}^{\ast}$ such that $R(t_0, t_1)\in Atoms_{s}^{\mathcal{T}}$, and $s_0, s_1\in\{ 0, 1\}^{\ast}$ such that the nodes $ss_0 = t_0$ and $ss_1 = t_1$. Then there are $s', s_0', s_1'\in\{ 0,1 \}^{\ast}$, with the later two unique, such that
  \begin{itemize}
      \item $(s, s'), (ss_0, s's_0'), (ss_1, s's_1') \in N_p$,
      \item $\pred{Add}_{\rpred, k_0, k_1}(s)\in\mathcal{T}$,
      \item $\fcn{col}_s^{\mathcal{T}}(t_0) = k_0$ and $\fcn{col}_s^{\mathcal{T}}(t_1) = k_1$,
      \item $\fcn{col}_{ss_0}^{\mathcal{T}}(t_0) = \ell_0$ and $\fcn{col}_{ss_1}^{\mathcal{T}}(t_1) = \ell_1$,
      \item $\fcn{col}_{s's_0'}^{\mathcal{T}'}(h(t_0)) = (\ell_0, \coloring'(ss_0))$ and $\fcn{col}_{s's_1'}^{\mathcal{T}}(h(t_1)) = (\ell_1, \coloring'(ss_1))$,
      \item $\pred{Add}_{\rpred, (k_0, \coloring'(ss_0)) , (k_1, \coloring'(ss_1))}(s')\in\mathcal{T}'$, and
      \item $\fcn{col}_s^{\mathcal{T}'}(h(t_0)) = (k_0, \coloring'(ss_0))$ and $\fcn{col}_s^{\mathcal{T}'}(h(t_1)) = (k_1, \coloring'(ss_1))$.
  \end{itemize}
  Hence $\rpred(h(t_0), h(t_1))\in\inst^{\mathcal{T}'}$. The backwards-direction works similarly. For uniqueness note that by construction for every pair $(s,s')\in N_p$ the $s$ will be unique for the $s'$.
  
  Last, $\fcn{col}_{\epsilon}^{\mathcal{T}'} = \coloring \times \coloring'$ is plain as every entity will be introduced with the color given by $\coloring'$ in the second component in $\mathcal{T}'$.
\end{proof}
To get \cref{cor:recoloring} from \cref{obs:cross-coloring} we note that it is enough to project colors of $(\inst, \coloring \times \coloring')$ onto to their second components with $\fcn{Recolor}_{(\ell,\ell') \to \ell'}$. Note that the resulting coloring function will have an extended codomain (comparing to $\coloring \times \coloring'$) to $\cols \times \cols' \cup \cols'$.

%
%




\section{Extended Version of \NoCaseChange{\cref{sec:higher-arity}}}\label{appendix:higher-arity}

We will now prove \cref{thm:higher-arity} implying that for multi-headed rules, $\fus$ is no longer subsumed by $\fcs$. To this end, we exhibit a $\fus$ rule set which may unavoidably lead to universal models of infinite cliquewidth. 

We let $\sigi = \set{\hpred, \vpred}$ with $\hpred$ and $\vpred$ being binary predicates. Let us define $\rsgrid$ to be the following rule set over $\sigi$:
$$
\begin{array}{@{}l@{\ \ \ \ \ \ }r@{\ }l@{\ \ \ \ \ \ \ \ }}
\text{(loop)} & \top & \rightarrow \exists x \; (\hpred(x,x) \land \vpred (x,x)) \\[0.5ex]
\text{(grow)} & \top & \rightarrow \exists y \exists y'\; (\hpred(x,y) \land \vpred (x,y')) \\[0.5ex]
\text{(grid)} & \hpred(x,y) \land \vpred(x,x')  & \rightarrow \exists y' \; (\hpred(x',y') \land \vpred(y,y'))
\end{array}
$$

We make use of $\rsgrid$ to establish \cref{thm:higher-arity}, whose proof is split into two parts. First, we provide a database $\dbg$ to form a knowledge base $(\dbg, \rsg)$ for which no universal model of finite cliquewidth exists (\cref{lem:no-universal-model-cw}). Second, we show that $\rsg$ is a finite unification set (\cref{lem:rsg-is-fus}). 

\subsection{$\rsgrid$ is not a finite-cliquewidth set}

We devote this section to establishing the subsequent lemma, which is sufficient to conclude \cref{lem:no-universal-model-cw}.

\begin{lemma}\label{lem:no-universal-model-cw-intermediate}
Let $\const{a}$ be a constant and let $\dbg = \set{\top(\const{a})}$ be a database. Then, no universal model of finite cliquewidth exists for $\kbaseg$.
\end{lemma}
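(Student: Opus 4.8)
The plan is to prove the target through a sequence of facts about the concrete instance $\ginf$ (depicted in \cref{fig:ginf}): that $\ginf$ is a universal model of $\kbaseg$, that $\ginf$ is \emph{rigid} (its only self-homomorphism is the identity), that consequently every universal model of $\kbaseg$ contains $\ginf$ as an induced sub-instance, and finally that $\cw{\ginf}=\infty$; since passing to an induced sub-instance never increases cliquewidth, the statement follows. First I would check that $\ginf$ is a universal model. Satisfaction of $\dbg$ is immediate as $\top$ holds of every term. For the rules: (loop) is witnessed by the self-looping null $y$; (grow) holds because every element of $\ginf$ — the constant $\const{a}$, the null $y$, and each grid null $x_{i,j}$ — has an outgoing $\hpred$-edge and an outgoing $\vpred$-edge; and (grid) holds because whenever $\hpred(t,u),\vpred(t,u')\in\ginf$, the ``diagonal'' element (namely $x_{i+1,j+1}$ when $t=x_{i,j}$, and $x_{1,1}$ resp. $y$ when $t=\const{a}$ resp. $t=y$) serves as the required witness. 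To see that $\ginf$ maps homomorphically into an arbitrary model $\model$ of $\kbaseg$, I would build $h$ inductively: put $h(\const{a})=\const{a}$, use (loop) to fix $h(y)$, use (grow) to walk out the bottom row $h(x_{k,0})$ and the left column $h(x_{0,k})$, and then, processing the ``corners'' $x_{i,j}$ in increasing order of $i+j$ (starting at $\const{a}$), use (grid) at the image of $x_{i,j}$ — whose outgoing $\hpred$- and $\vpred$-edges are available by the induction invariant — to define $h(x_{i+1,j+1})$ together with the edges $\hpred(h(x_{i,j+1}),h(x_{i+1,j+1}))$ and $\vpred(h(x_{i+1,j}),h(x_{i+1,j+1}))$ in $\model$; a routine bookkeeping check confirms that every atom of $\ginf$ is satisfied by $h$.

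Next I would establish rigidity. The key observation is that in $\ginf$ every element has at most one outgoing $\hpred$-edge and at most one outgoing $\vpred$-edge, and $y$ is the unique element carrying a self-loop. Hence a homomorphism $h\colon\ginf\to\ginf$, which necessarily fixes $\const{a}$, must send $x_{1,0}$ (the unique $\hpred$-successor of $\const{a}$) to $x_{1,0}$ and $x_{0,1}$ to $x_{0,1}$; propagating this along $\hpred$- and $\vpred$-edges by induction on $i+j$ yields $h(x_{i,j})=x_{i,j}$ throughout, and $h(y)=y$ since $\hpred(y,y)\in\ginf$ forces $\hpred(h(y),h(y))\in\ginf$. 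Thus $h=\mathrm{id}$.

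From these two facts I would deduce that any universal model $\unimod$ of $\kbaseg$ contains $\ginf$ as an induced sub-instance. Universality of $\ginf$ gives a homomorphism $f\colon\ginf\to\unimod$ and universality of $\unimod$ gives $g\colon\unimod\to\ginf$, so $g\circ f$ is a self-homomorphism of $\ginf$, hence the identity by rigidity; therefore $f$ is injective. Moreover the copy $f(\ginf)$ is induced: if $\rpred(f(t_1),\dots,f(t_k))\in\unimod$ with all $f(t_i)$ in the copy, then applying $g$ gives $\rpred(t_1,\dots,t_k)\in\ginf$, so the atom already lies in the copy. Since taking induced sub-instances never increases cliquewidth, this gives $\cw{\unimod}\geq\cw{f(\ginf)}=\cw{\ginf}$.

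The main obstacle is to show $\cw{\ginf}=\infty$. For each $n$, the sub-instance of $\ginf$ induced on $\{x_{i,j}\mid 1\leq i,j\leq n\}$ is isomorphic to the $n\times n$ directed, $\{\hpred,\vpred\}$-labelled grid; replacing every $\pred{Add}_{\hpred,\vk}$ and every $\pred{Add}_{\vpred,\vk}$ operation in a hypothetical well-decorated tree for it by a single undirected edge insertion turns any representation using $k$ colors into a clique-width expression using $k$ colors for the $n\times n$ grid graph. Since induced sub-instances do not increase cliquewidth, $\cw{\ginf}$ is therefore at least the cliquewidth of the $n\times n$ grid graph for every $n$; as this quantity is unbounded (a classical fact; see \cite{CourEngBook} as well as the lower bound of Golumbic and Rotics for finite grids), we conclude $\cw{\ginf}=\infty$. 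Combined with the previous paragraph, every universal model of $\kbaseg$ has infinite cliquewidth, which is precisely \cref{lem:no-universal-model-cw-intermediate} (and hence \cref{lem:no-universal-model-cw}). I expect this last step to carry the real difficulty; a self-contained alternative would argue directly on well-decorated trees in the spirit of the justification of \cref{ex:badnewsternary} — above the $\oplus$-node that first separates infinitely many grid rows from infinitely many grid columns only finitely many $\pred{Add}$-operations remain to realize the infinitely many ``crossing'' $\hpred$/$\vpred$-edges between the two parts — but turning this two-dimensional intuition into a clean argument is delicate, which is why routing through the known grid lower bound seems preferable.
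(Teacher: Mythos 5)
Your proposal is correct and follows essentially the same route as the paper: establish that $\ginf$ is a universal model, that its only self-homomorphism is the identity (anchored at $\const{a}$ and the unique looping null $y$), deduce that every universal model contains $\ginf$ as an induced sub-instance, and conclude via the unboundedness of the cliquewidth of finite $n\times n$ grids (Golumbic--Rotics) together with the facts that dropping edge labels/orientations and passing to induced sub-instances cannot increase cliquewidth. The only difference is that you spell out the universality and rigidity checks in more detail than the paper, which treats them as routine.
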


With the aim of establishing the lemma above, we define the instance $\ginf$ below, where $\const{a}$ is a constant and $x_{i,j}$ and $y$ are nulls for $i,j \in \mathbb{N}$. To provide the reader with further intuition, a graphical depiction of $\ginf$ is given in \cref{fig:ginf}.
%
%
\begin{eqnarray*}
\ginf & = & \{\top(\const{a}),\hpred(\const{a},x_{1,0}),\vpred(\const{a},x_{0,1}),\hpred(y,y),\vpred(y,y)\} \\
 & & \cup \big\{\hpred(x_{i,j},x_{i+1,j}) \ | \  (i,j) \in (\mathbb{N} \times \mathbb{N}) \setminus \{(0,0)\} \big\} \\
 & & \cup \big\{\vpred(x_{i,j},x_{i,j+1}) \ | \  (i,j) \in (\mathbb{N} \times \mathbb{N}) \setminus \{(0,0)\} \big\} 
\end{eqnarray*}


It is straightforward to show that $\ginf$ serves as a model of $\kbaseg$, and a routine exercise to prove that $\ginf$ is universal. Hence, we have the following:

\begin{observation}\label{obs:ginf-is-universal-model}
$\ginf$ is a universal model of $\kbaseg$.
\end{observation}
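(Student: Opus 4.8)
The plan is to verify the two defining properties of a universal model separately: that $\ginf$ is a model of $\kbaseg$, and that it admits a homomorphism into every model of $\kbaseg$.

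\emph{Step 1: $\ginf$ is a model.} Since $\top(\const{a})\in\ginf$ we have $\dbg\subseteq\ginf$, so it remains to check satisfaction of the three rules of $\rsgrid$. The relevant structural observation is that every term occurring as the \emph{first} argument of some $\hpred$- or $\vpred$-atom of $\ginf$ is either $y$, or $\const{a}$ (which plays the role of $x_{0,0}$), or some $x_{i,j}$ with $(i,j)\neq(0,0)$, and that for each such term $t$ there is exactly one $\hpred$-atom and exactly one $\vpred$-atom with first argument $t$. For (loop) the witness is $y$. For (grow) a match fixes the frontier variable to some such $t$, and the witnesses are the (unique) second arguments of the $\hpred$- resp.\ $\vpred$-atom leaving $t$. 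For (grid) a match sends $x$ to some $t$, $y$ to the $\hpred$-successor of $t$, and $x'$ to the $\vpred$-successor of $t$; a short case distinction—$t=y$ (witness $y$), $t=\const{a}$ (witness $x_{1,1}$), $t=x_{i,j}$ (witness $x_{i+1,j+1}$)—together with the identities $\hpred(x_{a,b},x_{a+1,b}),\vpred(x_{a,b},x_{a,b+1})\in\ginf$ (valid whenever the relevant index pair is $\neq(0,0)$) supplies the required $y'$. Hence $\ginf\models\rsgrid$.

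\emph{Step 2: $\ginf$ is universal.} Let $\inst$ be an arbitrary model of $\kbaseg$; I build a homomorphism $h\colon\ginf\to\inst$ by a chase-style construction. Since $\inst\models(\mathrm{loop})$ there is $c\in\adom{\inst}$ with $\hpred(c,c),\vpred(c,c)\in\inst$; set $h(y)=c$. For the grid part set $h(\const{a})=\const{a}$ (forced, and harmless since $\top(\const{a})\in\dbg\subseteq\inst$) and define $h(x_{i,j})$ by induction on $i+j$, maintaining the invariant that every atom of $\ginf$ over already-placed terms is mapped into $\inst$. When $j=0$ (resp.\ $i=0$), apply (grow) to the previously placed neighbour $h(x_{i-1,0})$ (resp.\ $h(x_{0,j-1})$)—legal because $\top$ holds of every term of $\inst$—and let $h(x_{i,0})$ (resp.\ $h(x_{0,j})$) be the corresponding $\hpred$- (resp.\ $\vpred$-)successor produced by the rule. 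When $i,j\geq 1$, the invariant guarantees $\hpred(h(x_{i-1,j-1}),h(x_{i,j-1}))\in\inst$ and $\vpred(h(x_{i-1,j-1}),h(x_{i-1,j}))\in\inst$, so (grid) applies with the match $x\mapsto h(x_{i-1,j-1})$, $y\mapsto h(x_{i,j-1})$, $x'\mapsto h(x_{i-1,j})$; taking $h(x_{i,j})$ to be the witness it produces yields $\hpred(h(x_{i-1,j}),h(x_{i,j}))\in\inst$ and $\vpred(h(x_{i,j-1}),h(x_{i,j}))\in\inst$. Since every atom of $\ginf$ is either one of the four distinguished atoms ($\hpred(\const{a},x_{1,0})$, $\vpred(\const{a},x_{0,1})$, $\hpred(y,y)$, $\vpred(y,y)$) or an "edge into" a unique grid node $x_{i,j}$, the invariant propagates through all stages, and $h$ is the desired homomorphism.

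\emph{Main obstacle.} The only delicate part is the bookkeeping in Step 2: one must choose the induction measure $i+j$ and phrase the invariant precisely so that (i) at each step the body of the rule being applied is \emph{already} known to hold in $\inst$, (ii) each atom of $\ginf$ is accounted for exactly once, and (iii) the boundary cases around $\const{a}=x_{0,0}$ (which has no incoming edges and whose image is forced) are handled cleanly. Once this scaffolding is in place, both steps are routine, matching the "straightforward/routine" assessment in the text.
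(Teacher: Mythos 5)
Your proof is correct, and it is precisely the argument the paper leaves implicit: the paper only remarks that modelhood is "straightforward" and universality a "routine exercise," and your two steps (direct rule-by-rule verification, then a chase-style inductive construction of a homomorphism into an arbitrary model along the diagonals $i+j$, anchored at $h(\const{a})=\const{a}$ and $h(y)$ given by (loop)) fill in exactly those details with the bookkeeping done correctly.
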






It is known that the cliquewidth of an $n \times n$ grid tends to infinity as $n$ tends to infinity~\cite[\thm~4]{GolumbicRot99}, entailing that any infinite grid will have an infinite cliquewidth since the cliquewidth of any instance bounds the cliquewidth of any induced sub-instance from above. By making use of this fact, along with the subsequent two, we can prove \cref{fact:grid-has-unbounded-cw} below.

\begin{fact}
        Let $\sig$ be a binary signature, $\epred$ be a binary predicate, 
        $\inst$ be an instance over $\sig$, and $\inst' = \set{\epred(t,t') \ | \ \rpred(t,t') \in \inst, \rpred \in \sig}$. Then, $\cw{\inst} \geq \cw{\inst'}$.
\end{fact}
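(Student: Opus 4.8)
The statement to prove is the following fact: given a binary signature $\sig$, a binary predicate $\epred$, an instance $\inst$ over $\sig$, and $\inst' = \set{\epred(t,t') \mid \rpred(t,t') \in \inst, \rpred \in \sig}$, we have $\cw{\inst} \geq \cw{\inst'}$.

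\textbf{Approach.} The plan is to take a witnessing well-decorated tree for $\inst$ and transform it, node by node, into a well-decorated tree for $\inst'$ that uses exactly the same color set, thereby showing $\cw{\inst'} \leq \cw{\inst}$. The only operations in the assembly process that create atoms are the $\fcn{Add}_{\rpred,\vk}$ decorators, so the transformation only needs to touch those. Intuitively, every $\fcn{Add}_{\rpred,k,k'}$ instruction ``collapses'' to $\fcn{Add}_{\epred,k,k'}$, every $\fcn{Add}_{\rpred,k}$ (for unary $\rpred$) is dropped entirely (i.e.\ replaced by a node that does nothing, which we can realize by a vacuous recoloring $\fcn{Recolor}_{k\to k}$, keeping well-decoratedness intact), and every $\fcn{Add}_\rpred$ for nullary $\rpred$ is likewise dropped. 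Disjoint unions, recolorings, and the leaf decorators $\const{c}_k$, $\const{*}_k$, $\pred{Void}$ are left untouched, so the underlying colored domain and all color assignments are preserved verbatim.

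\textbf{Key steps.} First I would invoke the definition of cliquewidth: if $\cw{\inst}=n$ (the case $\cw{\inst}=\infty$ being trivial), fix a set $\cols$ with $|\cols|=n$ and a $(\cols,\mathrm{Cnst},\sig)$-well-decorated tree $\mathcal{T}$ with $\inst^\mathcal{T} \cong \inst$ (up to the choice of a coloring). Second, I would define the transformed tree $\mathcal{T}'$ over the signature $\sig' = \{\epred\}\cup\Sigma_{\leq 1}$ (or simply $\{\epred\}$ if we discard unary/nullary atoms): keep every null $s\in\{0,1\}^*$, keep every decorator except replace $\fcn{Add}_{\rpred,k,k'}(s)$ by $\fcn{Add}_{\epred,k,k'}(s)$ and replace every $\fcn{Add}_{\rpred,k}(s)$ and $\fcn{Add}_\rpred(s)$ by $\fcn{Recolor}_{k\to k}(s)$ for some fixed $k\in\cols$ (a harmless identity operation that preserves the ``exactly one non-$\pred{Void}$ child'' shape constraint). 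One checks immediately that $\mathcal{T}'$ is again well-decorated, since the structural constraints on children (which child is $\pred{Void}$) depend only on which \emph{type} of decorator sits at a node, not on its parameters, and $\fcn{Add}_{\rpred,\vk}$, $\fcn{Recolor}_{k\to k'}$ impose the same constraint. Third, I would argue by a straightforward induction over the nodes of the tree (following \cref{def:entities-tree-to-instance}) that $\fcn{ent}^{\mathcal{T}'}_s = \fcn{ent}^\mathcal{T}_s$ and $\fcn{col}^{\mathcal{T}'}_s = \fcn{col}^\mathcal{T}_s$ for every $s$ — these are unaffected since the $\fcn{Add}$ operations never change entities or colors and the identity recolorings do nothing — and then that $\mathrm{Atoms}^{\mathcal{T}'}_s = \set{\epred(\ve) \mid \rpred(\ve)\in\mathrm{Atoms}^\mathcal{T}_s \text{ for some } \rpred\in\sig}$ (modulo whatever one decides to do with unary atoms). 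Taking the union over all $s'$ at the root gives $\inst^{\mathcal{T}'} = \set{\epred(t,t')\mid \rpred(t,t')\in\inst^\mathcal{T}}$, which is isomorphic to $\inst'$ under the same isomorphism witnessing $\inst^\mathcal{T}\cong\inst$. Hence $\cw{\inst'}\leq n = \cw{\inst}$.

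\textbf{Main obstacle.} There is no deep obstacle here — this is a genuinely routine ``syntactic surgery on the decoration'' argument. The only points requiring a little care are (i) confirming that replacing an $\fcn{Add}$-node by an identity $\fcn{Recolor}$-node does not violate well-decoratedness (it does not, because both decorator types demand $\pred{Void}(s1)\in\mathcal{T}$ and $\pred{Void}(s0)\notin\mathcal{T}$), and (ii) being precise about whether $\inst'$ is meant to retain unary/nullary atoms of $\inst$ or only the ``flattened'' binary edges — in the latter case one simply deletes those $\fcn{Add}$ instructions as described, and in the former one keeps them as is. Either reading goes through with the same color budget, so the inequality $\cw{\inst}\geq\cw{\inst'}$ holds in all cases.
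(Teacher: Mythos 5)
The paper states this as a bare \emph{Fact} with no proof at all, so there is nothing to compare against; your decoration-surgery argument is exactly the routine justification the authors are implicitly relying on, and it is correct. Replacing each $\fcn{Add}_{\rpred,k,k'}$ node by $\fcn{Add}_{\epred,k,k'}$ and each remaining $\fcn{Add}$ node by an identity $\fcn{Recolor}_{k\to k}$ preserves well-decoratedness (both decorator kinds impose the same $\pred{Void}$-child constraints), leaves $\fcn{ent}^\mathcal{T}_s$ and $\fcn{col}^\mathcal{T}_s$ untouched, and hence only rewrites the $\mathrm{Atoms}_s$ sets in the intended way, so the same color budget suffices. The one pedantic point worth a sentence in a full write-up is the domain: the leaf decorators still emit $\top$-atoms for every term of $\inst$, so $\inst^{\mathcal{T}'}$ may contain isolated elements not in $\adom{\inst'}$; this is harmless either because $\top$ is treated as holding universally anyway, or because one passes to the induced sub-instance on $\adom{\inst'}$, which the paper itself notes never increases cliquewidth.
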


\begin{fact}
        Let $\sig$ be a signature consisting of a single binary predicate $\epred$, let $\inst$ be an instance over $\sig$, and let $\inst' = \set{\epred(t',t) \ | \ \epred(t,t') \in \inst} \cup \inst$. Then, $\cw{\inst} \geq \cw{\inst'}$.
\end{fact}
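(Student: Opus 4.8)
The plan is to show $\cw{\inst'}\le\cw{\inst}$ by turning an optimal well-decorated tree for $\inst$ into one for $\inst'$ over the \emph{same} color set. If $\cw{\inst}=\infty$ there is nothing to prove, so assume it is finite; then it is attained, i.e.\ there are a coloring $\coloring$ and a $(\cols,\mathrm{Cnst},\sig)$-well-decorated tree $\mathcal{T}$ with $|\cols|=n:=\cw{\inst}$ whose represented colored instance is isomorphic to $(\inst,\coloring)$. Since $\sig=\{\epred\}$, every atom of $\inst$ has the shape $\epred(\cdot,\cdot)$ and is contributed by some node $s$ of $\mathcal{T}$ decorated with $\pred{Add}_{\epred,k,k'}$, where the two endpoints carry colors $k$ and $k'$ at $s$.

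First I would build $\mathcal{T}'$ from $\mathcal{T}$ by a purely local surgery: for every node $s$ with $\pred{Add}_{\epred,k,k'}(s)\in\mathcal{T}$ and $k\ne k'$, splice immediately below $s$ a fresh node carrying $\pred{Add}_{\epred,k',k}$, re-route the original left subtree of $s$ to hang below this new node, and decorate all newly created right-hand positions with $\pred{Void}$ (the pattern used in the ``$\fcn{Add}$ on decorated trees'' definition and in the proof of \cref{cor:recoloring}). Nodes decorated $\pred{Add}_{\epred,k,k}$, and all non-$\pred{Add}$ nodes, are left unchanged. After re-indexing node names in $\{0,1\}^*$ and completing to a full binary tree with $\pred{Void}$ nodes, $\mathcal{T}'$ is again a $(\cols,\mathrm{Cnst},\sig)$-well-decorated tree, still using only the $n$ colors of $\cols$.

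Then I would verify that $\mathcal{T}'$ represents $\inst'$ (with some coloring, whose identity is irrelevant). The two facts that make this work are that the $\pred{Add}$ operation changes neither the entities of a node nor its coloring (immediate from the definitions of $\fcn{ent}^\mathcal{T}$ and $\fcn{col}^\mathcal{T}_s$) and that distinct $\pred{Add}$ steps commute. Hence at a spliced node the entities and their colors agree with those at the former child $s0$, so $\pred{Add}_{\epred,k',k}$ there contributes exactly $\{\epred(e',e)\mid \fcn{col}^\mathcal{T}_s(e)=k,\ \fcn{col}^\mathcal{T}_s(e')=k'\}$ — precisely the reverses of the atoms contributed by $s$. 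Aggregating over all nodes: the unchanged nodes reproduce $\inst$, the spliced nodes add exactly $\{\epred(t',t)\mid \epred(t,t')\in\inst\}$ (the $k=k'$ case being already symmetric at its single node), and nothing else appears. Thus the instance represented by $\mathcal{T}'$ equals $\inst'$, so $\cw{\inst'}\le|\cols|=n=\cw{\inst}$.

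The only delicate part — and the main obstacle — is the well-decoratedness bookkeeping after the splice: one must check that each node still carries exactly one decorator, that the $\pred{Void}$-successor constraints attached to $\pred{Add}$, $\oplus$ and $\pred{c}_k$ nodes are preserved, and that the result is completed to a genuine full binary tree over $\{0,1\}^*$. This is the same routine verification carried out for \cref{cor:recoloring}, so no new ideas are needed; for the preceding fact the corresponding transformation is even simpler, merely relabeling each $\pred{Add}_{\rpred,k,k'}$ node to $\pred{Add}_{\epred,k,k'}$ and optionally collapsing the now-coinciding $\pred{Add}$ nodes for a fixed color pair.
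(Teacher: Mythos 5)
Your argument is correct. The paper states this as an unproven \emph{Fact}, so there is no official proof to compare against, but your construction is surely the intended one: splice a $\pred{Add}_{\epred,k',k}$ node directly below each $\pred{Add}_{\epred,k,k'}$ node, observe that $\pred{Add}$ nodes alter neither $\fcn{ent}^\mathcal{T}$ nor $\fcn{col}^\mathcal{T}_s$ (so the spliced node sees exactly the colors the original node saw and hence contributes precisely the reversed atoms, with $k=k'$ already symmetric), and note that the remaining well-decoratedness bookkeeping is the same routine completion-with-$\pred{Void}$ used in the paper's own $\fcn{Add}_{\rpred,\vk}$-on-trees definition and in the proof of \cref{cor:recoloring}. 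The color set is unchanged, so $\cw{\inst'}\leq\cw{\inst}$ follows; no gaps.
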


\begin{observation}\label{fact:grid-has-unbounded-cw}
$\ginf$ has an infinite cliquewidth.
\end{observation}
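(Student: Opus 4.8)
The plan is to reduce to the known fact that the cliquewidth of the $n\times n$ grid is unbounded in $n$, by exhibiting arbitrarily large grids as induced sub-instances of $\ginf$ and then stripping away the edge labels and the orientation via the two facts stated just above. Recall that taking induced sub-instances never increases cliquewidth (the cliquewidth of an instance bounds that of each of its induced sub-instances from above), so it suffices to find, for every $n$, an induced sub-instance of $\ginf$ whose cliquewidth is at least that of the undirected $n\times n$ grid graph.

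Concretely, fix $n\in\mathbb{N}$ and let $D_n = \{x_{i,j} \mid 1\le i,j\le n\}$. Since every pair $(i,j)$ with $1\le i,j\le n$ differs from $(0,0)$, the induced sub-instance $\inst_n$ of $\ginf$ on $D_n$ consists exactly of the atoms $\hpred(x_{i,j},x_{i+1,j})$ for $1\le i<n$, $1\le j\le n$ and $\vpred(x_{i,j},x_{i,j+1})$ for $1\le i\le n$, $1\le j<n$, together with the (always present) unary $\top$-atoms: no further atoms of $\ginf$ have both arguments inside $D_n$. Thus $\inst_n$ is a directed, $\{\hpred,\vpred\}$-labelled copy of the $n\times n$ grid, and $\cw{\ginf}\ge\cw{\inst_n}$.

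Now apply the first fact above to replace each binary atom of $\inst_n$ by the corresponding $\epred$-atom on the same ordered pair, obtaining $\inst_n'$ with $\cw{\inst_n}\ge\cw{\inst_n'}$; then apply the second fact to add, for each $\epred$-atom, its reverse, obtaining $\inst_n''$ with $\cw{\inst_n'}\ge\cw{\inst_n''}$. The instance $\inst_n''$ is (an $\epred$-labelled, $\top$-decorated copy of) the undirected $n\times n$ grid graph. Combining the inequalities yields $\cw{\ginf}\ge\cw{\inst_n''}$ for every $n$. Since the cliquewidth of the $n\times n$ grid tends to infinity as $n\to\infty$ \cite[\thm~4]{GolumbicRot99}, it follows that $\cw{\ginf}$ cannot be finite: if $\cw{\ginf}=m<\infty$, then $\cw{\inst_n''}\le m$ for all $n$, contradicting the unboundedness of grid cliquewidth. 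Hence $\cw{\ginf}=\infty$.

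The argument is essentially bookkeeping once the two facts and the monotonicity of cliquewidth under induced sub-instances are in place; the only point deserving care is the direction of monotonicity, since $\cw{\cdot}$ of an uncolored instance is defined as the minimum over all its colored versions, so one genuinely uses that a finite-cliquewidth witness for $\ginf$ restricts to one for each $\inst_n$. The final contradiction step above is exactly what makes this step rigorous, so I do not expect any real obstacle here.
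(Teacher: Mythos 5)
Your proof is correct and follows essentially the same route as the paper's: both reduce to the unboundedness of the cliquewidth of finite $n\times n$ grids via the two label-/orientation-stripping facts together with monotonicity of cliquewidth under induced sub-instances. The only (harmless) difference is the order of operations --- by passing first to the induced sub-instances on $\{x_{i,j}\mid 1\le i,j\le n\}$ you sidestep the paper's auxiliary remark that adjoining the loop component $\{\hpred(y,y),\vpred(y,y)\}$ does not lower cliquewidth.
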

\begin{proof} The observation follows from the previous two facts, the fact that infinite grids have infinite cliquewidth, and the fact that adding a set of the form $\{\hpred(y, y), \vpred(y, y) \}$ with $\const{a}$ a constant and $y$ a null to any instance does not lower its cliquewidth.
\end{proof}

We now leverage the above result to demonstrate that {\em every} universal model of $\kbaseg$ has an infinite cliquewidth. Observe that if we consider any homomorphism from $\ginf$ to itself, then (i) the null $y$ must be mapped to itself as no other terms are associated with loops, and (ii) the constant $a$ must be mapped to itself, thus anchoring the image of $\ginf$ under the homomorphism at the origin of the grid shown in \cref{fig:ginf}, thus ensuring that each term of the grid will map to itself accordingly. It follows that any homomorphism from $\ginf$ to itself will be the identity function, giving rise to the following observation.

\begin{observation}\label{obs:identity}
If $h: \ginf \to \ginf$ is a homomorphism then $h$ is the identity.
\end{observation}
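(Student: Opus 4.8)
The plan is to show that $h$ fixes every element of $\adom{\ginf}$, working outward from the grid's origin. First I would note the two ``anchor'' facts: $h(\const{a}) = \const{a}$, since homomorphisms fix constants by definition, and $h(y) = y$, since $y$ is the unique element $t$ of $\adom{\ginf}$ with $\hpred(t,t) \in \ginf$ — inspecting the three families of $\hpred$-atoms shows that neither $\const{a}$ nor any $x_{i,j}$ carries an $\hpred$-self-loop — and $h(y)$ must inherit the self-loop from $\hpred(y,y)\in\ginf$. (The $\vpred$-self-loop of $y$ is not even needed.)

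The key structural observation I would isolate next is that every element $t$ of $\adom{\ginf}$ has a \emph{unique} $\hpred$-successor and a \emph{unique} $\vpred$-successor, i.e.\ there is exactly one $t'$ with $\hpred(t,t')\in\ginf$ and exactly one $t''$ with $\vpred(t,t'')\in\ginf$; this is immediate from the fact that the first coordinates of the $\hpred$-atoms (resp.\ $\vpred$-atoms) of $\ginf$ are pairwise distinct. Consequently, if $h(t)=t$ and $\hpred(t,t')\in\ginf$, then $\hpred(h(t),h(t'))=\hpred(t,h(t'))\in\ginf$, so uniqueness forces $h(t')=t'$; likewise for the $\vpred$-successor.

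Finally I would run an induction on $n=i+j$ to show $h(x_{i,j})=x_{i,j}$ for all $(i,j)\in(\mathbb{N}\times\mathbb{N})\setminus\{(0,0)\}$: each such $x_{i,j}$ has an immediate predecessor of level $n-1$ that is already fixed by $h$, namely $x_{i-1,j}$ via an $\hpred$-edge when $i\geq 1$ (with $\const{a}$ in place of $x_{0,0}$ when $(i,j)=(1,0)$), and $x_{0,j-1}$ via a $\vpred$-edge when $i=0$ (with $\const{a}$ when $(i,j)=(0,1)$); the successor-uniqueness step then fixes $x_{i,j}$. Since $\adom{\ginf}=\{\const{a},y\}\cup\{x_{i,j}\mid(i,j)\neq(0,0)\}$ and all three parts are fixed by $h$, the map $h$ is the identity. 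I do not expect a real obstacle here; the only points demanding care are the bookkeeping around the ``missing'' corner $x_{0,0}$, whose role is played by the constant $\const{a}$, and handling the isolated loop vertex $y$ separately from the grid proper.
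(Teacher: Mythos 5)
Your proposal is correct and follows essentially the same route as the paper: the paper's (brief) argument also fixes $y$ as the unique loop vertex and $\const{a}$ as a constant, and then notes that this anchors the grid at the origin so every $x_{i,j}$ is forced to map to itself. You merely make explicit the step the paper leaves implicit — that every term has a unique outgoing $\hpred$- and $\vpred$-edge, so fixedness propagates by induction on $i+j$ — and your bookkeeping around the missing corner $x_{0,0}$ and the isolated vertex $y$ is accurate.
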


We now make use of the above observation to show that every universal model of $\kbaseg$ contains a sub-instance that is isomorphic to $\ginf$, thus ensuring that every universal model of $\kbaseg$ is of infinite cliquewidth (by \cref{fact:grid-has-unbounded-cw} and the fact that the cliquewidth of an instance is greater than or equal to the cliquewidth of any induced sub-instance). From this lemma, we may conclude  \cref{lem:no-universal-model-cw-intermediate} and thus \cref{lem:no-universal-model-cw}.

\begin{lemma}
Every universal model of $\kbaseg$ contains an induced sub-instance isomorphic to $\ginf$.
\end{lemma}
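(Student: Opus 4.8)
The plan is to exploit universality of $\mathcal{M}$ and of $\ginf$ \emph{simultaneously}, and then use the rigidity of $\ginf$ recorded in \cref{obs:identity}. Fix an arbitrary universal model $\mathcal{M}$ of $\kbaseg$. Since $\ginf$ is itself a model of $\kbaseg$ (\cref{obs:ginf-is-universal-model}) and $\mathcal{M}$ is universal, there is a homomorphism $g\colon\mathcal{M}\to\ginf$; conversely, since $\ginf$ is universal and $\mathcal{M}$ is a model, there is a homomorphism $f\colon\ginf\to\mathcal{M}$. The composition $g\fcomp f$ is then a homomorphism from $\ginf$ to $\ginf$, hence equal to the identity on $\adom{\ginf}$ by \cref{obs:identity}.

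From $g\fcomp f=\mathrm{id}$ I would first deduce that $f$ is injective on $\adom{\ginf}$ (if $f(t)=f(t')$ then $t=g(f(t))=g(f(t'))=t'$). Let $\inst'=\{\rpred(f(t_1),\dots,f(t_n))\mid\rpred(t_1,\dots,t_n)\in\ginf\}$ be the image of $\ginf$ under $f$; since $f$ is a homomorphism, $\inst'\subseteq\mathcal{M}$, and clearly $\adom{\inst'}=f(\adom{\ginf})$. Injectivity makes $f$ a bijection $\adom{\ginf}\to\adom{\inst'}$ whose inverse $f^{-1}$ coincides with the restriction $g\restriction\adom{\inst'}$, because $g(f(t))=t$ for every term $t$ of $\ginf$; as $g$ is a homomorphism fixing constants, $g\restriction\adom{\inst'}$ sends atoms of $\inst'$ to atoms of $\ginf$. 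Hence $f$ is an isomorphism between $\ginf$ and $\inst'$.

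It remains to check that $\inst'$ is an \emph{induced} sub-instance of $\mathcal{M}$, i.e., that every atom of $\mathcal{M}$ whose terms all lie in $\adom{\inst'}$ already belongs to $\inst'$. So let $\rpred(\vt)\in\mathcal{M}$ with $\vt\subseteq\adom{\inst'}$. Applying $g$ gives $\rpred(g(\vt))\in\ginf$, and since $\vt\subseteq\adom{\inst'}$ we have $g(\vt)=f^{-1}(\vt)$, so $\rpred(f^{-1}(\vt))\in\ginf$. By the definition of $\inst'$ (equivalently, applying $f$ to this atom), $\rpred(\vt)=\rpred(f(f^{-1}(\vt)))\in\inst'$. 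This establishes the lemma, and together with \cref{fact:grid-has-unbounded-cw} and the fact that cliquewidth of an instance is at least that of any induced sub-instance, it yields \cref{lem:no-universal-model-cw-intermediate} and hence \cref{lem:no-universal-model-cw}.

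I do not expect a serious obstacle here; the only delicate point is bookkeeping — making sure one keeps straight which direction of universality supplies $f$ and which supplies $g$, and then noticing that it is precisely the \emph{identity} $g\fcomp f=\mathrm{id}$ (and not merely the existence of back-and-forth homomorphisms) that upgrades the homomorphic image $\inst'$ to an induced sub-instance isomorphic to $\ginf$: without \cref{obs:identity}, $f$ need not be injective and $\inst'$ need neither be isomorphic to $\ginf$ nor induced.
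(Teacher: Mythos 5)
Your proposal is correct and follows essentially the same route as the paper: obtain the two homomorphisms from the mutual universality of $\ginf$ and $\mathcal{M}$, use \cref{obs:identity} to force their composite to be the identity, deduce injectivity and reflectivity of the map $\ginf\to\mathcal{M}$, and conclude that its image is an induced sub-instance isomorphic to $\ginf$. The only difference is presentational — you verify inducedness directly on atoms of $\mathcal{M}$ over $\adom{\inst'}$, whereas the paper phrases the same check as the embedding being ``reflective.''
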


\begin{proof}
Let $\unimod$ be a universal model of $\kbaseg$. As $\ginf$ is a universal model (by \cref{obs:ginf-is-universal-model} above), we know that two homomorphisms $h\colon \ginf \to \unimod$ and $h'\colon \unimod \to \ginf$ exist between $\unimod$ and $\ginf$. Let us consider the homomorphism $h' \circ h$. By \cref{obs:identity}, the homomorphism $h'\circ h$ is the identity on $\ginf$, implying that $h$ is injective. Furthermore, $h$ is reflective, and hence, constitutes an embedding. To show this, suppose that $\rpred\in\sigi$ and let $\vt$ be a tuple of length $\arity{\rpred}$ from $\ginf$. If $\rpred(h(\vt))$ holds in $\unimod$ , then $\rpred((h' \circ h)(\vt))$ holds in $\ginf$, but since $h'\circ h$ is the identity on $\ginf$, we immediately obtain that $\rpred(\vt)$  holds in $\ginf$. Finally, note that the image of $\ginf$ under $h$ serves as the induced sub-instance of $\unimod$ that is isomorphic to $\ginf$, completing the proof.
\end{proof}

\subsection{$\rsgrid$ is a finite-unification set}

As we have now shown that $\rsgrid$ is not a finite-cliquewidth set, we focus our attention on demonstrating \cref{lem:rsg-is-fus}, which is restated below, and from which we may conclude the final main result (\cref{thm:higher-arity}).

\begin{customlem}{\ref{lem:rsg-is-fus}}
$\rsg$ is $\fus$.
\end{customlem}

As the rule $(loop)$ is a member of $\rsg$, every BCQ that is free of constants can be trivially re-written, implying the following:

\begin{observation}\label{obs:bcq-no-const-trivial}
Every BCQ that is free of constants is FO-rewritable with respect to $\rsg$.
\end{observation}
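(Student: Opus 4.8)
The plan is to show that every constant-free BCQ $q$ admits the \emph{trivial} rewriting — the always-true first-order sentence — which is correct because such a $q$ is entailed by $(\db,\rsg)$ no matter which database $\db$ one starts from.

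First I would isolate the structural reason behind this. For every database $\db$, the Skolem chase $\sat{\db,\rsg}$ contains a \emph{loop element}, i.e.\ a term $t_0$ with $\hpred(t_0,t_0),\vpred(t_0,t_0)\in\sat{\db,\rsg}$. This is immediate from the rule $(\mathrm{loop})$: its body is empty, so $(\mathrm{loop})$ together with the (unique, empty) homomorphism from its body is a trigger of $\db$ itself; hence its application already occurs in the first chase step $\ksat{1}{\db,\rsg}$, introducing a fresh Skolem null $t_0$ carrying the two self-loop atoms. Crucially, this still holds when $\db=\emptyset$, since an empty-bodied rule fires on the empty instance as well.

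Next, let $q=\exists\vx\,\phi(\vx)$ be a BCQ over $\sigi$ (possibly also mentioning $\top$) containing no constants, so that every atom of $\phi$ has the form $\hpred(x,y)$, $\vpred(x,y)$, or $\top(x)$ for variables $x,y$ from $\vx$. I would then observe that the constant map $g\colon\vx\to\{t_0\}$ sending every variable to $t_0$ is a homomorphism from $\phi$ into $\sat{\db,\rsg}$: each $\hpred$-atom is sent to $\hpred(t_0,t_0)$, each $\vpred$-atom to $\vpred(t_0,t_0)$, and each $\top$-atom to $\top(t_0)$ (which holds of $t_0$ by the built-in universal-predicate convention), all of which lie in $\sat{\db,\rsg}$. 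Hence $\sat{\db,\rsg}\models q$ for every $\db$. Consequently the trivially valid first-order sentence $\mathit{true}$ (equivalently, the UCQ consisting of a single empty conjunction) is a rewriting of $q$: for every $\db$ we have $\sat{\db,\rsg}\models q \iff \db\models\mathit{true}$, both sides being vacuously true. Thus $q$ is FO-rewritable w.r.t.\ $\rsg$.

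There is no genuinely hard step; the argument is essentially a single observation. The only point deserving a moment's care is the degenerate case $\db=\emptyset$: one must verify that $\sat{\emptyset,\rsg}$ still contains a loop element, so that the affirmative answer encoded by the trivial rewriting is actually correct there — and this is exactly what the empty body of $(\mathrm{loop})$ guarantees.
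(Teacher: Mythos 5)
Your proof is correct and follows essentially the same route the paper takes: the authors justify this observation in one line by noting that rule $(\mathrm{loop})$ forces a self-looping element into every chase, so any constant-free BCQ maps homomorphically onto that element and is therefore always entailed, admitting the trivial rewriting. Your extra care about $\db=\emptyset$ and about $\top$-atoms is consistent with the paper's conventions (empty body of $(\mathrm{loop})$, universal $\top$) and does not change the argument.
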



Showing that a provided rule set is $\fus$ is a notoriously difficult task.\footnote{This should not be too surprising however, as being $\fus$ is an undecidable property~\cite{BagLecMugSal11}.} To this end, we hand-tailor a rewriting algorithm to be used for the proof of~\cref{lem:rsg-is-fus}. However, one can refer to existing rewriting algorithms, e.g., XRewrite~\cite{GotOrsPie14}, for additional intuition. Finally, it is important to note that the proof closely follows the ideas provided in \cite{OstMarCarRud22}.

\subsection{Necessary definitions}



We first introduce special queries, referred to as \emph{marked queries}, which are queries supplemented with a set of markings, the significance of which, is to keep track of what terms of the CQs are mapped to constants in a given database throughout the course of rewriting. A marked CQ is defined to be a pair $(q, M)$ where $q$ is a CQ and the set of markings $M$ is a subset of terms (i.e., variables and constants) of $q$. Given an instance $\inst$ we say that $\inst \models (q,M)$ $\iffi$ there exists a homomorphism $h$ mapping $q$ to $\inst$ such that for every term $t$ of $q$ we have $t \in M$ $\iffi$ $h(t)$ is a constant in $\inst$.

We now make a useful observation, which follows by considering the structures that arise when applying the rules from $\rsgrid$ to any given database. Afterward, we show that marked queries take on a special form when interpreted on the instance $\sg$, where $\db$ is an arbitrary database.

\begin{observation}\label{obs:buildup-for-properly-marked-chase}
Let $\db$ be a database. Then,
\begin{itemize}
    \item if $\hpred(t,t')$ or $\vpred(t,t')$ is an atom of $\sg$ and $t'\in \adom{\db}$, then $t\in \adom{\db}$,
    \item if there is a cycle $C \subseteq \sg$ that contains at least one constant, then $C \subseteq \db$,
    \item if $\rpred(t_1,t'), \rpred(t_2,t') \in \sg$ with $\rpred\in\{\hpred, \vpred\}$ and $t_1\in  \adom{\db}$, then $t_2\in  \adom{\db}$,
    \item if $\rpred(\const{a},t), \rpred(\const{b},t) \in \sg$ with $\rpred\in\{\hpred, \vpred\}$ and $\const{a}\neq \const{b}$ are constants, then $t$ is a constant.
\end{itemize}
\end{observation}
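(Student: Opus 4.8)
\textbf{Proof plan for \cref{obs:buildup-for-properly-marked-chase}.}
The plan is to argue all four items by analyzing the shape of atoms produced by the Skolem chase $\sg = \sat{\db,\rsgrid}$. The essential structural observation is that the only rule whose head contains an atom $\rpred(s,t)$ with $s$ a \emph{frontier} term and $t$ a \emph{fresh} null is $(\mathrm{grid})$ (together with $(\mathrm{grow})$, whose body is $\top$ and hence whose ``frontier'' $x$ is instantiated by an arbitrary pre-existing term), while $(\mathrm{loop})$ introduces only a self-loop on a fresh null. Consequently, every null $n$ of $\sg$ other than the loop-null arises as the $y'$ of some application of $(\mathrm{grid})$ or as one of the $y,y'$ of some application of $(\mathrm{grow})$; in all cases the \emph{target} of any edge ending in such a fresh null is itself that fresh null, never a constant. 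I would first isolate this as a preliminary claim, proved by a straightforward induction on the chase sequence $\ksat{i}{\db,\rsgrid}$.

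For the first item, suppose $\rpred(t,t')\in\sg$ with $\rpred\in\{\hpred,\vpred\}$ and $t'\in\adom{\db}$. By the preliminary claim, $t'$ cannot have been freshly introduced (fresh nulls are not constants), so the atom $\rpred(t,t')$ must already be present in $\db$ (no rule head ever places an edge \emph{into} a pre-existing constant: $(\mathrm{grid})$'s head atoms $\hpred(x',y')$ and $\vpred(y,y')$ both have the fresh null $y'$ as target), whence $t\in\adom{\db}$. For the second item, a cycle $C\subseteq\sg$ containing a constant: traversing $C$ and repeatedly applying the first item, every term reachable by following edges backward from a constant on $C$ is again a constant; since $C$ is a cycle, all its terms are constants, and then (again since no rule head adds an edge among pre-existing constants) every atom of $C$ lies in $\db$, so $C\subseteq\db$. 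For the third item, if $\rpred(t_1,t'),\rpred(t_2,t')\in\sg$, $\rpred\in\{\hpred,\vpred\}$, and $t_1\in\adom{\db}$, then by the first item $t'\in\adom{\db}$, hence again by the first item $t_2\in\adom{\db}$. For the fourth item, if $\rpred(\const{a},t),\rpred(\const{b},t)\in\sg$ with $\const{a}\neq\const{b}$ constants: by the first item $t\in\adom{\db}$; but a fresh null introduced by $(\mathrm{grid})$ or $(\mathrm{grow})$ receives exactly one incoming $\hpred$-edge and one incoming $\vpred$-edge from the \emph{same} predecessor, and the loop-null's only incoming edges are its self-loops, so no freshly created term can have two distinct constant $\rpred$-predecessors --- thus $t$ must be a constant, i.e., $t\in\cnst$.

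I expect the only delicate point to be pinning down precisely ``no rule application ever adds an edge whose target is a pre-existing term,'' which is the backbone of all four items. This requires care with rule $(\mathrm{grow})$: its body is $\top$, so its ``frontier variable'' $x$ is matched by an arbitrary term of the current instance, and it then introduces fresh $y,y'$ with edges $\hpred(x,y),\vpred(x,y')$ --- the targets $y,y'$ are fresh, so the invariant holds; similarly $(\mathrm{grid})$ has frontier $x,y,x'$ all matched to existing terms but the single fresh $y'$ is the target of both new atoms; and $(\mathrm{loop})$ creates a wholly fresh term with both atoms being self-loops on it. Once this invariant is stated and checked by induction on $i$ for $\ksat{i}{\db,\rsgrid}$, the four bulleted consequences follow by the short arguments above.
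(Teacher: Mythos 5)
Your core invariant---every atom produced by a chase step has, as its second argument, a null that is fresh at the moment of creation, because the head atoms of (loop), (grow), and (grid) all carry an existentially quantified variable in target position---is exactly the right observation (the paper offers no proof of its own, merely asserting that the claims follow by inspecting the rules). It cleanly yields the first item, and the second item as well, provided ``cycle'' is read as \emph{directed} cycle; that is indeed the only reading under which the second item is true (with $\db=\{\hpred(\const{s},\const{c}),\vpred(\const{s},\const{c})\}$, rule (grid) creates $\hpred(\const{c},y')$ and $\vpred(\const{c},y')$ for a fresh null $y'$, an undirected $2$-cycle through the constant $\const{c}$ whose atoms are not in $\db$), and it is the reading the paper's later DAG argument for \cref{lem:exists-maximal} relies on.

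Your derivations of the third and fourth items, however, invoke the first item in the wrong direction. The first item says that if the \emph{target} $t'$ of an edge lies in $\adom{\db}$ then so does the source; you use the converse to conclude ``$t_1\in\adom{\db}$, hence $t'\in\adom{\db}$'' (item~3) and ``$\const{a}$ is a constant, hence $t\in\adom{\db}$'' (item~4). The converse is false: applying (grow) to a constant $\const{c}$ yields $\hpred(\const{c},y)$ with $y$ a fresh null. For item~4 this hardly matters, since the rest of your sentence supplies the correct argument (a null acquires all of its incoming edges at birth and has at most one incoming $\rpred$-edge for each $\rpred\in\{\hpred,\vpred\}$, so it cannot have two distinct $\rpred$-predecessors $\const{a}\neq\const{b}$; hence $t$ must be a constant). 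But for item~3 the faulty step is load-bearing: as written, you obtain nothing when $t'$ is a null. The repair is the same case split: if $t'$ is a constant, apply the first item to $\rpred(t_2,t')$; if $t'$ is a null, the at-most-one-incoming-$\rpred$-edge property forces $t_1=t_2$, so $t_2\in\adom{\db}$ trivially. While patching this, also state that property accurately: the fresh null of (grid) receives its $\hpred$-edge from $x'$ and its $\vpred$-edge from $y$, which need \emph{not} be the same predecessor, and each fresh null of (grow) receives only a single incoming edge; what you need, and what is true, is merely ``at most one incoming edge per predicate per null.''
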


Keeping the above observation in mind, we define the notion of a {\em properly marked CQ}.

\begin{definition}\label{def:properly-marked}
A marked CQ $(q, M)$ is {\em properly marked} when:
\begin{itemize}
    \item if $\const{a}$ is a constant in $q$, then $\const{a} \in M$,
    \item if $\hpred(t,t')$ or $\vpred(t,t')$ is an atom in $q$ and $t' \in M$, then $t \in M$,
    \item if there is a cycle $C$ in $q$ that contains at least one constant, then all terms of $C$ are contained in $M$,
    \item if $\rpred(t_1,t')$ and $\rpred(t_2,t')$ are two atoms in $q$ with $\rpred\in\{\hpred, \vpred\}$ and $t_1 \in M$, then $t_2 \in M$,
    \item if $\rpred(\const{a},t)$ and $\rpred(\const{b},t)$ are two atoms in $q$ with $\rpred\in\{\hpred, \vpred\}$ and $\const{a}\neq \const{b}$ are constants, then $t \in M$.\defend
\end{itemize}
\end{definition}

\begin{observation}\label{obs:query-extendable-properly-marked}
If $\sg \models q$, then there exists a set of markings $M$ such that $(q,M)$ is properly marked and $\sg \models (q,M)$.
\end{observation}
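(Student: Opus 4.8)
\textbf{Plan for the proof of \cref{obs:query-extendable-properly-marked}.}
The statement asserts that whenever $\sg \models q$ for some database $\db$ and CQ $q$, we can find a marking $M$ on the terms of $q$ so that $(q,M)$ is \emph{properly marked} (in the sense of \cref{def:properly-marked}) and, moreover, $\sg \models (q,M)$ with the \emph{same} witnessing homomorphism. The natural approach is to start from an arbitrary homomorphism $h\colon q \to \sg$ witnessing $\sg\models q$ and simply \emph{read off} the marking from $h$: set $M := \{\,t \mid t \text{ a term of } q,\ h(t)\in\adom{\db}\,\}$. By construction, for every term $t$ of $q$ we have $t\in M$ iff $h(t)$ is a constant of $\sg$ (recall that the only constants occurring in $\sg$ are those of $\db$, since the chase introduces nulls), so $\sg\models(q,M)$ holds immediately via $h$. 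The entire content of the proof is then to check that this particular $M$ satisfies each of the five clauses of \cref{def:properly-marked}.

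Each of the five clauses is verified by transporting the corresponding clause of \cref{obs:buildup-for-properly-marked-chase} through $h$. Concretely: (1) if $\const{a}$ is a constant occurring in $q$, then $h(\const{a})=\const{a}\in\adom{\db}$ since homomorphisms fix constants, so $\const{a}\in M$. (2) If $\hpred(t,t')$ (or $\vpred(t,t')$) is an atom of $q$ and $t'\in M$, then $h(\hpred(t,t'))=\hpred(h(t),h(t'))\in\sg$ and $h(t')\in\adom{\db}$; by the first bullet of \cref{obs:buildup-for-properly-marked-chase}, $h(t)\in\adom{\db}$, hence $t\in M$. (3) If $C$ is a cycle in $q$ containing a constant, then $h(C)$ is a cycle in $\sg$ containing a constant, so by the second bullet of \cref{obs:buildup-for-properly-marked-chase}, $h(C)\subseteq\db$, whence every term of $C$ maps into $\adom{\db}$ and thus lies in $M$. (4) If $\rpred(t_1,t'),\rpred(t_2,t')$ are atoms of $q$ with $\rpred\in\{\hpred,\vpred\}$ and $t_1\in M$, then $\rpred(h(t_1),h(t')),\rpred(h(t_2),h(t'))\in\sg$ with $h(t_1)\in\adom{\db}$; the third bullet of \cref{obs:buildup-for-properly-marked-chase} yields $h(t_2)\in\adom{\db}$, so $t_2\in M$. (5) If $\rpred(\const{a},t),\rpred(\const{b},t)$ are atoms of $q$ with $\const{a}\neq\const{b}$ constants, then $\rpred(\const{a},h(t)),\rpred(\const{b},h(t))\in\sg$ with $\const{a}\neq\const{b}$ constants of $\db$, so the fourth bullet of \cref{obs:buildup-for-properly-marked-chase} gives that $h(t)$ is a constant, i.e.\ $t\in M$.

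The only genuine subtlety — and the step I expect to require the most care — is clause (3), the cycle clause: one must be careful that a cycle in $q$ maps to a cycle in $\sg$ \emph{in the appropriate sense} (allowing identifications of vertices under $h$, and noting that if $h$ collapses the cycle to something containing a constant we may have to argue via a sub-cycle or directly via the other clauses). This hinges on exactly how ``cycle'' is formalised in \cref{def:properly-marked} and on the precise phrasing of the second bullet of \cref{obs:buildup-for-properly-marked-chase}; in particular one should observe that the edges $\hpred$ and $\vpred$ produced by $\rsgrid$ never create a non-trivial cycle touching a constant unless the whole cycle already lies in $\db$, and that this property is preserved under taking homomorphic images of cycles. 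Once that is handled, the remaining clauses are routine consequences of \cref{obs:buildup-for-properly-marked-chase} as sketched above. Note that \cref{obs:buildup-for-properly-marked-chase} is itself taken as given (it is stated just before the lemma), so the whole argument is a short, bullet-by-bullet verification rather than a new structural analysis of the chase.
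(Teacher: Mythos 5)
Your proposal is correct and follows essentially the same route as the paper: take the witnessing homomorphism $h$, mark exactly the terms sent to constants, and observe that \cref{obs:buildup-for-properly-marked-chase} forces the five clauses of \cref{def:properly-marked} to hold for this induced marking. If anything, your bullet-by-bullet verification (and your flagged caveat about homomorphic images of cycles) is more careful than the paper's one-line proof, which vaguely suggests ``adding terms until properly marked'' --- a step that is unnecessary here and would in fact break $\sg \models (q,M)$ if any added term were not mapped to a constant by $h$.
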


\begin{proof} Suppose that $\sg \models q$. Then, we know that a homomorphism $h$ exists mapping terms from $q$ into $\sg$. Let $M$ be the set of terms from $q$ such that $h(t)$ is mapped to a constant in $\sg$. One can readily check if $(q,M)$ is properly marked, and if not, add the terms from $q$ to $M$ accordingly until a properly marked CQ $(q,M')$ is constructed such that $\sg \models (q,M)$.
\end{proof}

From \cref{obs:buildup-for-properly-marked-chase} and \cref{def:properly-marked} we immediately conclude the following:

\begin{corollary}
Let $(q, M)$ be a marked query, and $\db$ be any database. If $\sg \models \mq$, then $\mq$ is properly marked.
\end{corollary}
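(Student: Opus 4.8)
The statement to be proven is: if $(q,M)$ is a marked query and $\db$ is any database with $\sg \models \mq$, then $\mq$ is properly marked. The plan is to verify each of the five defining conditions of \cref{def:properly-marked} in turn, by pulling back the structural facts about $\sg$ recorded in \cref{obs:buildup-for-properly-marked-chase} through the witnessing homomorphism. The key observation enabling this is the definition of $\sg \models \mq$: there exists a homomorphism $h$ from $q$ into $\sg$ such that for every term $t$ of $q$, $t \in M$ if and only if $h(t)$ is a constant of $\sg$. In particular, membership in $M$ is \emph{exactly} the preimage under $h$ of ``being a constant in $\sg$,'' so each condition of \cref{def:properly-marked} becomes a direct translation of the corresponding bullet of \cref{obs:buildup-for-properly-marked-chase}.

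Concretely, I would argue as follows. First, if $\const{a}$ is a constant in $q$, then $h(\const{a}) = \const{a}$ is a constant in $\sg$ (homomorphisms fix constants), so $\const{a} \in M$. Second, if $\hpred(t,t')$ or $\vpred(t,t')$ is an atom of $q$ with $t' \in M$, then $h$ maps this atom to an atom $\hpred(h(t),h(t'))$ or $\vpred(h(t),h(t'))$ of $\sg$ with $h(t')$ a constant; the first bullet of \cref{obs:buildup-for-properly-marked-chase} then forces $h(t)$ to be a constant, whence $t \in M$. Third, a cycle $C$ in $q$ containing a constant maps under $h$ to a closed walk in $\sg$ containing a constant; by the second bullet of \cref{obs:buildup-for-properly-marked-chase} (applied to the cycle traced out, possibly after noting that a closed walk through a constant contains a cycle through a constant) all its terms are constants in $\sg$, so all terms of $C$ lie in $M$. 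The fourth condition (if $\rpred(t_1,t'),\rpred(t_2,t')$ are atoms of $q$ with $\rpred \in \{\hpred,\vpred\}$ and $t_1 \in M$, then $t_2 \in M$) and the fifth condition (if $\rpred(\const{a},t),\rpred(\const{b},t)$ with $\const{a} \neq \const{b}$ constants, then $t \in M$) follow analogously from the third and fourth bullets of \cref{obs:buildup-for-properly-marked-chase} respectively, again transporting the hypothesis through $h$ and reading off the conclusion via the ``$t \in M \iff h(t)$ is a constant'' equivalence.

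The only point requiring a little care — and the one I'd flag as the main obstacle, though it is minor — is the cycle case: one must check that a cycle in $q$ is sent by $h$ to something from which \cref{obs:buildup-for-properly-marked-chase}'s cycle statement can be invoked. Since $h$ need not be injective, the image of a cycle is a priori only a closed walk; however, any closed walk that visits a constant contains an actual cycle through a constant (take a minimal closed subwalk through that vertex), and applying the second bullet of \cref{obs:buildup-for-properly-marked-chase} to that cycle shows every vertex on it — and hence, chasing around, every vertex of the image walk, and thus every $h(t)$ for $t$ on $C$ — is a constant in $\sg$. One also needs the fifth bullet's hypothesis $\const{a} \neq \const{b}$ to survive: since $h$ fixes constants, distinct constants $\const{a},\const{b}$ in $q$ remain distinct in $\sg$, so the hypothesis of \cref{obs:buildup-for-properly-marked-chase}'s fourth bullet is met. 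With these routine checks in place, all five conditions of \cref{def:properly-marked} hold for $\mq$, establishing that $\mq$ is properly marked.
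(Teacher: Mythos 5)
Your proposal is correct and is essentially the paper's argument: the paper simply declares the corollary ``immediate'' from \cref{obs:buildup-for-properly-marked-chase} and \cref{def:properly-marked}, and what you write out is exactly the intended pull-back of each bullet of the observation through the witnessing homomorphism $h$, using that $t\in M$ iff $h(t)$ is a constant. Your extra care on the cycle condition (the image of a cycle being only a closed walk) goes beyond what the paper records and is resolved correctly — indeed, reading cycles as directed, the first bullet of the observation already propagates constancy backward around the entire closed walk.
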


Let us refer to a marked CQ $(q, M)$ as {\em alive} given that it is properly marked and $M$ does not contain all terms of $q$. In the case where every term of a properly marked CQ is marked, we refer to the CQ as {\em dead}.

To prove~\cref{lem:rsg-is-fus} it suffices to prove the following lemma, which we direct our attention to in the subsequent section. 

\begin{lemma}\label{lem:rsg-is-fus-reformulated}
For any properly marked CQ $(q, \prmark)$ there exists a set of dead queries $\fcn{rew}(q,\prmark)$ such that for any database $\db$ the following holds: 
\begin{center}
$\sg \models (q,\prmark)$ \ $\iffi$ \ $\exists{Q' \in \fcn{rew}(q, \prmark)}, \db \models Q'.$
\end{center}
\end{lemma}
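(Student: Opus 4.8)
The goal is a query rewriting procedure: given a properly marked CQ $(q,\prmark)$, produce a finite set $\rew{q,\prmark}$ of \emph{dead} queries whose disjunction (evaluated against $\db$ alone) captures exactly when $\sg \models (q,\prmark)$. My plan is to define three families of sound transformation rules on marked queries, set $\rew{q,\prmark}$ to be the set of dead queries reachable by exhaustive application, and then argue soundness, completeness, and termination separately.

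\textbf{Step 1: the transformation rules.} The transformations mirror how atoms of $\sg$ arise from the rules $(\mathrm{loop})$, $(\mathrm{grow})$, $(\mathrm{grid})$. (a) \emph{Loop elimination}: if $(q,\prmark)$ contains an unmarked term $t$ appearing in $\hpred(t,t)$ and $\vpred(t,t)$ and nowhere else (or nowhere else except via other loop-producible atoms), delete those atoms and the term --- this undoes $(\mathrm{loop})$. (b) \emph{Grow elimination}: if an unmarked term $y$ is a ``leaf'' occurring only as the second argument of some $\hpred(x,y)$ (resp.\ $\vpred(x,y')$) with $x$ also appropriately shaped, delete the atom(s) --- this undoes $(\mathrm{grow})$. (c) \emph{Grid unfolding/unification}: identify an unmarked term $y'$ that can only have been produced by an application of $(\mathrm{grid})$, i.e.\ it is the fresh element of a head $\hpred(x',y')\wedge\vpred(y,y')$; replace these atoms in $q$ by the body pattern $\hpred(x,y)\wedge\vpred(x,x')$ with $y'$ removed, possibly unifying variables. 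In each case, after the transformation one re-closes the marking under the rules of \cref{def:properly-marked} (and if that would force a previously-unmarked-but-now-needed term to be marked, one branches, producing several successor queries). One also needs the ``folding'' transformations that unify two terms when structurally forced, as in standard piece-rewriting. The marked structure is exactly what lets us decide which terms \emph{must} come from $\db$ (hence cannot be ``un-chased'').

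\textbf{Step 2: soundness.} For each transformation $(q,\prmark)\leadsto(q',\prmark')$ I must show: for every $\db$, if $\db\models(q',\prmark')$ then $\sg\models(q,\prmark)$. This is the routine direction: a match of $q'$ into $\db$ is composed with the chase steps (or rather, since the transformation runs ``backward'', a match of $q'$ into $\sg$ extends canonically to a match of $q$ into $\sg$), using \cref{obs:buildup-for-properly-marked-chase} to guarantee the marked terms land on constants. Dead queries are constant-only up to marking, so $\db\models Q'$ is equivalent to $\sg\models Q'$, closing the base case.

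\textbf{Step 3: completeness} --- \emph{this is the main obstacle}. I need: if $\sg\models(q,\prmark)$ and $(q,\prmark)$ is alive (has an unmarked term), then \emph{some} transformation applies yielding $(q',\prmark')$ with $\sg\models(q',\prmark')$. Fix a homomorphism $h:q\to\sg$ respecting markings. Since $(q,\prmark)$ is alive, some term $t$ is unmarked, so $h(t)$ is a null, hence was introduced by $(\mathrm{loop})$, $(\mathrm{grow})$, or $(\mathrm{grid})$ during the chase. Picking $t$ with $h(t)$ introduced ``latest'' (maximal in the chase-derivation order) ensures the relevant pattern around $h(t)$ in $\sg$ is precisely a rule head with no further atoms attached to the fresh element --- this forces the corresponding transformation to be applicable to $(q,\prmark)$, and the new query still maps into $\sg$ (the chase being a fixpoint). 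The delicate points: handling the case where $h$ is not injective (several query terms map to the same null), ensuring the ``latest introduced'' choice is well-defined for the breadth-first Skolem chase, and verifying the re-marking/branching after transformation never kills a query that should survive (here \cref{obs:query-extendable-properly-marked} and the closure properties of \cref{def:properly-marked} are used). Iterating, every alive query with a match reduces to a dead query with a match; dead queries with a $\sg$-match have a $\db$-match by \cref{obs:buildup-for-properly-marked-chase}.

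\textbf{Step 4: termination.} Define a well-founded measure on marked queries --- e.g.\ the lexicographic pair (number of unmarked terms, number of atoms), or more robustly the number of unmarked terms plus a bound on ``distance from the marked part'' --- and check each transformation strictly decreases it. Loop and grow elimination reduce atom/term count; grid unfolding must be shown to decrease the number of unmarked terms (the fresh-element term $y'$ disappears and no new unmarked term is created once re-marking is applied, since the reintroduced body pattern's terms are either already present or forced into $\prmark$). Finiteness of $\rew{q,\prmark}$ then follows since the transformation tree is finitely branching (bounded by $|\mathit{terms}(q)|$ and $|\sigi|$) and has no infinite path. Combining Steps 2--4 gives \cref{lem:rsg-is-fus-reformulated}, and via \cref{obs:bcq-no-const-trivial} for the constant-free case, \cref{lem:rsg-is-fus}.
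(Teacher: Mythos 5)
Your overall strategy is the same family of argument as the paper's: backward rewriting on marked queries via operations that undo the three chase rules plus a unification step forced by Skolem functionality, then soundness, completeness, and termination. The paper organizes the case analysis differently and more cleanly: it first shows every alive query has a \emph{maximal} unmarked variable $x$ (a sink, i.e.\ no atom $\rpred(x,t)$ occurs), then splits purely syntactically on the in-neighborhood of $x$ (one incoming atom, handled by $\cutop$; one $\hpred$- and one $\vpred$-atom, handled by $\reduceop$; two same-predicate atoms from distinct sources, handled by $\mergeop$). With that device, ``completeness'' (\cref{obs:high-completeness}) is immediate and all semantic content sits in bidirectional soundness lemmas. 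Your completeness argument instead selects the null introduced latest in the chase derivation; this can be made to work, but the difficulties you yourself flag (non-injective matches, having to merge before unfolding, well-definedness of ``latest'') are exactly what the maximal-variable device dissolves, and your proposal leaves them open rather than resolving them.

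The concrete gap is in your termination argument. You claim grid unfolding creates no new unmarked term because ``the reintroduced body pattern's terms are either already present or forced into $\prmark$.'' This is false: undoing the head $\hpred(x',y')\wedge\vpred(y,y')$ of (grid) must reintroduce the body's non-frontier variable, i.e.\ a \emph{fresh} term occurring as the common source of $\hpred(\cdot,y)$ and $\vpred(\cdot,x')$, and nothing in \cref{def:properly-marked} forces that fresh term into $\prmark$ — which is precisely why the paper's $\reduceop$ branches into a marked and an unmarked copy of it. Consequently this step preserves both the number of unmarked terms and the number of atoms, so your lexicographic measure does not decrease and the rewriting need not obviously terminate; your fallback measure (``distance from the marked part'') is not defined precisely enough to check. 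The paper's \cref{lem:termination} instead tracks the number of \emph{converging} atom pairs $\hpred(t_1,t),\vpred(t_2,t)$, which is the quantity $\reduceop$ actually reduces (the fresh variable appears only in source position), combined with the observation that $\cutop$ and $\mergeop$ introduce no such pairs. Without a measure of this kind, Step 4 fails and \cref{lem:rsg-is-fus-reformulated} does not follow from your argument as written.
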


\subsection{Rewriting Alive Queries}

We begin this section by proving the following lemma, which will later prove useful in rewriting queries relative to $\rsgrid$.

\begin{lemma}\label{lem:exists-maximal}
If a marked CQ $(q,\prmark)$ is alive, then there exists a variable $x$ in $q$ such that no atom of the form $\rpred(x,t)$ occurs in $q$ with $\rpred\in \{\hpred,\vpred\}$.
\end{lemma}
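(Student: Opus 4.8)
The plan is to argue by contradiction: assume that \emph{every} variable occurring in $q$ is the first argument of some $\hpred$- or $\vpred$-atom of $q$, and derive a contradiction with aliveness. Since $(q,M)$ is alive, there is a term $t_0$ of $q$ with $t_0\notin M$; because every constant of $q$ lies in $M$ (first clause of \cref{def:properly-marked}), $t_0$ is necessarily a variable. Now iterate the following step: given an unmarked variable $t_i$, the contradiction hypothesis yields an atom $\rpred_i(t_i,t_{i+1})\in q$ with $\rpred_i\in\{\hpred,\vpred\}$; I claim $t_{i+1}$ is again an unmarked variable. Indeed, if $t_{i+1}$ were a constant it would be in $M$, and then the second clause of \cref{def:properly-marked} applied to $\rpred_i(t_i,t_{i+1})$ would force $t_i\in M$, contradicting $t_i\notin M$; the same clause shows $t_{i+1}\notin M$ whenever $t_i\notin M$. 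Hence $t_0,t_1,t_2,\dots$ is an infinite walk through unmarked variables of the \emph{finite} query $q$, so some variable repeats, yielding a directed cycle $C\subseteq q$ all of whose terms are unmarked variables.

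It remains to rule out such a cycle, and this is the step I expect to be the main obstacle. The cleanest route is to invoke the cyclicity clause of \cref{def:properly-marked}: a cycle of $q$ that meets $M$ in fewer than all of its terms cannot contain a constant, and moreover one argues that no purely-unmarked directed cycle can survive in a query produced by the rewriting procedure — newly introduced terms are always fresh and attached only as \emph{targets} of new atoms (mirroring the heads of $(\mathrm{grow})$ and $(\mathrm{grid})$), and by \cref{obs:buildup-for-properly-marked-chase} the only directed cycles appearing in any $\sg$ are the self-loops created by $(\mathrm{loop})$; so the unmarked part of an alive query carries an acyclicity invariant. Granting this, the subgraph $G$ of $q$ induced on the unmarked variables (with $\hpred$/$\vpred$ edges) is a finite directed acyclic graph, and it is nonempty since $t_0$ is an unmarked variable.

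Finally, choose a sink $x$ of $G$, i.e.\ a vertex with no $G$-edge leaving it. I claim no atom of the form $\rpred(x,t)$ occurs in $q$. Any such $t$ is either a constant or a variable. If $t$ is a constant, then $t\in M$, and the second clause of \cref{def:properly-marked} applied to $\rpred(x,t)$ gives $x\in M$, contradicting that $x$ is unmarked; by the same clause, $t$ cannot be a marked variable either; and if $t$ is an unmarked variable, the edge $x\to t$ would lie in $G$, contradicting that $x$ is a sink. Hence no atom $\rpred(x,t)$ occurs in $q$, which is precisely the claimed conclusion. The bookkeeping with \cref{def:properly-marked} is routine; the one genuinely delicate point is the justification that the unmarked subgraph of an alive query is acyclic, which is why I would lean on the structural invariant of the rewriting (equivalently, on \cref{obs:buildup-for-properly-marked-chase}) rather than on proper-markedness in isolation.
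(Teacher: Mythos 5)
Your argument is, at its core, the same as the paper's: unmarked terms must be variables, the $\hpred/\vpred$-edges among unmarked variables form a finite directed graph, and a sink of that graph (which, by the second clause of \cref{def:properly-marked}, can have no outgoing atom in $q$ at all, not even into a marked term) is the desired $x$. Where you diverge is that you explicitly isolate the one step that actually needs work, namely the acyclicity of the unmarked subgraph, whereas the paper's proof simply asserts that ``all cycles of the CQ must be in $M$ by \cref{def:properly-marked}.'' You are right to be suspicious: the cyclicity clause of \cref{def:properly-marked} only forces into $M$ those cycles that \emph{contain a constant}. A cycle consisting entirely of unmarked variables violates none of the five clauses, so proper markedness alone does not give acyclicity. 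Concretely, $q=\{\hpred(x,y),\hpred(y,x)\}$ with $\prmark=\emptyset$ is properly marked, alive, and satisfied in $\sg$ by mapping both variables to the loop null, yet every variable has an outgoing atom. So the gap you flag is genuine, and it is present in the paper's own proof (indeed in the lemma as stated), not just in your write-up.

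The problem is that your proposed repair does not close the gap. You suggest deriving an acyclicity invariant from the rewriting procedure (fresh terms are only ever attached as targets), but the lemma has to hold for the \emph{initial} query handed to the rewriting as well, and \cref{obs:query-extendable-properly-marked} only guarantees that this query is properly marked -- the example above shows this is compatible with an unmarked directed cycle. Appealing to \cref{obs:buildup-for-properly-marked-chase} does not help either, since that observation permits exactly the constant-free cycles (the self-loop at the $(\mathrm{loop})$-null) that cause the trouble. To actually fix this one must either strengthen \cref{def:properly-marked} so that \emph{every} directed cycle not contained in $\prmark$ is excluded (or collapsed), or add a rewriting operation that contracts an unmarked cycle onto a single looping variable before \cref{lem:exists-maximal} is invoked. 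As written, your proof establishes the lemma only under an acyclicity hypothesis that neither you nor the paper actually proves.
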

\begin{proof} Let the marked CQ $(q,\prmark)$ be alive. Then, by the definition above, we know that there exists a set of terms in $q$ that are not a members of $M$. Since $(q,\prmark)$ is properly marked, by \cref{def:properly-marked}, we know that all such terms are not constants, and hence, all such terms must be variables. Let us refer to all such terms as \emph{unmarked variables}. If we consider the sub-instance of the CQ induced by all unmarked variables, we can see that (i) it forms a directed acyclic graph as all cycles of the CQ must be in $M$ by \cref{def:properly-marked}, which (ii) is non-empty as $(q,\prmark)$ is alive. Hence, there must exists a variable $x$ in $q$ such that no atom of the form $\rpred(x,t)$ occurs in $q$ with $\rpred\in \{\hpred,\vpred\}$, which corresponds to a sink in the directed acyclic graph described above.
\end{proof}

We will refer to unmarked variables -- such as those described in the above proof -- which possess no out-going edges, as {\em maximal}. These maximal variables will be used to initiate our rewriting process. 

\begin{lemma}\label{lem:maximal-cases}
Let $x$ be a maximal variable of an alive CQ $(q, \prmark)$. Then, one of the following holds:
\begin{enumerate}
    \item \label{lem:maximal-cases-one} $x$ occurs in exactly one atom  $\rpred(t,x)$, with $\rpred\in \{\hpred,\vpred\}$ for some term $t$ of $q$,
    \item \label{lem:maximal-cases-two} $x$ occurs in exactly two atoms $\hpred(t_{\hpred},x)$ and $\vpred(t_{\vpred},x)$ for some terms $t_{\hpred},t_{\vpred}$ of $q$,
    \item \label{lem:maximal-cases-three} there exist at least two distinct terms $t\neq t'$ such that $\rpred(t,x)$ and $\rpred(t',x)$ are atoms of $q$ with $\rpred\in \{\hpred,\vpred\}$.
\end{enumerate}
\end{lemma}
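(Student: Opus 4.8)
The plan is to prove the statement by a direct, finite case analysis on the atoms of $q$ in which the variable $x$ occurs, relying only on the maximality of $x$ (neither aliveness nor proper-markedness of $(q,\prmark)$ is actually needed for this particular claim). First I would record the shape of these atoms: since the signature $\sigi = \set{\hpred,\vpred}$ consists solely of binary predicates, and since $x$ is maximal --- meaning that no atom of the form $\rpred(x,t)$ with $\rpred \in \set{\hpred,\vpred}$ occurs in $q$ --- every atom of $q$ mentioning $x$ must be of the form $\hpred(t,x)$ or $\vpred(t,x)$ for some term $t$. Moreover $t \neq x$ in each such atom, for otherwise $\rpred(x,x)$ would be an ``outgoing'' atom at $x$, contradicting maximality. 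Finally, under the standard convention that the variables of a CQ are exactly those occurring in its atoms (which is also how maximal variables arise in \cref{lem:exists-maximal}, namely as sinks of the acyclic subgraph induced by the unmarked variables of $q$), the variable $x$ occurs in at least one atom of $q$.

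With this normal form in hand, I would split into cases according to whether two such ``incoming'' atoms share a predicate. If $q$ contains atoms $\rpred(t,x)$ and $\rpred(t',x)$ with the same $\rpred \in \set{\hpred,\vpred}$ and $t \neq t'$, then item~(3) holds and we are done. Otherwise, for each of $\hpred$ and $\vpred$ there is at most one atom of $q$ of the corresponding form $\rpred(t,x)$, so the total number of atoms of $q$ mentioning $x$ is either exactly one or exactly two. In the first subcase the unique such atom is $\rpred(t,x)$ for some $\rpred \in \set{\hpred,\vpred}$, giving item~(1); in the second subcase the two atoms must use distinct predicates, i.e.\ they are $\hpred(t_{\hpred},x)$ and $\vpred(t_{\vpred},x)$ for suitable terms $t_{\hpred},t_{\vpred}$, giving item~(2).

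The argument is elementary and I do not anticipate a genuine obstacle; the only point requiring a moment of care is the degenerate possibility that $x$ occurs in no atom at all, which would make none of (1)--(3) hold and which is excluded by the convention noted above. Everything else is routine bookkeeping over the two-symbol signature $\sigi$.
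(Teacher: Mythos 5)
The paper states \cref{lem:maximal-cases} without proof, so there is no official argument to compare against; your direct case analysis (two distinct incoming atoms sharing a predicate gives case~(3); otherwise at most one incoming atom per predicate of $\{\hpred,\vpred\}$, hence one or two atoms total, giving cases~(1) or~(2)) is exactly the elementary argument the authors evidently had in mind, and it is correct as far as it goes. The one loose end is your normal-form claim that \emph{every} atom of $q$ mentioning $x$ has the form $\rpred(t,x)$ with $\rpred\in\{\hpred,\vpred\}$: the signature also carries the universal unary predicate $\top$, and the paper's own rewriting operations ($\cutop$ and $\mergeop$) explicitly insert atoms $\top(t)$ into queries. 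So the degenerate case you need to exclude is not ``$x$ occurs in no atom'' but ``$x$ occurs only in $\top$-atoms''; the convention that variables occur in some atom does not rule this out (e.g.\ $(\{\top(x)\},\emptyset)$ is properly marked and alive, $x$ is maximal, and none of (1)--(3) holds, and $\mergeop$ can manufacture such isolated unmarked variables). This is really a gap in the lemma as stated rather than in your argument specifically -- the paper silently assumes every unmarked variable has at least one incident $\hpred$- or $\vpred$-atom -- but your proof should state that assumption explicitly rather than derive it from the occurs-in-an-atom convention.
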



\medskip
\noindent
{\bf Rewriting Operations.}
We now make use of~\cref{lem:maximal-cases} to define three operations that compute rewritings. When exhaustively applied to a query, we obtain the rewritings required by \cref{lem:rsg-is-fus-reformulated}. Given an alive CQ $(q, \prmark)$ we define the following three operations:

\begin{definition}[$\cutop$]\label{def:cut}
Let $x$ be a maximal variable of $(q, \prmark)$ satisfying case \ref{lem:maximal-cases-one} of \cref{lem:maximal-cases}, i.e., $x$ occurs in exactly one atom $\rpred(t,x)$ of $q$ with $\rpred\in \{\hpred, \vpred\}$. Then, letting $\query = \query' \cup \{\rpred(t,x)\}$, we define
\[
  \ \ \ \ \cutop((q,\prmark), \set{x}) =
  \begin{cases}
     (q' \cup \{\top(\const{c})\}, \prmark)  & \text{if $t$ is a constant $\const{c}$} \\[-1ex]
      & \text{not occurring in $\query'$;}\ \ \ \ \ \ \\
     (q', \prmark) & \text{otherwise.}\defend
  \end{cases}
\]
\end{definition}

\begin{definition}[$\reduceop$]\label{def:reduce}
Let $x, t_{\hpred}$, and $t_{\vpred}$ be as in case \ref{lem:maximal-cases-two} of \cref{lem:maximal-cases}, meaning that there exist two atoms of the form $\hpred(t_{\hpred}, x)$ and $\vpred(t_{\vpred}, x)$ in $q$. We define $q'$ to be the same as $q$, but where $\hpred(t_{\hpred}, x)$ and $\vpred(t_{\vpred}, x)$ are replaced by $\hpred(x', t_{\vpred})$ and $\vpred(x', t_{\hpred})$, and where $x'$ is fresh. We now 
 define $\reduceop((q,\prmark), \set{x}) = \set{ (q', \prmark), (q', \prmark \cup \set{x'})}$.\defend
\end{definition}

\begin{definition}[$\mergeop$]\label{def:merge}
Let $x, t$, and $t'$ be as in case \ref{lem:maximal-cases-three} of \cref{lem:maximal-cases}. We let $q'$ be obtained from $q$ by replacing all occurrences of $t$ in $q$ with $t'$ and define $\mergeop((q, \prmark), \set{x, t, t'}) = (q' \cup \{\top(t)\}, \prmark)$.\defend
\end{definition}

Let us briefly describe how rewritings are computed by means of the above operations. 
 Given an alive CQ $(q,M)$, we know that one of the three rewriting operations will be applicable (by \cref{lem:maximal-cases}), which we note, may return a \emph{set} of rewritings if the $\reduceop$ operation is applied. We ultimately obtain the set $\fcn{rew}(q, \prmark)$ of rewritings of $(q,\prmark)$ by sequentially applying the three rewriting operations to all alive queries within the set of rewritings, until all marked queries in the set are dead, at which point, the rewriting process terminates. As shown below, we can always apply a rewriting operation to a CQ if it is alive (i.e., rewriting is complete), and also, applying such an operation preserves satisfaction on $\sg$ (i.e., rewriting is sound). 

\begin{observation}[Completeness]\label{obs:high-completeness}
If a CQ is alive then at least one of above operations can be applied to it.
\end{observation}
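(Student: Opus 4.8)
The plan is to derive \cref{obs:high-completeness} directly from the two structural lemmas just established, \cref{lem:exists-maximal} and \cref{lem:maximal-cases}, so that the argument reduces to a clean three-way case split with no computation at all. First I would fix an arbitrary alive marked CQ $(q,\prmark)$ and apply \cref{lem:exists-maximal} to produce a variable $x$ of $q$ occurring in no atom of the form $\rpred(x,t)$ with $\rpred\in\{\hpred,\vpred\}$. The one small point to nail down is that $x$ may be taken \emph{unmarked}, i.e., that it is a maximal variable in the exact sense fixed just before \cref{lem:maximal-cases}: this is precisely the variable delivered by the proof of \cref{lem:exists-maximal}, namely a sink of the directed acyclic graph induced by the unmarked variables of $q$, hence itself unmarked. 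So $(q,\prmark)$ has a maximal variable $x$.

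Next I would invoke \cref{lem:maximal-cases} on $x$, which yields the trichotomy: (1) $x$ occurs in exactly one atom $\rpred(t,x)$ with $\rpred\in\{\hpred,\vpred\}$; (2) $x$ occurs in exactly the two atoms $\hpred(t_{\hpred},x)$ and $\vpred(t_{\vpred},x)$; or (3) there are distinct terms $t\neq t'$ and a predicate $\rpred\in\{\hpred,\vpred\}$ with $\rpred(t,x),\rpred(t',x)\in q$. I would then read off applicability of one operation in each case by a direct comparison with the operation definitions: in case (1) the hypotheses of \cref{def:cut} hold, so $\cutop((q,\prmark),\{x\})$ is defined; in case (2) the hypotheses of \cref{def:reduce} hold, so $\reduceop((q,\prmark),\{x\})$ is defined; in case (3) the hypotheses of \cref{def:merge} hold for the witnessing pair, so $\mergeop((q,\prmark),\{x,t,t'\})$ is defined. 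In every case at least one of the three operations applies to $(q,\prmark)$, which is exactly the assertion.

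I do not expect a genuine obstacle here, since all the content sits in \cref{lem:exists-maximal} and \cref{lem:maximal-cases}, which are assumed. The only spot deserving a sentence of care is the bridge from the literal conclusion of \cref{lem:exists-maximal} (``some variable with no outgoing $\hpred/\vpred$-edge'') to the technical notion of \emph{maximal variable} used in \cref{lem:maximal-cases} (an \emph{unmarked} such variable); I would close that gap by either strengthening the statement of \cref{lem:exists-maximal} to read ``unmarked variable'' (its proof already establishes this) or by inserting the sink-of-the-DAG remark above. Everything else is bookkeeping: matching the three cases of \cref{lem:maximal-cases} against the preconditions recorded in \cref{def:cut}, \cref{def:reduce}, and \cref{def:merge}.
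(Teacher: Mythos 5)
Your proposal is correct and matches the paper's argument, which simply observes that the claim follows from \cref{lem:maximal-cases} together with \cref{def:cut,def:reduce,def:merge}; you have merely spelled out the case split and made explicit the appeal to \cref{lem:exists-maximal} for the existence of a maximal variable, which the paper leaves implicit. Your remark that the variable produced by \cref{lem:exists-maximal} is indeed unmarked (being a sink of the DAG on unmarked variables) is a fair point of care, but it does not change the route.
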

\begin{proof}
Follows from~\cref{def:cut,def:reduce,def:merge}, and~\cref{lem:maximal-cases}.
\end{proof}

Let us now prove the soundness of the above operations:

\begin{lemma}[Soundness --- $\cutop$]\label{lem:high-soundness-cut}
Let $\db$ be a database and let $\mquery$ be an alive CQ. If $x$ is as in \cref{def:cut}, then
$$\sg \models \mquery \ \text{\iffi} \ \sg \models \cutop(\mquery, \set{x}).$$
\end{lemma}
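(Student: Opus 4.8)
The plan is to prove the two implications separately, in each direction by adapting a witnessing homomorphism. Write $\mquery = (q,M)$, and recall that by \cref{def:cut} the maximal variable $x$ occurs in exactly one atom $\rpred(t,x)$ of $q$, that $q = q' \cup \set{\rpred(t,x)}$ with $x$ occurring in no atom of $q'$, and that $\cutop(\mquery,\set{x})$ is either $(q',M)$ or, when $t$ is a constant $\const{c}\notin\adom{q'}$, the query $(q'\cup\set{\top(\const{c})},M)$. Note also that $t\neq x$ (otherwise $\rpred(x,x)$ would witness a violation of maximality), and that $x\notin M$, since maximal variables are unmarked by definition.

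For the forward direction, let $h$ witness $\sg\models\mquery$. Since every atom of $q'$ is an atom of $q$ and $x\notin\adom{q'}$, the restriction of $h$ to the terms of $q'$ is a homomorphism of $q'$ into $\sg$, and the marking condition is inherited for the (fewer) terms of $q'$. If $t$ is a constant $\const{c}\notin\adom{q'}$, then $\const{c}\in\adom{\sg}$ because $\rpred(\const{c},h(x))\in\sg$, hence $\top(\const{c})\in\sg$ as $\top$ is the universal domain predicate; moreover $\const{c}\in M$ by proper markedness of $\mquery$, so the marking condition holds at $\const{c}$ as well. In either case we obtain a witness for $\sg\models\cutop(\mquery,\set{x})$.

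For the backward direction, let $h'$ witness $\sg\models\cutop(\mquery,\set{x})$. We extend $h'$ to a homomorphism $h$ witnessing $\sg\models\mquery$; the only atom of $q$ that $h'$ need not already account for is $\rpred(t,x)$, so it suffices to supply an image for $x$ (and, in the subcase where $t$ does not occur in $q'$, also for $t$). The key structural fact is that in $\sg$ every domain element has an outgoing $\hpred$-edge and an outgoing $\vpred$-edge whose targets are nulls: in the breadth-first Skolem chase, rule $(\mathrm{grow})$ fires on every element already present, creating fresh null successors, and this property is inherited by $\sg = \bigcup_i \ksat{i}{\db,\rsg}$. Put $e := h'(t)$ if $t\in\adom{q'}$; otherwise choose $h(t)$ to be an arbitrary null of $\sg$ and put $e := h(t)$, noting that in this subcase one checks, from proper markedness together with the invariants maintained by the rewriting operations, that $t\notin M$, so mapping it to a null is consistent. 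By the structural fact there is a null $n$ with $\rpred(e,n)\in\sg$; set $h(x):=n$. Then $h$ is a homomorphism of $q$ into $\sg$ extending $h'$, and the marking condition holds: it is inherited from $h'$ on the terms of $q'$, while at $x$ we have $x\notin M$ and $h(x)=n$ a null.

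I expect the backward direction to be the main obstacle, concentrated in two points: making the structural claim precise — that $(\mathrm{grow})$ endows every element of $\sg$ (old and newly generated alike) with null successors under both $\hpred$ and $\vpred$ — which requires a short induction over the stages of the breadth-first Skolem chase; and the marking bookkeeping, in particular ruling out or separately handling the degenerate possibility that $t$ is simultaneously orphaned by the cut and contained in $M$. The forward direction and the routine verification that restriction and extension preserve the homomorphism and marking conditions are comparatively straightforward.
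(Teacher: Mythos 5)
Your overall strategy coincides with the paper's: the forward direction is a restriction argument, and the backward direction extends the witnessing homomorphism using the fact (guaranteed by $(\mathrm{grow})$ and $(\mathrm{loop})$) that every element of $\sg$ has null $\hpred$- and $\vpred$-successors. The forward direction, including your treatment of the added $\top(\const{c})$ atom, is fine.

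The gap is in the backward direction, in the branch where $t\notin\adom{q'}$. You treat this branch uniformly by ``choosing $h(t)$ to be an arbitrary null'' and asserting $t\notin M$, but this fails precisely in the case that the definition of $\cutop$ singles out, namely $t$ a constant $\const{c}$ not occurring in $q'$. A homomorphism must fix constants, so $h(\const{c})$ cannot be a null; and proper markedness forces $\const{c}\in M$, directly contradicting your claim that $t\notin M$ in this sub-case. The correct move --- and the reason $\cutop$ retains the atom $\top(\const{c})$ in the rewritten query --- is that the witness $h'$ for $\cutop(\mquery,\set{x})$ must already match $\top(\const{c})$, which forces $\const{c}\in\adom{\db}$ (otherwise the rewritten query is not satisfied and there is nothing to prove); one then sets $e:=\const{c}$ and applies your structural fact to obtain a null $\rpred$-successor of $\const{c}$. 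This is exactly the paper's case (ii)(a). For the remaining sub-case ($t$ a variable occurring only in $\rpred(t,x)$), your construction works, though the justification that $t\notin M$ via ``invariants maintained by the rewriting operations'' appeals to something you have not established; the paper is equally terse here (it sends both $t$ and $x$ to the loop null), so I would not count that against you, but the constant sub-case does need to be separated out and argued as above.
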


\begin{proof} Let $\mquery = (\query,\prmark)$ be an alive CQ and $\cutop(\mquery, \set{x}) = (q',M)$. The forward direction is trivial. For the backward direction, assume that $\sg \models \cutop(\mquery, \set{x})$. By our assumption, there exists a homormorphism $h$ from $\query'$ into $\sg$ such that for all terms $t$ in $\query'$, $t \in M$ \iffi $h(t)$ is a constant in $\sg$. Since the $\cutop$ operation was applied, we know that there exists exactly one atom of the from $\rpred(t,x)$ in $\query$ with $\rpred\in \{\hpred,\vpred\}$ and $x$ a maximal variable. We have two cases to consider: either (i) the term $t$ exists in $\query'$, or (ii) the term $t$ does not exist in $\query'$.

(i) Suppose that the term $t$ exists in $\query'$. Then, $h$ maps $t$ to some element $t'$ of $\sg$. Based on the functionality of the rules in $\rsgrid$, one can see that for every term $t''$ in $\sg$ and $\ppred \in \{\hpred,\vpred\}$, there exists a null $y$ such that $\ppred(t'',y) \in \sg$. Therefore, there exists a null $y$ such that $\rpred(t',y) \in \sg$. Let us define the homomorphism $h'$ such that $h'(s) = h(s)$ for all terms $s \neq x$ in $\query$ and $h'(x) = y$. It follows that $h'$ is a homomorphism from $\query$ to $\sg$, and thus $\sg \models \query$.

To complete the proof of this case, it suffices to show that for all terms $t$ in $\query$ that $t \in M$ \iffi $h'(t)$ is a constant in $\sg$. If we consider only the terms $s \neq x$ in $\query$, then the above equivalence holds as $h'(s) = h(s)$ for all such terms and we know that for all terms $s$ in $\query'$ that $s \in M$ \iffi $h(s)$ is a constant in $\sg$. The above equivalence also holds for the maximal variable $x$ since $x \not\in M$ and $h'(x) = y$ maps to a null (and not a constant) in $\sg$. This completes the proof of this case.

(ii) Suppose that the term $t$ does not exist in $\query'$. We know that either (a) the term $t$ is a constant $\const{c}$, or (b) the term $t$ is a variable $y$. We prove both cases below.

(a) Assume that $t$ is a constant $\const{c}$. If $\const{c}$ does not occur in the database $\db$, then $\sg \not\models \cutop(\mquery, \set{x})$, contrary to our assumption. Hence, we may assume that $\const{c}$ occurs in $\db$. By the (grow) rule in $\rsgrid$, we know that for the constant $\const{c}$ in $\db$, there exists a null $z$ such that $\rpred(\const{c},z) \in \sg$. Let us extend the homomorphism $h$ to a homomorphism $h'$ such that $h'(s) = h(s)$ for $s \neq x$ and $h'(x) = z$. It is straightforward to show that $h'$ is a homomorphism from $\query$ to $\sg$, and showing that for all terms $t'$ in $\query$ we have $t' \in M$ \iffi $h'(t')$ is a constant in $\sg$ is argued similarly as in case (i) above.

(b) Assume that $t$ is a variable $y$. Based on the (loop) rule in $\rsgrid$, we know that there exists a null $z$ such that $\rpred(z,z) \in \sg$. Let us now extend the homomorphism $h$ to a homomorphism $h'$ such that $h'(s) = h(s)$ for $s \not\in\{y,x\}$ and $h'(y) = h'(x) = z$. It is straightforward to show that $\sg \models \mquery$ by making use of $h'$.
\end{proof}

\begin{lemma}[Soundness --- $\reduceop$]\label{lem:high-soundness-reduce}
Let $\db$ be a database and let $Q$ be an alive CQ. If $x$ is as in \cref{def:reduce} and $\reduceop(\mquery, \set{x}) = \set{\mquery_{1}, \mquery_{2}}$, then
$$\sg \models \mquery \ \text{\iffi} \ \exists i \in \{1,2\}, \sg \models \mquery_{i}.$$
\end{lemma}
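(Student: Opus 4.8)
The plan is to turn a match of $\mquery$ into a match of one of the two output queries, and conversely, by ``stepping one square back along the grid''. The two alternatives $(q',\prmark)$ and $(q',\prmark\cup\{x'\})$ of $\reduceop$ are meant to record whether the fresh variable $x'$ gets mapped to a null or to a database constant, so the argument will branch accordingly. Everything hinges on a structural description of the Skolem chase $\sg$, which I would establish first. The key point is that none of the rules $(\mathrm{loop})$, $(\mathrm{grow})$, $(\mathrm{grid})$ ever places an $\hpred$- or $\vpred$-edge \emph{into} an already-existing term — each introduces its new edges only on freshly created nulls — so the collection of incoming $\hpred$/$\vpred$-edges of any element of $\sg$ is fixed at the moment it is created (and for constants, all incoming edges come from $\db$). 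From this I would deduce that the nulls of $\sg$ which are simultaneously the target of an $\hpred$- and of a $\vpred$-edge are exactly: (i) the self-loop null $\ell$ of $(\mathrm{loop})$, whose only incoming edges are $\hpred(\ell,\ell)$ and $\vpred(\ell,\ell)$; and (ii) the nulls $d$ produced by $(\mathrm{grid})$, which carry precisely the two incoming edges $\hpred(a,d),\vpred(b,d)$ created together with $d$ from a trigger whose body-element $u$ satisfies $\hpred(u,b),\vpred(u,a)\in\sg$. A $(\mathrm{grow})$-null has a single incoming edge, of one fixed type, and is therefore excluded. (This refines the reasoning behind \cref{obs:buildup-for-properly-marked-chase}.)

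For the forward direction, suppose $\sg\models\mquery$ via $h\colon q\to\sg$. Since $x\notin\prmark$, the term $h(x)$ is a null, and it has an incoming $\hpred$-edge from $h(t_\hpred)$ and an incoming $\vpred$-edge from $h(t_\vpred)$. By the structural fact, either $h(x)=\ell$ — which forces $h(t_\hpred)=h(t_\vpred)=\ell$, and I set $u:=\ell$ — or $h(x)$ is a $(\mathrm{grid})$-null and the structural fact hands me an element $u$ with $\hpred(u,h(t_\vpred)),\vpred(u,h(t_\hpred))\in\sg$. I then define $h'$ to agree with $h$ on the terms of $q\setminus\{x\}$ and to send $x'$ to $u$; since $x'$ occurs in $q'$ only in $\hpred(x',t_\vpred)$ and $\vpred(x',t_\hpred)$, and every other atom of $q'$ is an atom of $q$ not mentioning $x$, the map $h'$ is a homomorphism $q'\to\sg$. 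As $h$ already realises the marking $\prmark$ on every term of $q$ distinct from $x$, and those are precisely the terms of $q'$ distinct from $x'$, it only remains to classify $x'$: if $u$ is a null then $\sg\models(q',\prmark)$, and if $u$ is a constant then $\sg\models(q',\prmark\cup\{x'\})$; in either case $\sg\models\mquery_i$ for some $i$.

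The backward direction is the easier one. Given a match $h'\colon q'\to\sg$ of $\mquery_i$, the atoms $\hpred(x',t_\vpred)$ and $\vpred(x',t_\hpred)$ are sent to two edges sharing the source $h'(x')$, so the rule $(\mathrm{grid})$ — which $\sg$ satisfies, being a model of $\rsg$ — produces a null $v$ (the Skolem null of that trigger) with $\hpred(h'(t_\hpred),v),\vpred(h'(t_\vpred),v)\in\sg$. Setting $h:=h'$ on the terms of $q'\setminus\{x'\}$ and $h(x):=v$ yields a homomorphism $q\to\sg$; since $v$ is a null, $x\notin\prmark$, and the markings $\prmark$ and $\prmark\cup\{x'\}$ coincide on all terms other than $x'$, the map $h$ realises $\prmark$ on all of $q$, so $\sg\models\mquery$.

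The main obstacle is the structural fact in the first paragraph: one must verify carefully, by tracking the Skolem chase, that a common target of an $\hpred$- and a $\vpred$-edge in $\sg$ is either the loop null or a $(\mathrm{grid})$-null equipped with a suitable ``diagonal predecessor'' $u$, and that no later rule application can add further incoming edges to such an element and spoil the classification. Once this is in place, both directions reduce to routine bookkeeping about which terms of $q'$ do or do not lie in the marking.
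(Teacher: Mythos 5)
Your proof is correct and follows essentially the same route as the paper's: both directions pivot on the observation that an element of $\sg$ receiving both an incoming $\hpred$-edge and an incoming $\vpred$-edge must be the loop null or a $(\mathrm{grid})$-null and therefore admits a ``diagonal predecessor'' $u$, while the backward direction is a direct application of $(\mathrm{grid})$ to the image of the two new atoms. Your explicit classification of incoming edges in the Skolem chase (and your separate handling of the loop null) rigorously fills in the step that the paper compresses into an appeal to ``the functionality of the Skolem chase.''
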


\begin{proof} Let $Q = (\query,M)$ and $\mquery_{i} = (\query',M_{i})$ with $i \in \{1,2\}$. 

For the forward direction, suppose that $\sg \models \mquery$. It follows that there exists a homomorphism $h$ such that $h$ maps $\query$ to $\sg$ and for all terms $t$ in $\query$, $t \in M$ \iffi $h(t)$ is a constant in $\sg$. By our assumption that $\reduceop$ was applied, we know that there exist two atoms of the form $\hpred(t_{\hpred},x)$ and $\vpred(t_{\vpred},x)$ in $\query$ with $x$ maximal and $x$ occurring in no other atoms of $\query$. We have two cases to consider: either (i) both $t_{\hpred}$ and $t_{\vpred}$ are constants, or (ii) at least one term is not a constant. We argue each case accordingly.

(i) Observe that if both $t_{\hpred}$ and $t_{\vpred}$ are constants, then since $\mquery$ is properly marked, we know that $x \in M$ by \cref{def:properly-marked}. However, this contradicts our assumption that $x$ is maximal, and so, this case cannot occur.

(ii) In the alternative case, by the functionality of the Skolem chase (see~\cref{sec:preliminaries}), the fact that our rule set is $\rsgrid$, and since one term maps to a null in $\sg$ via $h$, we know that the (grid) rule must have been applied for $\hpred(h(t_{\hpred}),h(x))$ and $\vpred(h(t_{\vpred}),h(x))$ to occur in $\sg$. It follows that for some term $t$, $\hpred(t,h(t_{\vpred}))$ and $\vpred(t,h(t_{\hpred}))$ must occur in $\sg$. Let us define the homomorphism $h'$ such that $h'(s) = h(s)$ for each term $s \neq x$ in $\query$ and $h'(x') = t$. Then, we have that $\sg \models \query'$. Note that if $t$ is a null, then $\sg \models Q_{1}$ and if $t$ is a constant, then $\sg \models Q_{2}$, thus showing the desired conclusion.

Let us now show the backward direction and assume that there exists an $i \in \{1,2\}$ such that $\sg \models \mquery_{i}$. We have that $\sg \models \query'$, meaning that there exists a homomorphism $h$ mapping $\query'$ into $\sg$. Due to the fact that $\reduceop$ was applied, it follows that there exist atoms $\hpred(x',t_{\vpred})$ and $\vpred(x',t_{\hpred})$ in $\query'$ that are mapped into $\sg$, that is to say, $\hpred(h(x'),h(t_{\vpred})),\vpred(h(x'),h(t_{\hpred})) \in \sg$. By the (grid) rule and the functionality of the Skolem chase, there exists a null $y$ such that $\hpred(h(t_{\hpred}),y),\vpred(h(t_{\vpred}),y) \in \sg$. Let us define the homomorphism $h'$ such that $h'(s) = h(s)$ for all terms $s \neq x'$ in $\query'$ and $h'(x) = y$. Then, $h'$ is a homomorphism from $\query$ to $\sg$ and it is straightforward to show that for all terms $t'$ in $\query$, $t' \in M$ \iffi $h'(t')$ maps to a constant in $\sg$.
\end{proof}

\begin{lemma}[Soundness --- $\mergeop$]\label{lem:high-soundness-merge}
Let $\db$ be a database and let $\mquery$ be an alive CQ. If $x, t$, and $t'$ are as in \cref{def:merge}, then
$$\sg \models \mquery \ \text{\iffi} \ \sg \models \mergeop(\mquery, \set{x, t, t'}).$$
\end{lemma}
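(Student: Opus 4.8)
The plan is to follow the template of the two preceding soundness lemmas and treat the two directions of the equivalence in turn. Write $\mquery=(q,M)$, let $\rpred(t,x),\rpred(t',x)$ be the two atoms of $q$ from case~\ref{lem:maximal-cases-three} of \cref{lem:maximal-cases} (so $\rpred\in\set{\hpred,\vpred}$ and $x$ is a maximal variable), and let $q'$ be $q$ with every occurrence of $t$ replaced by $t'$, so that $\mergeop(\mquery,\set{x,t,t'})=(q'\cup\set{\top(t)},M)$.

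For the $(\Rightarrow)$ direction, take a homomorphism $h\colon q\to\sg$ witnessing $\sg\models\mquery$, i.e.\ with $s\in M$ iff $h(s)$ is a constant for every term $s$ of $q$. Since $x$ is maximal, $x\notin M$, so $h(x)$ is a null of $\sg$. The first thing to establish is that the Skolem chase with $\rsgrid$ is backward-functional on nulls: every null born through (loop), (grow) or (grid) acquires at most one incoming $\hpred$-atom and at most one incoming $\vpred$-atom, and no rule ever adds an atom of the form $\rpred(\cdot,u)$ with $u$ a pre-existing null (each rule head targets only freshly introduced nulls). Applying this to $\rpred(h(t),h(x)),\rpred(h(t'),h(x))\in\sg$ with $h(x)$ a null forces $h(t)=h(t')$. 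Hence $h$ factors through the map that identifies $t$ with $t'$, yielding a homomorphism $q'\to\sg$; extending it to $\top(t)$ by the value $h(t)$ (which satisfies $\top$ vacuously and has the correct marking status, $t$ being a term of $q$) gives $\sg\models\mergeop(\mquery,\set{x,t,t'})$.

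For the $(\Leftarrow)$ direction, take $h\colon q'\cup\set{\top(t)}\to\sg$ witnessing $\sg\models\mergeop(\mquery,\set{x,t,t'})$. The bookkeeping of markings will rely on the fact that, by the fourth clause of \cref{def:properly-marked} applied to $\rpred(t,x)$ and $\rpred(t',x)$, we have $t\in M$ iff $t'\in M$. If $t$ is a variable, define $h''$ to agree with $h$ on every term of $q$ except $t$ and set $h''(t):=h(t')$: every atom of $q$ containing $t$ becomes, after substituting $t\mapsto t'$, an atom of $q'$ whose $h$-image lies in $\sg$, so $h''$ is a homomorphism $q\to\sg$; the marking condition on $t$ is exactly ``$t\in M$ iff $t'\in M$'' combined with $h$ respecting the marking of $t'$, and it is inherited from $h$ on all other terms. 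Thus $\sg\models\mquery$.

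The remaining case, $t=\const{c}$ a constant, is where the work lies and is the main obstacle. Here $\const{c}\in M$ (first clause of \cref{def:properly-marked}), hence $t'\in M$, and the fifth clause forces $t'$ to be a variable (two distinct constant $\rpred$-predecessors of $x$ would put $x$ into $M$, contradicting maximality); moreover the atom $\top(\const{c})$ of $\mergeop(\mquery,\set{x,t,t'})$ guarantees $\const{c}\in\adom{\db}$, since the chase introduces no new constants. One cannot simply reuse $h$, because a homomorphism for $q$ must fix $\const{c}$ whereas $h(t')$ may be a different database constant $\const{d}$; instead one invokes the homogeneity of $\sg$. Since $\const{c}\in\adom{\db}$, the chase develops forward from $\const{c}$ a full infinite grid — one (grow) step followed by repeated (grid) steps — which, as a forward-reachable substructure, is isomorphic to the one grown from $\const{d}$; transplanting the portion of $h$ sitting below $h(t')$ onto $\const{c}$'s grid gives the sought homomorphism $q\to\sg$, consistently so because $x$ is maximal (no $q$-structure dangles below $x$) and because the clauses of \cref{def:properly-marked} pin down the atoms incident to $\const{c}$ and $t'$. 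A routine verification then shows all markings are preserved. Together with \cref{obs:high-completeness}, \cref{lem:high-soundness-cut}, \cref{lem:high-soundness-reduce} and the termination argument, this completes the proof that exhaustive rewriting computes a correct first-order rewriting, establishing \cref{lem:rsg-is-fus-reformulated} and hence \cref{lem:rsg-is-fus}.
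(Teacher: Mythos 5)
Your forward direction and the variable sub-case of the backward direction are sound and essentially the paper's argument (your forward direction is in fact cleaner: the paper asserts that $h(t)=h(x)=h(t')$ must be the loop null, whereas the operative fact is just the one you isolate, namely that every null of $\sg$ has at most one incoming $\hpred$-atom and at most one incoming $\vpred$-atom, hence $h(t)=h(t')$). The gap is in the remaining sub-case, where the replaced term $t$ is a constant $\const{c}$: there the equivalence you are trying to prove is actually \emph{false}, so no transplanting argument can close it. Concretely, take $\db=\{\top(\const{a}),\top(\const{b}),\hpred(\const{b},\const{b})\}$ and the alive, properly marked query $q=\{\hpred(\const{a},x),\hpred(z,x),\hpred(z,z)\}$ with $M=\{\const{a},z\}$ and $x$ maximal. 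Applying $\mergeop$ with $t=\const{a}$ and $t'=z$ gives $q'\cup\{\top(\const{a})\}=\{\hpred(z,x),\hpred(z,z),\top(\const{a})\}$, which $\sg$ satisfies by sending $z$ to $\const{b}$ and $x$ to the null created for $\const{b}$ by the (grow) rule; but $\sg\not\models(q,M)$, since that would require a term having both $\const{a}$ and $\const{b}$ as $\hpred$-predecessors, which the chase never produces. Your homogeneity argument breaks precisely here: the chase-grown grids below two database constants are indeed isomorphic, but $h(t')$ is a \emph{database} constant whose incident \emph{database} atoms (the self-loop on $\const{b}$, which $q'$ reaches through $t'$) need not be replicated at $\const{c}$, and nothing in \cref{def:properly-marked} rules this out.

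The way out is not to prove the constant case but to exclude it: since two distinct constant $\rpred$-predecessors of $x$ would force $x\in M$ by the fifth clause of \cref{def:properly-marked} (contradicting maximality of $x$), at most one of $t,t'$ is a constant, and $\mergeop$ must be applied so that the constant, if any, is the \emph{surviving} term $t'$. This is the convention the paper's own proof silently relies on: its substitution map $h'$ with $h'(t)=t'$ and $h'(s)=s$ otherwise is a homomorphism from $q$ to $q'$ only if $t$ is not a constant, because homomorphisms must fix constants. With that convention in place, your variable case already covers the entire backward direction, and the lemma should be read (or restated) accordingly.
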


\begin{proof} Let $Q = (\query,M)$, $\mergeop(\mquery, \set{x, t, t'}) = (\query',M )$, $\rpred \in \{\hpred, \vpred\}$, and $\query' = \query'' \cup \{\top(t)\}$.

For the forward direction, assume that $\sg \models \mquery$. It follows that there exists a homomorphism $h$ such that $h$ maps $\query$ into $\sg$ and for all terms $s$ in $\query$, $s \in M$ \iffi $h(s)$ is a constant in $\sg$. Also, since $\mergeop$ was applied, we know that there are at least two terms $t \neq t'$ such that $\rpred(t,x)$ and $\rpred(t',x)$ are in $\query$ and that $t'$ was substituted for $t$ in $\query$ to obtain $\query'$. Hence, $\rpred(h(t),h(x)), \rpred(h(t'),h(x)) \in \sg$, however, since we are using rules from $\rsgrid$ and by the functionality of the Skolem chase (see~\cref{sec:preliminaries}), it must be the case that either (i) $h(t)$, $h(x)$, and $h(t')$ all occur in $\db$, or (ii) $h(t) = h(x) = h(t') = y$ with $y$ the null generated by the (loop) rule. In case (i), $h(x)$ will be a constant as all terms of $\db$ are constants, meaning that $x$ will not be maximal, in contradiction to our assumption. Let us consider case (ii) then. 
 Note that in case (ii), the entire connected component of $\query$ containing $t$, $x$, and $t'$ is mapped to $y$ via $h$, meaning that the entire connected component of $\query'$ containing $x$ and $t'$ will be mapped to $y$ via $h$ as well. If we restrict our attention to the remaining connected components of $\query'$, then we can see that $h$ maps all such components homomorphically into $\sg$, showing that $h$ serves a homomorphism from $\query'$ to $\sg$. Last, we must show that for all terms $s'$ in $\query'$, $s' \in M$ \iffi $h(s')$ is a constant in $\sg$, however, this follows from the fact that $\query$ and $\query'$ have the same set of terms.

For the backward direction, suppose that $\sg \models \mergeop(\mquery, \set{x, t, t'})$. Then, there exists a homomorphism $h$ such that $h$ maps $\query'$ into $\sg$ and for all terms $s$ in $\query'$, $s \in M$ \iffi $h(s)$ is a constant in $\sg$. In addition, as $\mergeop$ was applied, there are at least two terms $t \neq t'$ such that $\rpred(t,x)$ and $\rpred(t',x)$ are in $\query$ with $t'$ substituted for $t$ in $\query$ to obtain $\query'$. Let us define a homomorphism $h'$ such that $h'(s) = s$ for all terms $s \neq t$ and $h'(t) = t'$. Then, $h'$ is a homomorphism from $\query$ to $\query'$, and since $h$ is a homomorphism from $\query'$ to $\sg$, we have that $h \circ h'$ is a homomorphism from $\query'$ to $\sg$. To finish the proof of the backward direction, we need to show that for all terms $s'$ in $\query$, $s' \in M$ \iffi $(h \circ h')(s')$ is a constant in $\sg$. 

Let $s'$ be a term in $\query$. Suppose that $s' \in M$. Then, if $s' \neq t$, we have that $h(s')$ is a constant in $\sg$, and since $h(s') = (h \circ h')(s')$, it follows that $(h \circ h')(s')$ is a constant in $\sg$. If $s' = t$, then by \cref{def:properly-marked}, $t' \in M$, from which it follows that $h(t') = (h \circ h')(t)$ is a constant in $\sg$. Let us now suppose that $(h \circ h')(s')$ is a constant in $\sg$. If $s' \neq t$, then $(h \circ h')(s') = h(s')$ is a constant in $\sg$, implying that $s' \in M$; however, if $s' = t$, then $(h \circ h')(s') = (h \circ h')(t) = h(t')$ is a constant in $\sg$, implying that $t' \in M$, so by \cref{def:properly-marked}, $t \in M$ as well.
\end{proof}

\medskip
\noindent
{\bf Termination.}
Given two queries $\mquery$ and $\mquery'$, we write $\mquery \stepop \mquery'$ \iffi $\mquery'$ can be obtained from $\mquery$ by applying $\reduceop$, $\cutop$, or $\mergeop$. We let $\stepop^*$ denote the transitive closure of $\stepop$.


\begin{lemma}[Termination]\label{lem:termination}
For any properly marked CQ $\mquery$ the set of properly marked queries $\set{Q' \;|\; \mquery \stepop^* \mquery'}$ is finite.
\end{lemma}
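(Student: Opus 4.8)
I would prove the lemma by attaching to every marked CQ a measure taking values in a well-founded order and strictly decreasing along $\stepop$, and then combining this with the fact that $\stepop$ is finitely branching; an application of König's Lemma then forces the set $\{\mquery' \mid \mquery \stepop^* \mquery'\}$ to be finite. The measure I would use is the lexicographic order on pairs $(\fcn{ba}(q),\Psi(q)) \in \mathbb{N}^2$, where $\fcn{ba}(q)$ denotes the number of binary atoms of $q$ (the atoms over $\hpred$ and $\vpred$; the auxiliary $\top$-atoms are ignored) and $\Psi(q)$ is an auxiliary ``progress'' measure discussed below.

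First I would handle the two ``shrinking'' operations. If $\mquery \stepop \mquery'$ via $\cutop$, then by \cref{def:cut} we have $q = q' \cup \{\rpred(t,x)\}$ and the binary atoms of the result are exactly those of $q'$, so $\fcn{ba}$ drops by exactly one. If $\mquery \stepop \mquery'$ via $\mergeop$, then by the hypothesis of the third case of \cref{lem:maximal-cases} both $\rpred(t,x)$ and $\rpred(t',x)$ occur in $q$, and identifying $t$ with $t'$ collapses these two atoms into one, so $\fcn{ba}$ drops by at least one; in both cases the newly added $\top$-atom is irrelevant to $\fcn{ba}$. Hence every $\stepop$-step through $\cutop$ or $\mergeop$ strictly decreases the first component, irrespective of $\Psi$.

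The delicate operation is $\reduceop$: by \cref{def:reduce} it deletes the two binary atoms $\hpred(t_{\hpred},x), \vpred(t_{\vpred},x)$ and inserts the two binary atoms $\hpred(x',t_{\vpred}), \vpred(x',t_{\hpred})$, where $x'$ is fresh; since $x'$ does not occur in $q$, the inserted atoms are genuinely new, so $\fcn{ba}$ is preserved --- for both of the marked CQs returned by $\reduceop$. This is where $\Psi$ must carry the argument. Intuitively, $\reduceop$ undoes one application of the \emph{(grid)} rule: it replaces a ``corner sink'' $x$ (a maximal variable with one incoming $\hpred$-edge, one incoming $\vpred$-edge, and no outgoing edge) by a ``corner source'' $x'$ that lies strictly closer to the sources of the directed graph underlying $q$. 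The key lemma would be that $\Psi$ strictly decreases under $\reduceop$ while $\fcn{ba}$ stays put, so that both outputs of $\reduceop$ lie strictly below $\mquery$ in the lexicographic order. A natural candidate for $\Psi$ is a potential accumulating over the terms of $q$ the length of a longest directed path ending at each term (with suitable multiplicities), or a related quantity measuring how far the corners of $q$ still are from its sources. I expect \textbf{verifying the right choice of $\Psi$ to be the main obstacle}: the obvious candidates (e.g.\ the multiset of in-degrees of sinks, or the number of terms of positive in-degree) can actually \emph{increase} under $\reduceop$, namely exactly when $t_{\hpred}$ or $t_{\vpred}$ is itself a source of $q$ --- a case in which, however, a $\cutop$ becomes applicable afterwards, a subtlety the final definition of $\Psi$ must accommodate.

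Granting a well-founded strictly-decreasing measure, the lemma follows. For every alive marked CQ $\mquery$, the set $\{\mquery' \mid \mquery \stepop \mquery'\}$ is finite: by \cref{obs:high-completeness} at least one operation applies, every operation is triggered by a choice of a maximal variable (of which there are finitely many) together with, in the case of $\mergeop$, one of finitely many pairs $t\neq t'$, and each such choice produces at most two marked queries (two for $\reduceop$, one for $\cutop$ and $\mergeop$, by \cref{def:cut,def:reduce,def:merge}); moreover, dead --- and more generally non-alive --- marked CQs have no $\stepop$-successors. Consequently the rooted tree whose vertices are the elements of $\{\mquery' \mid \mquery \stepop^* \mquery'\}$ and whose edges are the single $\stepop$-steps is finitely branching. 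If it were infinite, König's Lemma would yield an infinite branch, hence an infinite strictly descending chain in the well-founded order $(\fcn{ba},\Psi)$, which is impossible. Therefore $\{\mquery' \mid \mquery \stepop^* \mquery'\}$ is finite, completing the proof.
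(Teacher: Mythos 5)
Your overall architecture --- a well-founded measure strictly decreasing along $\stepop$, combined with finite branching and K\"onig's Lemma --- is the same as the paper's, and your analysis of $\cutop$ and $\mergeop$ (each strictly decreases the number of $\hpred$/$\vpred$-atoms) is correct. The problem is that the proof stops exactly where the lemma's content begins: $\reduceop$ is the only operation that does not shrink the query, so the entire burden of termination rests on the second component $\Psi$, and you never define it. Saying that ``the key lemma would be that $\Psi$ strictly decreases under $\reduceop$'' and that finding the right $\Psi$ is ``the main obstacle'' is an accurate diagnosis, but it leaves the one non-trivial step of the argument unproven; as it stands this is a plan, not a proof.

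The quantity the paper uses in place of your $\Psi$ is the number of \emph{converging} configurations: pairs of atoms $\hpred(t_1,t)$, $\vpred(t_2,t)$ sharing their second argument. An application of $\reduceop$ destroys the (unique) such pair at the maximal variable $x$, and the two atoms it inserts, $\hpred(x',t_{\vpred})$ and $\vpred(x',t_{\hpred})$, emanate from the fresh variable $x'$, so no convergence is created at $x'$; the paper further asserts that $\cutop$ and $\mergeop$ never introduce converging pairs, which makes this count a suitable lexicographic component. Your suspicion that the verification is delicate is nonetheless well founded: the inserted atoms point \emph{into} the old terms $t_{\hpred}$ and $t_{\vpred}$, so one still has to rule out that a new convergence appears there (e.g.\ when $q$ already contains some atom $\vpred(s,t_{\vpred})$), a case the paper's one-sentence justification does not spell out. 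So the concrete gap in your proposal is the missing definition of $\Psi$; the intended choice is the converging-pair count, and the careful check that it really decreases under $\reduceop$ and is not increased by the other operations is precisely the work that remains.
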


\begin{proof} Observe that $\cutop$ deletes the only atom containing the considered maximal variable and $\mergeop$ reduces the number of atoms containing the considered maximal variable. Moreover, neither $\cutop$ nor $\mergeop$ introduce converging atoms of the form $\hpred(t_{1},t)$ and $\vpred(t_{2},t)$ when applied. Since the operation $\reduceop$ reduces the number of such converging atoms in a properly marked CQ as well, and since all such operations either preserve or decrease the number of terms occurring the query, it follows that the rewriting process will eventually terminate. 
\end{proof}

By \cref{obs:high-completeness} (Completeness), \cref{lem:high-soundness-cut,lem:high-soundness-reduce,lem:high-soundness-merge} (Soundness), and \cref{lem:termination} (Termination), we may conclude~\cref{lem:rsg-is-fus-reformulated}, thus completing the proof of \cref{lem:rsg-is-fus}.

\end{document}